\documentclass[12pt]{article}
\usepackage[utf8]{inputenc}

\usepackage{svgcolor}
\def\showauthornotes{0}
\ifnum\showauthornotes=1

\newcommand{\jnote}[1]{\textcolor{blue}{ {\textbf{(Jeff: #1)}}}}
\newcommand{\snote}[1]{\textcolor{cyan}{ {\textbf{(SiuOn: #1)}}}}
\newcommand{\tnote}[1]{\textcolor{red}{ {\textbf{(Tommaso: #1)}}}}
\else

\newcommand{\jnote}[1]{}
\newcommand{\snote}[1]{}
\newcommand{\tnote}[1]{}
\fi

\def \N {\mathbb{N}}
\def \R {\mathbb{R}}

\def \Z {\mathbb{Z}}

\def \E {\mathop{\mathbb{E}}}

\def \eps {\epsilon}
\def \al {\alpha}
\renewcommand{\Pr}{\mathop{\bf Pr\/}}

\def\ceil#1{\lceil #1 \rceil}
\def\floor#1{\lfloor #1 \rfloor}

\newcommand{\pE}{\mathop{\widetilde{\E}}}

\usepackage{amsmath,amssymb,amsthm,amsfonts,latexsym,bm,xspace,graphicx,float,mathtools,mathdots,physics}
\usepackage{braket,caption,subcaption,ellipsis,xcolor,textcomp,hhline,pifont,combelow,booktabs}
\usepackage[colorlinks=true, allcolors=blue]{hyperref}
\usepackage{color}
\usepackage{times}
\usepackage{fullpage}
\usepackage{tikz-cd}
\usepackage[shortlabels]{enumitem}
\usepackage{thm-restate}
\usepackage[capitalise]{cleveref}
\usepackage{mdframed}
\usepackage{csquotes}
\usepackage{cancel}

\newtheorem{theorem}{Theorem}[section]
\newtheorem{lemma}[theorem]{Lemma}
\newtheorem{claim}[theorem]{Claim}
\newtheorem{proposition}[theorem]{Proposition}
\newtheorem{fact}[theorem]{Fact}

\newtheorem{conjecture}[theorem]{Conjecture}
\newtheorem{question}[theorem]{Question}

\newtheorem{definition}[theorem]{Definition}
\newtheorem{remark}[theorem]{Remark}

\newtheorem{observation}[theorem]{Observation}

\newtheorem{examples}[theorem]{Example}

\newcommand{\poly}{\operatorname{poly}}
\newcommand{\polylog}{\operatorname{polylog}}
\newcommand{\size}{\textnormal{size}}

\newcommand{\supp}{\textnormal{supp}}

\DeclareMathOperator{\Span}{span}

\DeclareMathAlphabet{\mathsfit}{\encodingdefault}{\sfdefault}{m}{sl}
\SetMathAlphabet{\mathsfit}{bold}{\encodingdefault}{\sfdefault}{bx}{n}

\newcommand{\Paren}[1]{\left(#1\right)}
\newcommand\Brac[1]{\left[#1\right]}
\newcommand\Abs[1]{\left\lvert#1\right\rvert}

\newcommand{\Norm}[1]{\left\lVert#1\right\rVert}
\newcommand*{\Normtv}[1]{\Norm{#1}_{\mathrm{TV}}}
\newcommand{\inner}[1]{\langle#1\rangle}
\usepackage[only,llbracket,rrbracket]{stmaryrd}
\newcommand{\Iv}[1]{\left\llbracket#1\right\rrbracket}
\newcommand{\bracbb}[1]{\left\llbracket#1\right\rrbracket}

\DeclareMathOperator{\Val}{Val}
\DeclareMathOperator{\opt}{opt}
\newcommand{\Axioms}{\textnormal{Axioms}}

\newcommand{\cC}{\mathcal C}
\newcommand{\cD}{\mathcal D}
\newcommand{\cE}{\mathcal E}
\newcommand{\cF}{\mathcal F}

\newcommand{\cH}{\mathcal H}
\newcommand{\cI}{\mathcal I}

\newcommand{\cK}{\mathcal K}

\newcommand{\cP}{\mathcal P}

\newcommand{\cR}{\mathcal R}

\newcommand{\cT}{\mathcal T}
\newcommand{\cU}{\mathcal U}

\newcommand{\1}{\text{\usefont{U}{bbold}{m}{n}1}}

\mathchardef\mhyphen="2D
\newcommand{\val}{\text{val}}
\newcommand{\calP}{\mathcal{P}}
\newcommand{\calH}{\mathcal{H}}

\newcommand{\calS}{\mathcal{S}}

\newcommand\numberthis{\addtocounter{equation}{1}\tag{\theequation}}

\newcommand{\mul}{\text{mul}}

\renewcommand{\int }{\text{Int}}

\newcommand{\Erdos}{Erd\H{o}s\xspace}
\newcommand{\Renyi}{R\'enyi\xspace}

\usepackage[notes=true,later=false,camera=false]{dtrt}
\usepackage{algorithm}
\usepackage{algorithmic}

\allowdisplaybreaks

\begin{document}
\title{Strongly Refuting Random CSP without Literals}
\author{
Siu On Chan
\thanks{sochan@gmail.com}
\and
Tommaso d'Orsi
\thanks{{Bocconi University}. {tommaso.dorsi@unibocconi.it}}
\and
Jeff Xu
\thanks{{Toyota Technological Institute at Chicago}.{jeffxusichao@ttic.edu}}
}

\maketitle
\thispagestyle{empty}
\abstract{
  Under what condition is a random constraint satisfaction problem hard to refute by the sum-of-squares (SoS) algorithm?
  A sufficient condition is ``$t$-wise uniformity'', that is, each constraint has a $t$-wise uniform distribution of
  satisfying assignments, as shown by the lower bounds of Kothari, Mori, O'Donnell, and Witmer (STOC 2017).
  This condition is also necessary for random CSPs given by a predicate and uniformly random literals, due to the
  constant-degree SoS refutation of Allen, O'Donnell, and Witmer (FOCS 2015).
  For higher degree, Raghavendra, Rao, Schramm (STOC 2017) gave a refutation for boolean random CSPs with uniformly
  random literals, matching the lower bounds optimally, in terms of the three-way tradeoff between constraint density,
  SoS degree and strength of refutation.

  Two long-standing open problems are to find a more general sufficient condition for SoS lower bounds, and to refute
  similar random CSPs not involving literals.
  We show that for a general random $k$-CSP, the necessary and sufficient hardness condition is not $t$-wise uniformity,
  but $t$-wise independence.
  We generalize the optimal three-way tradeoff to any random $k$-CSP, without assuming boolean domain or uniformly
  random literals.

  Our analysis involves new Kikuchi matrices for odd-order and for asymmetric tensors.
  Our analysis also uses the global correlation rounding technique of Barak, Raghavendra and Steurer (FOCS 2011).
  To avoid the running time penalty of this technique, we also give a spectral refutation algorithm.
}

\clearpage\newpage
\thispagestyle{empty}
\setcounter{page}{0}

\tableofcontents
\clearpage\newpage
\thispagestyle{empty}
\setcounter{page}{1}

\clearpage
\newpage

\section{Introduction}

Because of their centrality to complexity theory \cite{ben2002gap}, cryptography \cite{applebaum2010public},
statistical physics \cite{crisanti20023sat} and learning theory \cite{daniely2014average},  constraint
satisfaction problems play a major role in computer science.
A $k$-CSP is a pair $(D,\cR)$ where  $D$ is a finite domain and  $\cR$ is a collection of $k$-ary relations over $D.$
An instance $\cI$ is given by  a set $V$ of $n$ variables and by a set $\cC\coloneqq\set{(\sigma_i,R_i)}_{i\in [m]}$
of clauses, where $R\in\cR$ and $\sigma_i:[k]\to V$ is an injection.
Given an instance $\cI,$ the algorithmic task is to efficiently find a satisfying assignment, or conversely,
an efficiently veriable certificate that every assignment fails to satisfy all the clauses.
The largest fraction of constraints simultaneously satisfiable is defined to be
\[
  \opt(\cI)=\max_{x\in D^n} \Pr_{(\sigma,R)\sim \cC} [x\circ \sigma \in R]\,.
\]
A certificate is then called a strong refutation if it proves $\opt(\cI)\leq 1-\Omega_k(1),$ and a weak refutation
if it proves $\opt(\cI) < 1$.

As one can always encode into an algorithm the optimal solution to a particular instance, most of the
literature has focused on settings in which the instance is randomly generated (see \cite{AOW15,
KMOW17, d2023ihara} and references therein).
In 2002, Feige \cite{Fei02} conjectured that it is computationally intractable to find strong
refutations of random 3-SAT formulas when $m\geq C\cdot n$ for a sufficiently large $C,$ and tied this
statement to the inapproximability of certain optimization problems.
Similarly, different CSPs have been connected to numerous other problems \cite{alekhnovich2003more,
  goerdt2004approximation, briest2008uniform, alon2011inapproximability, chuzhoy2015approximation,
bhaskara2012quadratic, alon2012optimizing,o2014hardness, razenshteyn2016weighted}  including OWF, PRGs and
widely studied cryptographic schemes (see the excellent survey by Applebaum \cite{applebaum2016cryptographic}).

Much of the work towards understanding Feige's conjecture has focused on  refutations via (approximately)
automatizable proof systems such as spectral techniques \cite{GK01, FG01}
and semialgebraic proof systems \cite{buresh2003rank, alekhnovich2005towards, grigoriev2001complexity,
Schoenebeck08, Tul09, Chan16, tulsiani2013lsplus, BCK15, mari2016lower, KMOW17, chan2024how}.
In particular, a long line of research has studied the complexity of refutations for $k$-CSP via the
sum-of-squares hierarchy, leading to the strongest known positive results
\cite{AOW15,RRS17, WAM19, AhnStrongRefutationKXOR, d2023ihara}.
Most of these works on upper bounds focus on a restricted class of $k$-CSP that we call ``predicate CSP with literals''.
An instance of a random predicate CSP with literals comes with predicate $P: \set{0,1}^k \to \set{0,1}$, and every
constraint applies $P$ to $k$ uniformly random literals (note that it is also possible to generalize this to  larger alphabets).

Previous works identified the condition governing the sum-of-squares complexity of random predicate CSPs: pairwise
uniformity.
Given $t \in \N$, a CSP is $t$-wise uniform if every constraint supports a $t$-wise uniform distribution of satisfying
assignments.
Kothari, Mori, O'Donnell, Witmer \cite{KMOW17} ruled out $\Omega(n)$-degree refutation for $t$-wise uniform CSPs,
improving on Barak, Chan, Kothari \cite{BCK15} who ruled out $\Omega(n)$-degree strong refutation for pairwise
uniform CSPs.
Further, \cite{KMOW17} also ruled out $O(n/\Delta^{2/(t-1)}\log\Delta)$-degree refutations for $t$-wise uniform
CSPs with $\Delta n$ random constraints, for every $t \geq 2$ and every constraint density $\Delta$ ($\Delta$ may
depend on $n$). In particular this implies that pairwise uniform CSPs require $\Omega(n)$-degree refutations.
For the upper bound, Allen, O'Donnell, and Witmer \cite{AOW15} strongly refute a non-$t$-wise-uniform
predicate CSP (with uniformly random literals) using $O(1)$-degree SoS when there are $\tilde O(n^{t/2})$ constraints,
so non-pairwise uniform predicate CSPs have $O(1)$-degree upper bounds.
These upper bounds were extended by Raghavendra, Rao, Schramm \cite{RRS17}, Wein, El Alaoui, Moore
\cite{WAM19} and Ahn \cite{AhnStrongRefutationKXOR} to higher-degree SoS refutations, matching the degree lower bounds
of \cite{KMOW17} up to polylog factors at every constraint density (that is at least polylogarithmic).
In a different direction, d'Orsi and Trevisan \cite{d2023ihara} removed logarithmic factors in the constraint density
requirement of \cite{AOW15} when $t = k$.

However, predicate CSPs with literals fail to capture many interesting problems in complexity theory, combinatorics and
algorithms, such as monotone $1$-in-$k$ SAT, graph coloring, approximate graph homomorphism
\cite{brakensiek2021promise}.
Many major CSP results apply equally well when literals are absent
\cite{Raghavendra:2008:OAI:1374376.1374414,ABS15,bulatov2017dichotomy,zhuk2020proof}.
It is therefore of interest to consider refutation of random CSP without literals.
Unfortunately, the techniques in \cite{AOW15} and subsequent works require the random CSP to have uniformly random
literals.
Going beyond this barrier has been open for over a decade:
A direction for future work from \cite[Section~7]{AOW15} in 2015 was ``to show analogous efficient refutation results
for models of random CSP(P) in which literals are not used.''
In the other direction, Austrin and Risse \cite[Section~6.3]{austrin2022perfect} mentioned the open problem ``to prove
SoS lower bounds for random CSPs that do not support pairwise uniform distributions.''
Indeed, the SoS lower bound for pairwise uniform CSPs \cite{BCK15,KMOW17} have resisted generalization for nearly a
decade.
Even if one is only interested in predicate CSPs with literals, previous SoS refutations
\cite{AOW15,RRS17,WAM19,AhnStrongRefutationKXOR,d2023ihara}
assumed uniformly random literals, so no SoS refutation was known for biased literals, let alone arbitrary
distributions of literals.

In this work, we thoroughly answer these decade-old questions.
We identify the precise condition governing the SoS complexity of random $k$-CSP without literals.
The condition turns out not to be $t$-wise uniformity, but (what we call) $t$-wise independence.
Given a domain $D$, natural numbers $t \leq k$, and a distribution $\nu$ over $D$, we say that a distribution $\mu$ over
$D^k$ is $t$-wise $\nu$-independent if the $t$-wise marginals of $\mu$ are all $\nu^t$.
A $k$-CSP is $t$-wise independent if there is a distribution $\nu$ over $D$ such that every constraint supports a
$t$-wise $\nu$-independent distribution of satisfying assignments.

\subsection{Our Lower Bound Results} We show SoS lower bounds for $t$-wise independent CSPs \tnote{(We defer formal definitions to \cref{sec:preliminaries})}:

\begin{theorem} \label{thm:independent-sdp}
  Let $t \geq 2$.
  If a $k$-CSP is $t$-wise independent, then except with probability $o_n(1)$, a random binomial instance of
  the CSP with $n$ variables and $\Delta n$ expected constraints has an SoS solution of value $1$ and of degree
  $\Omega_k(n/(\Delta^{2/(t-1)} \log \Delta))$.
\end{theorem}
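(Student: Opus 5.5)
We will reduce to the lower bound of Kothari, Mori, O'Donnell and Witmer \cite{KMOW17}, generalizing its pseudo-expectation construction from ``$t$-wise uniform'' to ``$t$-wise $\nu$-independent''. In KMOW the pseudodistribution is built by ``closure'' in the constraint hypergraph. Fix a random instance, and for a set $S$ of at most $d$ variables take a small closure $\mathrm{cl}(S)$: a set of constraints whose union of scopes contains $S$ and whose scope hypergraph is a strong expander. For each constraint $C$ fix a $t$-wise $\nu$-independent distribution $\mu_C$ over the satisfying assignments of $C$; this exists by hypothesis. Define the local distribution on $\mathrm{cl}(S)$ to be the planted distribution
\[
  \Pr[x]\;\propto\;\prod_{v}\nu(x_v)\prod_{C\in \mathrm{cl}(S)}\frac{\mu_C(x|_C)}{\nu^{k}(x|_C)},
\]
and let the degree-$d$ pseudoexpectation of a monomial on $S$ be its marginal. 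Equivalently, we plant each $\mu_C$ and reweight so that the marginal on vertices in no constraint is the product $\nu^{\otimes}$. This replaces the uniform base measure of KMOW by $\nu^{\otimes n}$. Every clause is satisfied under every local distribution, so the value is $1$.

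The key steps, in order, are as follows.

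First, we prove the random-hypergraph expansion facts. With probability $1-o(1)$, every collection of at most $r=\Omega_k(n/(\Delta^{2/(t-1)}\log\Delta))$ constraints spans at least $(k-\frac{t-1}{2}-\delta)$ times as many vertices as constraints, counted appropriately via boundary vertices. This is the same first-moment calculation as in KMOW for the binomial model, and it does not depend on the predicate.

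Second, we show local consistency. If $H\subseteq H'$ are both closed sets of constraints of size at most $r$, the planted distribution on $H'$ marginalizes on $\bigcup H$ to the planted distribution on $H$. The argument peels off the constraints of $H'\setminus H$ one at a time. Expansion guarantees that each removed constraint has at least $k-t$ (in fact $>k-t$) vertices of degree one, that is, at most $t$ vertices shared with the rest. By $t$-wise $\nu$-independence, summing out the private vertices of that constraint leaves exactly the factor $\nu^{\otimes}$ on the shared vertices. This is where independence, rather than uniformity, is precisely what is needed: the residual marginal matches the base measure $\nu$.

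Third, we prove positive semidefiniteness. Following KMOW, we verify that for the degree-$d/2$ moment matrix, $\pE[p^2]\ge 0$. Group monomials by the closure of their support. For two sets $S,T$ we have $\mathrm{cl}(S\cup T)\supseteq$ both closures, so by consistency the whole matrix restricted to any finite family is a sum of moment matrices of genuine distributions on appropriately closed sets. KMOW's argument uses a canonical closure operation, which is monotone and idempotent, so that the pseudoexpectation is well defined and PSD as an expectation over one honest distribution on $\mathrm{cl}(S\cup T)$. We reuse this verbatim: the closure operation only depends on the hypergraph, and PSD-ness only used consistency, which we established in the previous step.

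The main obstacle is the second step. In KMOW, uniformity makes the base measure trivially a product of uniforms, and the peeling uses only that at most $t$ coordinates of a clause are fixed. Here we must check that the normalizing factors $\mu_C/\nu^k$ telescope correctly. We also need all clauses to share the same $\nu$ (hence the definition), since a variable appearing in several constraints must have a single consistent marginal. A secondary point is handling $t-1$ versus $t$ boundary vertices in the expansion exponent $2/(t-1)$. We will follow KMOW's ``$(t-1)/2$'' slack exactly, with constants depending on $k$ and on $\min\nu$ absorbed into $\Omega_k$.
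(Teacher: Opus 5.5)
\paragraph{Gap in the PSD step.} Your planted distribution is exactly the paper's canonical distribution $\mu_J\propto f_\cC\,\nu^V$, and your peeling step is the paper's \cref{lem:marginal}, so the first two steps are on track. The gap is in the third step. PSD-ness in \cite{KMOW17} does not follow from consistency alone. Locally consistent distributions on small sets give only a Sherali--Adams solution (the paper's \cref{independent-lp}), and such solutions need not be PSD. For Max-Cut on a triangle, the pair distributions uniform on $\{(0,1),(1,0)\}$ are mutually consistent, yet the $\pm1$ moment matrix has eigenvalue $-1$. Your justification also fails on its own terms. In $\pE[p^2]=\sum_{S,T}c_Sc_T\,\pE[x_Sx_T]$, each pair is evaluated under a different local distribution on $\mathrm{cl}(S\cup T)$. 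For a general degree-$d/2$ polynomial $p$ the union of these closures is large, so no single honest distribution stands behind the sum.

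\paragraph{The missing lemma.} What is missing is a conditional-independence property of the planted distribution. If $R$ separates $S$ from $T$ in the constraint hypergraph (e.g.\ $R$ is the intersection of the relevant closures), then under $\mu_J$ the assignments $b_S$ and $b_T$ are conditionally independent given $b_R$. This is what the PSD argument uses to split the moment matrix into mutually orthogonal PSD pieces. It is implicit in \cite[Lemma~6.14]{KMOW17} and explicit as \cite[Lemma~11.15]{chan2024how}. With uniform base measure it is immediate from the description ``independent $\mu_C$'s conditioned on agreement''. For your $\nu$-reweighted measure it must be proved. The paper does this in \cref{lem:cond-indep}: $\mu_J$ factorizes over the cliques of the primal graph, so the global Markov property applies (\cref{lem:global-markov}, extended to non-disjoint $S,T,R$). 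The paper then plugs this into the framework of \cite{chan2024how}, but the same lemma would repair your KMOW-based route.

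\paragraph{Expansion constant.} Separately, your expansion constant is off. You need $|V(H)|\ge(k-\frac{t+1}{2}+\gamma)|H|$. This forces some constraint to have at least $k-t$ private vertices, and it holds up to $|H|\approx n/\Delta^{2/(t-1-2\gamma)}$. The bound $k-\frac{t-1}{2}-\delta$ as stated already fails for $k=3$, $t=2$: it would forbid two hyperedges sharing two vertices, and there are $\Theta(\Delta^2)$ such pairs in expectation.
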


Unless a CSP is trivially satisfiable by a single value (in which case every instance is satisfiable), a random instance
is far from satisfiable with high probability (whp) \cite[Lemma~B.2]{chan2024how}.
And yet $\Omega(n)$-degree SoS cannot refute such random CSPs by \cref{thm:independent-sdp}.

We also show SoS lower bounds for CSP that are not necessarily $t$-wise independent.
Given a distribution $\rho$ over $\cR$, define the $t$-wise independent value to be the expected fraction of constraints
satisfied by the best tuple of $t$-wise independent distributions, when constraints are chosen from $\rho$:
\[ \opt_t(\rho) \coloneqq \max \Set{\E_{R \sim \rho} \Pr_{x \sim \mu^R} [x \in R] | \Paren{\mu^R}_{R\in\cR}
\text{ is $t$-wise $\nu$-independent for some $\nu$}}\,. \]
This definition generalizes ``distance from supporting a $t$-wise uniform distribution'' \cite{KMOW17}.

\begin{theorem} \label{thm:t-wise-value-sdp}
  Let $t \geq 2$, and $\rho$ be a distribution over $\cR$.
  Given any $k$-CSP $(D, \cR)$, except with probability $o_n(1)$, a binomial $\rho$-random instance of the CSP with
  $n$ variables and $\Delta n$ expected constraints has an SoS solution of degree
  $\Omega_k(n/(\Delta^{2/(t-1)} \log \Delta))$ and of value $\opt_t (\rho) + o_n(1)$.
\end{theorem}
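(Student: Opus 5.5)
We reduce \cref{thm:t-wise-value-sdp} to the construction behind \cref{thm:independent-sdp}, applied to a modified family of local distributions. Fix a maximizer of $\opt_t(\rho)$: a distribution $\nu$ over $D$ and, for each $R\in\cR$, a $t$-wise $\nu$-independent distribution $\mu^R$ over $D^k$ with $\E_{R\sim\rho}\Pr_{x\sim\mu^R}[x\in R]=\opt_t(\rho)$. The maximum is attained because the feasible set is compact. These $\mu^R$ need not be supported inside $R$. We treat each clause $(\sigma,R)$ as a ``virtual'' constraint whose local distribution is $\mu^R$, placed on the scope $\sigma$.

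First, we run the pseudo-expectation construction of \cref{thm:independent-sdp}, in the style of \cite{KMOW17}, with the local distributions $\mu^R$. That construction presumably uses only two facts: the $t$-wise $\nu$-independence of the local distributions, and the expansion/sparsity of the random hypergraph of scopes. It does not use the relation $R$ itself beyond requiring the local distributions to be supported in $R$, which is exactly what yields value $1$. Concretely, for a small set $S$ of variables we take its closure $\mathrm{cl}(S)$ in the constraint hypergraph, which has size $O(|S|)$ whp when the degree bound is $\Omega_k(n/(\Delta^{2/(t-1)}\log\Delta))$. We define the local distribution on $\mathrm{cl}(S)$ as the planted product
\[
\prod_{\text{clauses }C\subseteq \mathrm{cl}(S)}\mu^{R_C}(x_{\sigma_C})\cdot\prod_{v\ \text{uncovered}}\nu(x_v),
\]
normalized. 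Consistency across closures and PSD-ness follow verbatim from the $t$-wise independence argument, because all marginals on small sets agree with $\nu^{\otimes}$. The resulting $\pE$ is a valid degree-$d$ SoS solution, but it is not required to satisfy the constraints as axioms. For that reason we only need the $x$-independent Boolean-domain axioms ($\sum_a 1[x_v=a]=1$ and the idempotence relations), which hold automatically.

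Second, we compute the value. For every clause $C=(\sigma,R)$, the quantity $\pE[1[x\circ\sigma\in R]]$ is a degree-$k$ polynomial evaluated locally. It therefore equals $\Pr[x\circ\sigma\in R]$ under the local distribution on $\mathrm{cl}(\sigma)$. We must show that this marginal on $\sigma$ is exactly $\mu^R$, or is within $o(1)$ of it for all but an $o(1)$ fraction of clauses. In the expanding regime, the closure of a single scope typically contains that clause alone, and its marginal is then exactly $\mu^R$. Clauses in small dense substructures, such as those sharing $\ge 2$ variables with another clause, number only $O(\Delta^2)=o(n\Delta)$ in expectation once $\Delta=o(n^{\cdot})$. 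Their total contribution to the average is therefore $o_n(1)$. Averaging over clauses gives value $\E_{R}\Pr_{\mu^R}[x\in R]$ summed empirically. By concentration of the empirical distribution of relations around $\rho$ (a Chernoff bound, since $\cR$ is finite and $\Delta n\to\infty$), this is $\opt_t(\rho)+o_n(1)$.

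The main obstacle is the first step: checking that the lower-bound machinery of \cref{thm:independent-sdp} really never uses $\supp\mu^R\subseteq R$. We must also handle carefully the clauses whose closure marginal is not exactly $\mu^R$, which requires a clean statement that, for the degree range in question, all but $o(\Delta n)$ clauses have closure equal to themselves. If \cref{thm:independent-sdp} is proved with a cleaner abstract lemma (``given any $t$-wise $\nu$-independent local distributions attached to the random scopes, there is a consistent pseudo-distribution extending them''), the reduction is immediate and the rest is bookkeeping.
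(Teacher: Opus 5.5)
Your overall route matches the paper's. You fix a maximizer $(\nu,(\mu^R)_R)$ and rerun the construction behind \cref{thm:independent-sdp} with these $\mu^R$; that construction uses $\supp\mu^R\subseteq R$ only to get value $1$. You then finish with a Chernoff bound on the empirical relation distribution $\tilde\rho$.

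One side remark on your description of the construction. Consistency does not come from ``all marginals on small sets agree with $\nu^{\otimes}$'', because sets containing a scope do not have product marginals. It comes from \cref{lem:marginal}, and PSDness comes from \cref{lem:cond-indep}.

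The genuine gap is in your value step. You obtain the marginal $\mu^R$ only for clauses whose closure is the clause itself. You dispose of the remaining clauses by counting those sharing at least two variables with another clause. That count is $O(\Delta^2)$, which is $o(\Delta n)$ only when $\Delta=o(n)$. The theorem, however, is non-vacuous for $\Delta$ up to roughly $n^{(t-1)/2}$, and this exceeds $n$ once $t\ge 4$. In that range a typical clause shares two variables with $\Theta(\Delta/n)\to\infty$ other clauses, so your count gives nothing. The unspecified exponent in ``$\Delta=o(n^{\cdot})$'' hides exactly this failure. Aiming for ``all but $o(\Delta n)$ clauses have closure equal to themselves'' is also the wrong target.

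The missing observation is that there are no exceptional clauses at all. For every clause $C=(\sigma,R)$, the marginal of the constructed pseudo-distribution on $V(C)$ is exactly $\mu^R$, even when the closure of $V(C)$ contains other constraints. To see this, apply \cref{lem:marginal} repeatedly with $S=V(C)$ to strip every other constraint $C'$ from the closure $J'$.
\begin{itemize}
\item Since $C$ remains in $J'\setminus\{C'\}$, the peeling condition reduces to $|V(C')\cap V(J'\setminus\{C'\})|\le t$.
\item This is the same expansion-driven peeling the construction already uses for consistency between closures, so it holds simultaneously for all clauses.
\item You end at $\mu_{\{C\}}$, whose projection onto $V(C)$ is $f_C\,\nu^{V(C)}=\mu^R$, because $f_C\nu^{V(C)}$ already has total mass $1$.
\end{itemize}
Hence $\pE_\mu\Val_\cI(x)=\E_{R\sim\tilde\rho}\Pr_{x\sim\mu^R}[x\in R]$ exactly. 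Chernoff on $\tilde\rho$ then gives $\opt_t(\rho)+o_n(1)$ over the whole range of $\Delta$. This is what the paper does, and it removes any need to count dense substructures.
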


Our \cref{thm:t-wise-value-sdp,thm:independent-sdp} generalize \cite[Theoreoms~1.1 and 1.2]{KMOW17} from
$t$-wise uniformity to $t$-wise independence, matching the SoS degree lower bound of the latter.
In particular, when the number of constraints $m = O(n^{t/2})$, the SoS degree lower bound is
$\tilde\Omega(n^{1/(t-1)})$ (and is in fact $\Omega(n)$ when $t = 2$).
In \cref{sec:lower-bounds}, we also show variants of \cref{thm:independent-sdp} for combined relaxations/hierarchies
such as SDP+AIP and C(BLP+AIP) studied in recent works
\cite{chan2024how,chan2025how,ciardo2025semidefinite,zhuk2025singleton}.
\subsection{Our Upper Bound Results}

\paragraph{Pinning Down the Threshold with SoS}
For the upper bound, when the number of constraints is $m = \Omega(n^{t/2})$, we show there is a constant degree strong refutation with
high probability, upper bounding the SoS value to be arbitrarily close to $\opt_t (\rho)$:
\begin{theorem}\label{thm:main-upper-bound}
  For any $k$-CSP $(D, \cR)$, any $2 \leq t \leq k$, any $\eps > 0$, there are $K = K(k, \abs{D}, \eps)$ and
  $C = C(k,\abs{D},t,\eps)$ such that whenever
  \[
    m \geq \begin{cases}
      C n^{t/2} & \text{if $t$ is even} \\
      C n^{t/2} \sqrt{\log n} & \text{if $t$ is odd}
    \end{cases},
  \]
  except with probability $o_n (1)$ over a binomial $\rho$-random instance $\cI$ having $n$ variables and $m$
  expected constraints, every pseudo-distribution $\mu$ of degree $K$ satisfies
  \[ \pE_\mu \Val_\cI(x) \leq \opt_t(\rho) + \eps. \]
\end{theorem}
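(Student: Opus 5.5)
The plan is to reduce the SoS value bound to a statement about low-degree marginals of the pseudo-distribution, and then to control the contribution of every ``non-independent'' Fourier component by Kikuchi-matrix certificates.

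\textbf{Step 1: local expansion of the objective.} Write $\Val_\cI(x)=\frac1m\sum_{(\sigma,R)\in\cC}\1[x\circ\sigma\in R]$. Each indicator is a polynomial of degree at most $k$ in the one-hot variables $x_{v,a}=\1[x_v=a]$. Fix a degree-$K$ pseudo-distribution $\mu$, with $K$ large.

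\textbf{Step 2: global correlation rounding.} Apply the global correlation rounding of Barak--Raghavendra--Steurer. Conditioning $\mu$ on $O_{k,|D|,\eps}(1)$ randomly chosen variables yields a pseudo-distribution $\mu'$ of degree at least $k$. For $\mu'$, the average over variable $t$-tuples of the total correlation (distance of the $t$-wise pseudo-marginals from product) is at most $\eps'$. Since conditioning preserves the objective in expectation, it suffices to bound $\pE_{\mu'}\Val_\cI$.

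\textbf{Step 3: decomposition against a product reference measure.} Let $\nu$ be the average single-variable marginal of $\mu'$. Expand each clause indicator in an orthonormal basis of $L^2(\nu)$ on each coordinate, $\1[y\in R]=\sum_{S\subseteq[k]}\sum_{\alpha}\hat R(S,\alpha)\chi_{S,\alpha}(y)$. Split the sum into terms with $|S|<t$ and terms with $|S|\ge t$.

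The terms with $|S|<t$ depend only on $(t-1)$-wise marginals. By Step 2, these are close to a product of single-variable marginals for most tuples. Random clauses place tuples roughly uniformly, so together with the $\nu$-centered terms this part is at most the value of some $t$-wise $\nu$-independent local distribution. That value is at most $\opt_t(\rho)+\eps/2$.

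I expect a subtlety here: individual variable marginals $\mu'_v$ differ from $\nu$. I would handle it by grouping variables by their marginals, or by averaging the clause value over random placements, so that the resulting local distributions still have $t$-wise marginals $\nu^t$. Concentration of the random instance should make this averaging valid.

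\textbf{Step 4: the $|S|\ge t$ terms via Kikuchi matrices (the main obstacle).} The terms with $|S|\ge t$ give sums $\frac1m\sum_{c}\hat R_c(S,\alpha)\chi(x_{\sigma_c(S)})$. These are random polynomials with centered coefficients over a sparse random hypergraph. For $|S|=s\ge t$, I would show that SoS of degree $K$ certifies $|\cdot|\le\eps/(2\cdot\#\text{terms})$ once $m\ge Cn^{t/2}$, hence also once $m\ge Cn^{s/2}$ is not required.

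For even $s$, certify via the spectral norm of a level-$\ell$ Kikuchi matrix, using matrix Chernoff/trace bounds. For odd $s$, use a Cauchy--Schwarz step pairing clauses sharing a variable, as in the odd-order Kikuchi/XOR refutations. This costs the $\sqrt{\log n}$ factor.

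Two complications separate this from prior work. First, the basis functions are not $\pm1$ and not symmetric across coordinates, so the tensors are asymmetric and need a new asymmetric/odd Kikuchi construction. Second, the coefficients $\hat R_c$ have nonzero mean after centering at $\nu$ for the true distribution $\rho$. The mean part is deterministic, so it must be absorbed into Step 3 rather than Step 4. Only the fluctuation around the mean is random, and the spectral bound applies to that fluctuation.

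For $s>t$, density $n^{t/2}\ll n^{s/2}$ is insufficient directly. The plan there is to reduce the degree-$s$ term to a degree-$t$ term by fixing $s-t$ coordinates and using Step 2 to bound the correlation of the remaining part. This reduction, together with the asymmetric Kikuchi norm bound, is where I expect the real difficulty.
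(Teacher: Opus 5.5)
\textbf{Main gap: Step~4 for $|S|>t$.} Your Step~4 fails for $|S|=s>t$, and the problem is not merely technical: the certificate you want does not exist at this density.

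Suppose every term with $|S|\ge t$ were certified to be $o(1)$. The $\nu$-centered terms with $1\le|S|<t$ vanish under near-product marginals, so you would have shown $\pE_\mu\Val_\cI(x)\le\E_{R\sim\rho}\Pr_{\nu^k}[R]+\eps\le\opt_k(\rho)+\eps$, which is false in general.

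For instance, take $D=\{0,1\}$, $k=4$, $t=2$, and $\cR=\{R_0,R_1\}$ with $R_b=\{y: y_1\oplus y_2\oplus y_3\oplus y_4=b\}$ and $\rho$ uniform. Then $\opt_2(\rho)=1$ but $\opt_4(\rho)=1/2$. At $m=Cn$ the instance is random $4$-XOR, which has a value-$1$ pseudo-distribution of degree $\Omega(n)$ (Grigoriev, or \cref{thm:independent-sdp} with $t=3$). Under it, and under any conditioning of it, the degree-$4$ part $\frac1{2m}\sum_c(-1)^{b_c}\chi(x_{\sigma_c})$ of the objective has pseudo-expectation $1/2$. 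So no reduction that ``fixes $s-t$ coordinates'' can work: SoS solutions genuinely carry correlations of degree $>t$ along constraint tuples, and these survive conditioning.

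The missing idea is that the degree-$>t$ part should never be bounded at all; it is exactly the freedom that $\opt_t(\rho)$ allows. The paper certifies only degree-$t$ concentration. By \cref{lem:concentrated}, $\E_{\sigma\in\cC(R)}\mu_{\sigma(Q)}$ is close to the average over $V^{\underline t}$, and after conditioning that average is close to $\nu_\alpha^t$ by \cref{thm:t-wise-conditioning}. The paper then argues about the genuine local distribution $\cD_{\cC(R),\mu|\alpha}$ on $D^k$, whose $\delta^{[1,t]}_{\nu_\alpha}$ is now small. \cref{thm:almost-twise} (Efron--Stein, hypercontractivity and Farkas, after removing small atoms of $\nu$ via \cref{lem:close-min-prob}) makes this distribution $\ell_1$-close to an exactly $t$-wise $\eta$-independent one. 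Here $\eta$ depends only on $\nu_\alpha$, so it is shared by all $R$, and the value is therefore at most $\opt_t(\rho)$. An equivalent route is LP duality: take a degree-$t$ dominating polynomial $Q_\nu\ge\1_R$ with coefficients bounded uniformly in $\nu$ (\cref{lem:lp-bounded-dual-norm}), whose expectation depends only on $t$-wise marginals.

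\textbf{Secondary gap: the log factor for even $t$.} Even your treatment of the $|S|=t$ terms does not reach the stated density when $t$ is even. With constant SoS degree you have $\ell=O(1)$. At $m=Cn^{t/2}$ the flattened (or Kikuchi) matrix then has constant average row degree, so its spectral norm is driven by rows of degree about $\log n/\log\log n$. Consequently, unpruned spectral, matrix-Chernoff or trace bounds, as you propose, cost an extra logarithmic factor in $m$. The paper avoids this by bounding the $\infty\to1$ norm of the flattened tensor directly, using Bernstein plus a union bound over sign vectors (\cref{thm:tensor-norm-bound}). It then converts that bound into a degree-$t$ SoS certificate via Grothendieck's inequality and SDP duality (\cref{thm:tensor-sos-bound,thm:concentration-random-instance}).

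Relatedly, expanding in an $L^2(\nu)$-orthonormal basis with $\nu$ depending on $\mu'$ produces coefficients that blow up when $\nu$ has small atoms. Certifying the bounded indicator monomials $\prod_{i\in Q}x_{\sigma(i),b_i}$ instead sidesteps this.
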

Here $\pE_\mu \Brac{\Val_\cI(x)}$ denotes the value of the fractional assignment given by $\mu$
(see \cref{sec:preliminaries} for a precise definition).
Theorem ~\ref{thm:main-upper-bound} generalizes \cite[Theorem~2.4]{AOW15} from $t$-wise uniformity to $t$-wise
independence and, also, remove their assumption about predicate CSPs with uniformly random literals.
When $t$ is even, our requirement on the number of constraints $m = \Omega(n^{t/2})$ also avoids the polylogarithmic
factor in \cite[Theorem~2.4]{AOW15}.

Taken together, our theorems prove the following dichotomy:
For any $2 \leq t \leq k$ such that $t$ is even, a random $k$-CSP has $O(1)$-degree SoS refutation whp whenever
$m = O(n^{t/2})$ if and only if it is $t$-wise independent.
Further, the value of the optimal $O(1)$-degree SoS solution is approximately $\opt_t(\rho)$ whp.

We also generalize the upper bound in \cref{thm:main-upper-bound} to higher degree SoS:
\jnote{double check transition below}
\snote{I changed the statement of \cref{thm:main-upper-bound-higher} so that $t-1 \leq \ell$.
We need $\ell \geq t-1$ for odd $t$, while we need $\ell \geq t/2$ for even $t$, so I just choose a bound that works for
both odd and even $t$. We can also state the lower bound for $\ell$ separately for odd and even $t$, but I think a tight
lower bound for $\ell$ is not important.}
\begin{theorem} \label{thm:main-upper-bound-higher}
  For any $k$-CSP $(D, \cR)$, any  $2 \leq t \leq k$, any $t-1 \leq \ell \leq \tilde{O}(n)$, any $\eps > 0$, there are
  $K = K(k, \abs{D}, \eps)$ and $C = C(k,\abs{D},t,\eps)$, such that whenever
  \[
    m \geq \begin{cases}
      \displaystyle C n^{t/2} \frac{\ell}{\ell^{t/2}} \log n & \text{if $t$ is even} \\
      \displaystyle C n^{t/2} \frac{\ell}{\ell^{t/2}} \sqrt{\log n} & \text{if $t$ is odd}
    \end{cases},
  \]
  except with probability $o_n (1)$ over a binomial $\rho$-random instance $\cI$ having $n$ variables and $m$
  expected constraints, every pseudo-distribution $\mu$ of degree $K+2(\ell+1)$ satisfies
  \[ \pE_\mu \Val_\cI(x) \leq \opt_t(\rho) + \eps. \]
\end{theorem}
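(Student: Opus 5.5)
Our plan follows the same architecture as the constant-degree result (\cref{thm:main-upper-bound}); the only new ingredient is that the spectral certificates use level-$\ell$ Kikuchi matrices.

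\textbf{Step 1: reduction to polynomial certificates.} Use LP duality to write the constraint satisfaction indicator $\1[x\circ\sigma\in R]$ as $\opt_t(\rho)$ plus a combination of local polynomials. Concretely, encode $x$ by indicator variables $x_{v,a}$ for $v\in V$, $a\in D$, and expand each indicator in a $\nu$-orthonormal basis. Duality for the program defining $\opt_t(\rho)$, maximized over $\nu$, then yields two parts. The first is a pseudo-expectation identity: $\E_{R\sim\rho}\1[x\in R]\le \opt_t(\rho)+\eps$ modulo polynomials whose terms have degree at most $t$ in the $\nu$-centered basis. The second part handles the coordinates of degree above $t$. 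This is not quite right as stated, so we would instead use global correlation rounding (Barak--Raghavendra--Steurer). After conditioning on $O_{k,|D|,\eps}(1)$ variables, which costs the $K$ in the degree, the pseudo-distribution's local marginals are close to products on average. This lets us replace $\nu$ by the empirical average one-marginal. The value then splits into a main term, bounded by $\opt_t(\rho)+\eps/2$, plus error terms of the form
\[ \frac{1}{m}\sum_{i} c_i\, \pE \prod_{j\in S_i} \chi_j(x_{\sigma_i(j)}), \]
where $|S_i|=s\le t$ and the $\chi$'s are mean-zero under $\nu$ but not necessarily under the pseudo-distribution. Each $c_i$ depends on $(\sigma_i,R_i)$ only through the relation.

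\textbf{Step 2: bounding each degree-$s$ term.} Each error term is a random asymmetric tensor polynomial $T(x)=\sum_{e}A_e\prod_{j}\chi_j(x_{e_j})$ with independent entries, of which about $m$ are nonzero. We need an SoS certificate of degree $2(\ell+1)$ showing $|T(x)|\le \eps m$ whenever $m \ge C n^{s/2}\ell^{1-s/2}\cdot\polylog$. Terms with $s<t$ require fewer constraints, and since $t\le k$ all cases are covered. The low-order terms are not mean-zero under the pseudo-distribution. They are handled by the conditioning, which makes $\pE$ of the degree-$1$ and degree-$2$ terms small, in combination with splitting each $\chi$ into its pseudo-mean plus a fluctuation.

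\textbf{Even $s$.} Build the Kikuchi matrix indexed by $\ell$-subsets of $[n]\times D$ (colored subsets), with an entry for each pair $(U,W)$ whose symmetric difference equals a hyperedge split evenly between the two sides. The certificate is
\[ (x^{\otimes\ell})^\top \mathcal K\, x^{\otimes \ell} = c\cdot T(x). \]
Bound $\|\mathcal K\|$ by matrix Bernstein/Chernoff using row sparsity. The ratio of $\|\mathcal K\|$ to the number of rows times the count is then $\eps$ in the stated regime, and the $\log n$ comes from matrix Bernstein.

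\textbf{Odd $s$.} Here the main obstacle is the new odd-order, asymmetric Kikuchi construction. First apply Cauchy--Schwarz to pass to $T(x)^2\le m\sum_{v}(\sum_{e\ni v}\cdots)^2$. This produces an even-order $2(s-1)$ tensor whose entries are sums of products over pairs of hyperedges sharing one variable. We would build the level-$\ell$ Kikuchi matrix for this tensor and bound its norm by the trace method, or more precisely by a decoupled matrix Bernstein bound after removing heavy, high-degree pairs. The diagonal (same-hyperedge) contribution is controlled deterministically by $m$. Balancing the off-diagonal norm against this term gives the $\sqrt{\log n}$ version of the threshold. The asymmetry of the tensor, where different slots carry different $\chi_j$ and colors, is handled by indexing rows by colored sets and noting that independence across entries still holds.

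The final degree is $K$ for the rounding plus $2(\ell+1)$ for the Kikuchi certificate. A union bound over the constantly many $(s,\chi)$ types gives the failure probability $o_n(1)$.
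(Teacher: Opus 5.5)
Your overall architecture is a legitimate variant of the paper's. You replace its final step, the almost-$t$-wise-independence theorem (\cref{thm:almost-twise}), by the dual dominating polynomial at the common conditioned marginal $\nu_\alpha=\E_{v\in V}(\mu|\alpha)_v$; the paper itself remarks this works, with worse dependence on $k$.

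Step~2, however, cannot work as you set it up. Your $T(x)=\sum_e A_e\prod_j\chi_j(x_{e_j})$ has $A_e\propto\1[e\in E(\cH)]$, so it is not centered over the randomness of the hypergraph. Its mean is $\bar T(x)\approx \bar c\,m\,\E_{S\in V^{\underline s}}\prod_j\chi_j(x_{S_j})=\bar c\,m\prod_j\bigl(\E_{v\in V}\chi_j(x_v)\bigr)+O(m/n)$, where $\bar c$ is the $\rho$-averaged coefficient. This is generically of order $m$, e.g.\ for a constant assignment $x\equiv a$. So no certificate of $|T(x)|\le\eps m$ exists, for any $s$. Equivalently, the norm of the uncentered Kikuchi matrix is dominated by its mean (the trivial eigenvalue), so matrix Bernstein or the trace method can only return a trivial bound. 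The $\nu$-mean-zero property of the $\chi_j$ kills $\bar T$ only for integral $x$ whose empirical marginal equals $\nu$. Under a pseudo-distribution, $\pE\prod_j(\E_v\chi_j(x_v))$ need not vanish even though each $\pE\,\E_v\chi_j(x_v)=0$, because $\mu$ may behave like a mixture of assignments with different empirical marginals.

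The missing step is to split $T=(T-\bar T)+\bar T$ and treat the two parts separately.
\begin{itemize}
\item Certify only the deviation $T-\bar T$, using Kikuchi matrices built from the centered variables $\1[e\in E(\cH)]-p$. This is the paper's $(t,\eps,2(\ell+1))$-concentration (\cref{lem:concentrated}), which compares $\E_{\sigma\in\cC}$ with $\E_{\sigma\in V^{\underline k}}$.
\item Separately show that $\pE_{\mu|\alpha}\bar T$ is small for \emph{every} $s\le t$. This amounts to $\E_{S\in V^{\underline t}}(\mu|\alpha)_S\approx\nu_\alpha^{t}$ on average over the conditioning. The paper proves it in \cref{lem:conditional-product} by running BRS on modified Kneser graphs. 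Alternatively, pseudo-expectation Cauchy--Schwarz bounds $\pE\prod_j(\E_v\chi_j(x_v))$, up to a constant, by the square root of the pairwise global variance $\pE(\E_v\chi_1(x_v))^2$, which ordinary BRS controls.
\end{itemize}
Your plan invokes conditioning only for degrees $1$ and $2$ and otherwise relies on the nonexistent certificate.

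There are also two smaller slips.
\begin{itemize}
\item For odd $s$ the Cauchy--Schwarz factor must be $n$ (it comes from $\sum_v\chi(x_v)^2$), not $m$. With $m$, the diagonal term alone already yields only the trivial bound.
\item The $\chi_j$ depend on $\nu_\alpha$, hence on $\mu$, so a union bound over $(s,\chi)$ is not over a finite set. Instead, fix the certificates in advance for the finitely many indicator monomials $\prod_{j\in W}x_{e_j,b_j}$, and combine them by linearity using the uniform coefficient bound of \cref{lem:lp-bounded-dual-norm}. Use centered indicators $\Iv{\cdot=b}-\nu(b)$ rather than $\nu$-orthonormal functions, whose sup norms blow up when $\nu$ has tiny masses.
\end{itemize}
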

Our \cref{thm:main-upper-bound-higher} removes the boolean domain and the uniform literal assumptions in
\cite{RRS17,WAM19,AhnStrongRefutationKXOR}.
\cref{thm:main-upper-bound-higher,thm:t-wise-value-sdp} together achieve the optimal three-way tradeoff
between constraint density, SoS degree, and strength of refutation up to a polylogarithmic factor in the degree and an
arbitrarily small additive term in the strength of refutation.
This generalizes the three-way tradeoff previously for boolean CSP with uniformly random literals \cite{KMOW17,RRS17};
see \cite[Section~1.5]{KMOW17}.
\tnote{I would move the next phrase directly into the paragraph below.}
While this result matches the optimal clause-density threshold, it introduces an additive $\eps$ slack in the certified value, along with a dependence of $O_k(1/\eps^2)$ on the SoS degree. In contrast, in the classical setting of uniformly random signings, spectral algorithms achieve tight certification without these overheads.

\paragraph{Improved Refutation with Spectral Algorithms}

 Motivated by this observation, we next present a spectral refutation algorithm that removes both the runtime dependence in the additive $\varepsilon$ slack and the need for high-degree relaxations and rounding, while retaining the optimal threshold $m \sim n^{t/2}$, matching both the classical uniform-literal setting and the best-known Sum-of-Squares lower bounds. 
 
\begin{restatable}{theorem}{LargeAlphabetRefutation} 
\label{thm: alphabet-ref}
\tnote{I don't see why we need "Application to Predicates with Large Alphabets [].." given that we don't have a header in the previous theorems. Also no probability appears in the theorem statement?}\jnote{yup deleted- and yup looks good}
  \snote{I avoid the terminology ``random literals according to distribution $\rho$''. See if the next sentence is good
  for you. $\rho$-random instance is defined in Preliminaries.}
	For a $\rho$-random instance $\Psi$, for any $t> 1$,  there is an algorithm that runs in time ${O}( \binom{k}{\leq t}  \cdot  n^{O(\ell)} ) + O( (\frac{3}{\eps})^{|D|} ) = O_{k} (n^{O(\ell)}) + O( (\frac{3}{\eps})^{|D|} )  $ that certifies, \[ 
	\max_{x\in D^n} \Val_{\Psi}( x) \leq  \opt_t(\rho)   + \eps +o_n(1) 
	\]
		  provided $m$ satisfies the following density constraint
		 \[
    m \geq \begin{cases}
      \displaystyle C \cdot \frac{1}{\eps^2}\cdot  n^{t/2} \frac{\ell}{\ell^{t/2}} \log n & \text{if $t$ is even} \\
      \displaystyle C \frac{1}{\eps^2}\cdot n^{t/2} \frac{\ell}{\ell^{t/2}} \sqrt{\log n} & \text{if $t$ is odd}
    \end{cases},
  \]
  for any $\ell\in \N$ s.t. $t-1 \leq \ell \leq \tilde{O}(n)$.
		  \end{restatable}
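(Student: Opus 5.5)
Our plan is to reduce certification to bounding finitely many low-degree ``correlation'' polynomials by spectral norms, and then to handle the $\nu$-dependence of $\opt_t(\rho)$ with an $\eps$-net over distributions on $D$; this net is the source of the $(3/\eps)^{|D|}$ term.

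\textbf{Fourier decomposition of the value.} For a candidate $\nu$ on $D$, fix an orthonormal basis $\{\chi_0=1,\chi_1,\dots,\chi_{|D|-1}\}$ of $L^2(\nu)$. Expand each constraint indicator $1_R(x_1,\dots,x_k)$ in the product basis. Group the terms by support size: the level-$\le t$ part is a function of $t$-wise marginals, and the rest has support size $>t$. Write $\Val_\Psi(x)=\frac1m\sum_{c}1_{R_c}(x\circ\sigma_c)$. Then $\Val_\Psi(x)$ splits into two pieces:
\begin{itemize}
\item a ``main term'' obtained by replacing each $R_c$ by its $\rho$-average;
\item fluctuation terms of the form $\frac1m\sum_c w_c\prod_{j\in S}\chi_{a_j}(x_{\sigma_c(j)})$ with $|S|\le t$ and mean-zero random coefficients $w_c$, together with terms of degree $>t$.
\end{itemize}
The key identity is this. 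If the empirical distribution of $x$ has joint $t$-wise statistics close to $\nu^{\otimes t}$, that is, all degree-$\le t$ nonconstant correlations $\frac1{n^{|S|}}\sum_{i_1,\dots}\prod\chi(x_{i_j})$ are small, then the main term is at most $\opt_t(\rho)+\eps/2$. This holds essentially by definition of $\opt_t$: one takes $\mu^R$ to be the distribution of $x\circ\sigma$ conditioned on $R$, and uses the fact that a uniformly random injection $\sigma$ makes the marginals of $x\circ\sigma$ approach products of the empirical distribution $\nu_x$. We take $\nu=\nu_x$ rounded to a net point.

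\textbf{Spectral certification of the fluctuation terms.} For each $S$ with $|S|\le t$ and each multi-index $a$, we form the order-$|S|$ random tensor $T_{S,a}$ with entries $T[i_1,\dots,i_{|S|}]=\sum_{c:\sigma_c(S)=(i_1,\dots)} w_c$. We must certify $|\langle T_{S,a},\chi(x)^{\otimes}\rangle|\le \eps m/\binom{k}{\le t}$ for all $x$. Following \cite{RRS17,WAM19}, we use level-$\ell$ Kikuchi matrices indexed by $\ell$-subsets (for odd order, the asymmetric $\ell$ versus $\ell+1$ variant mentioned in the abstract). The matrix-Bernstein/trace method gives spectral norm $\tilde O(\sqrt{m\ell/ n^{\cdots}})$ relative to the number of nonzeros. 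This yields the density requirement $m\ge C\eps^{-2}n^{t/2}\ell^{1-t/2}\log n$; for odd $t$ we use the improved $\sqrt{\log n}$ obtained by the Cauchy–Schwarz reduction to an even-order asymmetric matrix. The vectors $\chi(x)$ are bounded because $\chi_a$ is bounded by $O_{|D|,\nu_{\min}}(1)$, so the Kikuchi quadratic form bounds the polynomial.

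There is a non-randomness issue: $w_c$ is mean zero only after centering by $\rho$, and the basis depends on $\nu$. We handle this by computing the matrices once for each net point $\nu$. The degree-$>t$ terms need no certification, since the main-term argument absorbs them under the $t$-wise statistics condition, or rather they vanish in expectation over $\sigma$.

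\textbf{Assembling, and the main obstacle.} The algorithm computes the spectral norms in time $\binom{k}{\le t}n^{O(\ell)}$ per net point. It then uses a $(3/\eps)^{|D|}$-size net of $\nu$'s: for each net point it computes $\opt_t$ restricted to $\nu$, plus the certified error, and outputs the maximum. Correctness follows because every $x$ has $\nu_x$ within $\eps$ of some net point. Two pieces remain delicate: the case where $\nu$ has tiny-mass atoms, handled by merging atoms at cost $\eps$, and the self-correlation structure. The main obstacle is making the main-term step rigorous. The difficulty is that the $t$-wise correlation polynomials of a \emph{fixed} $x$ are not small in general; the relevant object is the distribution of $x\circ\sigma$ over random constraints, which must be close to $t$-wise $\nu_x$-independent up to the certified spectral error. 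Establishing that closeness for all $x$ simultaneously, with only additive $\eps+o_n(1)$ loss, is where we expect the most work.
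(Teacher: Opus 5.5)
Your overall architecture (spectral certification of degree-$\le t$ statistics plus an $\eps$-net over $\nu$) matches the paper's. However, the proposal has a genuine gap exactly where you flag the main obstacle, and the spectral step is aimed at the wrong object.

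\textbf{The key identity and the spectral step.} The hypothesis of your ``key identity'' carries no information about the instance. The sum $\frac{1}{n^{|S|}}\sum_{i_1,\dots,i_{|S|}}\prod_j\chi_{a_j}(x_{i_j})$ factorizes as $\prod_j\E_{\nu_x}\chi_{a_j}$ (up to $O(1/n)$ if indices must be distinct). So it is small as soon as $\nu_x$ is close to $\nu$, and the conclusion ``main term $\le\opt_t(\rho)+\eps/2$'' is then essentially the theorem itself. The appeal to a uniformly random injection $\sigma$ does not apply, because the scopes $\sigma_c$ are fixed and $x$ is chosen after seeing them. What has to be certified, simultaneously for all $x$, is that the constraint-induced distribution $\cD_{\cC(R),x}$ has $t$-wise marginals close to $\nu_x^{\otimes t}$. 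Your tensors $T_{S,a}$ do not do this: their coefficients $w_c$ are centered over the relation type, whereas the relevant fluctuation comes from the random hypergraph.

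The paper instead certifies, for every $|S|\le t$ and $\beta\in D^S$, the deviation polynomial $C_{S,\beta}(x)=\Pr_c[x_{\sigma_c(S)}=\beta]-\nu_x^S(\beta)=\sum_{\gamma\in[n]^{|S|}}C_S[\gamma]\Iv{x_\gamma=\beta}$. Its coefficients $C_S[\gamma]\propto\sum_{R\in[n]^k,\,R_S=\gamma}(\mathbf{1}[R\in E(\calH)]-p)$ are centered edge indicators. Because it is written in the $\{0,1\}$-indicator basis, its coefficient tensor does not depend on $\nu$ and there is no small-atom issue. It is bounded via Kikuchi matrices for indicators (\cref{lem:alphabet-partial-certification}). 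Relation types are handled by splitting into per-relation subinstances (heavy ones certified, light ones charged $\eps/3$), not by centering. In your decomposition the degree-$>t$ relation-fluctuation terms would otherwise also need a certificate, and a union-bound ``whp'' argument is not one.

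\textbf{The missing robustness step.} Even granting that certificate, ``essentially by definition of $\opt_t$'' is not a valid step. $\cD_{\cC(R),x}$ is only approximately $t$-wise $\nu_x$-independent, so it is not admissible in the definition of $\opt_t$. Its degree-$>t$ part is neither zero nor negligible; it is precisely what lets $\opt_t$ exceed $\E_{\nu^k}\mathbf{1}_R$. That this part vanishes in expectation over a random $\sigma$ is irrelevant for adversarial $x$.

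The paper's robustness step is LP duality. For each relation $R$ and each $\nu$ there is a dominating polynomial $Q_{R,\nu}=\sum_{W\in\binom{[k]}{t},\,b\in D^W}c_W(b)\Iv{x_W=b}\ge\mathbf{1}_R$ pointwise on $D^k$. By strong duality, $\E_{\nu^k}Q_{R,\nu}$ equals the best $t$-wise $\nu$-independent value. By Cramer's rule on a minimal face (\cref{lem:lp-bounded-dual-norm}), $\|Q_{R,\nu}\|_1=O_{k,|D|}(1)$. Writing $\Psi_R$ for the subinstance with relation $R$ and taking $\nu=\nu_x$, this gives $\Val_{\Psi_R}(x)\le\Psi_{Q_{R,\nu}}(x)\le\val_t(Q_{R,\nu})+\|Q_{R,\nu}\|_1\max_{S,\beta}|C_{S,\beta}(x)|$, so high-degree terms never appear. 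The same coefficient bound is what makes the $\eps$-net step valid when $\nu_x$ is replaced by a nearby net point. The almost-$t$-wise-independence result \cref{thm:almost-twise}, used in the SoS proof, would be an alternative robustness step. Without one of these ingredients your assembly does not reach $\opt_t(\rho)+\eps$.
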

\snote{I changed $t/2 \leq \ell$ to $t-1 \leq \ell$. See my comment before \cref{thm:main-upper-bound-higher}}
\snote{The running time of the spectral algorithm is the sum of two terms: $O_k(n^{O(\ell)})$, and
$\min(O_{|D|,k}(1/\eps),n)^{O(|D|)}$. The second term coming from bias/marginal enumeration is awkward to state, and may
only be stated in full in a remark.
The main theorem statement may replace it with the weaker bound $O_{|D|,k}(1/\eps)^{O(|D|)}$ or $n^{O(|D|)}$.
The weaker bound $n^{O(|D|)}$ is nicer as it can be combined with the first term as $O_k(n^{O(\ell)}+n^{O(|D|)})$ or
even $O_k(n^{O(\ell+|D|)})$.}
\jnote{looks good!} 

\begin{remark}
	When specialized to predicates on the boolean domain, our algorithm runs in time independent of $\eps$ (See \cref{thm:spectral-refutation}  for formal statements).
\end{remark}
 \begin{remark}
 	Our density constraint on $m$ can be improved to contain a factor of $\sqrt{\log n}$ as opposed to $\log n$. (See \cref{thm:smooth-spectral-refutation} for formal statements).
 \end{remark}
A closer inspection of our spectral refutation algorithm reveals that it is already captured by a level-$O(\ell)$ Sum-of-Squares relaxation—without any $1/\varepsilon$ dependence—once augmented with marginal (bias) constraints. Whether this augmented program is in fact subsumed by the canonical SoS formulation for Max-$k$-CSP, without explicitly imposing such constraints, is less clear.

Notably, \cite{KOS19} shows that for CSPs with random literals, this augmentation does not strengthen the power of the standard SoS SDP. It remains an intriguing open question whether, in our setting, the bias constraints are automatically enforced by the standard SoS relaxation at a comparable degree.
\jnote{added discussion}




\subsection{Related Work}
\jnote{other idea for the subtitle?}
Historically, the terminology ``$t$-wise independent'' had three different meanings in the literature.
A distribution $\mu$ over $D^k$ has been called ``$t$-wise independent'' if its marginal $\pi_T (\mu)$ on a subset $T
\subseteq [k]$ of size $t$ is:
\begin{enumerate}
  \item (Definition 1; most general) $\prod_{i \in T} \nu_i$ for some distributions $\set{\nu_i}_{i \in [k]}$
  \item (Definition 2) $\nu^T$ for some distribution $\nu$
  \item (Definition 3; most restrictive) uniform distribution over $D^t$
\end{enumerate}
The terminology ``$t$-wise uniform'' was coined by \cite[definition~3.10]{AOW15} in 2015 to refer to the most
restrictive definition.
Numerous CSP lower bounds (e.g.~\cite{AM09,benabbas12sdp,BCK15}) that appeared before \cite{AOW15} called the same
concept by ``balanced $t$-wise independent'' or even just ``$t$-wise independent''.
By contrast, our ``$t$-wise independence'' refers to Definition 2.
The closest to our definition of $t$-wise independence appeared in Austrin and H\r{a}stad \cite{AustrinHastad2011} (in
a related but slightly different context of worst-case approximation resistance under the Unique-Games Conjecture),
whose definition agrees with ours when the CSP has only one type of constraint ($\abs{\cR} = 1$).
For CSPs with multiple types of constraints ($\abs{\cR} > 1$), our definition of ``$t$-wise independent CSP'' is new.
Part of our contribution is identifying the precise condition that determines the complexity of random $k$-CSP
against SoS.
Before our work, it was far from obvious that when literals are not considered, the correct generalization of $t$-wise
uniform CSP is the new concept of $t$-wise independent CSP introduced here.

We demonstrate in \cref{thm:indep-not-uniform} a CSP that is pairwise independent but not pairwise uniform (even when
restricted to any subdomain), so our lower bound (\cref{thm:independent-sdp}) applies to a strictly larger class of CSPs
than \cite{KMOW17}.
We also show in \cref{thm:predicate-independent-uniform} that for the special case of predicate CSP with literals,
$t$-wise uniformity coincides with $t$-wise independence.\tnote{Should we write the above phrase earlier in the section? E.g. on page 2 when we say "the condition turns out not ot be t-wise uniformity, but (what we call) [..]", or in some other place we are sure the readers will see it.}
The proof of \cref{thm:predicate-independent-uniform} also suggests that the $t$-wise uniformity condition in previous
upper bounds \cite{AOW15} ultimately come from the symmetry of uniformly random literals.
Our upper bounds instead come from a different symmetry: That every $k$-tuple of variables is equally likely to take up
a random constraint.

More broadly, our results fit in the study of hierarchy algorithms for random CSPs.
It is of interest to identify general conditions for random CSPs to be intractable for restricted classes of algorithms
such as hierarchies.
Sufficient conditions include:
\begin{itemize}
  \item immunity for polynomial calculus \cite{alekhnovich2001lower}
  \item null-constraining for resolution/bounded-width \cite{chan2013dichotomy,chan2025how}
  \item pairwise neutrality for the affine integer programming (AIP) hierarchy \cite{chan2024how}
  \item null-constraining and lax for the AIP hierarchy and polynomial calculus \cite{chan2025how,conneryd2026lower}
  \item pairwise uniformity for LP and SoS hierarchies \cite{benabbas12sdp,BCK15,KMOW17}
\end{itemize}
Finding general sufficient conditions is useful because immunity and pairwise uniformity ``are the source of a
majority of existing CSP lower bounds in PC and SoS'' \cite[section~1.2]{austrin2022perfect}.
Our work completed the picture for the average-case complexity in SoS by generalizing ``pairwise uniformity'' to
``pairwise independence'', which we also show to be necessary.
Necessary and sufficient conditions have rarely been identified; the only other hierarchy whose average-case
intractability condition is known precisely is resolution/bounded-width \cite{chan2013dichotomy,chan2025how}.


Our upper bounds are also particularly interesting to the approximate graph homomorphism conjecture of Brakensiek and
Guruswami \cite{brakensiek2021promise}:
\begin{conjecture}[{\cite[conjecture~1.2]{brakensiek2021promise}}]
  Let $G$ and $H$ be undirected nonbipartite graphs with a homomorphism from $G$ to $H$.
  Then the promise digraph homomorphism problem associated with $G$ and $H$ is NP-hard.
\end{conjecture}
As a special case, this conjecture implies that $3$-vs-$K$ graph coloring is NP-hard for every $K \geq 3$, a question
that is wide open.
For every interesting case of the approximate graph homomorphism, \cite{ciardo2025semidefinite} showed lower
bound against basic SDP+AIP, while \cite{CZ24Periodic,chan2025how} showed lower bound against bounded-width algorithms;
see also an open question \cite[section~8]{conneryd2026lower} regarding polynomial calculus lower bound.
Would similar lower bounds hold for $\omega(1)$-degree SoS and SDP+AIP?
Currently, the only $\omega(1)$-degree SoS lower bounds for any CSP come from random CSP, its derandomization, or
reduction.
For the special case of approximate graph coloring (when $G$ and $H$ are cliques of size $3$ and $K$), average-case SoS
upper bounds are well known, but what about other undirected nonbipartite $G$?
Our \cref{thm:main-upper-bound} implies $O(1)$-degree average-case SoS upper bound for every such $G$.
(Reason: $G$ has no self-loops and cannot support any pairwise independent distribution.)
This rules out a promising approach towards $\omega(1)$-degree SoS lower bounds for every interesting case of
approximate graph homomorphism.

Our discussion so far focused on average-case complexity, but our work also has potential implications to worst-case
complexity, especially in the context of combined relaxations and hierarchies.
Bhangale, Khot, Minzer \cite{bhangale2025approximability} recently proposed a candidate approximation algorithm
for satisfiable CSPs.
Their algorithm combines SDP and Gaussian elimination, and is closely related to SDP+AIP (also known as SDA) relaxation
\cite{ciardo2025semidefinite,chan2024how}.
If the approximation algorithm of \cite{bhangale2025approximability} is indeed the optimal approximation algorithm for
satisfiable CSPs, it would be a new and simpler decision algorithm for tractable CSP, alternative to Bulatov's
\cite{bulatov2017dichotomy} and Zhuk's \cite{zhuk2020proof}.
Similar candidate algorithms for tractable CSP have been proposed
\cite{brakensiek2020power,ciardo2022clap,oconghaile2022cohomology,dalmau2024local}, prompting recent works to
investigate their power \cite{chan2024how,chan2025how,lichter2025limitations,zhuk2025singleton}.
Notably, Lichter and Pago \cite{lichter2025limitations} and Zhuk \cite{zhuk2025singleton} ruled out many candidate
algorithms for tractable CSP.
The only existing lower bounds for SDP+AIP (a close cousin of the approximation algorithm of
\cite{bhangale2025approximability}) are based on \cite{ciardo2025semidefinite,chan2024how}, while the only existing
C(BLP+AIP) hierarchy lower bound is based on \cite{chan2025how}.
Our new lower bounds for these combined hierarchies (\cref{thm:cblpaip,thm:sdpaip}) extend the applicability of
\cite{chan2024how,chan2025how} to more CSPs, boosting the hope of ruling out some of the remaining candidate algorithms
for the CSP Dichotomy Theorem and the Approximation Dichotomy Conjecture
\cite[Conjecture~1.1]{bhangale2025approximability}; see \cite[Question~1.5]{chan2024how}.

\paragraph{Organization}\jnote{i touched here}
The remainder of the paper is organized as follows. In \cref{sec:overview}, we present the main ideas underlying both the upper and lower bounds, and in \cref{sec:preliminaries} we introduce the necessary preliminaries.

We then establish our SoS upper bound in \cref{sec:refute-via-independence}, followed by an improved spectral algorithm in \cref{sec:sos-refutations}. In \cref{sec:norm-bound}, we prove the spectral norm bounds required for these results. Finally, we present our lower bound in \cref{sec:lower-bounds}. The appendix contains deferred proofs and other notions that flesh out the exposition.

\section{Techniques} \label{sec:overview}

\subsection{SoS Refutation}\tnote{Our references use lower case letters (e.g. theorem 1.1) I'd argue we should be consistent with that also when referencing other papers. Alternatively we can also capitalize references in our template. }

\cite{AOW15} refutes random predicate CSPs with uniformly random literals using the following three main ingredients:
\begin{enumerate}
  \item Certify the ``induced distribution'' of the instance is close to $t$-wise uniform by bounding its Fourier
    coefficients \cite[Lemma~3.16, Lemma~B.8]{AOW15}
  \item Bound the Fourier coefficients via tensor concentration (in injective norm) \cite[Lemma~4.3, Lemma~B.5]{AOW15}
  \item Show tensor concentration using random matrix theory for even $t$ \cite[Proposition~A.4]{AOW15} or trace method
    for odd $t$ \cite[Lemma~A.5]{AOW15}
\end{enumerate}
All these ingredients (except random matrix theory) crucially depend on uniformly random literals.
Subsequent works \cite{RRS17,WAM19,AhnStrongRefutationKXOR,d2023ihara} modified ingredient 3 without changing the other
two.

To obtain SoS refutations for CSPs without literals, we change all these ingredients substantially.
First, we introduce a new notion of \emph{induced distribution} that generalizes \cite[Definition~3.2]{AOW15} in
the absence of literals.
Given a collection $\cC$ of $k$-ary constraints from an instance and an assignment $x \in D^V$, our induced
distribution, denoted $\cD_{\cC,x}$, is the probability distribution on $D^k$ defined by
\[ \cD_{\cC,x}(a) \coloneqq \Pr_{(\sigma,R) \in \cC} [x_\sigma = a] \qquad \forall a \in D^k\,. \]
In the above definition, the $k$-ary relation $R$ of the constraint is irrelevant;
only the (ordered) tuple $\sigma$ of variables in the constraint matters.\tnote{Can you articulate on how it differs from AOW15? Is the difference only in the fact that we do not have literals? The predicate does not matter in Def 3.2 in AOW15 if I am not mistaken. Possible solution: "We introduce a new notion of \emph{induced distribution} which generalizes \cite[Definition~3.2]{AOW15} to CSPs without literals."}
Our refutation is then based on showing the induced distribution of an instance must be close to $t$-wise independent.
While distance to $t$-wise uniformity can be bounded by Fourier coefficients, distance to $t$-wise independence
(when $\nu$ is unknown) cannot.
\tnote{Can you add something here to cover the fact that we cannot just bound the Fourier coefficients. This allows us to underline how we depart from ingredient 2.}

For simplicity of the discussion, consider an instance $\cI$ of a CSP with only one type of constraint, namely $R'$.
Also consider the case of $t = 2$.
We cannot optimize over integral assignments, but we can do so over pseudo-distributions.
Thus our goal is to show that the induced distribution $\cD_{\cC,\mu} \coloneqq \pE_\mu \cD_{\cC,x}$ coming from a
pseudo-distribution $\mu$ is close to pairwise independent.
After all, if $\cD_{\cC,\mu}$ is pairwise independent, then\footnote{We use the Iverson bracket notation
$\bracbb{\cdot}$ to denote an indicator function.}
\[ \pE_\mu \Val_\cI (x) = \pE_\mu \E_{(\sigma,R) \in \cC} \bracbb{x_\sigma \in R}
= \E_{x \sim \cD_{\cC,\mu}} \bracbb{x \in R'}. \]
That is, the SoS value is exactly the probability that $R'$ is satisfied by a random $x$ from $\cD_{\cC,\mu}$, which
is pairwise independent by assumption.
Hence the SoS value is bounded away from $1$, if $R'$ is far from supporting any pairwise independent distribution.

In general, however $\cD_{\cC,\mu}$ is far from pairwise independence.
To work around this issue we argue that, given a pseudo-distribution $\mu$ of sufficiently large degree as above, for random instances $\cI$ with sufficiently many clauses, we can always construct a pseudo-distribution $\mu',$ of lower degree but with nearly the same objective value, such that  $\cD_{\cC,\mu'}$ is close to pairwise independent.
This motivates us to consider the global correlation rounding framework of Barak, Raghavendra, Steurer \cite{BRS11},
which implies that on an expander graph $G = (V, E)$, in expectation over suitably chosen random vertex subset $T$ and
assignment $\alpha$ to $T$, the conditional pesudo-distribution $\mu_{|\alpha}$ given $\alpha$ is nearly a product
distribution across a typical edge:
\[ \E_{(u,v) \in E} \mu_{u v|\alpha} \approx \E_{u \in V} \mu_{u|\alpha} \cdot \E_{v \in V} \mu_{v|\alpha} \,. \]
In other words, we recover independence of pairwise marginals of the induced distribution, in expectation over
conditioning.
This idea can also be generalized to larger $t$, by repeatedly splitting the $t$-wise marginals into halves.


Once we show that $t$-wise marginals of an induced distribution are all close to being $\nu$-independent,
the next step is to show that the induced distribution itself is close to some $t$-wise $\eta$-independent distribution.
Similar statements under different formulations have appeared before in the literature:
For example, \cite[Theorem~2.1]{alon03almost} by Alon, Goldreich and Mansour when $\nu$ is the uniform Boolean
distribution, and \cite[Theorem~1.2]{rubinfeld13robust} by Rubinfeld and Xie for arbitrary finite distribution whose
minimum nonzero probability is bounded away from zero.
We cannot apply previous results directly, because we want to consider an arbitrary $\nu$ over any finite domain,
without any assumption about its minimal nonzero probability.
For this reason, we need to first remove tiny probability masses from $\nu$.
Doing so means that our $\eta$ may be different from $\nu$, unlike in previous works, but fortunately this weaker
conclusion suffices for our application.
We then generalize the short proof of O'Donnell and Zhao \cite[Theorem~1.1]{odonnell18closness} from the uniform Boolean
distribution to an arbitrary $\nu$.

We remark that, before our work, global correlation rounding has not been used to refute random CSP.
We hope our work will inspire more refutation algorithms for random CSP.
Unlike all previous applications of global correlation rounding (including its generalization to $k$-CSP using
high-dimensional expanders \cite{AGJT19}), our refutation does not come with a rounding algorithm
that outputs an assignment nearly matching the SDP value --- because such an assignment does not exist in general.

\subsection{Spectral Refutation: Removing Global Correlation Rounding }
\jnote{ how do you feel about the transition here?}\snote{Looks alright to me.}

\tnote{I like this paragraph, but I think this might not be the optimal place to put it. We should have made our motivation clear in the introduction already. Can you perhaps try to merge this paragraph with the discussion above \cref{thm: alphabet-ref}.}
For readers familiar with refutation algorithms for many classical CSPs\tnote{If you move this to the intro, remove the part "For readers familiar .." Just tell them wgat they should think: "The use of global correlation in our SoS algorithm may appear.."}, \jnote{im leaning towards keeping it here- cuz thers already similar one at main theorem, and we need sth here as transition filles anyway- feel free to modify tho} the use of global correlation in our SoS algorithm may appear counterintuitive. Such heavy machinery is unnecessary for several well-studied CSPs without literals, including Max-Cut and graph coloring. For these predicates,  spectral refutation algorithms are known to suffice. From this perspective, a natural question arises,
\begin{displayquote}
Can we replace global-correlation rounding  with some simple spectral methods?	
\end{displayquote}

Moreover, in the aforementioned problems, spectral methods proceed by analyzing the nontrivial spectrum of a natural matrix representation of the instance while explicitly discarding the trivial eigenvalue, hinting upon an intriguing connection:
\begin{displayquote}
the trivial eigenvalue in classical spectral analyses appears closely related to the failure of the underlying predicate to support an independent distribution.
\end{displayquote}

\tnote{Suggestion to avoid posing another question: "A first obstacle to formalize this connection is the fact that, for general predicates without literals, the choice of the operator may not be immediately clear."}
A first obstacle to formalize this connection is the fact that, for general predicates without literals, the choice of the operator may not be immediately clear. For graph-based CSPs (e.g. Max-Cut) or even $k$-XOR predicates, the choice of matrix is straightforward (e.g. the normalized adjacency matrix or analogous Laplacian). However, for other predicates it is much less obvious — even in the simplest non-trivial case ($t=2$), such as monotone 1-in-3 SAT. Furthermore, the global-correlation approach makes this choice even murkier. The object one ends up analyzing (whether a matrix or an associated polynomial tensor) is far from being canonical, as it is unclear what the resulting polynomial will be
  after roughly $1/\varepsilon^2$ levels of conditioning.
Taking a step back, even with uniformly random literals, identifying the correct polynomial or matrix requires a ``careful'' setup. \cite{AOW15} achieves this by relating refutation to the failure of the predicate to support a
$t$-wise uniform distribution via a carefully constructed linear program, which we briefly recap below.

	\paragraph{Revisiting Refutation with Uniform Literals} \jnote{clear?}
	 For a predicate $P$ that does not support a 
$t$-wise uniform distribution, the \cite{AOW15} approach proceeds in two conceptual steps:	\begin{enumerate}
		\item \textbf{Finding the Dominating Polynomial:}   Viewing a $k$-CSP predicate $P$ as a polynomial in its Fourier expansion, one can formulate a constant-size linear program that produces a degree-$t$ polynomial $Q$ s.t.
	 $Q(x)\geq P(x)$ for any $x\in \{\pm 1\}^k$. Importantly, note that $Q$ is of degree-$t$ as opposed to degree-$k$. We call this polynomial a \emph{dominating} polynomial, analogous to the notion of separating polynomial appearing in the hardness of approximation literature \cite{AustrinHastad2011, AustrinH13}. 
		\item  \textbf{Spectral Certification:}  With the dominating-polynomial $Q$ identified, the certification of $P$ reduces to a certification of the maximum value of $Q$ over the boolean hypercube. This is equivalent to bounding the injective norm of the associated tensor, which can be done via spectral methods or degree-$2t$ Sum-of-Squares.
	\end{enumerate}
	Crucially, this framework relies on the randomness of the literals in the second step to ensure concentration around the constant term of $Q$.

\paragraph{Incorporating a Fixed Bias}	
A natural attempt to lift the above approach to our setting with non-uniform literals is \tnote{to generalize the above approach to our settings of interest (alternatively recap the settings, though that seemed a bit verbose when I tried)} \jnote{changed slightly}to start with a fixed marginal bias $\nu$, and 
recover the separating polynomial from \cite{AOW15} while incorporating the knowledge of the bias into this polynomial. This begets the following LP that searches for the best independent distribution with marginal bias $\nu$.

\begin{mdframed}[frametitle =Primal LP for the Maximal $t$-wise $\nu$-independent Distribution]	

\begin{align*}
	&\max_{\mu}\quad  \val_{P,t}(\nu)\coloneqq  \sum_{x \in \{\pm 1\}^k} P(x) \cdot \mu(x) \nonumber \\
	\text{s.t.} \sum_{x\in \{\pm 1\}^k } \mu(x) \cdot x_S &= \nu^{|S|} \quad \forall S\subseteq [k], |S|\leq t  \qquad \text{($t$-wise Marginal Consistency)}\\
	& \sum_{x\in \{\pm 1\}^k} \mu(x) = 1 \qquad \text{(Normalization)}\\ & \mu(x)\geq 0 \quad \forall x\in \{\pm 1\}^k \qquad \text{(Non-negativity)}\\
\end{align*}
\end{mdframed}

Notice that $\nu=0$ recovers the \cite{AOW15} formulation for uniform distribution in the support. For convenience, we introduce the shorthand  $Q_{\nu}(x) \coloneqq  \sum_{S:|S|\leq t}  c_S \cdot x_S $ from a set of coefficients $\{c_S\}_S$.  Taking the dual of the above LP gives us the desired dominating polynomial.
\begin{mdframed}[frametitle={Dual LP for the Minimizing Polynomial Certificate for $P(x)$ with bias $\nu$}]
\begin{align*}
\numberthis \label{eq:dual-lp-boolean}
	&\min_{\{c_S\}_S} \quad \val_t(Q_\nu) \coloneqq \sum_{S\subseteq [k],\, |S|\le t} c_S \cdot \nu^{|S|} \\
	&\text{s.t.}\quad P(x) \le Q_\nu(x) \qquad \forall x \in \{\pm 1\}^k 
	\quad \text{(Pointwise Dominance)}
\end{align*}
\end{mdframed}

\paragraph{Partial Certification from Bias}
At this point, it is tempting to ask the following question: is the objective value of the LP what we can efficiently certify for the instances of $Q_\nu(\cdot)$ over random hypergraphs (with fixed signings)? If true, by the strong duality of LP, we would indeed have certified a value given by the best $t$-wise independent distribution with bias $\nu$ as desired.
More formally:
\begin{question}
For any given polynomial $Q_\nu(\cdot)$ obtained above on $k$-variables of degree-$t$, let $\Psi_{Q_\nu} $ be an instance on $n$-vertex hypergraph $\calH$ (in other words, $\Psi_{Q_\nu}$ is obtained from $\Psi_P$ by replacing each predicate $P$ by $Q_\nu$), is there  a spectral algorithm that certifies that \[ 
\max_{x\in\{\pm 1\}^n}  \Psi_{Q_\nu}(x) \leq  \val_t(Q_\nu) + o_n(1).
\] 	
provided the instance is dense enough above the anticipated spectral threshold but without uniformly random literals?
\end{question}
Note that the above is true for the case of $\nu=0$ as the objective value is simply given by the constant term $c_\emptyset$ that one would certify assuming the instance additionally comes with uniformly random literals. However, in the setting of fixed signings, a uniform certification over the entire boolean hypercube like the above turns out to be too much to ask. Consider the following example of $2$-XOR:

\begin{examples}
	Given $m = \tilde{\Omega}(n)$ random edges in an $n$-vertex graph viewed as a random (unsigned) $2$-xor instance $\Psi(x) = \frac{1}{m} \cdot  \sum_{e=(i,j) } x_{i} x_{j} $.
\end{examples}

Let $M_\Psi$ be the adjacency matrix of the graph sample. It is not surprising that we can at best certify $\max_{x\in \{\pm 1\}^n} \Psi(x) \leq \frac{1}{m}  \|M_{\Psi}\|_{sp} \cdot \|x\|_2^2 = \frac{1}{m}  \frac{m}{n} \cdot n = 1   $ spectrally since this is a trivial bound that can indeed be achieved by the all-$1$ assignment. Put differently, the above bound is trivial because the spectral norm that usually gives us a square-root saving reduces to a naive  row-sum bound. 

This leads to our key observation -- such failure does not render our approach doomed. While it is
impossible to certify a uniform upper bound over all assignments using a
single low-degree polynomial, we can try to certify sharp bounds for
assignments with a \emph{fixed} marginal bias.
Notice that
  the spectral norm of the matrix is overwhelmingly dominated by the trivial eigenvalue; in particular, its spectrum lies in $[- c \sqrt{\frac{m}{n}}, c\sqrt{\frac{m}{n}}] \cup \{ \frac{m}{n} \}  $ for some constant $c>0$. Thus
    we can rewrite the above bound as 
    \begin{align*}	
 m\cdot \Psi_P( x) &\leq \max_{x: \|x\|^2_2 =n } x^T M_{\Psi} x = \max_{\|x\|^2=n} \sum_{i\in [n]} \lambda_i \langle v_i, x\rangle^2\\ 
 &\leq \max_{\|x\|^2_2 = n }	(1+o_n(1)) \cdot \frac{m}{n} \cdot  \langle \1_n, x\rangle^2 + O\left(\sqrt{\frac{m}{n}}\right) \cdot  (n - \langle \1_n, x\rangle^2) \\
 &\leq (1+o_n(1)) \cdot   \frac{m}{n} \langle \1_n, x\rangle^2 + o(m)
 \end{align*}
where the second-to-last line follows as $\1_n$ is (approximately) the top eigenvector that correspond to the eigenvalue of $(1+o_n(1)) \frac{m}{n}$. Crucially, the certifiable bound is dominated by the coordinate bias of the assignment $x$.

Therefore,  for each fixed bias
$\nu$, even in the presence of the trivial eigenvalue,  spectral methods continue to certify that the value of the instance
restricted to assignments with marginal bias $\nu$ concentrates around
the value predicted by $\val_t(Q_\nu)$. We refer to such bounds as
\emph{partial certificates}, and establish the following lemma as a key ingredient of our spectral refutation.

\begin{restatable}{lemma}{PartialCertification} \label{lem:partial-certification}
(Partial Certification in the $\nu$-Biased Hypercube (Informal of \cref{lem:formal-partial-certification}) ) 	 There is an efficient spectral algorithm that certifies \[ 
\max_{\substack{x\in\{\pm 1\}^n \\ \E_{i\in [n]} [x_i] = \nu } } \Psi_{Q_\nu}(x) \leq \sum_{S\subseteq  [k], |S|\leq t} c_S\cdot \nu^{|S|}  + o_n(1) = \val_{t}(Q_\nu) +  o_n(1) \]
provided the instance is sufficiently dense above the anticipated spectral threshold (see the formal condition in \cref{thm:spectral-refutation})  . \end{restatable}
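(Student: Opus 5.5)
The plan is to expand $\Psi_{Q_\nu}$ monomial by monomial and center each variable at its bias. Write $m\cdot\Psi_{Q_\nu}(x)=\sum_{(\sigma,R)\in\cC}\sum_{|S|\le t} c_S\, x_{\sigma(S)}$, where $x_{\sigma(S)}=\prod_{i\in S}x_{\sigma(i)}$. For an assignment with $\E_i x_i=\nu$, set $y=x-\nu\1_n$, so that $\sum_i y_i=0$ and $\|y\|_2^2\le O(n)$. Then
\[ x_{\sigma(S)}=\sum_{U\subseteq S}\nu^{|S|-|U|}\, y_{\sigma(U)} . \]
The $U=\emptyset$ term gives $\nu^{|S|}$, and summing over constraints yields exactly $m\cdot\val_t(Q_\nu)$. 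It therefore suffices to certify, for every nonempty $U$ with $1\le|U|\le t$ and every injection pattern, that $\bigl|\sum_{(\sigma,R)} y_{\sigma(U)}\bigr| = o(m)$ uniformly over all $y$ with $\sum_i y_i=0$ and $\|y\|_\infty\le 2$. This is the rigorous form of the $2$-XOR example: the trivial eigenvalue lives along $\1_n$, and centering kills it.

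The order-$|U|$ tensor $T_U=\sum_{\sigma}e_{\sigma(U)}$ splits as $T_U=\E[T_U]+(T_U-\E[T_U])$. Because every ordered tuple of distinct variables is equally likely to receive a constraint, $\E[T_U]$ is, up to the negligible diagonal (repeated-index) corrections, a multiple of the all-ones tensor $\frac{m}{n^{|U|}}\1^{\otimes|U|}$. Its contraction with $y^{\otimes |U|}$ therefore vanishes because $\langle\1,y\rangle=0$. The diagonal corrections involve at most $O(n^{|U|-1})$ entries of size $m/n^{|U|}$, so their total contribution is $O(m/n)$.

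It remains to bound the centered random tensor $\langle T_U-\E T_U,\,y^{\otimes|U|}\rangle$ by $o(m)\cdot(\|y\|_2^2/n)^{|U|/2}$. Here I would invoke the Kikuchi-matrix spectral norm bounds promised in \cref{sec:norm-bound} for odd-order and asymmetric tensors:
\begin{enumerate}
  \item For even $|U|$, flatten to an $n^{|U|/2}\times n^{|U|/2}$ matrix, or its level-$\ell$ Kikuchi analogue.
  \item For odd $|U|$, use the asymmetric, odd-order Kikuchi construction.
\end{enumerate}
These give norm $\tilde O\bigl(\sqrt{m\, n^{|U|/2}/\ell^{|U|/2-1}}\bigr)$-type bounds, which are $o(m)$ once $m\gg n^{t/2}\ell^{1-t/2}\polylog n$ with $|U|\le t$. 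The certificate is the spectral norm computation itself, together with the linear constraint $\langle \1,y\rangle=0$. The total error is then $\sum_U |c_S|\,|\nu|^{|S|-|U|}\cdot o(m)$, and since the $c_S$ are bounded in terms of $k$ and $t$, this is $o(m)$.

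The main obstacle I expect is the middle step, namely ensuring that the mean tensor is exactly annihilated by centering even at the Kikuchi level $\ell>1$. The Kikuchi matrix of $\E[T_U]$ is not simply rank one; it mixes lower-order pieces. I would try first to apply the Kikuchi method directly to the centered tensor $T_U-\E T_U$, whose entries are independent mean-zero (Bernoulli minus $p$) variables, so that the standard trace-method or matrix-Chernoff argument applies without any trivial eigenvalue. In that approach the mean part is handled entirely by the exact algebraic identity above rather than spectrally. A secondary issue is that the certificate must hold for the hypercube bias constraint $\E_i x_i=\nu$ as a polynomial identity. This is immediate for the spectral certificate, which only uses $\|y\|_2^2\le 4n$.
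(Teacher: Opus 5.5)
Your argument is correct for the statement as posed, whose density condition is the basic-level threshold of \cref{thm:spectral-refutation}. It uses the same two ingredients as the paper: the expected tensor is a constant multiple of the all-ones tensor on distinct tuples, and the centered tensor is controlled by the spectral norm of a (Kikuchi) flattening.

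The difference is where you center. The paper never shifts variables. In \cref{lem:certification-monomial} it writes $\Psi_S(x)=\frac{m}{|E(\calH)|}\bigl(\langle\1_n,x\rangle/n\bigr)^{|S|}+\textsf{Deviation}_S(x)$. The deviation term is the contraction of the centered tensor $T_S-\E T_S$ with $x^{\otimes|S|}$, normalized by $|E(\calH)|$, and the paper bounds it uniformly over all Boolean $x$. The bias therefore enters only when $\langle\1_n,x\rangle=\nu n$ is substituted into the mean term, and no expansion over $U\subseteq S$ is needed.

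Your shift $y=x-\nu\1_n$ makes the trivial direction vanish exactly, which cleanly formalizes the $2$-XOR picture. The cost is that you must test the centered tensors against real vectors with $y_i\in\{1-\nu,-1-\nu\}$. This is harmless at the basic flattening, where only $\|y\|_2$ enters.

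It does not carry over to Kikuchi levels $\ell$ above the basic one, which you invoke for the $\ell$-tradeoff. The identity $(x^{\circledcirc\ell})^\top M_{S,\ell}\,x^{\circledcirc\ell}\propto C_S(x)$ relies on $x_v^2=1$ for $v\in I\cap J$ (respectively $x_{v,a}^2=x_{v,a}$ and $x_{v,a}x_{v,a'}=0$ for indicators). With $y$ in place of $x$, the weight $\prod_{v\in I\cap J}y_v^2$ depends on $I\Delta J$, so the proportionality fails. To recover the smooth tradeoff you would re-expand $\langle T_U-\E T_U,y^{\otimes|U|}\rangle=\sum_{W\subseteq U}(-\nu)^{|U|-|W|}\langle T_W-\E T_W,x^{\otimes|W|}\rangle$ and bound the Boolean contractions. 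That is precisely the paper's route.

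Your anticipated obstacle about the Kikuchi matrix of $\E[T_U]$ does not arise in either approach, since only the centered tensor is ever flattened.
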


\paragraph{From Partial to Global Certification: Don't Bother Committing}
To conclude the argument, it remains to show how a partial certification is sufficient to certify the optimum over all assignments. For instance, the refutation algorithm of \cite{AOW15} ultimately reduces to certifying a single degree-$t$ polynomial. Ideally, one would like to select the dual polynomial from \cref{eq:dual-lp-boolean} corresponding to the optimal bias and argue that certifying this single polynomial suffices. However, this is less clear when the optimal bias is unknown.

\tnote{Can you explicit the issue? E.g. "Because we do not know the bias this approach cannot work }\jnote{added}

That said, there is a simple workaround that avoids this issue altogether: we do not need to commit to a single polynomial. Instead, we consider all possible biases. In the Boolean setting, the bias of an assignment is determined by the number of $+1$ variables, and hence there are only $O(n)$ possible biases. We can therefore enumerate over all of them, leading to the following certification procedure.

\begin{enumerate}
	\item For each bias $\nu \in [-1,1]$, solve the dual LP in \cref{eq:dual-lp-boolean} to obtain a polynomial $Q_\nu$.
	\item Apply spectral certification to $Q_\nu$ to certify that every assignment $x$ with bias $\nu$ satisfies
	\[
	P(x) \le \val_t(Q_\nu).
	\]
	\item Output the bound
	\[
	\max_{\nu} \val_t(Q_\nu).
	\]
\end{enumerate}
Since every assignment has some bias $\nu$, this procedure yields a valid global certification. To extend to the large domain, it suffices for us to replace the gird-search component by an epsilon-net of bias. 

%

\paragraph{One More Thing about Spectral Norm Bounds}
It is by now standard that certifying the injective tensor norm of a random polynomial can be reduced to analyzing associated random matrices and, with high probability, bounding their spectral norms~\cite{BarakMoitra2016, AOW15, Bhattiprolu2017SumofSquaresCF, RRS17, WAM19, AhnStrongRefutationKXOR}.
In our setting, these matrices can be viewed as centered adjacency matrices of directed (hyper)graphs, but we are not 
aware of general-purpose results that directly apply to matrices derived from odd-order tensors.
Accordingly, we give a self-contained trace-moment analysis, yielding the following by-product.

To the best of our knowledge, prior works~\cite{GKM, HKM} obtaining comparably tight bounds (including the log factors) via the Kikuchi hierarchy rely on hypergraph decomposition and pruning, which introduces an additional $\frac{1}{\epsilon^2}$ factor in the odd-order case.
While these arguments apply to general undirected hypergraphs (with random literals), we expect that some form of decomposition or conditioning---beyond row pruning---is still needed in the random-hypergraph setting.
In contrast, our approach applies directly to the vanilla Kikuchi matrices for odd-order tensors and achieves the currently optimal tradeoff.
The improvement follows from a modification of a vanilla encoding argument, which may be of independent interest.

Additionally, we also introduce a generalized Kikuchi matrix tailored to non-boolean domains and asymmetric tensors
coming from directed hypergraphs.
This allows us to use the Kikuchi matrix technique to bound the injective norm of an asymmetric tensor, which arises due
to the non-boolean domain. \snote{I changed this sentence.}

\subsection{Lower bounds}
Our lower bounds leverage the recent LP and SoS lower bound framework of Chan, Ng, and Peng \cite{chan2024how}.
Previous LP and SoS lower bounds \cite{benabbas12sdp,BCK15,KMOW17} for $t$-wise uniform CSPs considered the
same ``canonical'' or ``planted'' distribution of satisfying assignments to a small subinstance: Pick an assignment from
the given $t$-wise uniform distribution on each constraint, conditioned on agreement at common variables.
In this work, we generalize such a distribution for $t$-wise independent CSPs as follows.
For each constraint $C$, there is a $t$-wise $\nu$-independent distribution $\mu^C$ of satisfying assignments.
The distribution $\mu^C$ corresponds to a probability density $f^C$ with respect to the product distribution $\nu^k$.
Our canonical distribution $\mu_J$ to a small subinstance $J$ with variable set $V$ and constraint set $\cC$ is then
given by
\[ \mu_J (b) \coloneqq \frac{\prod_{C \in \cC} f_C (b) \nu^V (b)}{\sum_{a \in D^V} \prod_{C \in \cC} f_C (a) \nu^V (a)}
\qquad \text{for } b \in D^V. \]
Using the LP lower bound framework of \cite{chan2024how}, we prove LP lower bounds for $t$-wise independent CSPs,
strictly generalizing the pairwise uniform lower bounds in Benabbas, Georgiou, Magen, and Tulsiani
\cite{benabbas12sdp}.
A variant of this distribution applied to the frameworks of \cite{chan2024how,chan2025how} also leads to lower bounds
for combined hierarchies such as SDP+AIP and C(BLP+AIP).

For the SoS lower bound, one further needs to show that the canonical distribution satisfies a certain conditional
independence property (\cref{lem:cond-indep}), in order to decompose the SDP solution into orthogonal subparts.
\tnote{Can you add one line to explain why this requirement is needed?}
We interpret the above canonical distribution as a Markov random field.
It turns out our canonical distribution factorizes according to cliques, and the required conditional independence then follows from the global Markov property.
Plugging the global Markov property into \cite{chan2024how} yields our SoS lower bound.

Our work is the first to apply the recent frameworks of \cite{chan2024how,chan2025how} to prove LP and SoS lower bounds
beyond $t$-wise uniform CSPs.
Our SoS lower bounds do not to follow from the pseudo-calibration framework of Barak, Hopkins, Kelner, Kothari, Moitra,
and Potechin \cite{BHKKMP19} that is conjectured to give a universal recipe for lower bounds
\cite{PR20,JPRTX,JPRX23,KPX24,PX25,Xu26}.\tnote{Can you elaborate on this? Do you want to say that the design process of the planted distribution is not "canonical"?}
Our work is also the first to apply Markov random field techniques to SoS lower bounds for CSPs.

\section{Preliminaries} \label{sec:preliminaries}

\subsection{Notations}
Throughout this work, we use $O_k, \Omega_k$ to hide constant dependence on some parameter $k$ (and analogously in $|D|, |\calP|$).
Write $f \lesssim_k g$ or $g \gtrsim_k f$ if $f \leq O_k(g)$.
With some abuse of notation, we use $\binom{n}{r}$ to denote the collection of all $r$-size subset in $[n]$. We use $[n]^r$ to denote all $r$-size order-tuples from $[n]$. 
Given $n, k \in \N$, let $n^{\underline k} \coloneqq n(n-1)\cdots(n-k+1)$ denote the falling factorial.
Given any set $W$, let
$W^{\underline k} \coloneqq \set{ \alpha \in W^k | \alpha_i \neq \alpha_j \text{ for distinct } i,j \in [k] }$
denote the set of $k$-tuples of distinct elements from $W$ (so that $|W^{\underline{k}}| = |W|^{\underline{k}}$).
$\bracbb{\cdot}$ is the Iverson bracket notation to denote an indicator function.

We will consider directed hypergraphs whose directed hyperedges are ordered tuples of vertices.
For a directed hypergraph $\calH$ or a random instance, we let $V(\calH) =[n]$ denote the ground set of its vertices, and $E(\calH)$ its collection of hyperedges (viewed as order-tuples).
All our hypergraphs are directed unless otherwise specified.
As a standard in related words, high probability stands for probability at least $1-o_n(1)$ over the randomness of input.

\subsection{CSP}
\begin{definition}[$k$-CSP]\label{def:csp}
  A $k$-CSP is given by a finite domain $D$ and a collection $\cR$ of $k$-ary relations over $D.$  A instance
  $\cI$ of a $k$-CSP $(D,\cR)$ is a pair $(V,\cC).$ $V$ is the set of variables and $\cC\coloneqq\set{\Paren{\sigma_i, R_i}}$
  is the multi-set of constraints where  $R\in \cR$ and $\sigma_i\,:[k]\to V$ is an injection.
\end{definition}
This model generalizes the model of CSPs with predicates and literals. For example, in $3$-SAT $D=\Set{0,1}$
and $\cR=\Set{R_{000},\ldots,R_{111}}\,,$ where $R_b\coloneqq\Set{0,1}^3\setminus \set{b}$ for $b\in\Set{0,1}^3.$

For a given instance $\cI=(V,\cC)$, an assignment is a vector $x\in D^n$ (or alternatively a function $V\to D$).
The value of $x$ is
\begin{align*}
  \Val_\cI(x) \coloneqq \Pr_{(\sigma,R)\sim \cC} [x\circ \sigma \in R].
\end{align*}
Here we used $(\sigma,R)\sim \cC$ to denote a uniformly random tuple in $\cC$.
The value of an instance is
\begin{align*}
  \opt(\cI)=\max_{x\in D^n}\Val_\cI(x)\,.
\end{align*}
A $k$-uniform multi-hypergraph $H$ is a pair $(V,E)$ where $V$ is the set of vertices and $E$ is the
multi-set of hyperedges of arity $k$. We treat each occurrence of each hyper-edge distinctly.

\begin{definition}[$\rho$-random instance]\label{def:random-instance}
  For a set of $n$ variables $V$, we consider the following random model, parametrized by a fixed
  distribution $\rho$ over $k$-ary relations over $D$:
  \begin{enumerate}
    \item Sample a $k$-uniform undirected hypergraph $H=(V,E)$ from the binomial model (proposed by Gilbert, but often
      called \Erdos--\Renyi). That is, every potential hyperedge $e \in \binom{V}{k}$ appears independently in $H$ with
      probability $p$, so that the expected number of hyperedges is $p \binom{|V|}{k}$.
    \item Independently for each undirected hyperedge $e\in E$ realized in $H$, choose a bijection $\sigma\,:[k]\to e$
      uniformly at random and draw a relation $R\subseteq D^k$ from $\rho$.
      Impose on $e$ the constraint with satisfying assignments $\set{a\in D^e\,|\, a\circ\sigma\in R}.$
  \end{enumerate}
\end{definition}

As indicated in \cref{def:random-instance}, we state our main results in the binomial hypergraph model.
Many of our results also hold the alternative model, where Step 1. \jnote{meant to capitalize Step?}~above is changed to so that we sample $m$ undirected
hyperedges from $\binom{V}{k}$ independently with replacement.
As mentioned in \cite{AOW15}, the precise details of the model don't matter, and we expect all our results to hold in
various well-studied models.
The above random model is from \cite{molloy2003models} and has been studied in different hierarchies
\cite{chan2013dichotomy,chan2024how,chan2025how,conneryd2026lower}.

\begin{definition}[$t$-wise independent CSP] \label{def:t-wise-independent-csp}
  A collection $\cR$ of $k$-ary relations over $D$ is $t$-wise independent if there are a distribution $\nu$ over $D$
  and a tuple $\Paren{\mu^R}_{R\in\cR}$ of $t$-wise $\nu$-independent distributions over $D^k$ such that
  $\supp\set{\mu^R}\subseteq R$ for every $R\in\cR$.
  We also say a $k$-CSP is $t$-wise independent if its collection $\cR$ of $k$-ary relations is.
\end{definition}

Our next definition generalizes (one minus) ``far from $t$-wise supporting'' \cite[Definition~3.14]{AOW15} from $t$-wise
uniform CSP to $t$-wise independent CSP.

\begin{definition} \label{def:t-wise-value}
  Given a CSP $(D, \mathcal R)$ and a distribution $\rho$ over $\mathcal R$, define the $t$-wise
  independent value to
  be the maximum expected fraction of constraints satisfied by any tuple $\Paren{\mu^R}_{R\in\cR}$ of $t$-wise
  $\nu$-independent distributions, when the constraints are chosen from $\rho$:
  \[ \opt_t(\rho) \coloneqq \max \Set{\E_{R \sim \rho} \Pr_{x \sim \mu^R} [x \in R] | \Paren{\mu^R}_{R\in\cR}
  \text{ is $t$-wise $\nu$-independent for some $\nu$}}\,. \]
\end{definition}

\paragraph{Fourier Analysis for CSP} \jnote{i moved the old prelim at spectral here}

\begin{definition}[Monomial restriction of a predicate]
We call a subset $S\subseteq [k]$ a monomial restriction. Additionally, we write $|S|$ to  be the weight of the restriction.
\end{definition}

\begin{definition}[Fourier expansion of a Boolean predicate]
For any $k$-CSP predicate $P:\{\pm 1\}^k \rightarrow \{0, 1\}$, we can write \[ 
P(x) =  \sum_{\substack{S \subseteq [k]}} c_S \cdot x_S\]
for some coefficients $\{c_S\}$ where we write $
x_S = \prod_{i\in S} x_i $.
\end{definition}

\begin{definition}[Global Polynomial for Monomial-Restriction]\label{def:monomial-global}
  Let $\calH$ be a directed hypergraph.
	For any subset $S \subseteq [k]$, we associate it with a polynomial \[ 
	\Psi_{S}(x) = \frac{1}{|E(\calH)|} \sum_{ \substack{ e \in E(\calH) \\ \tau_e: [k] \rightarrow [n]  }} x_{\tau(e)_S}
	\] 
	where we write $x_{\tau(e)_S} \coloneqq  \prod_{i\in S} x_{\tau_e(i)}  $ as the monomial restricted to the coordinates in $S$.
\end{definition}

\section{Sum-of-Squares Refutation}\label{sec:sos-refutations}

\subsection{Independence via Conditioning} \label{sec:independence-via-conditioning}

In this section, we show that conditioning can be used to make pseudo-distribution approximately independent over random
$k$-CSP instances.

In the following, $\Omega_k = \{z \in \set{0,1}^{[k] \times D} | \sum_{a \in D} z(i,a) = 1\}$ denotes boolean
version of assignments from $D^k$ (see \cref{sec:func-non-boolean}).
Given a function $f^\Omega: \Omega_k \to \R$ and an injective map $\sigma: [k] \to V$ (representing the scope of a
constraint), $f^{\Omega,\sigma}$ is the function applying $f^\Omega$ to the variables in the constraint (see
\cref{sec:tensor-func}).
Key to our refutation is the concentration of $t$-wise marginals of the induced distribution:

\begin{definition} \label{def:sos-axioms}
  The set of axioms for the SoS proof is
  \[ \Axioms \coloneqq \Set{\sum_{a \in D} x_{v,a}^2 = 1}_{v \in V}
  \cup \Set{ x_{v,a}x_{v,a'} = 0}_{v \in V, a,a' \in D, a \neq a'} \]
\end{definition}

\begin{definition} \label{def:concentration-marginals}
  Given $\eps > 0, k, t, d \in \N$, a $k$-CSP instance $(V, \cC)$ is $(t,\eps,d)$-concentrated if there is a degree-$d$
  SOS proof of
  \[
    \Axioms \quad \vdash_d \quad \Paren{\E_{\sigma \in \mathcal C}
    f^{\Omega,\sigma} (x) - \E_{\sigma \in V^{\underline k}} f^{\Omega,\sigma} (x)}^2 \leq \eps^2
  \]
  for every function $f^\Omega: \Omega_k \to \R$ of the form $f^\Omega(x) = \prod_{i \in Q} x_{i,b_i}$, where
  $Q \in {[k] \choose t}, b \in D^Q$.
\end{definition}

As we will show in \cref{lem:concentrated}, the above concentration property holds with high probability.

\begin{theorem} \label{thm:t-wise-conditioning}
  For any $\eps > 0$, any $t \in \N$, finite $D$, there is $K_{t,\delta,D} \in \N$ such that for any sufficiently large
  variable set $V$, there is a distribution $\cT_{t,\delta,D}$ over ${V \choose \leq K}$ such that if a $k$-CSP instance
  $(V, \cC)$ is $(t,\eps,d)$-concentrated, then every pseudo-distribution $\mu$ of degree $K_{t,\delta,D} + d$
  consistent with $\cP(V,D)$ satisfies
  \begin{equation} \label{eq:conditional-t-wise-independent}
    \E_{T,\alpha} \Norm{\E_{\sigma \in \cC} (\mu|\alpha)_{\sigma(Q)}
    - \Paren{\E_{v \in V} (\mu|\alpha)_v}^t }_{\ell_1} \leq 2\abs{D}^t \eps
  \end{equation}
  for every $Q \in {[k] \choose t}$.
  Here the outer expectation is over $T \sim \cT, \alpha \sim \alpha(T,\mu)$.
\end{theorem}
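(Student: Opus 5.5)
Fix $Q\in\binom{[k]}{t}$ and write $g(\alpha)\coloneqq\Norm{\E_{\sigma\in\cC}(\mu|\alpha)_{\sigma(Q)}-\Paren{\E_{v\in V}(\mu|\alpha)_v}^t}_{\ell_1}$. By the triangle inequality I split $g(\alpha)$ into two terms. The first compares the instance average $\E_{\sigma\in\cC}(\mu|\alpha)_{\sigma(Q)}$ with the complete average $\E_{\sigma\in V^{\underline k}}(\mu|\alpha)_{\sigma(Q)}$. The second compares the complete average with the product $\Paren{\E_v(\mu|\alpha)_v}^t$. Note that $\E_{\sigma\in V^{\underline k}}(\mu|\alpha)_{\sigma(Q)}=\E_{(v_1,\dots,v_t)\in V^{\underline t}}(\mu|\alpha)_{v_1\cdots v_t}$. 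So the second term concerns only the pseudo-distribution over the complete $t$-uniform hypergraph, and the instance enters only through the first term.

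\textbf{First term (concentration).} Each coordinate $b\in D^Q$ of the first term is $\pE_{\mu|\alpha}\bigl[f^{\Omega,\sigma}\text{ averaged over }\cC\bigr]-\pE_{\mu|\alpha}\bigl[f^{\Omega,\sigma}\text{ averaged over }V^{\underline k}\bigr]$ with $f^\Omega=\prod_{i\in Q}x_{i,b_i}$. By $(t,\eps,d)$-concentration the square of this difference, as a polynomial, is at most $\eps^2$ with a degree-$d$ SoS proof. Conditioning on $\alpha$, which fixes at most $K$ variables, gives a pseudo-distribution of degree $\geq d$, and pseudo-Cauchy--Schwarz $(\pE p)^2\le\pE p^2\le\eps^2$ then bounds each coordinate by $\eps$. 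Summing over the $|D|^t$ coordinates bounds the first term by $|D|^t\eps$ for every $T,\alpha$, so no expectation over $T$ is needed for this part.

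\textbf{Second term (global correlation rounding).} Here I follow \cite{BRS11}, generalized to $t$-wise marginals by repeatedly splitting into halves (or peeling off one variable at a time). Take $\cT$ to be: choose $r$ uniformly in $\{0,\dots,K'\}$ and draw $r$ uniformly random vertices. The potential is the average conditional entropy (or variance) $\Phi_r=\E_{T,|T|=r}\E_\alpha\E_v H(\mu|\alpha)_v\le\log|D|$. For $t=2$ the standard argument runs as follows. If $\E_{u,v}\Norm{(\mu|\alpha)_{uv}-(\mu|\alpha)_u\otimes(\mu|\alpha)_v}_1\ge\delta$, then by Pinsker the average mutual information is $\gtrsim\delta^2$, so conditioning on one further random vertex decreases $\Phi$ by $\Omega(\delta^2)$. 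Hence for all but an $O(\log|D|/(\delta^2K'))$ fraction of $r$ the average correlation is below $\delta$.

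For general $t$ I induct on $t$. I write $(\mu|\alpha)_{v_1\cdots v_t}-\prod_j(\mu|\alpha)_{v_j}$ as a telescoping sum of terms of the form $\Bigl((\mu|\alpha)_{v_1\cdots v_j}-(\mu|\alpha)_{v_1\cdots v_{j-1}}\otimes(\mu|\alpha)_{v_j}\Bigr)\otimes\prod_{i>j}(\mu|\alpha)_{v_i}$. Tensoring with probability vectors does not increase $\ell_1$ norm. Each term measures the correlation between the block $(v_1,\dots,v_{j-1})$ and the single vertex $v_j$. This is the same entropy-decrement argument applied to the potential $\E H$ of $(j-1)$-tuples, conditioning on a random vertex. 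Alternatively, one treats the tuple block as a variable over domain $D^{j-1}$ and runs BRS on the bipartite complete graph between $(j-1)$-tuples and vertices. Since distinct tuples and the full average over $V^{\underline t}$ differ from $V^t$ by only $O(t^2/n)$, which vanishes for sufficiently large $V$, I can pass to the independent product averages. Choosing $K'=\mathrm{poly}(t,\log|D|)/\eps^{O(1)}$ makes the expected second term at most $|D|^t\eps$. The total is then $2|D|^t\eps$, and the degree needed is $K+d$ with $K=tK'$.

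\textbf{Main obstacle.} The main obstacle is the multi-level BRS step for $t>2$. A single conditioning schedule must make all $t-1$ splits nearly independent simultaneously. Moreover, the averaging must be over pseudo-distributions of tuples, which requires degree $\geq t$ beyond the conditioning. I would first try to handle this with one potential, namely the sum over $j$ of $\E H$ of $j$-tuples, which is bounded by $t\log|D|$. I would then apply a union bound over $j$ in the pigeonhole over $r$.
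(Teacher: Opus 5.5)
Your proof is correct. Its first half is exactly the paper's argument: the triangle inequality through $\E_{S\in V^{\underline t}}(\mu|\alpha)_S$, then $(t,\eps,d)$-concentration with pseudo-expectation Cauchy--Schwarz applied to $\mu|\alpha$, summed over the $|D|^t$ coordinates. The difference is in the second term, which the paper isolates as \cref{lem:conditional-product} and proves differently.

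\textbf{How the paper handles the second term.} It inducts over powers of two. It splits $S=XY$ into halves and bounds $\E_{XY}\Norm{(\mu|\alpha)_{XY}-(\mu|\alpha)_X(\mu|\alpha)_Y}_{\ell_1}$ by applying the BRS local-to-global theorem as a black box. The graph is the modified Kneser graph $K'(n,t/2)$, whose expansion comes from Kneser eigenvalues. The lower level is handled by the induction hypothesis, using a second, independent conditioning set $T'$ together with \cref{lem:product-close}. It pays $O(1/n)$ for distinct versus independent tuples, and reaches general $t$ by projecting from the next power of two.

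\textbf{How your route differs.} You peel off one vertex at a time. The telescoping identity reduces everything to block-versus-vertex mutual informations $I(X_{v_1\cdots v_{j-1}};X_{v_j}\mid X_T)$. Each of these is exactly the one-step decrement of the potential $\E H(X_{(j-1)\text{-tuple}}\mid X_T)$ when one more uniformly random vertex is conditioned on. With $r$ uniform in $\{0,\dots,K'\}$, the expected decrement telescopes to at most $(j-1)\log|D|/(K'+1)$, and Pinsker plus Jensen finish the bound.

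\textbf{What each approach buys.} Your route is more elementary and self-contained: it needs no spectral input, no BRS theorem and no power-of-two reduction. It conditions only on single vertices, with a budget of order $t^3\log|D|/\delta^2$. It also proves a stronger local statement, namely that $\E_{T,\alpha}\E_{S}\Norm{(\mu|\alpha)_S-\prod_{j}(\mu|\alpha)_{v_j}}_{\ell_1}$ is small. The paper's recursion, by contrast, only controls the global discrepancy at the lower levels. The paper's route buys brevity by reusing BRS modularly.

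One minor slip: the summed potential $\sum_j \E H$ over $j$-tuples is bounded by $O(t^2\log|D|)$, not $t\log|D|$. This only changes constants in $K'$.
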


\begin{proof}
  For every $Q \in {[k] \choose t}$, $b \in D^t$, write
  \[ f \coloneqq \prod_{i \in Q} x_{i,b_i}, \quad A \coloneqq \E_{\sigma \in \mathcal C}
    f^{\Omega,\sigma} (x), \quad B \coloneqq \E_{\sigma \in V^{\underline k}} f^{\Omega,\sigma} (x). \]
  For every degree-$d$ pseudo-distribution $\eta$ consistent with $\cP(V,D)$,
  \[
    \Abs{\E_{\sigma \in \cC} \eta_{\sigma(Q)}(b) - \E_{S \sim V^{\underline t}} \eta(b)}
    = \Abs{\pE_\eta[A - B]} \leq \sqrt{\pE_\eta [(A-B)^2]} \leq \eps \,,
  \]
  where the first inequality is pseudo-expectation Cauchy--Schwarz, and the second is because $(V, \cC)$ is
  $(t,\eps,d)$-concentrated.
  Therefore
  \[
    \Norm{\E_{\sigma \in \cC} \eta_{\sigma(Q)} - \E_{S \sim V^{\underline t}} \eta_S}_{\ell_1} =
    \sum_{b \in D^t} \Abs{\E_{\sigma \in \cC} \eta_{\sigma(Q)}(b) - \E_{S \sim V^{\underline t}} \eta_S(b)}
    \leq \abs{D}^t \eps \,.
  \]
  Further, \cref{lem:conditional-product} implies
  \[
    \E_{T,\alpha} \Norm{\E_{S \sim V^{\underline t}} (\mu|\alpha)_S
    - \Paren{\E_{v \in V} (\mu|\alpha)_v}^t}_{\ell_1} \leq \delta
  \]
  for every degree-$K_{t,\delta,D}$ pesudo-distribution $\mu$ consistent with $\cP(V, D)$.
  The last two inequalities with $\eta = \mu|\alpha$ imply \cref{eq:conditional-t-wise-independent}.
\end{proof}

For $t \in N$, $n \geq 2t$, consider the modified Kneser graph $K'(n,t)$, whose vertex set is
$[n]^{\underline t}$ (i.e.~ordered $t$-tuples of distinct elements from $[n]$), and there is an edge between
$S, T \in [n]^{\underline t}$ if $S$ and $T$ are disjoint.
By contrast, the vertices in the usual Kneser graph $K(n,t)$ are unordered $t$-subsets of $[n]$.
Since every unordered $t$-subset gives rise to $t!$ ordered $t$-tuples, $K'(n,t)$ can be obtained from $K(n,t)$ by
replacing every vertex $v$ of $K(n,t)$ with a cloud $L_v$ of $t!$ vertices, and replacing every edge $(u,v)$ of $K(n,t)$
with a complete bipartite graph between $L_u$ and $L_v$.

\begin{lemma} \label{lem:conditional-product}
  For any $\delta > 0$, $t \in \N$, finite $D$, there is $K_{t,\delta,D} \in \N$ such that any large enough vertex
  set $V$ has a distribution $\cT_{t,\delta,D}$ over ${V \choose \leq K}$, such that for any pseudo-distribution
  $\mu$ of degree $K_{t,\delta,D}+t$ consistent with $\cP(V,D)$,
  \begin{equation} \label{eq:conditional-product}
    \E_{T,\alpha} \Norm{\E_{S \sim V^{\underline t}} (\mu|\alpha)_S
    - \Paren{\E_{v \in V} (\mu|\alpha)_v}^t}_{\ell_1} \leq \delta \,,
  \end{equation}
  where the outer expectation is over $T \sim \cT_{t,\delta,D}, \alpha \sim \alpha(T,\mu)$.
\end{lemma}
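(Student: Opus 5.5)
We prove \cref{lem:conditional-product} by global correlation rounding in the style of Barak, Raghavendra and Steurer \cite{BRS11}, applied to the modified Kneser graph $K'(n,t)$, followed by an induction on $t$ that splits a $t$-tuple into two halves. The basic tool is the standard potential argument. Let $\Phi(T,\alpha) = \E_{v \in V} H((\mu|\alpha)_v)$ be the average entropy of single-vertex marginals, or a variance potential. Conditioning on the value of one additional uniformly random vertex $w$ lowers $\Phi$ in expectation by $\E_{v,w} I_{\mu|\alpha}(x_v; x_w)$. Since $0 \le \Phi \le \log \abs{D}$, after conditioning on $r$ random vertices, for a uniformly random $r \le K = \lceil \log\abs{D}/\gamma \rceil$, the expected average mutual information is at most $\gamma$. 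We define $\cT_{t,\delta,D}$ as a random set of $r$ (or $r\cdot t$) uniform vertices, with $r$ uniform in $[K]$. The pseudo-distribution has degree $K+t$, so conditioning on $T$ leaves degree $t$, which is enough for the $t$-wise local marginals $(\mu|\alpha)_S$ to be actual distributions.

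The same argument must work for blocks of vertices. We carry it out with ordered tuples $S \in V^{\underline{s}}$ in place of single vertices, for $s < t$. The potential becomes $\E_{S} H((\mu|\alpha)_S) \le s\log\abs{D}$. Conditioning on a random $s'$-tuple lowers it by the average mutual information between a random $s$-tuple and a random $s'$-tuple. Mutual information and Pinsker's inequality give the bound
\[ \E_{S,S'} \Norm{(\mu|\alpha)_{S S'} - (\mu|\alpha)_S \otimes (\mu|\alpha)_{S'}}_{\ell_1} \le \sqrt{2\gamma}, \]
where $S,S'$ range over disjoint pairs, i.e.\ over edges of $K'(n,\cdot)$. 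Pairs that intersect have probability $O(t^2/n)$, which is negligible since $V$ is large. Conditioning on the at most $K$ vertices of $T$ also affects only an $O(Kt/n)$ fraction of tuples, so that error is likewise negligible.

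The induction on $t$ then proceeds as follows. Write $t = s + s'$ with $s = \lfloor t/2 \rfloor$. A random $t$-tuple is a random disjoint concatenation $SS'$, so the rounding step gives $\E_{T,\alpha}\Norm{\E_{SS'}(\mu|\alpha)_{SS'} - \E_{S,S'} (\mu|\alpha)_S\otimes(\mu|\alpha)_{S'}}_{\ell_1}\le\sqrt{2\gamma}$. This alone is not yet $\E_S(\mu|\alpha)_S \otimes \E_{S'}(\mu|\alpha)_{S'}$: the average of the products is not the product of the averages.

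This is the main obstacle. The gap is a variance term: the marginals $(\mu|\alpha)_S$ must be nearly constant across $S$, not merely nearly independent from one another. The fix I would try is to also use conditioning on a random vertex $w$ and the identity
\[ \E_w \Var_{x_w}\big[ \E_S (\mu|\alpha,x_w)_S \big] \;\text{vs.}\; \Var_S (\mu|\alpha)_S \,. \]
More concretely, I would bound $\E_{S,S'}\norm{(\mu|\alpha)_S\otimes(\mu|\alpha)_{S'}} - \norm{\E_S\otimes\E_{S'}}$ by comparing both to $\E_{SS'}(\mu|\alpha)_{SS'}$. By linearity, $\E_{S,S'}(\mu|\alpha)_S\otimes(\mu|\alpha)_{S'}$ and $\E_S(\mu|\alpha)_S\otimes\E_{S'}(\mu|\alpha)_{S'}$ differ only on overlapping pairs, a negligible $O(t^2/n)$ fraction. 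The $\ell_1$ norm is taken after the averaging over $S$, as in \eqref{eq:conditional-product}, so it is harmless here, and the variance term may not be needed at all.

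We then apply the induction hypothesis to $s$ and $s'$. This replaces $\E_S(\mu|\alpha)_S$ by $(\E_v(\mu|\alpha)_v)^{s}$, and likewise for $s'$, with error $\delta'$ each. The $\ell_1$ norm of a tensor product is subadditive, since $\norm{p\otimes q - p'\otimes q'}_{\ell_1} \le \norm{p-p'}_{\ell_1}+\norm{q-q'}_{\ell_1}$ for probability vectors. Choosing $\gamma$ and $\delta'$ with $\sqrt{2\gamma} + 2\delta' \le \delta$ and nesting the conditioning sets closes the induction, with $K_{t,\delta,D}$ depending only on $t,\delta,\abs{D}$.

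The delicate point in the nesting is that conditioning further for the sub-blocks must not destroy the independence already obtained between the halves. I would handle this by drawing the conditioning steps for all levels simultaneously in one random sequence, i.e.\ a single random stopping time $r$. Since every potential is nonincreasing under further conditioning, all levels are then simultaneously small on average.
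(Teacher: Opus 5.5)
Your proof is correct and has the same skeleton as the paper's. The paper also writes $\E_{S}(\mu|\alpha)_S-\Paren{\E_v(\mu|\alpha)_v}^t=A+B+C$, where $A$ is the correlation between halves over disjoint pairs, $B$ is the overlap term, and $C$ is handled by induction and \cref{lem:product-close}. The ``main obstacle'' you raise is not one, as you eventually notice. By bilinearity, $\E_S\eta_S\otimes\E_{S'}\eta_{S'}$ is the average of $\eta_S\otimes\eta_{S'}$ over independent pairs, so it differs from the disjoint-pair average only through the $O(t^2/n)$ mass of overlapping pairs. That is exactly the paper's term $B$ (\cref{lem:independent-error}), and no variance term is needed.

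You depart from the paper in two components.

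\emph{The correlation term $A$.} The paper invokes the Barak--Raghavendra--Steurer bound (\cref{thm:local-global-correlation}) on the modified Kneser graph $K'(n,t/2)$, which requires its spectral gap (\cref{lem:kenser-expanding}). You instead run an entropy potential directly over uniformly random pairs of tuples and apply Pinsker. Disjoint pairs are a $1-O(t^2/n)$ fraction of all pairs, so your global bound already controls the Kneser edges, and no expansion or local-to-global step is needed. This is more elementary. Your $\lfloor t/2\rfloor+\lceil t/2\rceil$ split also avoids the paper's reduction to powers of two via projection (\cref{lem:pushforward-dist}).

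\emph{The nesting.} The paper samples the current-level set $T$ and the recursive set $T'$ independently, and uses that conditioning on $T\cup T'$ gives the same result in either order. Both the BRS bound and the induction hypothesis hold for every starting pseudo-distribution. So $A$ is bounded by conditioning on $T'$ first, $C$ by conditioning on $T$ first, and later conditioning never destroys earlier independence; this keeps the argument modular.

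Your single shared conditioning sequence with one stopping time also works. The fact to state explicitly is this: over a round, each level's potential drops by at least that level's average mutual information at the start of the round. The reason is that the tuples added for other levels only lower conditional entropy further. Summing over rounds then bounds every level at the same random stopping time. The cost is a $K_{t,\delta,D}$ larger by a factor depending only on $t$. Since your potential bound also holds from an arbitrary starting pseudo-distribution, you could equally use the paper's commutativity trick instead.
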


\begin{proof}
  We first prove the lemma for $t$ being a power of $2$ by induction.
  The base case $t = 1$ is trivial, because $\E_{S \sim V^{\underline 1}} (\mu|\alpha)_S = \E_{v \in V} (\mu|\alpha)_v$.
  In this case we can choose $K_{1,\delta,D} = 0$ and  $\cT_{1,\delta,D}$ to be always $\emptyset$.

  For any power of two $t > 1$, we want to show that for large enough $n$,
  \begin{equation} \label{eq:induction-conditional-product}
    \E_{\tau,T} \E_{T'} \E_\alpha \Norm{\E_{S \sim V^{\underline t}} (\mu|\alpha)_S -
    \Paren{\E_{v \in V} (\mu|\alpha)_v}^t }_{\ell_1}
    \leq \delta\,,
  \end{equation}
  where $\tau \sim [(\abs{D}/\delta)^{O(1)}], T \sim (V^{\underline{t/2}})^\tau$, independently $T' \sim
  \cT_{t/2,\delta/4,D}$, and also $\alpha \sim \alpha(T \cup T',\mu)$.
  Write $S \in V^{\underline t}$ as $S = XY$, where $X,Y \in V^{\underline{t/2}}$.
  Expand
  \[
    \E_{S \sim V^{\underline t}} (\mu|\alpha)_S - \Paren{\E_{v \in V} (\mu|\alpha)_v}^t
    = A + B + C,
  \]
  where
  \begin{align*}
    A &= \E_{S \sim V^{\underline t}} (\mu|\alpha)_S
    - \E_{XY \sim V^{\underline t}} (\mu|\alpha)_X(\mu|\alpha)_Y \\
    B &= \E_{XY \sim V^{\underline t}} (\mu|\alpha)_X(\mu|\alpha)_Y
    - \Paren{\E_{X \sim V^{\underline{t/2}}} (\mu|\alpha)_X}^2 \\
    C &= \Paren{\E_{X \sim V^{\underline{t/2}}} (\mu|\alpha)_X}^2
    - \Paren{\E_{v \in V} (\mu|\alpha)_v}^t\,.
  \end{align*}

  To bound $A$, let $K = (\abs{D}/\eps)^{O(1)}$.
  For any degree-$(K+t)$ pseudo-distribution $\eta$, in expectation over
  $\tau \sim [K], T \sim (V^{\underline{t/2}})^\tau, \alpha \sim \alpha(T,\eta)$,
  \begin{equation} \label{eq:kneser-conditioning}
    \begin{split}
      & \E_{\tau,T,\alpha} \Norm{\E_{S \sim V^{\underline t}} (\eta|\alpha)_S
      - \E_{XY \sim V^{\underline t}} (\eta|\alpha)_X(\eta|\alpha)_Y}_{\ell_1} \\
      \leq & \E_{\tau,T,\alpha} \E_{XY \sim V^{\underline t}} \Norm{(\eta|\alpha)_{XY}
      - (\eta|\alpha)_X(\eta|\alpha)_Y}_{\ell_1} \leq \frac\delta4
    \end{split}
  \end{equation}
  by \cref{thm:local-global-correlation} below (from \cite{BRS11}) applied to modified Kneser graph
  $K'(n,t/2)$, which is expanding by \cref{lem:kenser-expanding}.
  Therefore
  \[ \E_{\tau,T} \E_{T'} \E_\alpha \norm{A}_{\ell_1} = \E_{T'} \E_{\tau,T} \E_\alpha \norm{A}_{\ell_1}
    = \E_{\substack{T' \\ \alpha' \sim \alpha(T',\mu)}} \E_{\substack{\tau,T \\ \alpha \sim
  \alpha(T,\mu|\alpha')}} \norm{A}_{\ell_1} \leq \frac\delta4 \,, \]
  where the first equality uses the independence of $(\tau,T)$ and $T'$, and the second equality
  uses the fact that sampling $\alpha'' \sim \alpha(T \cup T',\mu)$ is the same as sampling $\alpha' \sim
  \alpha(T',\mu)$, $\alpha \sim \alpha(T,\mu|\alpha')$, and setting $\alpha'' = \alpha \cup \alpha'$.
  The last inequality is \cref{eq:kneser-conditioning} applied to $\eta = \mu|\alpha'$.
  To bound $B$, \cref{lem:independent-error} implies
  \[ \Norm{\E_{XY \sim V^{\underline t}} (\mu|\alpha)_X(\mu|\alpha)_Y
  - \Paren{\E_{X \sim V^{\underline{t/2}}} (\mu|\alpha)_X}^2}_{\ell_1} = o_n(1)\,, \]
  so $\E_{\tau,T} \E_{T'} \E_\alpha \norm{B}_{\ell_1} \leq \delta/4$ for large enough $n$.

  To bound $C$, induction hypothesis implies for any degree-$(K_{t/2,\delta/4,D} + t/2)$ peusdo-distribution $\eta$, in
  expectation over $T' \sim \cT_{t/2,\delta/4,D}, \alpha' \sim \alpha(T',\eta)$,
  \[
    \E_{T',\alpha'} \Norm{\E_{X \sim V^{\underline{t/2}}} (\eta|\alpha')_X
    - \Paren{\E_{v \in V} (\eta|\alpha')_v}^{t/2}}_{\ell_1} \leq \frac\delta4 \,.
  \]
  In particular when $\eta = \mu|\alpha$,   \cref{lem:product-close} implies
  \[ \E_{\tau,T} \E_{T'} \E_\alpha \norm{C}_{\ell_1}
    = \E_{\substack{\tau,T \\ \alpha \sim \alpha(T,\mu)}} \E_{\substack{T' \\ \alpha' \sim \alpha(T', \mu|\alpha)}}
  \norm{C}_{\ell_1} \leq \frac\delta2\,. \]

  \cref{eq:induction-conditional-product} now follows, implying \cref{eq:conditional-product} for $t$.
  Here $\cT_{t,\delta,D}$ is the distribution of the underlying set of $T \cup T'$, with $T' \sim \cT_{t/2,\delta/4,D}$,
  and independently $\tau \sim [(\abs{D}/\delta)^{O(1)}], T \sim (V^{\underline{t/2}})^\tau$.

  The general case of arbitrary $t$ will then follow by considering $u$ to be the smallest power of $2$ such that $u
  \geq t$, and using the fact that for any pseudo-distribution $\eta$ of degree $u$,
  \[
    \Norm{\E_{S \sim V^{\underline t}} \eta_S - \Paren{\E_{v \in V} \eta_v}^t}_{\ell_1}
    \leq \Norm{\E_{U \sim V^{\underline u}} \eta_U - \Paren{\E_{v \in V} \eta_v}^u}_{\ell_1} \,,
  \]
  which holds by \cref{lem:pushforward-dist} applied to the projection $(b \in D^u \mapsto b_{[t]})$, and also the fact
  that
  \[
    \E_{S \sim V^{\underline t}} \eta_S = \pi_{[t]} \Paren{\E_{U \sim V^{\underline u}} \eta_U}
    \quad\text{and}\quad
    \Paren{\E_{v \sim V} \eta_v}^t = \pi_{[t]} \Paren{\Paren{\E_{v \sim V} \eta_v}^u} \,. \qedhere
  \]
\end{proof}

Given a real symmetric matrix $A$ of dimension $n$, let $\lambda_1 (A) \geq \dots \geq \lambda_n (A)$ be its
eigenvalues listed in decreasing order.
For a graph $G = (V,E)$, let $\lambda_i(G)$ be the eigenvalues of its adjacency matrix.

\begin{lemma} \label{lem:kenser-expanding}
  For every $t \in \N$, $\lambda_2(K'(n,t)) = o_n(\lambda_1(K'(n,t)))$.
\end{lemma}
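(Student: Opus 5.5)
The plan is to reduce the claim to the known spectrum of the ordinary Kneser graph $K(n,t)$, using the blow-up description of $K'(n,t)$ given just before the statement. Let $A$ be the adjacency matrix of $K(n,t)$, indexed by $\binom{[n]}{t}$. The construction replaces each vertex by a cloud of $t!$ vertices and each edge by a complete bipartite graph. So the adjacency matrix of $K'(n,t)$ is $A \otimes J_{t!}$, where $J_{t!}$ is the all-ones matrix of dimension $t!$. There are no loops inside a cloud: two distinct orderings of the same $t$-set are not disjoint, so they are not adjacent, which matches the zero diagonal of $A$. The eigenvalues of a Kronecker product are the pairwise products of eigenvalues. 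Since $J_{t!}$ has eigenvalues $t!$ (once) and $0$ (with multiplicity $t!-1$), the spectrum of $K'(n,t)$ consists of $t!\,\lambda_i(K(n,t))$ together with zeros.

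Next, I will quote the classical spectrum of the Kneser graph, due to Lovász. It is obtained from the Johnson scheme: the eigenvalues are $(-1)^j\binom{n-t-j}{t-j}$ for $j=0,1,\dots,t$. The largest is the degree $\binom{n-t}{t}$, at $j=0$. Every other eigenvalue has absolute value at most $\binom{n-t-1}{t-1}$, and this bound is attained at $j=1$ (up to sign) and dominates all $j\ge 1$, since $\binom{n-t-j}{t-j}$ is decreasing in $j$. Hence every eigenvalue other than the top one is at most $\binom{n-t-1}{t-1}$ in absolute value. In particular $\lambda_2(K(n,t))\le \binom{n-t-1}{t-1}$; in fact $\lambda_2$ is the $j=2$ value $\binom{n-t-2}{t-2}$, but the weaker bound is enough.

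Combining these facts, the spectrum of $K'(n,t)$ has top eigenvalue $\lambda_1(K'(n,t)) = t!\binom{n-t}{t}$. Every other eigenvalue is either $0$ or $t!$ times a non-top Kneser eigenvalue. Therefore
\[ \frac{\lambda_2(K'(n,t))}{\lambda_1(K'(n,t))} \le \frac{\binom{n-t-1}{t-1}}{\binom{n-t}{t}} = \frac{t}{n-t} = o_n(1) \]
for fixed $t$. One degenerate case needs a remark: if $\lambda_2 \le 0$ the statement is trivial, and $\lambda_1>0$ once $n\ge 2t$. For $t=1$, $K'(n,1)=K_n$, with $\lambda_2=-1$, which the formula also covers.

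The only real content is the Kneser spectrum. If a self-contained argument is preferred over citing Lovász, I would derive it from the Johnson scheme. The disjointness operator commutes with the action of the symmetric group $S_n$ on $\binom{[n]}{t}$, so it acts as a scalar on each irreducible piece $V_j$, where $j=0,\dots,t$, of the permutation module. The scalar on $V_j$ is computed by applying the operator to a standard eigenvector of $V_j$, for instance $\prod_{i=1}^j(e_{a_i}-e_{b_i})$ lifted to $t$-sets, and counting disjoint $t$-sets. This evaluation is the main obstacle, but it is a standard computation.
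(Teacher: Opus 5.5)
Your proposal is correct and follows the same route as the paper. Both write the adjacency matrix of $K'(n,t)$ as $A \otimes J$, use that Kronecker-product eigenvalues are pairwise products, and invoke the Kneser spectrum to get $\lambda_2/\lambda_1 \le \binom{n-t-1}{t-1}/\binom{n-t}{t} = t/(n-t)$. Your version is slightly cleaner: it bounds every non-top eigenvalue uniformly, avoiding the paper's case split on whether $\lambda_2(A)>0$, and it uses cloud size $t!$ and degree $\binom{n-t}{t}$ as the statement requires, whereas the paper's write-up is phrased with $t/2$ in place of $t$.
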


\begin{proof}
  The adjacency matrix of $K'(n,t)$ is $A \otimes J$, where $A$ is the adjacency matrix of the usual Kneser graph
  $K(n,t)$, and $J$ is the all-ones square matrix of size $(t/2)!$.

  This is because for every adjacent vertices $S$ and $T$ in $K'(n,t)$, their (unordered) underlying sets are adjacent
  in $K(n,t)$, the matrix $J$ is the adjacency matrix of the $(t/2)!$-clique with self-loops, and vertices of the clique
  specify orderings of the (ordered) tuples $S$ and $T$.

  From the eigenvalues of the Kneser graph \cite[Section~9.4]{godsil2001algebraic}, we have $\lambda_1(A)
  = {n-t/2 \choose t/2}$ and $\abs{\lambda_2(A)} \leq {n-t/2-1 \choose t/2-1} = o(\lambda_1(A))$.
  For $J$, $\lambda_1(J) = (t/2)!$ and all other eigenvalues are zeros.
  \cref{fact:eigenvalues-kronecker} implies $\lambda_1(K'(n,t)) = \lambda_1 (A) \lambda_1(J)$.
  Also $\lambda_2(K'(n,t)) = \lambda_2(A) \lambda_1(J)$ because $\lambda_2(A) > 0$ when $t \geq 4$, and $J$ is
  a $1$-by-$1$ matrix when $t = 2$.
  Therefore $\lambda_2(K'(n,t)) = o(\lambda_1(K'(n,t)))$.
\end{proof}

\begin{fact} \label{fact:eigenvalues-kronecker}
  Given real symmetric matrices $A$ and $B$ of dimension $n$ and $m$ respectively, the eigenvalues of $A \otimes B$
  are $\set{\lambda_i (A) \lambda_j (B)}_{1 \leq i \leq n, 1 \leq j \leq m}$.
\end{fact}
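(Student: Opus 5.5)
This is the standard Kronecker-product spectral fact, and I would prove it by exhibiting a complete eigenbasis. By the spectral theorem, write $A = \sum_{i=1}^n \lambda_i(A)\, u_i u_i^\top$ and $B = \sum_{j=1}^m \lambda_j(B)\, w_j w_j^\top$, where $\{u_i\}$ is an orthonormal eigenbasis of $\R^n$ for $A$ and $\{w_j\}$ is an orthonormal eigenbasis of $\R^m$ for $B$.

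The first step is the mixed-product property $(A\otimes B)(x\otimes y) = (Ax)\otimes(By)$. It follows directly from the entrywise definition $(A\otimes B)_{(i,j),(i',j')} = A_{ii'}B_{jj'}$. Applying it to $u_i\otimes w_j$ gives
\[
(A\otimes B)(u_i\otimes w_j) = \lambda_i(A)\lambda_j(B)\,(u_i\otimes w_j).
\]
Hence each $u_i\otimes w_j$ is an eigenvector with eigenvalue $\lambda_i(A)\lambda_j(B)$.

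The second step is completeness. By the same entrywise computation, $\inner{x\otimes y, x'\otimes y'} = \inner{x,x'}\inner{y,y'}$. Therefore the $nm$ vectors $\{u_i\otimes w_j\}$ are orthonormal, and so they form a basis of $\R^{nm}$. It follows that $A\otimes B = \sum_{i,j}\lambda_i(A)\lambda_j(B)\,(u_i\otimes w_j)(u_i\otimes w_j)^\top$. Note also that $A\otimes B$ is symmetric, since $(A\otimes B)^\top = A^\top\otimes B^\top$. So its eigenvalue multiset is exactly $\set{\lambda_i(A)\lambda_j(B)}_{i,j}$, counted with multiplicity.

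No step poses a real obstacle. The only point needing care is the orthogonality and completeness argument, which ensures the multiset is exact, with no eigenvalues missed and none double counted.
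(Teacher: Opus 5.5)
Your proof is correct and complete: the mixed-product identity together with $\inner{x\otimes y, x'\otimes y'} = \inner{x,x'}\inner{y,y'}$ gives an orthonormal eigenbasis $\{u_i\otimes w_j\}$ of $\R^{nm}$, which pins down the eigenvalue multiset exactly. The paper states this as a standard fact without proof, and your argument is the standard justification. The separate remark that $A\otimes B$ is symmetric is harmless but not needed once you have the full orthonormal eigenbasis.
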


The following result is a special case of \cite[Proof of Theorem~5.6]{BRS11}.

\begin{theorem} \label{thm:local-global-correlation}
  Given a regular graph $G = (V, E)$ with $\lambda_2(G) \leq \Omega(\eps/\abs{D})^2 \lambda_1(G)$, there is
  $K = O(\abs{D}/\eps)^{O(1)}$ such that every degree $K+2$ pseudo-distribution $\mu$ consistent with
  $\cP(V,D)$ satisfies
  \[ \E_{\tau,T,\alpha} \E_{uv \sim E} \norm{(\mu|\alpha)_{uv} - (\mu|\alpha)_u (\mu|\alpha)_v}_{\ell_1} \leq \eps\,, \]
  where the outer expectation is over $\tau \sim [K], T \sim V^\tau, \alpha \sim \alpha(T,\mu)$.
\end{theorem}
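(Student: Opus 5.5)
The plan is the standard Barak--Raghavendra--Steurer potential argument, using the variance (or entropy) of the single-vertex marginals as the potential. For a pseudo-distribution $\eta$ consistent with $\cP(V,D)$, define
\[
  \Phi(\eta) \coloneqq \E_{v \in V} \sum_{a \in D} \mathrm{Var}_\eta[x_{v,a}].
\]
This quantity lies in $[0,1]$, and conditioning never increases it in expectation. Conditioning on one more uniformly random vertex $w$ reduces it by
\[
  \E_{v,w} \sum_{a,b} \mathrm{Cov}(x_{v,a}, x_{w,b})^2 / \Pr[x_w = b]
\]
(law of total variance). Applying this after each of the $\tau$ sequential conditionings and using telescoping gives the following. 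For $\tau$ uniform in $[K]$, the expected global correlation $\E_{\alpha}\E_{v,w \sim V}\sum_{a,b}\mathrm{Cov}_{\mu|\alpha}(x_{v,a},x_{w,b})^2$ after conditioning on $T \sim V^\tau$ is at most $|D|\cdot O(1/K)$. The pseudo-expectation identities used here (conditioning, total variance, covariance) need degree about $K+2$, which is why the degree is $K+2$.

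Next, I relate local correlation on edges to global correlation using the spectral gap. For each pair $(a,b)$, write $c_{a,b}(v) \coloneqq x_{v,a} - \pE x_{v,a}$ as vectors in the SoS (Gram) vector representation, and let $P_{a,b}(u,v) = \mathrm{Cov}(x_{u,a}, x_{v,b})$. The matrix $\sum_a$ of such covariance blocks is PSD as a Gram matrix over $V\times D$. Then
\[
  \E_{uv \sim E} \sum_{a} \mathrm{Cov}(x_{u,a},x_{v,a'})
\]
and, more to the point, the sum of squared edge covariances can be written as $\langle G/d, \cdot \rangle$ against a PSD Gram matrix. Decomposing $G/d = \frac{1}{n}J + (\text{rest with norm} \le \lambda_2/\lambda_1)$ gives
\[
  \E_{uv\sim E}\|\mathrm{Cov}_{uv}\| \lesssim \sqrt{\E_{v,w}\|\mathrm{Cov}_{vw}\|^2} + \lambda_2/\lambda_1 .
\]
This is the BRS ``local correlation implies global correlation'' lemma, proved via the Gram vectors $\{c_{v,a}\}$ and the inequality $\langle X, G\rangle \le \lambda_1\langle X, \frac{J}{n}\rangle + \lambda_2 \operatorname{tr}X$ for PSD $X$. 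Regularity of $G$ is used so that the top eigenvector is uniform and matches the uniform measure over $v$.

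Finally, I convert covariance to $\ell_1$ distance. For each edge,
\[
  \|(\mu|\alpha)_{uv} - (\mu|\alpha)_u(\mu|\alpha)_v\|_{\ell_1} = \sum_{a,b}|\mathrm{Cov}(x_{u,a},x_{v,b})| \le |D|\sqrt{\sum_{a,b}\mathrm{Cov}^2}.
\]
Combining the three steps with Jensen gives
\[
  \E \text{(local $\ell_1$)} \lesssim |D|\left( (|D|/K)^{1/4} + \sqrt{\lambda_2/\lambda_1}\right).
\]
Taking $K = (|D|/\eps)^{O(1)}$ and using the hypothesis $\lambda_2 \le (\eps/|D|)^2\lambda_1$ yields $\le \eps$.

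The main obstacle is the second step. Comparing squared covariances requires using the PSD Gram structure rather than naive Cauchy--Schwarz: the relevant quantity is $\E_{uv\sim E}\langle c_u, c_v\rangle^2$-type expressions, whose matrix $(\langle c_u,c_v\rangle^2)$ is PSD by the Schur product theorem. I would apply the spectral decomposition to that Hadamard-square matrix, whose trace is bounded by $n$ times the maximum single-vertex variance ($\le 1$). This is exactly where BRS's argument lives, and I would follow their Proof of Theorem~5.6 specialized to 2-CSP marginals.
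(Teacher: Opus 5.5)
Your proposal is correct and follows essentially the same route as the paper, which gives no proof of its own and cites the result as a special case of \cite[Proof of Theorem~5.6]{BRS11}; that proof is exactly the variance-potential telescoping plus spectral local-to-global comparison (via the PSD Gram matrix of $\sum_a c_{v,a}\otimes c_{v,a}$) that you reconstruct. The only slip is cosmetic: your displayed local-to-global inequality should have $\sqrt{\lambda_2/\lambda_1}$ in place of $\lambda_2/\lambda_1$, since it comes from $\E_{uv\sim E}\sum_{a,b}\mathrm{Cov}^2\le\E_{v,w}\sum_{a,b}\mathrm{Cov}^2+\max(\lambda_2,0)/\lambda_1$ followed by Jensen; this is the form your final bound actually uses, and it matches the hypothesis $\lambda_2\le c(\eps/|D|)^2\lambda_1$.
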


\subsection{Almost $t$-wise Independence}

The goal of this section is to prove \cref{thm:almost-twise}, showing that a distribution whose $t$-wise
marginals are close to $\nu$-independent is fact close to some $t$-wise $\eta$-independent distribution.
Note that $\eta$ may be different from $\nu$.
Our proof generalizes \cite[Theorem~1.1]{odonnell18closness} from the uniform Boolean distribution to an
arbitrary finite distribution.
We also need to handle an additional subtlety to ensure every nonzero mass of $\nu$ is not too small.

Two distributions $\mu$ on $\eta$ on the same probability space are $\eps$-close if $\norm{\mu - \eta}_{\ell_1} \leq
\eps$.

We will consider distributions that are almost $t$-wise independent, as measured by the following metric:

\begin{definition}
  Given a distribution $\nu$ over $D$ and $t \in \mathbb{N}$, define for every distribution $\mu$ over $D^k$,
  \[ \delta^{[1,t]}_\nu [\mu] \coloneqq \sum_{1 \leq \abs{Q} \leq t} \Norm{\pi_Q (\mu) - \nu^Q}_{\ell_1}. \]
\end{definition}

Call a distribution $\mu$ $(\delta,t)$-wise $\nu$-independent if $\delta^{[1,t]}_\nu [\mu] \leq \delta$.
Also call a distribution $(\delta,t)$-wise independent if it is $(\delta,t)$-wise $\nu$-independent for some
distribution $\nu$.
Note that $\mu \mapsto \delta^{[1,t]}_\nu[\mu]$ is convex.

\begin{remark}
  Our notion of $(\delta,t)$-wise independent distribution differs from \cite{odonnell18closness}.
  The definition in \cite{odonnell18closness} is based on an upper bound on magnitudes of all low-degree Fourier
  coefficients, while our definition is a sum of $\ell_1$ distances over $t$-wise marginals.
  Our definition is more convenient for the global correlation rounding analysis.
\end{remark}

Given any finite distribution $\nu$, $p \geq 1$, define $\norm{f}_{L^p(\nu)} \coloneqq \Paren{\E_{a \in \nu}
[\abs{f(x)}^p]}^{1/p}$ for $f: \supp(\nu) \to \R$.

Given any distribution $\nu$ over a finite domain $D$, distribution $\mu$ with density $f$ wrt $\nu$,
distribution $\eta$ with density $g$ wrt $\nu$,
\begin{equation} \label{eq:one-l1}
  \norm{f - g}_{L^1(\nu)} = \sum_{a \in D} \nu (a) \abs{f(a) - g(a)} = \sum_{a \in D} \Abs{f(a) \nu (a) -
  g(a)\nu (a)} = \Norm{\mu - \eta}_{\ell_1}.
\end{equation}

Our next definition makes use of Efron--Stein decomposition (\cref{sec:efron-stein}).
Given a function $f \in L^2(\nu^k)$ and $t \leq k$, define its level-$1$-through-$t$-weight
\[ W^{[1,t]}_\nu [f] \coloneqq \sum_{1 \leq \abs{Q} \leq t} \norm{f^Q}^2_{L^2(\nu^k)}.\]
Note that $\1$ is the density of $\nu^k$ wrt $\nu^k$, and $\1^Q \equiv 0$ for any nonempty $Q \subseteq [k]$.
Closeness in $1$-norm implies closeness in Efron–Stein components:

\begin{lemma} \label{lem:es-1norm}
  For any distribution $\nu$ of minimum nonzero probability $\alpha$, any distribution $\mu$ having density
  $f \in L^2(\nu^k)$ wrt $\nu^k$,
  \[ W^{[1,t]}_\nu [f] \leq 2(4/\alpha)^t \delta^{[1,t]}_\nu [\mu].\]
\end{lemma}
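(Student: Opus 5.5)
We bound each Efron--Stein component $f^Q$ separately and sum over $1 \le |Q| \le t$. The key facts are that $f^Q$ depends only on coordinates in $Q$, and that it is determined by the marginal density $f_Q \coloneqq$ density of $\pi_Q(\mu)$ with respect to $\nu^Q$. Indeed, $f^Q = \sum_{R \subseteq Q} (-1)^{|Q|-|R|} \E[f \mid x_R]$, and $\E[f\mid x_R]$ is exactly the density of $\pi_R(\mu)$ with respect to $\nu^R$.

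Since $\1^Q \equiv 0$ for nonempty $Q$, we may write
\[ f^Q = (f - \1)^Q = \sum_{R \subseteq Q} (-1)^{|Q|-|R|} \Paren{f_R - \1}, \]
where $f_R$ is the density of $\pi_R(\mu)$ with respect to $\nu^R$. For $R = \emptyset$ the term vanishes, since $f_\emptyset = 1$. Set $g_R \coloneqq f_R - \1$. By \cref{eq:one-l1} applied on $D^R$ with base measure $\nu^R$, we have $\norm{g_R}_{L^1(\nu^R)} = \norm{\pi_R(\mu) - \nu^R}_{\ell_1}$.

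Next we convert from $L^1$ to $L^2$. Every atom of $\nu^R$ has mass at least $\alpha^{|R|}$, so $\abs{g_R} \le \norm{g_R}_{L^1}/\alpha^{|R|}$ pointwise; this uses that $\mu$ has a density, so its support lies in $\supp(\nu^k)$. Moreover $\norm{g_R}_{L^1} \le 2$. Hence
\[ \norm{g_R}_{L^2}^2 \le \norm{g_R}_\infty \norm{g_R}_{L^1} \le \frac{2}{\alpha^{|R|}} \norm{\pi_R(\mu) - \nu^R}_{\ell_1}. \]
Notice that this produces a bound that is linear in the $\ell_1$ distance rather than quadratic. That matches the statement, whose right-hand side is linear in $\delta^{[1,t]}_\nu[\mu]$.

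To combine terms, the cleanest route avoids the crude triangle inequality on the alternating sum (which would cost a factor of $2^{|Q|}$ per component and overlap across $Q$). Instead we use orthogonality: $g_R = \sum_{\emptyset \neq S \subseteq R} f^S$, so $\norm{g_R}^2 = \sum_{\emptyset\ne S\subseteq R} \norm{f^S}^2$. Taking $R = Q$ gives $\norm{f^Q}^2 \le \norm{g_Q}^2$ directly. Summing over $1 \le |Q| \le t$ yields
\[ W^{[1,t]}_\nu[f] \le \sum_{1\le|Q|\le t} \frac{2}{\alpha^{|Q|}} \norm{\pi_Q(\mu)-\nu^Q}_{\ell_1} \le 2\alpha^{-t}\, \delta^{[1,t]}_\nu[\mu], \]
which is even stronger than the claimed $2(4/\alpha)^t$.

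The main point to check carefully is the well-definedness of the pointwise bound when $\nu$ has zero-mass atoms. Functions in $L^2(\nu^k)$ live only on $\supp(\nu)^k$, so every bound is taken over the support, where each atom has mass at least $\alpha^{|R|}$. We also need the identity $\E[f \mid x_R] = f_R$, which is the standard Efron--Stein fact for a product base measure. If one instead follows the paper's likely route via inclusion--exclusion, the extra $4^t$ slack absorbs the $2^{|Q|}$ combinatorial factors.
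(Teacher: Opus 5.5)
Your proof is correct, and its first half matches the paper's. The paper also bounds $\|f^Q\|^2 \le \sum_{P\subseteq Q}\|f^P-\1^P\|^2 = \|f^{\subseteq Q}-\1\|^2$ by orthogonality, and it identifies $f^{\subseteq Q}$ with the density of $\pi_Q(\mu)$ via \cref{lem:proj-density}. So your closing guess is off: the paper does not use inclusion--exclusion, and that is not where its $4^t$ comes from.

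The two routes differ only in the $L^2$-to-$L^1$ conversion. The paper invokes the hypercontractive estimate \cref{thm:2-1-norm}, $\|g\|_2 \le (2/\sqrt\alpha)^t\|g\|_1$ for degree-$t$ functions, squares it, and linearizes using $\|\pi_Q(\mu)-\nu^Q\|_{\ell_1}\le 2$; this produces $2(4/\alpha)^t$. You instead use that $g_Q=f^{\subseteq Q}-\1$ is a $|Q|$-junta on a finite space whose atoms have mass at least $\alpha^{|Q|}$. Hence $\|g_Q\|_2^2\le\|g_Q\|_\infty\|g_Q\|_1\le\alpha^{-|Q|}\|g_Q\|_1^2$. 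This is elementary, self-contained, and sharper by a factor of $4^t$.

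The paper's choice is economy, not necessity. It needs \cref{thm:2-1-norm} anyway in the proof of \cref{thm:almost-twise}, where the dual function $h$ has low degree but depends on all $k$ coordinates. There your junta bound would cost $\alpha^{-k/2}$, which with $\alpha=1/(k|D|^2)$ is not polynomial in $k$, so hypercontractivity is genuinely required. Plugging your version of this lemma into that proof would improve the constant in the first Cauchy--Schwarz factor, but it would not remove the reliance on hypercontractivity.
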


\begin{proof}
  Let $B = 2/\sqrt\alpha$.
  Abbreviate $\norm{\cdot} \coloneqq \norm{\cdot}_{L^2(\nu^k)}$ and $\norm{\cdot}_1 \coloneqq \norm{\cdot}_{L^1(\nu^k)}$.
  For every nonempty $Q \in {[k] \choose \leq t}$,
  \begin{align*}
    \Norm{f^Q} &= \Norm{f^Q - \1^Q}^2 \leq \sum_{P \subseteq Q} \Norm{f^P - \1^P}^2
    \overset{(*)}= \Norm{f^{\subseteq Q} - \1^{\subseteq Q}}^2
    \leq B^{2t} \Norm{f^{\subseteq Q} - \1^{\subseteq Q}}^2_1 \\
    &= B^{2t} \Norm{f^{\subseteq Q}\vert_Q - \1^{\subseteq Q}\vert_Q}^2_1
    \overset{\eqref{eq:one-l1}}= B^{2t} \Norm{\pi_Q (\mu) - \nu^Q}^2_{\ell_1}
  \end{align*}
  where $(*)$ is Parseval's theorem (i.e.~orthogonality of the Efron--Stein decomposition), the last inequality is
  \cref{thm:2-1-norm}, and the last equality uses \cref{lem:proj-density}.
  Therefore
  \begin{align*}
    W^{[1,t]}_\nu [f] &= \sum_{1 \leq \abs{Q} \leq t} \Norm{f^Q - \1^Q}^2 \leq B^{2t} \sum_{1 \leq \abs{Q}
    \leq t} \Norm{\pi_Q (\mu) - \nu^Q}^2_{\ell_1} \\
    &\leq B^{2t} \cdot \sum_{1 \leq \abs{Q} \leq t} \Norm{\pi_Q (\mu) - \nu^Q}_{\ell_1} \cdot
    \max_{1 \leq \abs{Q} \leq t} \Norm{\pi_Q (\mu) - \nu^Q}_{\ell_1}
    \leq 2 B^{2t} \delta^{[1,t]}_\nu [\mu]\,,
  \end{align*}
  where the last inequality is because any two distributions has $\ell_1$-distance at most two.
\end{proof}

\begin{theorem} \label{thm:almost-twise}
  For any finite domain $D$, any $t \geq 1$, there is a polynomial $q$ in $k$ and $\abs{D}$ such that the
  following holds:
  For any distribution $\nu$ over $D$, there is a distribution $\eta$ over $D$, such that any $(\delta,t)$-wise
  $\nu$-independent distribution $\mu$ over $D^k$ is $q(k,\abs{D})\sqrt\delta$-close to some $t$-wise
  $\eta$-independent distribution.
\end{theorem}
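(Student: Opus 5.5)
Given $\nu$, I first remove tiny masses to obtain $\eta$. Fix a threshold $\alpha_0 = \alpha_0(k,\abs{D})$ (say $1/(k\abs{D})^{c}$ for a suitable constant $c$; this is the only place the threshold matters). Let $\eta$ be $\nu$ conditioned on $\{a : \nu(a) \geq \alpha_0\}$, so $\norm{\nu - \eta}_{\ell_1} \leq 2\abs{D}\alpha_0$. The point is that $\eta$ depends only on $\nu$, as the statement requires, and every nonzero mass of $\eta$ is at least $\alpha_0$.

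Next I replace $\mu$ by $\mu'$, which is $\mu$ conditioned on all coordinates lying in $\supp(\eta)$. I then need to check two things:
\begin{itemize}
\item $\norm{\mu-\mu'}_{\ell_1}$ is small;
\item $\mu'$ is still $(\delta',t)$-wise $\eta$-independent with $\delta' \lesssim_{k,\abs{D}} \delta + \alpha_0$.
\end{itemize}
Each coordinate marginal of $\mu$ is within $\delta$ of $\nu$, so by a union bound the mass of $\mu$ outside $\supp(\eta)^k$ is at most $k(\delta + \abs{D}\alpha_0)$. But this yields an additive $\alpha_0$ term that does not scale with $\delta$, so the threshold cannot be a fixed constant. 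Two fixes are possible:
\begin{itemize}
\item \textbf{Adaptive threshold.} Pick the cutoff among $\abs{D}$ geometrically spaced scales so that there is a gap: no mass of $\nu$ lies in $[\beta, \beta\cdot \mathrm{poly}]$. Remove the masses below $\beta$, where $\beta$ is chosen relative to $\delta$. If $\delta$ is tiny, nothing is removed except masses comparable to $\delta$.
\item \textbf{Alternative.} If $\delta \geq \alpha_0$ for a constant $\alpha_0$, the claim is trivial once $q \geq 2/\sqrt{\alpha_0}$, since any two distributions are $2$-close. So assume $\delta$ is small, and take the threshold $\alpha_0=\sqrt\delta$-like with a pigeonhole gap over scales.
\end{itemize}
Either way, I aim to make all losses $O_{k,\abs{D}}(\sqrt\delta)$. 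The marginal errors of $\mu'$ relative to $\eta^Q$ are then bounded by the triangle inequality.

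With minimum mass at least $\alpha$, I generalize O'Donnell–Zhao. Let $f$ be the density of $\mu'$ with respect to $\eta^k$. By \cref{lem:es-1norm}, $W^{[1,t]}_\eta[f] \leq 2(4/\alpha)^t\delta'$. Define $g = f - \sum_{1\leq\abs{Q}\leq t} f^Q$. The function $g$ has no Efron–Stein components of levels $1$ through $t$ and has mean $1$, so it has the correct $t$-wise marginals $\eta^Q$ (using \cref{lem:proj-density}). However, $g$ may be negative. To fix this, mix with the uniform-like product density $\1$: set $h = (1-\lambda) g + \lambda \1$ with $\lambda$ chosen so that $h \geq 0$. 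This requires $\lambda \gtrsim \norm{\sum f^Q}_\infty$.

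This step is the main obstacle, because O'Donnell–Zhao control the $\ell_\infty$ norm via low-degree structure, whereas here $f^Q$ depends on at most $t$ coordinates. Bound $\norm{f^Q}_\infty \leq \alpha^{-t/2}\norm{f^Q}_{L^2}$, since $f^Q$ is a function of $\abs{Q}\leq t$ coordinates, each atom having mass at least $\alpha$. Summing over at most $\binom{k}{\leq t}$ sets $Q$ gives $\lambda \lesssim k^t \alpha^{-t/2}\sqrt{W} = \mathrm{poly}(k,\abs{D})\sqrt{\delta'}$. Then $\norm{f-h}_{L^1(\eta^k)} \leq \sum\norm{f^Q}_{L^1} + 2\lambda$, which is also $O(\mathrm{poly}\sqrt{\delta'})$, and \eqref{eq:one-l1} converts this to an $\ell_1$ distance.

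Combining the pieces, $\mu$ is close to $\mu'$, $\mu'$ is close to $h\,\eta^k$, and $h\,\eta^k$ is $t$-wise $\eta$-independent. This gives the bound $q(k,\abs{D})\sqrt\delta$, with polynomial dependence provided $\alpha$ and $t$ are treated as constants tied to $\abs{D}$ and $k$.
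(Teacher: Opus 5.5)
\textbf{The failing step is the minimum-mass reduction, and neither of your two fixes repairs it.}

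A lower bound shows why. Whatever you do afterwards, any $t$-wise $\eta$-independent distribution is at $\ell_1$-distance at least $\norm{\nu-\eta}_{\ell_1}-\delta$ from $\mu$ (compare first marginals). So deleting atoms of $\nu$ of total mass $r$ costs at least $r-\delta$, and you can only afford to delete atoms of mass $O(\sqrt\delta)$. If $\eta$ must not depend on $\delta$, then taking $\delta=0$ and $\mu=\nu^k$ forces $\eta=\nu$.

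On the other side, your correction step gives $\lambda\lesssim k^t2^t\alpha^{-t}\sqrt{\delta'}$, from $\norm{f^Q}_\infty\le\alpha^{-\abs{Q}/2}\norm{f^Q}_{L^2}$ and \cref{lem:es-1norm}. Now suppose $\nu$ has an atom of mass $\delta^{1/4}$. Deleting it costs $\gtrsim\delta^{1/4}$. Keeping it forces $\alpha\le\delta^{1/4}$, and your bound becomes of order $\delta^{1/2-t/4}$. Neither is $O(\sqrt\delta)$ for any $t\ge1$.

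Consequently:
\begin{itemize}
\item Pigeonholing over scales gives at best $\delta^{c}$ with $c=c(t,\abs{D})$ exponentially small in $\abs{D}$, and makes $\eta$ depend on $\delta$.
\item A threshold near $\sqrt\delta$ gives $\alpha^{-t}\sqrt\delta\ge1$.
\end{itemize}

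Separately, conditioning $\mu$ on $\supp(\eta)^k$ is a lossy reduction. Even when $\delta=0$ it can raise $\delta^{[1,t]}_\eta[\mu']$ by order $r$; this happens if $\mu$ pairs a deleted symbol in one coordinate with a fixed value in another. That additive term then sits under your square root multiplied by $\alpha^{-t}$. If $\nu$ has atoms just below and just above a threshold $\alpha_0$, your bound becomes $\gtrsim\alpha_0^{1/2-t}>1$. The paper instead pushes $\mu$ and $\nu$ forward under the same merge map $a\to b$ (\cref{lem:close-min-prob}), which does not increase $\delta^{[1,t]}$.

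Your diagnosis that a fixed threshold leaves a $\delta$-independent error is correct, and the paper's proof has exactly this defect. It fixes $\alpha=1/(k\abs{D}^2)$, so \cref{lem:close-min-prob} costs $A=2\abs{D}(k\alpha+\delta)=2/\abs{D}+2\abs{D}\delta$. The final claim $A+2w=O(2^tk^t\abs{D}^{2t}\sqrt\delta)$ silently drops the constant $2/\abs{D}$.

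What either argument actually proves, using the pushforward reduction, is closeness $2\abs{D}k\alpha+O_{k,\abs{D},t}(\alpha^{-t}\sqrt\delta)$ for a free parameter $\alpha$. That is all \cref{thm:upper-bound} needs, once $\alpha$ is chosen small in terms of $\eps$. For that weaker statement, your direct correction (zeroing levels $1$ through $t$ and mixing with $\1$) is a valid, more hands-on substitute for the paper's Farkas-lemma argument.

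If you want a bound in $\delta$ alone with $\eta=\nu$, use Hoffman's error bound for the polytope $\{\mu\ge0:\pi_Q(\mu)=\nu^Q,\ 1\le\abs{Q}\le t\}$. It gives distance at most $H\cdot\delta^{[1,t]}_\nu[\mu]$, where $H$ depends only on the constraint matrix (so on $k,\abs{D},t$, not on $\nu$). The catch is that there is no polynomial control on $H$.
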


\cref{thm:almost-twise} generalizes \cite[Theorem~1.1]{odonnell18closness} in that $\mu$ is almost $t$-wise
$\nu$-independent, where $\nu$ needs not be the uniform Boolean distribution.
The distance bound deteriorates with the minimum nonzero probability of $\nu$ (due to hypercontractivity inequality in
\cref{thm:2-1-norm}), so we have to first remove tiny probability masses from $\nu$ using \cref{lem:close-min-prob}.
Note that $\eta$ only depends on $\nu$ but not $\mu$, which will be crucial for our application.

\begin{proof}[{Proof of \cref{thm:almost-twise}}]
  Let $\alpha = 1/(k\abs{D}^2)$, $A = 2\abs{D}(k\alpha+\delta)$, $B = 2/\sqrt\alpha$, $w = B^{2t}\sqrt{2\delta}$.

  \cref{lem:close-min-prob} implies that $\mu$ is $A$-close to $\zeta$ so that $\zeta$ is $(t,\delta)$-wise
  $\eta$-independent, where $\eta$ has minimum nonzero probability at least $\alpha$.
  For the rest of the proof, all expectations are over $\eta^k$.
  Also, $\norm{\cdot} \coloneqq \norm{\cdot}_{L^2(\eta^k)}$, and $\norm{\cdot}_1 \coloneqq \norm{\cdot}_{L^1(\eta^k)}$.

  Let $f$ be the density of $\zeta$ wrt $\eta^k$.
  We will find another density $g$ wrt $\eta^k$ so that
  \[ f^Q = -w g^Q \qquad \forall 1 \leq \abs{Q} \leq t \,. \]
  Then $\frac{f+w g}{1+w}$ is the density wrt $\eta^k$ of a distribution which is
  $2w$-close to $\zeta$, because
  \[ \Norm{\frac{f+w g}{1+w} - f}_1 = \frac w{1+w} \norm{g - f}_1 \leq 2w. \]

  Such a density $g$ exists if and only if the following LP is feasible:
  \begin{align*}
    \E[g] & = 1 \\
    g^Q & = -\frac1w f^Q \quad \forall Q \subseteq [k], 1 \leq \abs{Q} \leq t \\
    g(a) & \geq 0 \quad \forall a \in D^k.
  \end{align*}
  The dual has variables $h: D^k \to \mathbb{R}$ with constraints
  \begin{align*}
    \E [h] & = 1 \\
    h^Q & \equiv 0 \quad \forall Q \in \textstyle{[k] \choose >t} \\
    h(a) & \geq 0 \quad \forall a \in D^k \\
    \frac1w \sum_{1 \leq \abs{Q} \leq t} \E \Brac{f^Q h^Q} & > 1.
  \end{align*}
  By Farka's lemma, the primal LP is feasible if and only if the dual is not.
  To verify dual infeasibility,
  \begin{align*}
    \sum_{1 \leq \abs{Q} \leq t} \E \Brac{f^Q h^Q} &\leq \sum_{1 \leq \abs{Q} \leq t} \Norm{f^Q} \Norm{h^Q}
    \leq \sqrt{\sum_{1 \leq \abs{Q} \leq t} \Norm{f^Q}^2} \sqrt{\sum_{1 \leq \abs{Q} \leq t} \Norm{h^Q}^2} \\
    & \leq \sqrt{W^{[1,t]}_\eta [f]} \norm{h} \overset{(*)}\leq B^t \sqrt{2\delta^{[1,t]}_\eta [\zeta]} \norm{h}
    \leq B^t\sqrt{2\delta} \norm{h}
  \end{align*}
  where the first two inequalities are Cauchy–Schwarz, and $(*)$ is \cref{lem:es-1norm}.
  Since $\norm{h} \leq B^t \norm{h}_1$ by \cref{thm:2-1-norm} and $\norm{h}_1 = 1$, the dual LP is infeasible.

  Altogether, $\mu$ is $(A+2w)$-close to some $t$-wise $\eta$-independent distribution, and $A+2w =
  O(2^t k^t \abs{D}^{2t} \sqrt\delta)$.
\end{proof}

\begin{theorem} \label{thm:2-1-norm}
  Given any distribution $\nu$ with minimum nonzero probability $\alpha$, for every degree-$d$ polynomial $f
  \in L^2(\nu^k)$,
  \[ \norm{f}_{L^2(\nu^k)} \leq (2/\sqrt\alpha)^d \norm{f}_{L^1(\nu^k)}. \]
\end{theorem}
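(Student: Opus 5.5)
We would prove \cref{thm:2-1-norm} by the standard route from hypercontractivity to a reverse (2,1) inequality. The plan is to first establish a $(2,4)$-hypercontractive bound for degree-$d$ polynomials over $\nu^k$, and then convert it into a $(1,2)$ bound via Hölder's inequality. Restricting to $\supp(\nu)$, we may assume every atom of $\nu$ has probability at least $\alpha$.

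\textbf{Step 1 (single coordinate).} For a function $g$ on $(\supp\nu,\nu)$ with $\E_\nu g=0$, the bound $|g(a)|^2\le \E_\nu[g^2]/\alpha$ holds pointwise. Consequently $\E g^4\le \alpha^{-1}(\E g^2)^2$, which gives the standard $(2,4,\rho)$ hypercontractivity for $\rho$ of order $\sqrt{\alpha}$. Concretely, we expect the general-product-space Bonami lemma (as in O'Donnell's book, Chapter 10) to give
\[ \norm{f}_{L^4(\nu^k)}\le (C/\sqrt{\alpha})^{d/2}\cdot\text{(const)}^{d}\norm{f}_{L^2(\nu^k)} \]
for $f$ of Efron--Stein degree at most $d$. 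The constant can be tuned so that $\norm{f}_4\le (2/\sqrt\alpha)^{d/2}\norm{f}_2$. The proof would go by induction on $k$. We write $f=f_0+f_1$, where $f_1$ collects the terms involving coordinate $k$; then $f_1$ has degree $d-1$ in the other coordinates and is mean zero in $x_k$. Expanding $\E f^4$ and using the single-coordinate fourth-moment bound $\E_{x_k}[h^4]\le \alpha^{-1}(\E h^2)^2$ for mean-zero $h$ (together with the cross terms $f_0^3f_1$, which vanish, and $f_0f_1^3$, handled by Cauchy--Schwarz) closes the induction. This is exactly the Bonami induction with the constant $9$ replaced by roughly $1/\alpha$.

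\textbf{Step 2 ($L^2$ versus $L^1$).} By Hölder with exponents chosen so that $\frac12=\frac{\theta}{1}+\frac{1-\theta}{4}$, i.e.\ $\theta=1/3$, we get $\norm{f}_2\le \norm{f}_1^{1/3}\norm{f}_4^{2/3}$. Substituting $\norm{f}_4\le M\norm{f}_2$ with $M=(2/\sqrt\alpha)^{d/2}$ gives $\norm{f}_2^{1/3}\le M^{2/3}\norm{f}_1^{1/3}$, hence $\norm{f}_2\le M^2\norm{f}_1=(2/\sqrt\alpha)^d\norm{f}_1$. This is the claimed bound.

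\textbf{Main obstacle.} The delicate part is Step 1: getting the constant in the $(2,4)$ bound to be exactly $(2/\sqrt\alpha)^{d/2}$, or at least no worse, for Efron--Stein degree over a general finite product space. We would first try the direct induction above. A crude version of the fourth-moment expansion gives a per-degree factor of roughly $\sqrt{1+6+1/\alpha}\le \sqrt{9/\alpha}$. Because $\alpha\le 1/2$ whenever $|\supp\nu|\ge2$, this factor should be at most $4/\alpha$ per degree in $\E f^4$, i.e.\ $(2/\sqrt\alpha)$ per degree in $\norm{f}_4/\norm{f}_2$ squared. That matches the required $M^2$. The degenerate case where $\nu$ is a point mass is trivial, since then every nonconstant Efron--Stein component vanishes. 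Should the constants not line up, any bound of the form $(c/\sqrt\alpha)^d$ suffices for the downstream uses in \cref{lem:es-1norm} and \cref{thm:almost-twise}, since only polynomial dependence in $k,\abs{D}$ for fixed $t$ is needed there.
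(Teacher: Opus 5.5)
\textbf{The Step~1 target is false, so the proposal has a genuine gap.} You need $\|f\|_{L^4}\le(2/\sqrt\alpha)^{d/2}\|f\|_{L^2}$ for every $d$, i.e.\ $\E f^4\le(4/\alpha)^d(\E f^2)^2$.

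Take $\alpha=1/2$, the uniform Boolean cube. The normalized elementary symmetric polynomial $e_d(x)/\sqrt{\binom nd}$ on $\{\pm1\}^n$ converges, together with its moments, to $H_d(Z)/\sqrt{d!}$, where $H_d$ is the Hermite polynomial. Moreover $\E H_d(Z)^4/(d!)^2=\sum_{s=0}^d\binom ds^2\binom{2s}{s}$, which grows like $9^d/\poly(d)$ and so exceeds $8^d=(4/\alpha)^d$ for large $d$. No tuning of the Bonami induction can give a per-degree factor $4/\alpha=8$ there.

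Your own count $1+6+1/\alpha$ already equals $9$ at $\alpha=1/2$. The comparison ``$\sqrt{9/\alpha}\le 4/\alpha$'' mixes a factor for $\|f\|_4^2/\|f\|_2^2$ with one for $\E f^4/(\E f^2)^2$.

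The $L^4$ route is also inherently too lossy for the stated constant. At $\alpha=1/2$ any valid constant $M$ in $\|f\|_4\le M\|f\|_2$ must satisfy $M\ge (9^d/\poly(d))^{1/4}$. Your H\"older step then yields at best $\|f\|_2\le M^2\|f\|_1\approx 3^d\|f\|_1$, which is larger than the claimed $(2\sqrt2)^d\|f\|_1$.

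\textbf{The fix, which is what the paper does, is to interpolate through $L^3$ instead of $L^4$.}
\begin{itemize}
\item Austrin--H\r{a}stad's Theorem~2.7, based on Wolff's sharp hypercontractivity for finite product spaces, gives $\|f\|_3\le C_3(\alpha)^{d/2}\|f\|_2$. Their Fact~2.8 gives $C_3(\alpha)\le(4/\alpha)^{1/3}$.
\item H\"older in the form $\|f\|_2\le\|f\|_1^{1/4}\|f\|_3^{3/4}$ (their Theorem~2.9) then yields $\|f\|_2\le C_3^{3d/2}\|f\|_1\le(4/\alpha)^{d/2}\|f\|_1=(2/\sqrt\alpha)^d\|f\|_1$.
\end{itemize}
At $\alpha=1/2$ the bound $C_3\le 2$ is exactly the sharp Boolean $(2,3)$ constant. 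That is why the $L^3$ route reaches $(2\sqrt2)^d$ while the $L^4$ route cannot.

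Your fallback of accepting some $(c/\sqrt\alpha)^d$ with $c>2$ would still suffice for \cref{lem:es-1norm} and \cref{thm:almost-twise} after adjusting constants. However, it proves a weaker statement than \cref{thm:2-1-norm}. You would also still have to carry out the general-product-space Bonami induction with explicit constants. There $f_1=\sum_{S\ni k}f^S$ is not of the form $x_k\cdot g$ as in the Boolean case, and the cross term $\E[f_0f_1^3]$ does not vanish.
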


\begin{proof}
  If $\alpha > 1/2$, then in fact $\alpha = 1$ and $\abs{\supp(\nu)} = 1$, and the theorem follows.
  Assume from now on $\alpha \leq 1/2$.
  \cite[Theorem~2.7]{AustrinHastad2011} implies that for every $p \geq 2$, every degree-$d$ polynomial $f$
  \[ \norm{f}_{L^p(\nu^k)} \leq C_p^{d/2} \norm{f}_{L^2(\nu^k)}, \]
  where $C_p = C_p(\alpha)$ comes from \cite{wolff07hypercontractivity} and depends on $p$ and $\alpha$.
  \cite[Theorem~2.9]{AustrinHastad2011} further implies
  \[ \norm{f}_{L^2(\nu^k)} \leq \Paren{C_p^{p/(p-2)}}^{d/2} \norm{f}_{L^1(\nu^k)}. \]
  \cite[Fact~2.8]{AustrinHastad2011} implies $C_3 \leq (4/\alpha)^{1/3}$, so the theorem follows.
\end{proof}

Given any $x \in D^k$, $a,b \in D$, let $x_{a \to b} \in D^k$ be obtained from $x$ by replacing every
occurrence of $a$ with $b$:
\[
  x_{a \to b} (i) \coloneqq
  \begin{cases}
    x(i) & \text{if } x(i) \neq a \\
    b & \text{if } x(i) = a
  \end{cases} \qquad \forall i \in [k] \,.
\]

Given any distribution $\mu$ over $D^k$, any $a,b \in D$, denote by $\mu_{a \to b}$ the distribution over $D^k$ obtained
by sampling $x \sim \mu$ and then replacing every occurrence of $a$ with $b$ in $x$:
\[ \mu_{a \to b}(c) \coloneqq \sum_{x \in D^k, x_{a \to b} = c} \mu(x) \qquad \forall c \in D^k\,. \]

Given $a \in D$, let $D^k_{\ni a} \coloneqq \set{x \in D^k | x(i) = a \text{ for some } i \in [k]}$ be the set of strings
containing $a$.

\begin{lemma} \label{lem:dist-replace}
  Given any distribution $\mu$ over $D^k$, any $a,b \in D$, $\norm{\mu - \mu_{a\to b}}_{\ell_1} \leq
  2\mu\Paren{D^k_{\ni a}}$.
\end{lemma}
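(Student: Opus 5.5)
The plan is to write the difference $\mu - \mu_{a\to b}$ as a signed measure and split it according to whether a string contains $a$ or not. For $a = b$ the claim is trivial since $\mu_{a\to a} = \mu$. Assume from now on $a \neq b$. The map $x \mapsto x_{a\to b}$ is the identity on $D^k \setminus D^k_{\ni a}$, and it maps $D^k_{\ni a}$ into $D^k \setminus D^k_{\ni a}$, because after replacement no coordinate equals $a$.

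First I express $\mu_{a\to b}$ as a pushforward. By definition, $\mu_{a\to b}(c) = \sum_{x:\, x_{a\to b}=c} \mu(x)$. Splitting the sum according to whether $x \in D^k_{\ni a}$ gives
\[
  \mu_{a\to b} = \mu\vert_{D^k \setminus D^k_{\ni a}} + \lambda,
\]
where $\mu\vert_{D^k \setminus D^k_{\ni a}}$ denotes $\mu$ restricted to strings avoiding $a$. Here $\lambda$ is the pushforward of the restricted measure $\mu\vert_{D^k_{\ni a}}$ under $x \mapsto x_{a\to b}$. The measure $\lambda$ is nonnegative and has total mass $\mu(D^k_{\ni a})$.

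Next I bound the difference. We have $\mu = \mu\vert_{D^k \setminus D^k_{\ni a}} + \mu\vert_{D^k_{\ni a}}$, so
\[
  \mu - \mu_{a\to b} = \mu\vert_{D^k_{\ni a}} - \lambda.
\]
By the triangle inequality,
\[
  \Norm{\mu - \mu_{a\to b}}_{\ell_1} \leq \Norm{\mu\vert_{D^k_{\ni a}}}_{\ell_1} + \norm{\lambda}_{\ell_1} = 2\mu\Paren{D^k_{\ni a}},
\]
where the last equality uses that both measures are nonnegative with total mass $\mu(D^k_{\ni a})$. Equivalently, $\mu_{a\to b}$ is obtained from $\mu$ by moving at most $\mu(D^k_{\ni a})$ mass, which costs at most twice that amount in $\ell_1$.

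No step is a real obstacle. The only point needing care is the bookkeeping showing that strings not containing $a$ are fixed by the replacement, so that their mass cancels in the difference. The bound is in fact an equality when $a \neq b$, since the two pieces have disjoint supports, but the inequality is all the statement requires.
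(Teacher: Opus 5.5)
Your proof is correct and rests on the same observation as the paper's: strings not containing $a$ are fixed by $x\mapsto x_{a\to b}$, so only the mass $\mu(D^k_{\ni a})$ is moved. The paper writes $\Norm{\mu-\mu_{a\to b}}_{\ell_1} = 2\Normtv{\mu-\mu_{a\to b}}$ and notes that the set $W$ where $\mu$ exceeds $\mu_{a\to b}$ lies inside $D^k_{\ni a}$; you instead write the difference as $\mu\vert_{D^k_{\ni a}} - \lambda$ and apply the triangle inequality, which is only a difference in bookkeeping.
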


\begin{proof}
  Let $W$ be the set of $x \in D^k$ such that $\mu_{a \to b}(x) < \mu(x)$.
  If $x \in D^k$ does not contain $a$, then $x_{a \to b} = x$, so $\mu_{a \to b}(x) \geq \mu(x)$.
  Therefore $W \subseteq D^k_{\ni a}$, and thus $\Norm{\mu - \mu_{a\to b}}_{\ell_1} = 2\Normtv{\mu -
  \mu_{a\to b}} = 2 \mu(W) \leq 2 \mu\Paren{D^k_{\ni a}}$.
\end{proof}

\begin{lemma} \label{lem:proj-replace}
  Given any distribution $\mu$ over $D^k$, any $a,b \in D$, any $Q \subseteq [k]$, $\pi_Q (\mu_{a \to b}) = (\pi_Q
  \mu)_{a \to b}$.
\end{lemma}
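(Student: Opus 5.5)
The identity $\pi_Q(\mu_{a\to b}) = (\pi_Q\mu)_{a\to b}$ says that the two operations commute. The replacement map $x\mapsto x_{a\to b}$ acts coordinatewise, so it should commute with the coordinate projection $x\mapsto x_Q$. I plan to verify this at the level of maps first, and then push forward.

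Here $(\pi_Q\mu)_{a\to b}$ is read as the same replacement operation applied to a distribution on $D^Q$ instead of $D^k$. For $x\in D^k$, write $x_Q\in D^Q$ for its restriction, and write $r^{(k)}$ and $r^{(Q)}$ for the replacement maps $x\mapsto x_{a\to b}$ on $D^k$ and on $D^Q$.

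The first step is the pointwise identity $(x_{a\to b})_Q = (x_Q)_{a\to b}$, that is, $\pi_Q\circ r^{(k)} = r^{(Q)}\circ\pi_Q$. For $i\in Q$, both sides equal $b$ if $x(i)=a$ and equal $x(i)$ otherwise. This follows directly from the coordinatewise definition of $x_{a\to b}$.

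The second step is to note that both operations on distributions are pushforwards. By definition $\mu_{a\to b} = (r^{(k)})_\#\mu$, and $\pi_Q\mu = (\pi_Q)_\#\mu$. Since pushforwards compose functorially, we get
\[ \pi_Q(\mu_{a\to b}) = (\pi_Q\circ r^{(k)})_\#\mu = (r^{(Q)}\circ\pi_Q)_\#\mu = (\pi_Q\mu)_{a\to b}. \]

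If an explicit computation is preferred, evaluate both sides at $c\in D^Q$. Each is a sum of $\mu(x)$ over the set of $x\in D^k$ with $(x_{a\to b})_Q=c$, respectively $(x_Q)_{a\to b}=c$. By the first step these index sets coincide.

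There is no real obstacle here. The only point needing care is the notational one above: the replacement operation is applied to a distribution on $D^Q$ rather than on $D^k$.
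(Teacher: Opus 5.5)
Your proposal is correct and rests on the same fact as the paper's proof: replacement acts coordinatewise, so it commutes with restriction to $Q$. The paper writes each $x\in D^k$ as $(c,d)$ with $c\in D^Q$, $d\in D^{[k]\setminus Q}$ and sums out $d$ explicitly, while you state the same thing as the map identity $\pi_Q\circ r^{(k)}=r^{(Q)}\circ\pi_Q$ and then compose pushforwards.
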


\begin{proof}
  Let $\overline Q \coloneqq [k]\setminus Q$.
  For any $x \in D^Q$,
  \begin{align*}
    \pi_Q (\mu_{a \to b}) (x)
    &= \sum_{y \in D^{\overline Q}} \mu_{a\to b}(x, y) = \sum_{\substack{c \in D^Q,y,d \in D^{\overline Q}\\ (c,d)_{a\to
    b} = (x,y)}} \mu(c,d) \\
    &= \sum_{c \in D^Q,c_{a \to b} = x} \sum_{d\in D^{\overline Q}} \mu(c,d)
    = \sum_{c \in D^Q,c_{a \to b} = x} (\pi_Q \mu)(c) = (\pi_Q \mu)_{a \to b}(x) \,. \qedhere
  \end{align*}
\end{proof}

\begin{lemma} \label{lem:power-replace}
  Given any distribution $\nu$ over $D$, any $a,b \in D$, any finite set $Q$, $(\nu^Q)_{a \to b} = (\nu_{a \to b})^Q$.
\end{lemma}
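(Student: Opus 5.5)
I will prove the identity pointwise, evaluating both sides at an arbitrary $c \in D^Q$ and unfolding the definitions. The key structural fact is that the replacement map $x \mapsto x_{a\to b}$ acts coordinatewise: $(x_{a\to b})(i) = \phi(x(i))$, where $\phi: D \to D$ sends $a \mapsto b$ and fixes every other element. Consequently the preimage of $c$ under $x \mapsto x_{a\to b}$ factorizes as a product of preimages, namely $\set{x \in D^Q | x_{a\to b} = c} = \prod_{i \in Q} \phi^{-1}(c(i))$.

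With this, the left-hand side expands as
\[
(\nu^Q)_{a\to b}(c) = \sum_{x \in D^Q,\, x_{a\to b} = c} \prod_{i\in Q} \nu(x(i)) = \sum_{x \in \prod_{i\in Q}\phi^{-1}(c(i))} \prod_{i\in Q} \nu(x(i)).
\]
Since the summation domain is a product set and the summand is a product of functions of the individual coordinates, the sum of products factorizes into a product of sums:
\[
\prod_{i\in Q} \sum_{y \in \phi^{-1}(c(i))} \nu(y) = \prod_{i \in Q} \nu_{a\to b}(c(i)),
\]
where the last equality is the definition of $\nu_{a\to b}$ with $k=1$. The right-hand side is exactly $(\nu_{a\to b})^Q(c)$.

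The case $a = b$ is trivial (both sides equal $\nu^Q$), and it is in any case covered by the same computation with $\phi$ the identity. The only point needing care is to state explicitly that the preimage set is a product set, which follows immediately from the coordinatewise definition of $x_{a\to b}$. I do not expect any real obstacle; this is a routine distributivity argument.
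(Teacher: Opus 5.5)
Your proposal is correct and follows essentially the same route as the paper's proof: evaluate pointwise, note that the preimage of $c$ under the coordinatewise replacement map is a product set, and use distributivity to turn the sum of products into $\prod_{i\in Q}\nu_{a\to b}(c(i))$. Your explicit remark that the preimage factorizes is the one step the paper leaves implicit, and it is exactly what justifies the interchange.
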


\begin{proof}
  For any $x \in D^Q$,
  \begin{align*}
    (\nu^Q)_{a \to b} (x) &= \sum_{y \in D^Q, y_{a \to b} = x} \nu^Q(y)
    = \sum_{y \in D^Q, y_{a \to b} = x} \prod_{i \in Q} \nu(y_i)
    = \prod_{i \in Q} \sum_{z \in D, z_{a \to b} = x_i} \nu(z) \\
    &= \prod_{i \in Q} \nu_{a \to b}(x_i) = (\nu_{a \to b})^Q (x) \,. \qedhere
  \end{align*}
\end{proof}

\begin{lemma} \label{lem:close-min-prob}
  For any $0 < \alpha \leq 1/2$, any distribution $\nu$ over a finite domain $D$, there is a distribution
  $\eta$ over $D$ of minimum nonzero probability at least $\alpha$, such that for any $1 \leq t \leq k$, any
  $(\delta,t)$-wise $\nu$-independent distribution $\mu$ over $D^k$ is $2\abs{D}(k\alpha+\delta)$-close to
  some $(\delta,t)$-wise $\eta$-independent distribution $\zeta$.
\end{lemma}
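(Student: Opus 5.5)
The natural plan is to remove the light elements of $\nu$ one at a time with the replacement maps $a \to b$, tracking both the cost in $\ell_1$ distance and the effect on the marginal errors. Let $L \coloneqq \set{a \in D \mid 0 < \nu(a) < \alpha}$ be the light elements. If $L$ contains every element of $\supp(\nu)$, note that $\alpha \leq 1/2$ and $\sum_a \nu(a) = 1$ force some element to have mass at least $1/\abs{D}$. To avoid this corner case I would pick $b^* \in D$ with $\nu(b^*) = \max_a \nu(a) \geq 1/\abs{D}$. If $\alpha > 1/\abs{D}$, I would first reduce to the case $\alpha \leq 1/\abs{D}$, or else simply move all mass to $b^*$; the bound $2\abs{D}k\alpha \geq 2k \geq 2$ then makes the claim trivial, since any two distributions are $2$-close. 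So assume $\alpha \leq 1/\abs{D}$, which guarantees that $b^*$ is not light.

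The construction is then as follows. Enumerate $L = \set{a_1,\dots,a_r}$ with $r \leq \abs{D}$. Define
\[ \eta \coloneqq (\cdots(\nu_{a_1\to b^*})_{a_2 \to b^*}\cdots)_{a_r\to b^*} \quad\text{and}\quad \zeta \coloneqq (\cdots(\mu_{a_1\to b^*})\cdots)_{a_r \to b^*} . \]
Then $\eta$ puts zero mass on $L$, leaves the other elements of $\supp(\nu)$ unchanged, and only increases $b^*$. Hence its minimum nonzero probability is at least $\alpha$. Moreover, $\eta$ depends only on $\nu$, as the statement requires.

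For the distance, apply \cref{lem:dist-replace} at each step. At step $j$ the current distribution $\mu^{(j-1)}$ places mass at most $\mu^{(j-1)}(D^k_{\ni a_j}) \leq \sum_{i\in[k]} \pi_{\{i\}}(\mu^{(j-1)})(a_j)$ on strings containing $a_j$. Each single-coordinate marginal of $\mu^{(j-1)}$ at $a_j$ equals the corresponding marginal of $\mu$, because earlier replacements only touched $a_1,\dots,a_{j-1} \neq a_j$ and the target $b^* \ne a_j$. That marginal is at most $\nu(a_j) + \norm{\pi_{\{i\}}\mu - \nu}_{\ell_1}$. Summing over $i$ bounds the step-$j$ cost by $2(k\alpha + \delta)$, since the single-coordinate $\ell_1$ errors sum to at most $\delta^{[1,t]}_\nu[\mu] \leq \delta$. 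Summing over at most $\abs{D}$ steps and using the triangle inequality gives the claimed $2\abs{D}(k\alpha+\delta)$.

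It remains to show that $\zeta$ is $(\delta,t)$-wise $\eta$-independent. For each $Q$ with $1\leq\abs{Q}\leq t$, iterating \cref{lem:proj-replace} shows that $\pi_Q(\zeta)$ is the composed replacement map applied to $\pi_Q(\mu)$. Iterating \cref{lem:power-replace} shows that $\eta^Q$ is the same composed map applied to $\nu^Q$. A replacement map is a pushforward, and pushforwards do not increase $\ell_1$ distance, which is \cref{lem:pushforward-dist} or a direct triangle inequality. Therefore $\norm{\pi_Q\zeta - \eta^Q}_{\ell_1} \leq \norm{\pi_Q\mu-\nu^Q}_{\ell_1}$, and summing over $Q$ gives $\delta^{[1,t]}_\eta[\zeta] \leq \delta$. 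The only real obstacle is the bookkeeping in the distance bound: one must check that the marginals at $a_j$ are unaffected by earlier steps, and handle the degenerate case where no heavy target $b^*$ exists, as done above.
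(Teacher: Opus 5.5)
Your proof is correct and follows essentially the same route as the paper: successive replacement maps $a\to b$ on the light elements, with \cref{lem:dist-replace} controlling the $\ell_1$ cost of each step and \cref{lem:proj-replace,lem:power-replace,lem:pushforward-dist} showing that $\delta^{[1,t]}$ does not increase. The only differences are bookkeeping. You send everything to a fixed heaviest element $b^*$ and bound each step using the unchanged marginals of the original $\mu$ at $a_j$. The paper instead picks any $b\in\supp(\nu)\setminus\{a\}$ and re-applies the bound to the current pair $(\mu',\nu')$, which avoids your separate case $\alpha>1/\abs{D}$. That case split is harmless, but unnecessary: if $b^*$ were light, then every support element would be light, and merging them all into $b^*$ yields a point mass.
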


\begin{proof}
  Suppose there is $a \in D$ such that $0 < \nu(a) < \alpha$.
  Since $\nu$ has total probability mass $1$, there is $b \in \supp(\nu) \setminus \set a$.
  Then for any $Q \subseteq [k]$,
  \[
    \Norm{\pi_Q (\mu_{a \to b}) - (\nu_{a \to b})^Q}_{\ell_1}
    = \Norm{(\pi_Q \mu)_{a \to b} - (\nu^Q)_{a \to b}}_{\ell_1}
    \leq \Norm{\pi_Q \mu - \nu^Q}_{\ell_1}
  \]
  where the equality is \cref{lem:proj-replace,lem:power-replace}, and the inequality is \cref{lem:pushforward-dist}.
  This implies
  \[ \delta^{[1,t]}_{\nu_{a \to b}} [\mu_{a \to b}] \leq \delta^{[1,t]}_\nu [\mu]\,. \]
  Since $\mu$ is $(\delta,t)$-wise $\nu$-independent, $\mu_{a \to b}$ is $(\delta,t)$-wise $\nu_{a \to
  b}$-independent.
  Also \cref{lem:dist-replace} implies
  \begin{align*}
    \Norm{\mu_{a \to b} - \mu}_{\ell_1}
    &\leq 2\mu\Paren{D^k_{\ni a}} \leq 2 \sum_{i \in [k]} (\pi_{\set i} \mu)(a) \leq 2\sum_{i \in [k]}
    \Paren{\nu(a) + \Norm{\pi_{\set i} \mu - \nu}_{\ell_1}} \\
    &< 2k \alpha + 2\delta^{[1,t]}_\nu [\mu] \leq 2k\alpha + 2\delta.
  \end{align*}

  In other words, as long as $0 < \nu(a) < \alpha$ for some $a \in D$, pick any $b \in \supp(\nu) \setminus \set a$,
  transform $\mu' \coloneqq \mu_{a \to b}$.
  Then $\norm{\mu' - \mu}_{\ell_1} < 2k\alpha + 2\delta$, and $\mu'$ is $(\delta,t)$-wise $\nu'$-independent,
  where $\nu' \coloneqq \nu_{a \to b}$.

  Keep shifting probability mass of $\nu$ as above until the minimum nonzero probability of $\nu$ is at
  least $\alpha$.
  Denote by $\zeta$ the final $\mu$ and by $\eta$ the final $\nu$ after all the transformations.
  Since no more than $\abs{D}$ many $a$'s can satisfy $0 < \nu(a) < \alpha$ initially, there are at most
  $\abs{D}$ such
  transformations.
  Therefore $\Norm{\mu - \zeta}_{\ell_1} < 2\abs{D}(k\alpha + \delta)$, and $\zeta$ is $(\delta,t)$-wise
  $\eta$-independent, so that $\eta$ has minimum nonzero probability at least $\alpha$.
\end{proof}

\subsection{Refutation via Independence}\label{sec:refute-via-independence}

In this section we prove the following statement about refutations under independence.
Recall $t$-wise independent value of a CSP from \cref{def:t-wise-value}.

\begin{theorem}\label{thm:upper-bound}
  For any $k$-CSP $(D, \cR)$, any $2 \leq t \leq k$, any $\eps > 0$,
  there are $K = K(k, \abs{D}, \eps), C = C(k,\abs{D},t,\eps), \delta > 0$ such that whenever
  \[ m \geq
    \begin{cases}
      C n^{t/2} & \text{if $t$ is even and $\ell = t/2$} \\
      \displaystyle C n^{t/2} \frac{\ell}{\ell^{t/2}} \log n &
      \text{if $t$ is even and $\displaystyle \frac t2 \leq \ell \leq \delta n$} \\
      \displaystyle C n^{t/2} \frac{\ell}{\ell^{t/2}} \sqrt{\log n} & \text{if $t$ is odd and $t-1 \leq \ell \leq
      \displaystyle \frac{\delta n}{\log n}$}
    \end{cases},
  \]
  except with probability $o_n (1)$ over a binomial $\rho$-random instance $\cI$ with
  constraint density $p$, every pseudo-distribution $\mu$ of degree $K+2(\ell+1)$ consistent with $\cP(V,D)$ satisfies
  \[ \pE_\mu \Val_\cI(x) \leq \opt_t(\rho) + \eps. \]
\end{theorem}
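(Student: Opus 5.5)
The plan is to combine three ingredients already prepared: concentration of $t$-wise marginals of the induced distribution (\cref{def:concentration-marginals}, via \cref{lem:concentrated}), conditioning to near-independence (\cref{thm:t-wise-conditioning}), and rounding almost-independent distributions to exactly independent ones (\cref{thm:almost-twise}).

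\emph{Step 1 (concentration).} Invoke \cref{lem:concentrated}. With probability $1-o_n(1)$ at the stated densities, the random instance is $(t,\eps',d)$-concentrated with $d=2(\ell+1)$. We apply this both to the whole constraint multiset $\cC$ and to each sub-multiset $\cC_R$ of constraints carrying relation $R\in\cR$. There are finitely many $R$, and each $\cC_R$ is itself a random hypergraph with density a constant fraction $\rho(R)$ of the total, so this only changes $C$. Relations with $\rho(R)$ tiny can be discarded at an $O(\eps)$ cost in value. This step hides all the Kikuchi/trace-method work and the case split on $t$ even/odd and on $\ell$; it is the main obstacle if it has to be proved here rather than assumed. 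The hardest part is the odd-$t$ asymmetric Kikuchi matrix bound, which is where the $\sqrt{\log n}$ factor and the range $\ell\le\delta n/\log n$ come from.

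\emph{Step 2 (conditioning).} Fix $\mu$ of degree $K+2(\ell+1)$. By \cref{thm:t-wise-conditioning} applied to each $\cC_R$, with $K=K_{t,\delta',D}$, we get, in expectation over $T,\alpha$, and for every $Q\in\binom{[k]}{t}$,
\[
\Norm{\E_{\sigma\in\cC_R}(\mu|\alpha)_{\sigma(Q)}-\nu_\alpha^t}_{\ell_1}\le 2|D|^t\eps'+\delta',
\qquad \nu_\alpha\coloneqq\E_{v\in V}(\mu|\alpha)_v .
\]
The same bound holds for all $|Q|\le t$ by projection (\cref{lem:pushforward-dist}). So the induced distribution $\cD_R^\alpha\coloneqq\E_{\sigma\in\cC_R}(\mu|\alpha)_\sigma$ on $D^k$ is $(\delta'',t)$-wise $\nu_\alpha$-independent, with $\delta''$ expected small. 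Since $\Val$ is linear and $\pE_\mu=\E_{T,\alpha}\pE_{\mu|\alpha}$,
\[
\pE_\mu\Val_\cI=\E_{T,\alpha}\sum_R \frac{|\cC_R|}{|\cC|}\Pr_{x\sim\cD_R^\alpha}[x\in R].
\]
Here $|\cC_R|/|\cC|=\rho(R)\pm o(1)$ whp by Chernoff.

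\emph{Step 3 (rounding to exact independence).} For each $\alpha$, apply \cref{thm:almost-twise} with $\nu=\nu_\alpha$. It yields a single $\eta_\alpha$, which depends only on $\nu_\alpha$ and is therefore common to all $R$, together with $t$-wise $\eta_\alpha$-independent $\mu^R$ that are $q\sqrt{\delta''}$-close to $\cD_R^\alpha$. This common $\eta$ is exactly why the theorem was stated with $\eta$ independent of $\mu$. Hence $\Pr_{\cD_R^\alpha}[R]\le\Pr_{\mu^R}[R]+q\sqrt{\delta''}$, and so
\[
\sum_R\rho(R)\Pr_{\mu^R}[R]\le\opt_t(\rho).
\]
Averaging over $(T,\alpha)$ with Jensen on $\sqrt{\cdot}$ gives
\[
\pE_\mu\Val_\cI\le\opt_t(\rho)+q\sqrt{2|D|^t\eps'+\delta'}+o(1).
\]
Choosing $\eps',\delta'$ small in terms of $\eps,k,|D|$ fixes $K$ and $C$ and finishes the proof.
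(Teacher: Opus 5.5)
Your proposal follows the paper's proof essentially step for step. It obtains concentration for each heavy sub-instance $\cC(R)$ via \cref{lem:concentrated}, then conditions via \cref{thm:t-wise-conditioning}; since $\cT$ there is instance-independent, one $(T,\alpha)$, and hence one $\nu_\alpha$ and one $\eta_\alpha$, serves all $R$. It then projects to all $|Q|\le t$, applies \cref{thm:almost-twise} with Jensen, and finishes with the same heavy/light and empirical-$\tilde\rho$ bookkeeping as the paper, with your SoS degree $2(\ell+1)$ matching the theorem statement more consistently than the paper's $2t$ at this step.
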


\cref{thm:upper-bound} implies \cref{thm:main-upper-bound,thm:main-upper-bound-higher}.
\cref{thm:upper-bound} upper bounds the primal SoS value to be nearly $\opt_t(\rho)$.
Since the SoS program includes the constraints $\Set{x^2_{va} \leq 1}_{v \in V, a \in D}$, there is no duality gap
\cite[Theorem~1]{josz2016strong}.
Therefore the theorem indirectly yields a sum-of-squares upper bound.

Throughout the section let  $2\leq k\leq n$, let $(D,\cR)$ be a $k$-CSP and let  $\rho$ be a distribution over  $\cR$.
For an instance $\cI=(V,\cC)$ and a relation $R\in \cR$  define $\cC(R)$ to be the multi-set of injective functions such
that $(\sigma,R)\in \cC.$
In other words, $\cC(R)$ consists of the scopes of constraints of type $R$ in $\cI$.
Note that by construction there exists a bijection between elements in $\cC(R)$ and constraints in $\cC$ containing the
relation $R.$

\begin{definition}
  Given a collection $\cC$ of $k$-ary constraints from an instance and an assignment $x \in D^V$, the \emph{induced
  distribution}, denoted $\cD_{\cC,x}$ is the probability distribution on $D^k$ defined by
  \[ \cD_{\cC,x}(a) \coloneqq \Pr_{(\sigma,R) \in \cC} [x_\sigma = a] \qquad \forall a \in D^k\,. \]
\end{definition}

In the above definition, the $k$-ary relation $R$ of the constraint is irrelevant.
\cite[Definition~3.2]{AOW15} also considered an induced distribution in their sum-of-squares
refutations, but their definition differs and only works for CSPs given by a predicate with uniformly random literals.


Also given a $k$-ary constraint multiset $\cC$ and a pesudo-distribution $\mu$ of degree at least $k$, let
\[ \cD_{\cC,\mu} \coloneqq \pE_{\mu} \cD_{\cC,x} = \E_{(\sigma,R) \in \cC} \mu_\sigma \]
be the induced distribution under $\mu$.
Here $\mu_\sigma$ denotes the distribution of $\mu$ on the $k$-tuple $\sigma$ of variables.

The following lemma will be proved in \cref{sec:deviation-bound}.

\begin{restatable}{lemma}{concentrated} \label{lem:concentrated}
  For any $2 \leq t \leq k$, any finite domain $D$, any $\eps > 0$, there is $C = (k, \abs{D}, t, \eps) > 0$ and
  $\delta > 0$ such that whenever
  \[ m \geq
    \begin{cases}
      C n^{t/2} & \text{if $t$ is even and $\ell = t/2$} \\
      \displaystyle C n^{t/2} \frac{\ell}{\ell^{t/2}} \log n &
      \text{if $t$ is even and $\displaystyle \frac t2 \leq \ell \leq \delta n$} \\
      \displaystyle C n^{t/2} \frac{\ell}{\ell^{t/2}} \sqrt{\log n} & \text{if $t$ is odd and $t-1 \leq \ell \leq
      \displaystyle \frac{\delta n}{\log n}$}
    \end{cases},
  \]
  except with probability $o_n (1)$, a directed $k$-uniform hypergraph $\cH$ from \cref{def:random-instance} having $n$
  vertices and $m$ expected directed hyperedges is $(t,\eps,2(\ell+1))$-concentrated.
\end{restatable}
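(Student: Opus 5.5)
We need to show that for each $Q\in\binom{[k]}{t}$ and $b\in D^Q$, the polynomial
\[ P_{Q,b}(x) \coloneqq \E_{\sigma\in\cC} \prod_{i\in Q} x_{\sigma(i),b_i} - \E_{\sigma\in V^{\underline k}} \prod_{i\in Q} x_{\sigma(i),b_i} \]
satisfies $\Axioms \vdash_{2(\ell+1)} P_{Q,b}(x)^2\le\eps^2$. The plan is to rewrite $P_{Q,b}$ as a degree-$t$ multilinear form
\[ P_{Q,b}(x)=\sum_{S\in V^{\underline t}} A_S \, x_{S_1,b_1}\cdots x_{S_t,b_t}. \]
Here $A_S = \frac{1}{m}\Paren{N_S - \E N_S}$ up to negligible normalisation errors, and $N_S$ counts the directed hyperedges whose restriction to the positions in $Q$ is the tuple $S$. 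Under the binomial model with uniformly random orderings, distinct unordered $k$-sets are independent. So $A_S$ is a centered sum of independent bounded variables with variance about $\frac{1}{m}\cdot\frac{n^{k-t}}{\binom nk}\approx \frac{1}{m n^t}$. A standard Chernoff bound, together with the concentration of $|E(\cH)|$ around $m$, lets us replace $1/|E|$ by $1/m$ with an $o(1)$ error. That error term is SoS-bounded trivially, since each monomial lies in $[0,1]$ under $\Axioms$.

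Next I would relabel the variables $y_v \coloneqq x_{v,b_i}$ per coordinate. This turns the form into an asymmetric $t$-tensor $T(y^{(1)},\dots,y^{(t)})$, whose vectors satisfy $\|y^{(i)}\|_2^2\le n$ and whose entries are booleanish under $\Axioms$. I would bound its injective norm in SoS via a Kikuchi matrix, following the RRS17/WAM19 template adapted to directed hyperedges.

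\textbf{Even $t$.} Form the matrix $M_\ell$ indexed by ordered/colored $\ell$-subsets of $V\times[t]$. Entry $(U,W)$ is $A_S$ when the symmetric difference $U\triangle W$ splits $S$ into two halves of size $t/2$ each. This gives the identity
\[ (y^{\otimes\le\ell})^\top M_\ell \,(y^{\otimes\le \ell}) = c_{\ell} \, T(y), \qquad c_\ell\approx \binom{n}{\ell-t/2}\binom{\ell}{t/2}^2. \]
The SoS inequality $v^\top M v\le\|M\|\,\|v\|^2$ then needs only degree $2\ell$ (plus $2$ for squaring). It yields
\[ T(y)^2\lesssim \Paren{\|M_\ell\|\binom{n}{\ell}/c_\ell}^2 . \]
For $\ell=t/2$ this is a plain random-matrix bound; matrix Bernstein or Latała gives $\|M\|\lesssim\sqrt{n^{t/2}/m}\cdot\frac1{\sqrt{m}}$ in the right normalisation, without a log factor. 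For general $\ell$, matrix Bernstein gives the extra $\sqrt{\log n}$ per row-sparsity, which produces the stated $\log n$ threshold.

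\textbf{Odd $t$.} First apply Cauchy--Schwarz in SoS to split off one coordinate:
\[ T(y)^2\le n\sum_{v}\Paren{\sum_{S\ni v \text{ at last position}}A_S\, y_{S'}}^2. \]
This is a degree-$2(t-1)$ form with tensor $\sum_v A_{S'v}A_{S''v}$. I would bound it with a Kikuchi matrix at level $\ell\ge t-1$ and the trace method. The diagonal terms ($S'=S''$) are controlled by concentration of row sums. The off-diagonal part is bounded by an encoding argument for closed walks, giving the $\sqrt{\log n}$ threshold.

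Finally, take a union bound over the $\binom kt|D|^t$ choices of $(Q,b)$ and choose $C$ large in terms of $\eps$. I expect the main obstacle to be the odd-$t$ trace-moment bound for the directed, asymmetric Kikuchi matrix. Bounding the number of repeated-edge walk patterns tightly enough to avoid extra polylog factors requires a careful encoding, and this is where I would spend most of the effort. The deviation must also be handled so that constant degree suffices when $\ell=t/2$.
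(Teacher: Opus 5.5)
\textbf{Main gap: the case $t$ even, $\ell=t/2$, $m\ge Cn^{t/2}$ with no logarithm.} You claim that matrix Bernstein or Lata{\l}a bounds the flattened $n^{t/2}\times n^{t/2}$ matrix without a log factor. Neither does, and in fact no spectral certificate on this matrix can succeed at this density.

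At $m=Cn^{t/2}$, each row of the count matrix has only $O(C)$ expected nonzero entries. Matrix Bernstein always loses a $\sqrt{\log N}$ factor plus an additive $\log N$ term, so it only gives $m\gtrsim n^{t/2}\log n$. Lata{\l}a's inequality presumes independent entries, and its fourth-moment term here is of order $(Cn^{t/2})^{1/4}$.

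More decisively, the largest row count is of order $\log n/\log\log n$. Hence $\|M-\E M\|\ge\sqrt{\log n/\log\log n}-O(C)$, whereas your certificate $|T(y)|\le\|M-\E M\|\,n^{t/2}/m$ needs $\|M-\E M\|\le\eps C$. A heavy row is harmless for the actual polynomial, since it contributes only its $\ell_1$ mass. The spectral bound, however, amplifies it by $\|y^{\otimes t/2}\|^2=n^{t/2}$.

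The missing idea, which is how the paper handles this case (\cref{thm:concentration-random-instance}), is to use the $\infty\to1$ norm of the flattening instead of the spectral norm. For each fixed pair of sign vectors, Bernstein bounds the bilinear form by $O(\sqrt{m(n|D|)^{t/2}})$ except with probability $\exp(-\Omega((n|D|)^{t/2}))$. This survives a union bound over all $2^{O((n|D|)^{t/2})}$ sign patterns once $m\ge Cn^{t/2}$ (\cref{thm:tensor-norm-bound}). Grothendieck's inequality plus SDP duality then converts the $\infty\to1$ bound into a degree-$t$ SoS proof (\cref{thm:tensor-sos-bound}). Pruning high-degree rows in the style of Feige--Ofek and bounding the pruned part trivially would be another repair, but you would have to supply it.

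\textbf{Second, fixable error: the Kikuchi identity for indicator variables.} This affects all your Kikuchi matrices. In the RRS17/WAM19 template the variables are $\pm1$, so the shared part $U\cap W$ of a row/column pair contributes $\prod y^2=1$, and the quadratic form is a constant multiple of $T(y)$. Your $y^{(i)}_v=x_{v,b_i}$ are indicators, so the shared part contributes $\prod_{(v,i)\in U\cap W}x_{v,b_i}^2$. Your ``$c_\ell$'' is therefore an $x$-dependent polynomial, roughly the number of $(\ell-t/2)$-subsets of $\{(v,i):x_v=b_i\}$, and you cannot divide by it inside SoS.

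The paper indexes rows and columns by $\ell$-subsets of $V\times D\times[t]$ and lets the shared part range over all domain labels. The shared part then contributes $\prod\sum_{\alpha\in D}x_{v,\alpha}^2=1$ under $\Axioms$, which gives a genuine constant in \cref{lem:large-alphabet-kikuchi-even}. Your odd-$t$ matrix needs the same device.

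With these two repairs, the rest of your plan follows the paper. This includes the SoS spectral certificate; matrix Bernstein for even $t$ at general $\ell$, which is a legitimate substitute for the paper's trace method since its $\sqrt{\ell\log n}$ loss fits the stated $\log n$ threshold; and, for odd $t$, the Cauchy--Schwarz trick with square terms plus a walk-encoding trace bound.
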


We are now ready to prove \cref{thm:upper-bound}.

\begin{proof}
  Let $\delta = \eps/(2^{\abs{D}^k} 3)$ and $\gamma = \eps^2 / (9H({k \choose \leq t} 2\abs{D}^t)^2)$ for some
  $H = H(t,k,\abs{D}) > 0$ defined below.
  Call $R \in \cR$ $\delta$-heavy if $\rho(R) \geq \delta$.
  \cref{lem:concentrated} and a union bound imply that with probability $1-o_n (1)$ over a random
  instance $\cI$, every subinstance $(V, \cC(R))$ is $(t,\gamma,2t)$-concentrated for each $\delta$-heavy $R \in \cR$.
  \cref{thm:t-wise-conditioning} implies that there are $K > 0$ and a distribution $\cT$ over ${V \choose
  \leq K}$ such
  that for every pseudo-distribution $\mu$ of degree $K+2t$ consistent with $\cP(V,D)$,
  \[
    \E_{T,\alpha} \Norm{\pi_Q \cD_{\cC(R),\mu\vert\alpha} - \Paren{\E_{v \in V} (\mu|\alpha)_v}^t }_{\ell_1}
    \leq 2\abs{D}^t \gamma
  \]
  for every $Q \in {[k] \choose t}$, $\delta$-heavy $R \in \cR$.
  Here the outer expectation is over $T \sim \cT, \alpha \sim \alpha(T,\mu)$.
  Let $\nu_\alpha \coloneqq \E_{v \in V} (\mu|\alpha)_v$.
  \cref{lem:pushforward-dist} applied to the projection $(b \in D^Q \mapsto b_P)$ implies that for any
  $P \subseteq Q \in {[k] \choose t}$,
  \[
    \Norm{\pi_P \cD_{\cC(R),\mu\vert\alpha} - (\nu_\alpha)^P }_{\ell_1} \leq
    \Norm{\pi_Q \cD_{\cC(R),\mu\vert\alpha} - (\nu_\alpha)^Q }_{\ell_1}\,.
  \]
  Summing the previous two inequalities over $1 \leq \abs{P} \leq t$, we have for every $\delta$-heavy $R \in \cR$,
  \begin{equation} \label{eq:small-subset-cond-distance}
    \E_{T,\alpha} \delta^{[1,t]}_{\nu_\alpha}\Brac{\cD_{\cC(R),\mu\vert\alpha}} \leq \textstyle{k \choose \leq t}
    2\abs{D}^t\gamma \,.
  \end{equation}
  \cref{thm:almost-twise} implies there is $H = H(t, k, \abs{D}) > 0$ such that every (not necessarily $\delta$-heavy)
  $R \in \cR$ has a $t$-wise $\eta^\alpha$-independent distribution $\zeta_{R,\alpha}$ satisfying
  \begin{equation} \label{eq:close-t-wise-indep}
    \Norm{\cD_{\cC(R),\mu\vert\alpha} -
    \zeta_{R,\alpha}}_{\ell_1} \leq H \sqrt{\delta^{[1,t]}_{\nu_\alpha} \Brac{\cD_{\cC(R),\mu\vert\alpha}}}.
  \end{equation}
  Further, $\eta^\alpha$ depends only on $\nu_\alpha$ (and not on $R$).
  In particular for every $\delta$-heavy $R \in \cR$,
  \begin{align}
    \E_{T,\alpha} \Norm{\cD_{\cC(R),\mu\vert\alpha} - \zeta_{R,\alpha}}_{\ell_1}
    &\overset{\eqref{eq:close-t-wise-indep}}\leq \E_{T,\alpha}
    H \sqrt{\delta^{[1,t]}_{\nu_\alpha} \Brac{\cD_{\cC(R),\mu\vert\alpha}}}
    \overset{(*)}\leq H\sqrt{\E_{T,\alpha} \delta^{[1,t]}_{\nu_\alpha} \Brac{\cD_{\cC(R),\mu\vert\alpha}}} \nonumber\\
    &\overset{\eqref{eq:small-subset-cond-distance}}\leq H\sqrt{\textstyle{k \choose \leq t}2\abs{D}^t\gamma}
    \label{eq:cond-close-t-wise-indep},
  \end{align}
  where $(*)$ is Cauchy--Schwarz.

  Denote by $\tilde\rho$ the empirical distribution of relations $R$ from $\cC$:
  \[ \tilde\rho(R') \coloneqq \Pr_{(\sigma,R) \in \cC} [R = R']\, \qquad \forall R' \in \mathcal R. \]
  Then
  \begin{align*}
    \pE_\mu \val_{\cI} (x)
    &= \E_{(\sigma,R) \sim \cC} \pE_{x \sim \mu} \bracbb{x_\sigma \in R}
    = \E_{R \sim \tilde\rho} \E_{\sigma \sim \cC(R)} \pE_\mu \bracbb{x_\sigma \in R} \\
    &\leq \norm{\tilde\rho - \rho}_{\ell_1} + \E_{R \sim \rho} \E_{\sigma \sim \cC(R)} \pE_\mu \bracbb{x_\sigma \in R}
  \end{align*}
  because $\Abs{\E_{\sigma \sim \cC(R)} \pE_\mu \bracbb{x_\sigma \in R}} \leq 1$ for every $R \in \cR$.
  To bound the first term, Chernoff and union bounds imply $\norm{\tilde\rho - \rho}_{\ell_1} \leq \gamma/3$ with
  probability $1-o_n(1)$.
  To bound the second term,
  \begin{align*}
    \E_{R \sim \rho} \E_{\sigma \sim \cC(R)} \pE_\mu \bracbb{x_\sigma \in R}
    &= \E_{R \sim \rho} \E_{\sigma \sim \cC(R)} \E_{T,\alpha} \pE_{\mu\vert\alpha} \bracbb{x_\sigma \in R}
    = \E_{R \sim \rho} \E_{T,\alpha} \pE_{y \sim \cD_{\cC(R),\mu\vert\alpha}} \bracbb{y \in R}\,, \\
    &\leq \E_{R \sim \rho} \E_{T,\alpha} \Paren{\pE_{y \sim \zeta_{R,\alpha}} \bracbb{y \in R}
    + \Normtv{\cD_{\cC(R),\mu\vert\alpha} - \zeta_{R,\alpha}}} \\
    &\leq \opt_t(\rho) + \E_{R \sim \rho} \E_{T,\alpha} \Normtv{\cD_{\cC(R),\mu\vert\alpha} -
    \zeta_{R,\alpha}}
  \end{align*}
  because $\Paren{\zeta_{R,\alpha}}_{R \in \cR}$ is $t$-wise $\eta^\alpha$-independent for every $\alpha$.
  Finally, let $\cR' \subseteq \cR$ be the set of $\delta$-heavy $R \in \cR$.
  Then
  \begin{align*}
    & \E_{R \sim \rho} \E_{T,\alpha} \Normtv{\cD_{\cC(R),\mu\vert\alpha} - \zeta_{R,\alpha}} \\
    \leq & \Pr_{R \sim \rho}[R \in \cR'] \E_{R \sim \rho} \Brac{ \E_{T,\alpha}
    \Normtv{\cD_{\cC(R),\mu\vert\alpha} - \zeta_{R,\alpha}} \;\middle|\; R \in \cR'}
    + \Pr_{R \sim \rho}[R \not\in \cR'].
  \end{align*}
  The first term is at most $H\sqrt{{k \choose \leq t}2\abs{D}^t\gamma} = \eps/3$ by
  \cref{eq:cond-close-t-wise-indep}.
  The second term equals $\rho(\cR \setminus \cR') \leq \eps/3$ since there are at most $2^{\abs{D}^k}$ many
  $R \in \cR$ that are not $\delta$-heavy.
\end{proof}

\begin{theorem} \label{thm:upper-bound-replacement}
  For any $k$-CSP $(D, \cR)$, any $2 \leq t \leq k$ such that $t$ is even, any $\gamma > 0$,
  there are $K = K(k, \abs{D}, \gamma)$ and $C = C(k,\abs{D},t,\gamma)$ such that for any
  $m \geq C n^{t/2}$, except with probability $o_n (1)$ over a $\rho$-random instance $\cI$ having $n$ variables with
  $m$ independent constraints with replacement, every pseudo-distribution $\mu$ of degree $K$ consistent
  with $\cP(V,D)$
  satisfies
  \[ \pE_\mu \Val_\cI(x) \leq \opt_t(\rho) + \gamma. \]
\end{theorem}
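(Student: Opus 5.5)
The plan is to rerun the proof of \cref{thm:upper-bound} in the case $t$ even, $\ell = t/2$, under the with-replacement model. That proof used the random model at exactly two places: the concentration statement \cref{lem:concentrated}, and the Chernoff bound showing the empirical relation distribution $\tilde\rho$ is close to $\rho$. The deterministic parts (\cref{thm:t-wise-conditioning}, \cref{thm:almost-twise}, and the final averaging over heavy and light relations) do not depend on how the instance was generated, so they carry over unchanged.

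\textbf{Heavy relations are concentrated.} Fix a $\delta$-heavy $R$. In the with-replacement model, the scopes in $\cC(R)$ form a multiset of roughly $\rho(R)m \ge \delta m$ i.i.d.\ uniform injections $[k]\to V$. It suffices to show that, with probability $1-o_n(1)$, $(V,\cC(R))$ is $(t,\gamma,t+2)$-concentrated for every such $R$.

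\begin{enumerate}
\item Condition on the number $m_R$ of $R$-type constraints. Chernoff gives $m_R \ge \delta m/2$ with high probability.
\item For a fixed $Q\in\binom{[k]}{t}$ and $b\in D^Q$, the quantity $\E_{\sigma\in\cC(R)} f^{\Omega,\sigma}(x) - \E_{\sigma\in V^{\underline k}} f^{\Omega,\sigma}(x)$ is a degree-$t$ polynomial $\langle T, x^{\otimes t}\rangle$. Its coefficient tensor is
\[
T = \frac{1}{m_R}\sum_{j}\bigl(e_{\sigma_j(Q),b}-\E e_{\sigma(Q),b}\bigr),
\]
a sum of i.i.d.\ centered tensors, each of entrywise magnitude $O(1)$.
\item Since $t$ is even, flatten $T$ into an $n^{t/2}|D|^{t/2}\times n^{t/2}|D|^{t/2}$ matrix $M$ and use the degree-$t$ SoS certificate
\[
\Axioms \vdash \bigl(\langle T,x^{\otimes t}\rangle\bigr)^2 \le \|M\|^2 \,\|x^{\otimes t/2}\|_2^4,
\qquad \|x^{\otimes t/2}\|_2^2 = 1 \text{ by the axioms } \textstyle\sum_a x_{v,a}^2=1 .
\]
One must normalize appropriately, dividing by $n^{t/2}$ via averaging over $V$ in place of the sum.
\item Bound $\|M\|$ by the matrix Bernstein inequality. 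With $m_R$ i.i.d.\ terms, each column has variance about $m_R/n^{t/2}$. This gives $\|M\|\lesssim \sqrt{(\log n)/(m_R n^{t/2})}$ in normalized units. Equivalently, $\|M\|$ is at most $\gamma n^{-t/2}$ times the trivial scale as soon as $m \gtrsim n^{t/2}\log n/\gamma^2$.
\end{enumerate}

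\textbf{Removing the logarithm (main obstacle).} The bound from step 4 carries a $\log n$ factor. To remove it, replace matrix Bernstein with a sharper random-matrix estimate for sums of independent sparse rank-one-like matrices, as in \cite[Proposition~A.4]{AOW15}: an i.i.d.\ sum of $m$ random sparse matrix units has norm $O(\sqrt{m/N}+\log$-free terms$)$ when $m \ge N$. Two routes are available. The first is the trace method with $\log n$ moments, using the sparse pattern (each row of $M$ has $O(m/n^{t/2})$ nonzeros in expectation). The second is to cite the even-$t$ density bound already proved for \cref{lem:concentrated} in the regime $\ell=t/2$, after coupling. For the coupling, note that with-replacement sampling differs from binomial sampling only by $O(m^2/n^k)=o(m)$ repeated edges and by independent orientation. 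Deleting the duplicates changes $T$ by a tensor of $\ell_1$-mass $o(1)$, so a binomial subinstance of density $m(1-o(1))$ sits inside, and its concentration transfers. I would attempt the coupling first, since it reuses \cref{lem:concentrated} directly.

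\textbf{Finishing.} Union-bound over the at most $2^{|D|^k}$ heavy relations and over $Q,b$. Apply \cref{thm:t-wise-conditioning} with $d=t+2$, absorbed into $K$. Then repeat verbatim the chain of \cref{eq:small-subset-cond-distance}, \cref{eq:close-t-wise-indep} and \cref{eq:cond-close-t-wise-indep} together with the heavy/light split. Finally, $\|\tilde\rho-\rho\|_{\ell_1}\le\gamma/3$ follows from Chernoff on $m$ i.i.d.\ draws from $\rho$, which yields $\pE_\mu\Val_\cI(x)\le\opt_t(\rho)+\gamma$.
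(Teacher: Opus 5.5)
Your skeleton matches the paper's: condition on the relation types, get concentration for each heavy $\cC(R)$, then rerun \cref{thm:upper-bound} verbatim. The gap is in the one nontrivial step, which is concentration at $m = Cn^{t/2}$ with no logarithm.

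\textbf{The spectral route fails.} With $m_R = C'n^{t/2}$ i.i.d.\ scopes, your flattened matrix is $\frac{1}{m_R}(A-\E A)$. Here $A$ is the adjacency matrix of a random bipartite multigraph with $N = n^{t/2}$ vertices on each side and constant average degree $C'$. The certificate $y^\top M z \le \|M\|\,\|y\|\,\|z\| \le \|M\|N$ therefore needs $\|A-\E A\| \le \gamma C'$. But with high probability some row has degree $\Theta(\log N/\log\log N)$, so $\|A-\E A\| \gtrsim \sqrt{\log N/\log\log N}$. No trace-method computation can beat the true spectral norm. Also, \cite[Proposition~A.4]{AOW15} does not give a log-free bound here; removing that logarithm is one of the paper's stated improvements over \cite{AOW15}. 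Row pruning combined with a Feige--Ofek-type bound could rescue a spectral argument, but you do not propose it.

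\textbf{The coupling route is unproved as written.} This fallback carries all the content, and it has two holes.
\begin{itemize}
\item The number of repeated draws is about $m^2k!/(2n^k)$. This is $o(m)$ only when $m=o(n^k)$, but the theorem allows every $m\ge Cn^{t/2}$, including $m>\binom nk$.
\item The distinct scopes of $m$ draws form a uniformly random set of fixed size, not a binomial hypergraph. So the $o_n(1)$ failure probability from \cref{lem:concentrated} does not transfer without Poissonization or a sandwich coupling.
\end{itemize}
Both holes are repairable, for example by batching when $m$ is large and by Poissonizing. Even then, the repaired argument only rederives what the paper proves directly as \cref{thm:concentration-random-instance-replacement}. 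That theorem, together with heaviness defined via $\tilde\rho$ and a union bound, is the entire proof in the paper.

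\textbf{The missing idea.} Control the $\infty\to1$ norm of the flattened centered tensor instead of its spectral norm.
\begin{itemize}
\item For a fixed pair of sign vectors, scalar Bernstein gives deviation $O(\sqrt{m\,n^{t/2}})$ except with probability $e^{-\Omega(n^{t/2})}$.
\item That failure probability survives a union bound over all $2^{O(n^{t/2})}$ sign vectors, with no logarithmic loss (\cref{thm:tensor-norm-bound,thm:avg-tensor-norm-bound-replacement}).
\item Grothendieck's inequality plus SDP duality converts the $\infty\to1$ bound into a degree-$t$ SoS certificate (\cref{thm:tensor-sos-bound}).
\end{itemize}
The $\infty\to1$ norm is insensitive to the few high-degree rows that inflate the spectral norm.
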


\begin{proof}
  Let $\delta$ and $\eps$ be defined as in the proof of \cref{thm:upper-bound}.
  Denote by $\tilde\rho$ the empirical distribution of relations $R$ from $\cC$:
  \[ \tilde\rho(R') \coloneqq \Pr_{(\sigma,R) \in \cC} [R = R']\, \qquad \forall R' \in \mathcal R. \]
  Call $R \in \cR$ $\delta$-heavy if $\tilde\rho(R) \geq \delta$; note that the definition of $\delta$-heavy differs
  from the one in the proof of \cref{thm:upper-bound}.
  Conditioned on $\tilde\rho$, \cref{thm:concentration-random-instance-replacement} and a union bound imply that with
  probability $1-o_n(1)$ over a random instance $\cI$, every subinstance $(V, \cC(R))$ is
  $(t,\eps,2t)$-concentrated for
  each $\delta$-heavy $R \in \cR$.
  The rest of the proof is identical to that of \cref{thm:upper-bound}.
\end{proof}

\begin{remark}
  Our SoS refutation uses the almost $t$-wise independence result (\cref{thm:almost-twise}).
  Instead of \cref{thm:almost-twise}, it is also possible to use dual dominating polynomial for spectral refutation in
  subsequent sections.
  However, the $\eps$ in dual dominating polynomial has a much worse dependence on $k$, being doubly exponential in $k$
  (\cref{lem:lp-bounded-dual-norm}).
  For some applications where $k$ is a slowly growing function of $n$ (not covered in this paper),
  using \cref{thm:almost-twise} yields better SoS upper bounds.
\end{remark}

\section{Spectral Refutation from Partial Certification} \label{sec: partial-certification-sect}

In this section, we establish our main result for spectral refutation based on partial certification, and our main goal of this section is to outline that these certification results can be reduced to matrix norm bounds formally proven in the subsequent section.

\subsection{Our Results for Spectral Refutation}

For simplicity, we showcase our primary result for a single predicate (with fixed literals) restricted to boolean predicates. This is also the cornerstone for the extensions through Kikuchi hierarchy and the application to general domains.

\paragraph{Basic Results for Spectral Refutation}
\begin{restatable}{theorem}{SpectralRefutation}
\label{thm:spectral-refutation}(Polynomial-Time Refutation for Boolean Predicate (Special Case of \cref{thm: alphabet-ref}) )
	For a random instance $\calH$  of Boolean $k$-CSP predicates $P$ on a random hypergraph $\calH$ from~\cref{def:random-instance},  for any $\eps>0$ and $t > 1\in \N$ ,  we have a spectral refutation that runs in time $\tilde{O}_k(n^{2t})$ to certify that the normalized objective value is at most \[
	\max_{x\in \{\pm 1\}^n} \Psi_P( x) \leq  \max_{\substack{\mu:  \text{$t$-wise independent} } } \opt_t(P) +\eps +o_n(1)
	 \,,\]
	where $\opt_t(P)$  is the predicate-value achieved by the optimal $t$-wise $\nu$-independent distribution, 
	  provided    $m \geq \Omega_{k}( \frac{1}{\epsilon^2}\cdot    n^{t/2}  \log n)$. \end{restatable}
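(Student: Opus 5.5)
Our plan is to carry out the three-step procedure sketched in \cref{sec:overview}: enumerate the bias, dualize, then certify spectrally. Since $x\in\{\pm1\}^n$, the empirical bias $\beta(x)=\E_{i\in[n]}x_i$ takes one of the $n+1$ values in $\{-1,-1+2/n,\dots,1\}$. We will handle each value separately and output the maximum of the certified bounds. The enumeration is exact, so the Boolean case needs no $\eps$-net over biases; this is why the running time does not depend on $\eps$.

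\textbf{Step 1: the dominating polynomial.} Fix a bias $\nu$ and solve the constant-size dual LP \eqref{eq:dual-lp-boolean} for $P$. This gives a degree-$t$ polynomial $Q_\nu(y)=\sum_{|S|\le t}c_S y_S$ on $\{\pm1\}^k$ with $P\le Q_\nu$ pointwise and $\val_t(Q_\nu)=\sum_S c_S\nu^{|S|}$. By strong LP duality, $\val_t(Q_\nu)$ equals the primal value. The primal value is the best predicate value of a $t$-wise $\nu$-independent distribution (a distribution on $\{\pm1\}^k$ is $t$-wise independent with marginal bias $\nu$ exactly when $\E y_S=\nu^{|S|}$ for all $|S|\le t$). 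Hence $\val_t(Q_\nu)\le \opt_t(P)$ whenever the LP is feasible. If the LP is infeasible, the dual is unbounded below and we simply set $\val_t(Q_\nu)=-\infty$. Because every constraint is dominated, $\Psi_P(x)\le \sum_S c_S\Psi_S(x)$ for every $x$, where $\Psi_S$ is the global polynomial of \cref{def:monomial-global}. To keep $\sum_S|c_S|$ bounded uniformly in $\nu$, we will perturb: we relax the primal marginal constraints to a tolerance $\eps'$, which makes the dual bounded, and absorb the resulting loss into the additive $\eps$. We will also invoke the boundedness lemma \cref{lem:lp-bounded-dual-norm} referenced in the remark above.

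\textbf{Step 2: partial certification.} For each $S$ with $|S|=s\le t$, we need to certify that on the slice $\beta(x)=\nu$,
\[
\Psi_S(x)\le \nu^{s}+\eps/(2^{k}\max_S|c_S|)+o_n(1)\,.
\]
The bound must be one-sided in the right direction; more simply, we will bound $|\Psi_S(x)-\nu^s|$. This is \cref{lem:partial-certification}. Its proof follows the 2-XOR example. Write $\Psi_S(x)=\frac{1}{m}\sum_{e}x_{\tau(e)_S}$. Split the order-$s$ tensor $T_S$ into its expectation, which is $\frac{1}{n^{\underline s}}$ times the all-distinct-tuples tensor and so evaluates to $\beta(x)^s+O(1/n)$, plus a centered random part $T_S-\E T_S$. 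For even $s$ we flatten the centered part into an $n^{s/2}\times n^{s/2}$ matrix. For odd $s$ we use the asymmetric/odd Kikuchi matrices of \cref{sec:norm-bound}. Their spectral norms are $O(\sqrt{m\, n^{-s/2}}\polylog n)$ relative to the normalization, which after dividing by $m$ gives $o(1)\cdot\eps$ once $m\ge C\eps^{-2}n^{t/2}\log n$. The certificate is just this eigenvalue computation, in time $\tilde O(n^{2t})$ for the largest matrices.

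\textbf{Step 3: combine.} For every $x$ with bias $\nu$, we get $\Psi_P(x)\le \val_t(Q_\nu)+\eps+o_n(1)\le \opt_t(P)+\eps+o_n(1)$. Taking the maximum over the $n+1$ slices gives the global bound, and a union bound over the polynomially many spectral events is harmless.

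The main obstacle is Step 2: obtaining the norm bound for the centered directed (ordered-tuple) tensors without literals, especially for odd $t$. Here we will rely on the trace-moment bounds of \cref{sec:norm-bound} as a black box. A secondary technical point is controlling $\sum|c_S|$ uniformly in $\nu$ near the infeasibility boundary, which the $\eps'$-relaxation is meant to handle.
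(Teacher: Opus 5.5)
Your proposal is correct and follows essentially the same route as the paper (\cref{alg:single-predicate-certification} with \cref{lem:formal-partial-certification,lem:certification-monomial}): exact enumeration of the $O(n)$ Boolean biases, a per-bias dual LP whose optimal $Q_\nu$ has $\sum_S|c_S|=O_k(1)$ uniformly in $\nu$ by \cref{lem:lp-bounded-dual-norm}, a spectral certificate that each $\Psi_S(x)$ lies within a small slack of $\beta(x)^{|S|}$ for all $x$ (flattening for even $|S|$, the Cauchy--Schwarz/odd Kikuchi matrices for odd $|S|$), and then the combination over slices. Two of your hedges are unnecessary and should be dropped: the infeasible branch never occurs, because the $\nu$-biased product distribution is always primal-feasible. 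The $\eps'$-relaxation is also superfluous, and if used in place of \cref{lem:lp-bounded-dual-norm} it would make the coefficients scale like $1/\eps'$, so the density would need more than a $1/\eps^2$ factor.
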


We emphasize that, when specialized to Boolean CSPs, our result avoids any $n^{1/\poly(\epsilon)}$ dependence in the runtime, improving upon prior polynomial-time SoS refutation algorithms based on global correlation rounding.

\paragraph{Optimal Three-Way Tradeoff via the Kikuchi Hierarchy} 
 
 Moreover, the simplicity of the spectral algorithm from above gives us a convenient way to go beyond the polynomial time threshold of $O(n^{t/2})$. We show that the same family of algorithms, augmented by higher-level Kikuchi hierarchy, achieves a tight, smooth tradeoff in subexponential time that matches the three-way tradeoff provided by \cite{AOW15, RRS17, KMOW17}. 

\begin{restatable}{theorem}{SubexpRefutation}
\label{thm:smooth-spectral-refutation}(Refutation for Boolean Predicate in Subexponential Time)
	For a random instance $\Psi$ of a single Boolean predicate $P$ (as defined in \cref{def:random-instance}) , for any $\eps>0$, for any $\ell <O(n)$, there is an algorithm that runs in time $O_{k}( n^{O(\ell)})$ to certify that the normalized objective value \[
	\max_{x\in \{\pm 1\}^n} \Val_\Psi(x) \leq   \opt_t(P) + \eps	+o_n(1) \,,\]
	  provided   \[m \geq \Omega_k (\frac{1}{\epsilon^2}\cdot   n^{t/2}  \cdot \frac{\ell }{\ell^{t/2}} \cdot   \log n) \,,\]  or more tightly for odd $k$ with $\ell =O(\frac{n }{\log n})$ ,  \[ 
m \geq \Omega_k (\frac{1}{\epsilon^2}\cdot   n^{t/2}  \cdot \frac{\ell }{\ell^{t/2}} \cdot   \sqrt{\log n})	  \,.\]  
\end{restatable}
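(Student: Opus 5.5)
We prove \cref{thm:smooth-spectral-refutation} by following the spectral pipeline from \cref{sec:overview}. We enumerate biases, certify a dominating polynomial for each bias by level-$\ell$ Kikuchi matrices, and take the maximum of the certified bounds.

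\textbf{Step 1 (reduction to a finite family of low-degree polynomials).} Every $x\in\{\pm1\}^n$ has bias $\bar\nu(x)=\E_{i}x_i$ in the finite grid $\{-1,-1+2/n,\dots,1\}$. For each grid point $\nu$ we solve the dual LP \eqref{eq:dual-lp-boolean}. This gives a degree-$t$ polynomial $Q_\nu=\sum_{|S|\le t}c_S x_S$ with $P\le Q_\nu$ pointwise on $\{\pm1\}^k$. By LP duality, $\val_t(Q_\nu)=\sum_S c_S\nu^{|S|}$ equals the best value of a $t$-wise $\nu$-independent distribution, which is at most $\opt_t(P)$.

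We will need the coefficients to satisfy $\max_S|c_S|=O_k(1)$ uniformly in $\nu$. This is the bounded-dual-norm statement of \cref{lem:lp-bounded-dual-norm}, possibly after adding slack $\eps/2$ to the objective near degenerate biases such as $\nu=\pm1$.

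Pointwise domination then gives, for every $x$,
\[ \Val_\Psi(x)\le \Psi_{Q_{\bar\nu(x)}}(x)=\sum_{|S|\le t}c_S\Psi_S(x). \]
It therefore suffices to certify, for each $S$ with $1\le|S|\le t$ and every $x$ of bias $\nu$, that
\[ \Abs{\Psi_S(x)-\nu^{|S|}}\le \eps' \quad\text{with } \eps'=\eps/(2\textstyle\binom{k}{\le t}\max|c_S|). \]
This is the partial certification of \cref{lem:partial-certification}. The algorithm outputs $\max_\nu \val_t(Q_\nu)+\eps$, and the running time is $O(n)$ grid points times the cost of one certificate.

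\textbf{Step 2 (centering by the bias).} Write $x=\nu\1+y$, where $\sum_i y_i=0$ and $\|y\|_\infty\le 2$. Expand
\[ \Psi_S(x)=\sum_{U\subseteq S}\nu^{|S|-|U|}\,\frac1{m}\sum_{e}y_{\tau_e(U)}. \]
The term $U=\emptyset$ gives exactly $\nu^{|S|}$. For $U\ne\emptyset$, compare the degree-$|U|$ form $\frac1m\sum_e y_{\tau_e(U)}$ with its expectation over the random hypergraph. That expectation is proportional to $\sum_{\text{distinct }i_1,\dots,i_{|U|}}y_{i_1}\cdots y_{i_{|U|}}$. Since $\sum_i y_i=0$, this sum is $O(n^{|U|-1})$, so after normalization the expectation contributes only $o_n(1)$; this is the removal of the trivial eigenvalue from the overview.

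What remains is to bound the centered random tensor form $\langle T_U-\E T_U, y^{\otimes|U|}\rangle$ uniformly over $\|y\|_\infty\le2$. Here $T_U$ is the directed, asymmetric adjacency tensor obtained by projecting hyperedges onto the ordered coordinates $U$. The target bound is $\le \eps'/2$ after dividing by $m$.

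\textbf{Step 3 (Kikuchi certificate and norm bound; the main obstacle).} For each $U$ with $|U|=r\le t$ we form the level-$\ell$ Kikuchi matrix of $T_U-\E T_U$. Its rows and columns are indexed by $\ell$-subsets, with entries supported on pairs whose symmetric difference is a projected hyperedge. For odd $r$ we use the odd-order and asymmetric variant described in \cref{sec:overview}.

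The standard lift $y\mapsto y^{\otimes\ell}$ (restricted to sets) gives
\[ |\langle T_U-\E T_U,y^{\otimes r}\rangle| \lesssim \frac{n^r}{\ell^{r/2}}\cdot\frac{\|K_\ell\|}{\text{avg degree}}. \]
So we need the spectral bound
\[ \|K_\ell-\E K_\ell\|\lesssim \sqrt{\tfrac{m\ell^{r/2}}{n^{r/2}}\,\ell\log n}, \]
with $\sqrt{\log n}$ in place of $\log n$ for odd order, via the modified encoding. We prove it by the trace method, $\E\operatorname{tr}(K_\ell-\E K_\ell)^{2q}$ with $q\approx\ell\log n$, which is the content of \cref{sec:norm-bound}.

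Plugging in, the error is $\lesssim\sqrt{n^{r/2}\ell\log n/(m\ell^{r/2})}$. This is $\le\eps'$ exactly when $m\gtrsim\eps^{-2}n^{r/2}\ell^{1-r/2}\log n$. The worst case is $r=t$, which matches the stated density, and the certificate takes time $n^{O(\ell)}$.

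The delicate part is the trace-moment bound for the non-symmetric, odd-order matrices built from directed hyperedges without random signs. Centering removes the mean, but the walk counting must handle repeated hyperedges. It must also handle the binomial variance, of order $p$ rather than $1$, without losing extra factors of $1/\eps$ or $\log n$.
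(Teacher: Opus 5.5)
Your overall pipeline is the paper's: bias enumeration, the dual LP with the uniform coefficient bound of \cref{lem:lp-bounded-dual-norm}, reduction to certifying each $\Psi_S$ to within $\eps'$ of $\nu^{|S|}$, and Kikuchi spectral norms via the trace method. The one place you depart from it, Step 2, makes Step 3 fail as written. After the substitution $y=x-\nu\1$, the entries of $y$ are $1-\nu$ and $-1-\nu$. The level-$\ell$ Kikuchi lift is an identity only when every $z_i^2=1$. For a general vector $z$, group the pairs $(I,J)$ by $A=I\Delta J$ and by the free intersection $I\cap J$; with $T$ symmetrized this gives
\[
(z^{\odot\ell})^\top K_\ell\, z^{\odot\ell}=\binom{r}{r/2}\sum_{|A|=r}T[A]\,z_A\,e_{\ell-r/2}\bigl((z_i^2)_{i\notin A}\bigr),
\]
where $e_j$ is the $j$-th elementary symmetric polynomial. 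For $z=y$ and $\nu\neq 0$, this weight depends on how many coordinates of $x$ on $A$ equal $+1$. Hence the left side is not proportional to $\langle T_U-\E T_U,y^{\otimes r}\rangle$ once $\ell>r/2$, which is exactly the regime of the theorem. At $\ell=r/2$ the intersection is empty and your argument is fine. The same problem recurs for odd $r$ after the Cauchy--Schwarz step, where the cross-term polynomial in $y$ is not even multilinear.

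The repair is short. $T_U-\E T_U$ is supported on tuples of distinct vertices, so $y\mapsto\langle T_U-\E T_U,y^{\otimes r}\rangle$ is multilinear and homogeneous of degree $r$. Its maximum modulus over $[-2,2]^n$ is therefore $2^r$ times its maximum over $\{\pm1\}^n$. Apply the Kikuchi certificate, and for odd $r$ the Cauchy--Schwarz trick, only after this reduction to $\pm1$ vectors.

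The paper's route is simpler still, since it never changes variables. It centers only the hypergraph, $G_R=\1[R\in E(\calH)]-p$, and notes that the hypergraph-averaged polynomial is $(\langle\1,x\rangle/n)^{|S|}=\nu^{|S|}$ up to a $1+o_n(1)$ factor and an $O(1/n)$ error. That error is the same correction as your $O(n^{|U|-1})$ computation. It then bounds the deviation polynomial $C_S(x)$ at Boolean $x$, where the lift is exact.

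Separately, for odd $|S|$ the paper does not use a level-$\ell$ Kikuchi matrix of $T_U-\E T_U$ itself; with rows and columns at the same level, $|I\Delta J|$ is always even. Instead it:
\begin{enumerate}
\item uses $C(x)^2\le n\,C^*(x)$ from \cref{eq:cauchy-schwarz-trick};
\item controls the squared terms by a second-moment bound;
\item runs the trace method on the Kikuchi matrix of the degree-two cross-term tensor $\tilde C$, whose norm is $O\bigl(\frac{m}{n^{|S|/2}}\ell^{(|S|-1)/2}\sqrt{\ell\log n}\bigr)$.
\end{enumerate}
The $\sqrt{\log n}$ saving comes from encoding $H\times H$ and $L\times L$ steps at total cost $O(q)$. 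Your final density condition is nevertheless the right one.
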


\begin{remark}
	We make no effort to optimize our dependence on $k$, $t$, and later on $|D|$ and $|\calP|$. Our bounds throughout this work hide any absolute-constant dependence on these parameters.
	\end{remark}
	Our techniques also extend naturally to instances involving multiple
predicates as well as to non-boolean CSPs with essentially the same tradeoff, up to an absolute-constant-factor dependence on the domain size.


\LargeAlphabetRefutation*
 \begin{remark}
	When specialized to the study of random CSPs with random signings, our result recovers the best known dependence of clause density in $ \ell$ and $\log n$, and gives improved dependence in $\epsilon$ for odd-$t$ that matches the dependence for even-$t$ .
	\end{remark}

\paragraph{Roadmap of this Section}
We begin with an overview of partial certification in the next subsection, highlighting the main technical ideas in the Boolean domain. This overview emphasizes that all of our results ultimately rely on the spectral norm bounds established later in \cref{sec:norm-bound}. 

We then give a summary for the single-predicate Boolean case in \cref{sec:boolean-summary} with its full algorithm, followed by the extension to multiple predicates. Finally, we present the proof of our main result- \cref{thm: alphabet-ref}- in \cref{proof:main-spectral}. 

\subsection{Overview of Partial Certification}
As described in the technical overview, \emph{partial certification} is the key ingredient in our spectral refutation algorithm. It shows that, for a polynomial arising from a random instance, one can efficiently certify that its value concentrates around its mean over the biased hypercube. In this overview section, we will focus on the Boolean domain to illustrate our main insights.

To establish concentration for general polynomials, we first introduce a lemma that focuses on \emph{monomials with random support}, and show that certifiable concentration for such monomials readily extends to certifiable concentration for arbitrary polynomials.

\begin{lemma}[Partial Certification for Monomials] \label{lem:certification-monomial}
	For any $\ell =O(n)$ and $\eps>0$, for any subset $S$ of weight at most $t$, let $\Psi_S(x)$ be the global polynomial induced by the monomial-restriction $S$ as in~\cref{def:monomial-global}, 
	 there is a spectral certification that runs in time $n^{O(\ell)}$ for
	  \[ 
\Psi_S(x)  \in \left[ \left(\frac{\langle \1_n, x\rangle}{n} \right)^{|S|} \pm (\eps +o_n(1))  \right] \,,
\]
 for all $x\in \{\pm 1\}^n$ provided  $m$ satisfies the density constraint in  \cref{thm:smooth-spectral-refutation}.
 

%
\end{lemma}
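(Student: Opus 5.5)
The plan is to compare $\Psi_S$ with its mean over the random hypergraph and certify the deviation spectrally. Write $s=|S|$. Since each undirected hyperedge appears with probability $p$ and receives a uniformly random orientation $\tau_e$, the restriction $\tau(e)_S$ is (up to normalization) a uniformly random ordered $s$-tuple of distinct vertices. So in expectation,
\[
\E\Big[\sum_{e}x_{\tau(e)_S}\Big]=\frac{|E|}{n^{\underline s}}\sum_{U\in [n]^{\underline s}}x_U .
\]
For $x\in\{\pm1\}^n$, the quantity $\frac{1}{n^{\underline s}}\sum_{U}x_U$ differs from $(\langle \1_n,x\rangle/n)^{s}$ by $O_s(1/n)$. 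This is because the terms with repeated indices form only an $O(1/n)$ fraction, and since $x_i^2=1$ they are bounded, and this identity has a low-degree SoS proof. Also $|E(\calH)|=(1\pm o(1))m$ with high probability by Chernoff. Hence it suffices to certify
\[
\Big|\frac1m\sum_{U\in[n]^{\underline s}} A_U\, x_U\Big|\le \eps,\qquad A_U \coloneqq \#\{e:\tau(e)_S=U\}-\frac{m}{n^{\underline s}},
\]
where $A$ is a centered order-$s$ tensor with independent (up to the hyperedge structure) entries of mean zero and variance about $m/n^{s}$.

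To certify the centered part, I would use the standard Kikuchi reduction. For even $s$, build the level-$\ell$ Kikuchi matrix $K_\ell$ indexed by $\ell$-subsets (tuples) $I,J$, with $(K_\ell)_{I,J}=A_U$ whenever $I\triangle J$ realizes $U$ split half/half. For odd $s$, use the odd-order/asymmetric variant that splits $U$ into $\lceil s/2\rceil,\lfloor s/2\rfloor$ parts, as described in the techniques section. Then for every $x\in\{\pm1\}^n$, with $x^{\odot\ell}$ the vector of $\ell$-wise products,
\[
\sum_U A_U x_U=\frac{1}{N_\ell}\,(x^{\odot \ell})^\top K_\ell\, x^{\odot\ell},
\]
where $N_\ell$ is the number of index pairs realizing each $U$. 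This is bounded by $\frac{\binom n\ell}{N_\ell}\|K_\ell\|$. Computing $\|K_\ell\|$ takes time $n^{O(\ell)}$, and for $s=2$ and $\ell=1$ this is just the centered adjacency matrix.

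The main obstacle is the spectral norm bound. I would need
\[
\|K_\ell\|\lesssim \sqrt{\frac{m}{n^{s}}\,\ell^{s/2}n^{-s/2}\cdot\text{(row degree)}\cdot\log n},
\]
or more precisely the bound from \cref{sec:norm-bound}, with a $\sqrt{\log n}$ in place of $\log n$ for odd $s$. With this bound, $\frac{\binom n\ell}{N_\ell m}\|K_\ell\|\le\eps$ exactly when $m\gtrsim \eps^{-2}n^{s/2}\ell^{1-s/2}\log n$. This is implied by the density assumption for $t$, since $s\le t$ and the threshold is monotone in $s$. I expect the hard part to be the trace-moment calculation for the directed/odd case, where the entries are Poisson-like and non-symmetric. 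I would try matrix Bernstein first for even $s$. For odd $s$ I would use a trace method with the encoding argument, and I will simply invoke the norm bounds of \cref{sec:norm-bound}. Combining the mean computation, the $O(1/n)$ correction, and the certified deviation gives both sides of the interval. Finally, a union bound over the $2^k$ choices of $S$ keeps the failure probability at $o_n(1)$.
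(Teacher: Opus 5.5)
Your treatment of the mean (including the $O(1/n)$ correction for repeated indices, which is more careful than the paper's), the use of $|E(\calH)|/m=1+o_n(1)$, and the even-$s$ Kikuchi flattening all follow the paper's route; for even $s$, matrix Bernstein should also suffice. The gap is the odd case.

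A Kikuchi matrix that splits $U$ into $\lceil s/2\rceil$ and $\lfloor s/2\rfloor$ parts is rectangular and unbalanced, and its norm is governed by the heavy side. Take $s=3$ at the basic level. The column indexed by a single vertex $v$ has about $n^2$ entries of variance about $m/n^3$, so $\|K\|\gtrsim\sqrt{m/n}$. The certified bound on $\sum_U A_U x_U$ is then only about $\sqrt{m/n}\cdot n\cdot\sqrt{n}=n\sqrt{m}$, which is $o(m)$ only once $m\gg n^2$, not $n^{3/2}$. In general this flattening needs $m\gtrsim (n/\ell)^{(s+1)/2}$, which misses the target $n^{s/2}\ell^{1-s/2}$ by roughly $\sqrt{n/\ell^{3}}$. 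So it fails for odd $s=t$ at small $\ell$, including the polynomial-time regime. Your single quadratic form $(x^{\odot\ell})^\top K_\ell\, x^{\odot\ell}$ also cannot hold when the two halves have different sizes. Nor can you ``simply invoke'' \cref{lemma:spec-norm-bounds} here. Its odd-case bound, which is linear in $m$, concerns a different matrix whose entries are quadratic in the edge indicators.

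The missing idea is the Cauchy--Schwarz trick of \cref{eq:cauchy-schwarz-trick}. Write $C(x)=\sum_v x_v\sum_\alpha C[\alpha,v]\,x_\alpha$, so that $C(x)^2\le n\sum_v\bigl(\sum_\alpha C[\alpha,v]\,x_\alpha\bigr)^2$. Then split the right-hand side into two parts.
\begin{itemize}
\item The diagonal part $\sum_{v,\alpha}C[\alpha,v]^2$ is $O(m\sqrt{\log n})$ w.h.p.\ by a second-moment and Markov argument (\cref{claim:sq-term-negligible-appendix}).
\item The off-diagonal part is $\tilde C(x)$, with $\tilde C[\alpha,\gamma]=\sum_v C[\alpha,v]C[\gamma,v]$ for $\alpha\neq\gamma$. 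This is a tensor of even order $2(s-1)$, so it admits a square, balanced Kikuchi matrix in which each copy of $S'$ is split evenly between $I$ and $J$.
\end{itemize}
The entries of this matrix are degree-$2$ in the centered edge variables. That is why the paper needs the trace method with the $O(q)$ encoding for $H\times H$ steps, rather than a linear concentration bound. The resulting bound $\|M\|\lesssim\frac{m}{n^{s/2}}\,\ell^{s/2}\sqrt{\log n}$, combined with the factor $n$ from Cauchy--Schwarz, yields exactly the $n^{s/2}\ell^{1-s/2}\sqrt{\log n}$ density requirement.
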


Before we prove this main lemma, we illustrate how this comes in handy for establishing partial certification of the dual polynomial $\Psi_{Q_\nu}$, that ultimately serves as an upper bound for a CSP predicate.

\begin{restatable}{lemma}{FormalPartialCertification} \label{lem:formal-partial-certification}
(Partial Certification in the $\nu$-Biased Hypercube [Formal Statement of \cref{lem:partial-certification}] ) For any $\eps>0$ and any $\ell \leq O(n)$,  	 there is an algorithm that runs in time $n^{O(\ell)}$ certifies \[ 
\max_{\substack{x\in\{\pm 1\}^n \\ \E_{i\in [n]} [x_i] = \nu } } \Psi_{Q_\nu}(x) \leq \sum_{S\subseteq  [k], |S|\leq t} c_S\cdot \nu^{|S|}  + o_n(1) = \val_{t}(Q_\nu) + \eps +o_n(1)\,.  \]
provided the instance is sufficiently dense above the anticipated spectral threshold (see the formal condition in \cref{thm:smooth-spectral-refutation})  . 
\end{restatable}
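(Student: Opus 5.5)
The plan is to reduce the statement to \cref{lem:certification-monomial} by expanding $Q_\nu$ in its monomial basis. By construction $Q_\nu(y)=\sum_{S\subseteq[k],|S|\le t} c_S y_S$, so replacing each constraint's predicate by $Q_\nu$ and averaging over hyperedges gives, by linearity and \cref{def:monomial-global},
\[
\Psi_{Q_\nu}(x) = \sum_{S\subseteq [k],\, |S|\le t} c_S \,\Psi_S(x).
\]
The $\emptyset$ term is the constant $c_\emptyset$, since $\Psi_\emptyset\equiv 1$. The algorithm therefore runs the spectral certificate of \cref{lem:certification-monomial} once for each of the at most $\binom{k}{\le t}$ subsets $S$, each in time $n^{O(\ell)}$, with accuracy parameter $\eps'$ to be chosen below.

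Each run certifies $\Psi_S(x)\le (\langle \1_n,x\rangle/n)^{|S|}+\eps'+o_n(1)$ for all $x\in\{\pm1\}^n$ simultaneously, and it also certifies the matching lower bound. This matters because the $c_S$ may be negative: for $c_S\ge 0$ we use the upper bound, and for $c_S<0$ the lower bound. Summing over $S$ yields, for every $x$,
\[
\Psi_{Q_\nu}(x)\le \sum_{S} c_S\Big(\frac{\langle \1_n,x\rangle}{n}\Big)^{|S|} + \Big(\sum_S |c_S|\Big)(\eps'+o_n(1)).
\]
Restricting to $x$ with $\E_i[x_i]=\nu$, i.e.\ $\langle\1_n,x\rangle/n=\nu$, the main term becomes exactly $\sum_S c_S\nu^{|S|}=\val_t(Q_\nu)$. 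Here we use the crucial feature that the monomial certificates hold uniformly in $x$ and depend on $x$ only through its bias, so the restriction costs nothing. Setting $\eps'=\eps/\sum_S|c_S|$ then gives the claimed bound.

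The main obstacle is controlling $\sum_S|c_S|$. We need this quantity bounded by a constant depending only on $k$ and $t$, uniformly in $\nu$, so that $\eps'$ is $\Omega_k(\eps)$ and the density requirement of \cref{lem:certification-monomial} stays at $\Omega_k(\eps^{-2}\cdots)$, with a union bound over the $O_k(1)$ subsets. An arbitrary optimal dual solution of \eqref{eq:dual-lp-boolean} may have unbounded coefficients, for example near degenerate $\nu=\pm1$. To handle this, I would invoke a bounded-dual-norm statement in the spirit of \cref{lem:lp-bounded-dual-norm}, which the paper references later. Such a statement chooses among optimal or near-optimal duals one with $\sum_S|c_S|\le O_k(1)$, possibly losing an additional $\eps$ in the objective, which is absorbed into the $+\eps$. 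A concrete route is to truncate $\nu$ away from $\pm1$ and use that the LP has $2^k$ constraints and $0/1$ data. If this bound is unavailable, the fallback is to carry $\sum_S|c_S|$ as an explicit factor and note that it is finite for each fixed $\nu$.
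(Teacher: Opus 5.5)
Your proposal matches the paper's proof: expand $\Psi_{Q_\nu}=\sum_{S} c_{\nu,S}\Psi_S$ by linearity, apply the two-sided monomial certificates of \cref{lem:certification-monomial} with $\eps'=\eps/\|c_\nu\|_1$ (two-sided so that negative coefficients are handled), restrict to assignments of bias $\nu$, and bound $\|c_\nu\|_1\le O_k(1)$ via \cref{lem:lp-bounded-dual-norm}. That lemma's Cramer's-rule bound depends only on the constraint data and not on $\nu$, which enters the LP only through the objective, so the coefficient bound is already uniform in $\nu$ and your truncation and extra-$\eps$ fallback are unnecessary.
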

%

\begin{proof}[Proof of \cref{lem:formal-partial-certification} ]

Expanding the given polynomial in the monomial basis, we have
\begin{align*}
	  \Psi_{Q_\nu}(x)  &=  \frac{1}{|E(\calH)| } \cdot \sum_{e\in E(\calH)} Q_\nu (x_{\tau(e)})  
	  =  \frac{1}{|E(\calH)| }  \left(  \sum_{e\in E(\calH)}   \sum_{\substack{S\\ \text{monomial-restriction}\\ |S|\leq t}}  c_{\nu, S}   \cdot x_{\tau(e)_S} \right)
	  \\&=
	   \sum_{\substack{S\\ \text{monomial-restriction}\\ |S|\leq t}}  c_{\nu,S}\cdot     \left( \frac{1} {|E(\calH)| } \cdot   \sum_{e\in E(\calH)}  x_{\tau(e)_S} \right) = \sum_{\substack{S\\ \text{monomial-restriction}\\ |S|\leq t}} c_{\nu,S} \cdot \Psi_S(x)
\end{align*}

Therefore, it is convenient to note that as long as we can certify for all monomal-restrictions $S\in \binom{k}{t} $,  \[ 
\Psi_S(x)  \in \left[ \left(\frac{\langle \1_n, x\rangle}{n} \right)^{|S|} \pm (\eps'+ o_n(1))  \right] 
\]for any $x\in \{\pm 1\}^n$, we would then certify \begin{align*}
\max_{x: \langle \1_n, x\rangle =  \nu \cdot n}  \Psi_{Q_\nu}(x)&  \leq \sum_{S: |S|\leq t} c_{\nu,S} \cdot \nu^{|S|} +  \|c_\nu\|_1 (\eps'+   o_n(1)) 	\\&
=\sum_{S: |S|\leq t} c_{\nu,S} \cdot \nu^{|S|}  + \eps+ o_n(1)
\end{align*}
 as desired 
 for $m$ satisfying the prescribed density constraint from 
 \jnote{Tommaso, i think its ok to keep the "prescribed density" here cuz otherwise it might be too long?...}
when picking $\eps' = \frac{\eps }{|c_\nu|_1}$ and using a bound on the size of the coefficients  $\|c_\nu\|_1 \coloneqq \sum_{S} |c_{\nu,S}| \leq O_{k}(1)$ by \cref{lem:lp-bounded-dual-norm}. 
%

Finally, noting that $\sum_{S: |S|\leq t} c_{\nu,S} \cdot \nu^{|S|} = \val_{t}(Q_\nu) $ is precisely the objective value in our dual LP, which by strong duality of the LP, is the optimal value $\val_{P,t}(\nu)$ by the $t$-wise $\nu$-independent distribution.
\end{proof}	

That said, it remains to complete the proof of \cref{lem:certification-monomial}. Towards that end, we introduce two additional claims.

\begin{claim} \label{claim:edge-concentration}
 Let $\calH$ be the random instance from ~\cref{def:random-instance},  recall that $|E(\calH)|$ is the number of  clauses in an instance $\calH$,
	With high probability,\[ 
	\frac{m}{|E(\calH)| } = 1+o_n(1)
	\]
\end{claim}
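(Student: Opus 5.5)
The plan is to observe that $|E(\calH)|$ is a sum of independent indicators and apply a standard Chernoff bound. By \cref{def:random-instance}, each of the $\binom{n}{k}$ potential undirected hyperedges appears independently with probability $p$. Choosing the bijection $\sigma$ and the relation $R$ does not change the number of hyperedges, so
\[
  |E(\calH)| = \sum_{e \in \binom{[n]}{k}} \bracbb{e \in E(\calH)}
\]
is a $\mathrm{Binomial}\bigl(\binom{n}{k}, p\bigr)$ random variable. Its mean is $\E|E(\calH)| = p\binom{n}{k} = m$, by the paper's convention that $m$ is the expected number of constraints.

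The multiplicative Chernoff bound then gives, for any $0<\eta\le 1$,
\[
  \Pr\Brac{\Abs{|E(\calH)| - m} \geq \eta m} \leq 2\exp\Paren{-\eta^2 m/3}.
\]
I would take $\eta = m^{-1/3}$, or any $\eta = o_n(1)$ with $\eta^2 m \to \infty$. The failure probability is then $2\exp(-m^{1/3}/3) = o_n(1)$, since in every regime where the claim is used we have $m \geq C n^{t/2}\cdot \ell/\ell^{t/2} \cdot \sqrt{\log n}$. In particular $m \geq \Omega(\log n)$, and in fact $m \to \infty$ polynomially for $\ell \leq \tilde O(n)$. It is enough that $m\to\infty$ as $n\to\infty$, which holds under the density hypotheses of \cref{thm:smooth-spectral-refutation} and \cref{thm: alphabet-ref}.

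On the complementary event, $|E(\calH)| \in [(1-\eta)m, (1+\eta)m]$. Hence
\[
  \frac{m}{|E(\calH)|} \in \Brac{\frac{1}{1+\eta}, \frac{1}{1-\eta}} = 1 \pm O(\eta) = 1 + o_n(1).
\]
The only point needing care is confirming that $m \to \infty$ under the stated density assumptions; there is no real obstacle beyond this. For the with-replacement model, $|E(\calH)| = m$ exactly and the claim is trivial.
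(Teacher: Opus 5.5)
Your proof is correct and is the paper's argument. The paper dispatches this claim in one line, appealing to standard concentration of $|E(\calH)|$ around its mean $m$ under the remark that $m=\Omega(n)$ in its regime; your explicit Chernoff bound with $\eta=m^{-1/3}$ fills in that line and correctly notes that only $m\to\infty$ is needed.
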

This follows by standard concentration of random variables provided $m=\Omega(n)$ satisfied by our regime throughout this work. Next, we introduce the following claim that bounds the deviation terms.
\begin{restatable}{claim}{KikuchiClaim}
\label{claim:smooth-certification}(Smooth Certification for Deviation via Matrix Norm Bounds) For any $\epsilon>0$, and  any $\ell \leq C n$ for some constant $C>0$, assuming $m\geq \frac{n}{\epsilon^2}$,	 ~\cref{alg:single-predicate-certification} runs in time $n^{O(\ell)}$  to certify 	\[
\max_{x\in \{\pm 1\}^n } \left|   ( \Psi_{S}-  \widetilde{\Psi}_{S})( x) \right| < \epsilon + o_n(1)
 \]
with high probability for random instance $\calH$ provided  $m$ satisfies the density constraint in \cref{thm:smooth-spectral-refutation}.
\end{restatable}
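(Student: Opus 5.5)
The plan is to identify $\widetilde{\Psi}_S(x)$ as the ``expected'' polynomial obtained by replacing the random edge set with its average over all ordered $k$-tuples of distinct vertices, i.e.\ $\widetilde{\Psi}_S(x) = \E_{\tau \in [n]^{\underline{|S|}}} x_{\tau}$. Equivalently, $\widetilde{\Psi}_S(x)=(\langle \1_n,x\rangle/n)^{|S|}+o_n(1)$ on the hypercube. The difference $\Psi_S-\widetilde{\Psi}_S$ is then a degree-$s$ ($s=|S|\le t$) multilinear form $\frac{1}{|E(\calH)|}\langle T, x^{\otimes s}\rangle$. Here $T$ is a centered random order-$s$ tensor. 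Its entry at an ordered tuple $(i_1,\dots,i_s)$ is the number of directed hyperedges whose $S$-projection equals that tuple, minus its mean $\approx m\,(n-s)^{\underline{k-s}}/n^{\underline{k}}\cdot(\text{orderings})$. \cref{claim:edge-concentration} lets us replace $|E(\calH)|$ by $m$ at a $1+o_n(1)$ cost. So it suffices to certify $\max_{x\in\{\pm1\}^n}|\langle T,x^{\otimes s}\rangle| \le \eps m$.

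The certification is spectral, via a Kikuchi matrix. For even $s$, I would form the level-$\ell$ Kikuchi matrix $K_\ell$ indexed by $\binom{[n]}{\ell}$, with $(U,W)$ entry equal to $T$ summed over tuples whose symmetric difference $U\triangle W$ realizes the tuple with half on each side. For $x\in\{\pm1\}^n$ let $y=x^{\odot\ell}$ be the vector of monomials $x_U$. Then $y^\top K_\ell y = N_\ell\cdot\langle T,x^{\otimes s}\rangle$ for the combinatorial multiplicity $N_\ell\approx\binom{n}{\ell}(\ell/n)^{s/2}$, and $\|y\|^2=\binom{n}{\ell}$. Hence $|\langle T,x^{\otimes s}\rangle|\le \|K_\ell\|\binom{n}{\ell}/N_\ell$, and the algorithm computes $\|K_\ell\|$ by an eigenvalue computation in time $n^{O(\ell)}$. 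For odd $s$, I would use the odd-order (asymmetric/bipartite) Kikuchi matrix. Split each tuple into one distinguished coordinate $j$ and the rest. Apply Cauchy--Schwarz over $j$ to reduce to the even-order quadratic form $\sum_{j}(\sum T_{j,\cdot}x^{\otimes(s-1)})^2$. Its Kikuchi matrix has entries that are sums of products of pairs of hyperedges sharing $j$. The trace-method bound from \cref{sec:norm-bound} controls it, giving the $\sqrt{\log n}$ version. I will take both the even and odd norm bounds as established in \cref{sec:norm-bound}.

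The remaining step is to plug in the norm bound. For even $s$, a matrix Bernstein / trace estimate for the centered Kikuchi matrix gives $\|K_\ell\|\lesssim_k \sqrt{d_\ell\log\binom{n}{\ell}}$, where $d_\ell\approx m\,\ell^{s/2}/n^{s/2}$ is the typical row degree, since centering removed the trivial eigenvalue. Thus
\[ \frac{|\langle T,x^{\otimes s}\rangle|}{m}\lesssim \frac{1}{m}\Big(\frac{n}{\ell}\Big)^{s/2}\sqrt{\frac{m\,\ell^{s/2}}{n^{s/2}}\,\ell\log n} = \sqrt{\frac{n^{s/2}\,\ell\log n}{m\,\ell^{s/2}}}, \]
which is at most $\eps$ once $m\ge C\eps^{-2}n^{s/2}\ell^{1-s/2}\log n$. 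This holds for every $s\le t$ under the density constraint of \cref{thm:smooth-spectral-refutation}, since lower $s$ only needs fewer constraints. The odd case is identical with the odd bound. Finally I union-bound over the $\le 2^k$ restrictions $S$.

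The main obstacle is the norm bound for the centered Kikuchi matrix. The subtlety is that without random literals the entries are not mean-zero signs. They are centered counts, so the bound requires that after subtracting the mean there is no residual rank-one spike. For odd $s$, the Cauchy--Schwarz step also produces diagonal terms (pairs of identical hyperedges). These contribute a deterministic $\approx d_\ell$ that must be subtracted before the trace method. I would first handle these diagonal terms by bounding them row-sum-wise using edge-degree concentration in $\calH$. I would then appeal to the trace-moment analysis of \cref{sec:norm-bound} for the off-diagonal part.
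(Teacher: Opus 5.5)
Your proposal is correct and follows the paper's own argument. The paper writes the deviation (after the Cauchy--Schwarz lift when $|S|$ is odd) as a quadratic form of the Kikuchi matrix, divides by the multiplicity ratio $\binom{n}{\ell}/\binom{n}{\ell-|S|/2}\approx (n/\ell)^{|S|/2}$ (resp.\ $(n/\ell)^{|S|-1}$), plugs in \cref{lemma:spec-norm-bounds}, and bounds the diagonal $\alpha=\gamma$ terms of the odd case separately as in \cref{claim:sq-term-negligible-appendix}, exactly as you propose. Your centering by the average over distinct tuples differs from the paper's $\frac{m}{|E(\calH)|}(\langle \1_n,x\rangle/n)^{|S|}$ only by $o_n(1)$ uniformly in $x$, so it does no harm.
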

We are now ready to wrap up the proof to \cref{lem:certification-monomial}.
\begin{proof}[Proof to \cref{lem:certification-monomial}]
Recall that $|E(\calH)|$ is the number of edges in the hypergraph sample.  To get started, we center $\Psi_{S}(x)$ around its mean $\widetilde{\Psi}_{S}(x)$ by rewriting the following, \[ \Psi_{S}(x) = \widetilde{\Psi}_{S}(x) +  \underbrace{\left( \Psi_{S}(x)  -  \widetilde{\Psi}_{S}(x) \right)}_{\textsf{Deviation}_S(x) \coloneqq } \]
where the centered polynomial is defined as 
\[   \widetilde{\Psi}_{S}( x) \coloneqq  \frac{1}{|E(\calH)|} \cdot  \sum_{T \in  [n]^{|S|}}( \sum_{\substack{ R \in [n]^k \\ R_S = T }} p)  \cdot x_T  = \frac{1}{|E(\calH)|}\sum_{T\in [n]_{|S|} } \E[D_S(T)] \cdot x_T  \]
where $D_{S}(T)$ denotes the $S$-degree of $T$ -- the number of predicates whose $S$-restriction is $T$. Since each tuple has the same expected degree $\frac{m }{n^{|S|}}  $ as each clause exists i.i.d. with probability $p = \frac{m }{n^{|S|}} $, the above simplifies as
\begin{align*}	
 \widetilde{\Psi}_{S}(x) =  \frac{1}{|E(\calH)|} \cdot \frac{m}{n^{|S|}}  \cdot \sum_{T\subseteq [n]^{|S|}} x_T  &= \left( \frac{m }{|E(\calH)|}\right)  \cdot  \left(\frac{\langle 1_n, x\rangle}{n}\right)^{|S|}\,.
\end{align*}


Finally, the main lemma is then immediate by the following two claims, as for all $x \in \{\pm 1\}^n $, we have  \begin{align*}
	\Psi_S(x)  &= \tilde{\Psi}_S(x) + \textsf{Deviation}_S(x)\\
	&= \left( \frac{m }{|E(\calH) | }\right)  \cdot  \left(\frac{\langle 1_n, x\rangle}{n}\right)^{|S|} + \textsf{Deviation}_S(x) \\
		&=    \left(\frac{\langle 1_n, x\rangle}{n}\right)^{|S|} +  \eps +o_n(1) 	\end{align*}
where we plug in ~\cref{claim:edge-concentration} and \cref{claim:smooth-certification}.
\end{proof}
\begin{claim} \label{claim:edge-concentration}
Recall that $|E(\calH)|$ is the number of  clauses in an instance $\calH$.
	With high probability,\[ 
	\frac{m}{|E(\calH)| } = 1+o_n(1)
	\]
\end{claim}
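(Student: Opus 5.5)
The claim concerns $|E(\calH)|$ in the binomial model of \cref{def:random-instance}. There, every potential undirected hyperedge $e \in \binom{V}{k}$ appears independently with probability $p$. The parameter $m$ is the expected number of hyperedges, $m = p\binom{n}{k}$. Each realized hyperedge contributes exactly one directed constraint, since a single bijection $\sigma$ is chosen for it. Hence $|E(\calH)| = \sum_{e} \bracbb{e \in H}$ is a sum of $\binom{n}{k}$ independent Bernoulli$(p)$ variables with mean exactly $m$. The plan is to show that $|E(\calH)|/m \to 1$ in probability, and then invert.

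The key step is a multiplicative Chernoff bound:
\[ \Pr\Brac{\Abs{|E(\calH)| - m} > \gamma m} \leq 2\exp(-\gamma^2 m/3) \qquad \text{for } 0<\gamma\leq 1. \]
Choose $\gamma = \gamma(n) = m^{-1/3}$. In every regime considered in this work, $m = \Omega(n) \to \infty$ (indeed $m \gtrsim n^{t/2}$ with $t \geq 2$). Therefore $\gamma \to 0$, while the failure probability $2\exp(-m^{1/3}/3) = o_n(1)$. Alternatively, Chebyshev's inequality suffices: $\mathrm{Var}|E(\calH)| \leq m$, so
\[ \Pr\Brac{\Abs{|E(\calH)| - m} > m^{2/3}} \leq m^{-1/3} = o_n(1). \]

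On the complementary event, $|E(\calH)| \in [(1-\gamma)m, (1+\gamma)m]$. This gives
\[ \frac{m}{|E(\calH)|} \in \Brac{\frac{1}{1+\gamma}, \frac{1}{1-\gamma}} = 1 \pm O(\gamma) = 1 + o_n(1). \]
In particular $|E(\calH)| > 0$, so the ratio is well defined.

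There is no real obstacle. The only point needing care is that $m \to \infty$, which holds under the density constraints assumed throughout. For the alternative with-replacement model, $|E(\calH)| = m$ deterministically, and the claim is trivial.
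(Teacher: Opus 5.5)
Your proposal is correct and follows the same route as the paper, which justifies the claim in one line by ``standard concentration of random variables provided $m=\Omega(n)$.'' Your multiplicative Chernoff bound (or the Chebyshev alternative) with $\gamma = m^{-1/3}$ supplies exactly that step, and it only needs $m \to \infty$.
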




We dedicate the next subsection to the proof of this claim. In particular, we show next that this boils down to controlling the spectral norm of certain random matrix, and subsequently establish the desired spectral norm bounds in \cref{sec:norm-bound}.

\subsection{Deviation Certificates via Spectral Norm Bounds}

In this subsection, we reduce partial certification to a spectral norm bound for a centered
random tensor, and prove~\cref{claim:smooth-certification}.
Concretely, for each monomial restriction $S \subseteq [k]$, we show that the
deviation
$
|\Psi_S(x) - \widehat{\Psi}_S(x)|
$
is uniformly small over all assignments $x \in \{\pm1\}^n$.  This deviation is then controlled by reducing the injective norm to a spectral norm bound via matrix flattening. As usual, the construction of
this matrix differs slightly depending on whether $|S|$ is even or odd.

	 
It is more convenient to work with the unscaled objective value, and	
consider $C_S(x) \coloneqq |E(\calH)| \cdot   (\Psi_S - \widehat{\Psi})(x) $ where we ignore the normalization factor in $\Psi$. Hence the target bound here would be $o(m)$ for the bound to be non-trivial.  
 When viewed as a polynomial, let $C_S[T]$ denote the coefficient of $x_T$. We can see
 \[ 
	C_S[T] = \sum_{\substack{ R \in  [n]^k \\  R_S = T  } } \left(\underbrace{  \1[ R \in E(\calH)] - p}_{ \coloneqq G_R } \right)\,,	
\] 
itself as a polynomial in the underlying (centered) random variables $\{G_R\}_{R\in [n]^k}$. 
As a sanity check, for the case $|S|= k$, the summation is merely a singleton term, and this is simply the normalized adjacency-tensor for hypergraph $\calH$.
		
From this point onwards, we will view $C_S$ as an order-$|S|$ tensor.
Since $C_S(x)$ is a polynomial of degree $|S|$, it is natural to
represent it by its coefficient tensor indexed by $|S|$-tuples of vertices.
	 
	 For concreteness, we showcase the derivation for the basic level of Kikuchi hierarchy, i.e. $\ell = |S|/2$ for even $|S|$, and $\ell =|S|-1$ for odd $S$ due to the Cauchy-Schwarz trick. We illustrate the full derivation for higher-level of Kikuchi matrices towards the end of this section. 
	 
	
\paragraph{Deviation  for even $S$} \label{def:basic-even-S-matrix} 	We start with the simpler case of even $S$. We now define a matrix whose quadratic form exactly recovers the polynomial $C_S(x)$ (up to  some rescaling factor),  \begin{definition}
		[$M_S$ for even-$S$]
		For even-$S$, let $M_S$ be a matrix with rows and columns indexed by all subsets of size $|S|/2$ and have non-zero entries $I,J \in \binom{n}{ |S|/2}$ given by 
	\[ M_S[I,J] \coloneqq   C_S[I\cup J]\,,\]
	\end{definition}
%
Recall that we define $x^{\circledcirc r}$ is the vector indexed by subset of size $r$, and $x^{\circledcirc r}[T] = \prod_{i\in T} x_i$, we have the following observation.
\begin{definition}[$x^{\circledcirc r}$ vectors]
	For any $x\in \{\pm 1\}^n$, and $r\in N$, $x^{\circledcirc r}$ is the $\binom{n}{r}$-dimensional vector s.t.  for any subset $S\subseteq \binom{n}{r}$, $x^{\circledcirc r}[S] \coloneqq \prod_{i\in S} x_i$.
\end{definition}

\begin{observation} \label{obs:basic-even-equivalence}
     For any $x\in \{\pm 1\}^n $,
    \[ 
    (x^{\circledcirc |S|/2})^T M_S \cdot  x^{\circledcirc |S|/2} = C_S(x)  \cdot \binom{|S|}{|S|/2} \,.
    \]

\end{observation}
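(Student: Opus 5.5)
We verify the identity by expanding both sides directly. Write $r \coloneqq |S|/2$. The left-hand side is
\[
(x^{\circledcirc r})^T M_S\, x^{\circledcirc r} = \sum_{I,J \in \binom{n}{r}} x^{\circledcirc r}[I]\, M_S[I,J]\, x^{\circledcirc r}[J] = \sum_{I,J} C_S[I\cup J] \prod_{i\in I} x_i \prod_{j\in J} x_j .
\]
The right-hand side is $C_S(x) = \sum_T C_S[T]\, x_T$. So the task reduces to matching terms on the two sides.

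\textbf{Step 1: which pairs $(I,J)$ contribute.} The matrix $M_S$ has nonzero entries only for disjoint pairs $I \cap J = \emptyset$; this is implicit in the phrase ``non-zero entries''. We also adopt the convention that $C_S[T]$ is indexed by sets. This is natural because $x_T$ for a tuple depends only on the underlying set $T$, so tuple-coefficients with the same underlying set can be merged into one set-indexed coefficient. With this convention, $I\cup J$ is a set $T$ of size exactly $|S|$. Since $I$ and $J$ are disjoint, $\prod_{i\in I}x_i\prod_{j\in J}x_j = x_T$.

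\textbf{Step 2: count the pairs giving each $T$.} Fix a $|S|$-set $T$. The ordered pairs $(I,J)$ of disjoint $r$-sets with $I\cup J = T$ correspond exactly to the choice of $I \in \binom{T}{r}$, with $J = T\setminus I$ forced. There are $\binom{|S|}{|S|/2}$ such choices. Regrouping the double sum by $T$ therefore gives
\[
\sum_T \binom{|S|}{|S|/2}\, C_S[T]\, x_T = \binom{|S|}{|S|/2}\, C_S(x),
\]
which is the claimed identity.

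\textbf{Main obstacle.} The only delicate point is the bookkeeping convention, not the combinatorics. If $C_S$ were genuinely indexed by ordered tuples in $[n]^{|S|}$, or if tuples with repeated vertices were allowed, the multiplicity would differ. Ordered tuples would introduce a factor of $|S|!$ relative to the set-indexed version. Overlapping pairs $I\cap J\neq\emptyset$ would produce squared variables: they equal $1$ on $\{\pm1\}^n$, but they fall outside the support of $M_S$. I would therefore state explicitly that $C_S$ is symmetrized over orderings and supported on sets of distinct vertices. This holds here because $\tau_e$ is an injection, so each $R_S$ has distinct entries. Under that convention, the two steps above finish the proof.
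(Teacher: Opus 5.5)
Your argument is correct and uses the same counting the paper relies on. The paper gives no separate proof of this observation, but in the proof of \cref{lem:large-alphabet-kikuchi-even} (the non-Boolean analogue, which at $\ell=|S|/2$ reduces to exactly this identity) the factor $\binom{|S|}{|S|/2}$ comes from choosing which half of $I\Delta J=I\cup J$ goes into $I$.

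One correction to your justification: injectivity of $\tau_e$ only makes the hyperedge part of $C_S$ distinct-indexed. The centering term $\widetilde{\Psi}_S$ sums over all of $[n]^{|S|}$, including repeated indices. So restricting $C_S$ to distinct-vertex sets is a modification of the paper's definition, not a consequence of it. The modification is harmless, since the dropped part has size $O(m/n)$.
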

Once the matrix is defined, we can then apply spectral norm bound of $\|M_S\|$ to obtain an upper bound for $|C_S(x)|$. We defer the formal verification to after the introduction of the corresponding matrix for the odd case.
	
\snote{I removed deviation for odd $S$ for basic matrix, as discussed in meeting}

\paragraph{Wrapping Up for the Basic Cases}

We now prove~\cref{claim:smooth-certification} for the basic matrices, corresponding specifically to $\ell=O(|S|)$. 
\begin{proof}[Proof to the special case of $\ell=O(|S|)$ in \cref{claim:smooth-certification}]	
We split the proof into odd and even as well. For the even case, the matrix defined in \cref{def:basic-even-S-matrix} gives \[ 
 (x^{\circledcirc |S|/2})^T M_S \cdot  x^{\circledcirc |S|/2} = C_S(x)  \cdot \binom{|S|}{|S|/2}
\]
via~\cref{obs:basic-even-equivalence}. Next, the deviation bound for even-$S$ follow immediately follow from a spectral norm bound of $M_S$ as we have
\[ 
\binom{|S|}{|S|/2} \cdot  \max_{x\in \{\pm 1\}^n} C_S(x)  =   \max_{x\in \{\pm 1\}^n } (x^{\circledcirc |S|/2})^T M_S \cdot  x^{\circledcirc |S|/2} \leq \|M_S\|_{sp} \cdot \binom{n}{|S| /2 } \,.
\]
The lower tail follows analogously by noting $\lambda_n \geq - \|M_S\|_{sp}$. 
 Plugging in the norm bounds from~\cref{lemma:spec-norm-bounds}, which state that w.h.p., \[ 
\|M_S\|_{sp} \leq O(\sqrt{\frac{m}{n^{|S|/2}}}  \cdot \sqrt{\log n})\,,
\]
 we have 
\[
\max_{x} | C_S(x)| \leq O_{|S|}(1) \cdot \sqrt{\frac{m}{n^{|S|/2} }  } \cdot \sqrt{\log n} \cdot n^{|S|} =  o(m)\,,	 
\]
subject to the density constraint from~\cref{lemma:spec-norm-bounds} of
\[ m \geq \Omega_{|S|}(n^{|S|/2} \cdot \log n )\,. \qedhere \]
\end{proof}
\snote{removed the proof for the basic odd case}

\paragraph{Kikuchi Extension for Smoothed Certification }
Before delving into the proof of our spectral norm bounds, we introduce a variant of tensor injective norm certification augmented by the Kikuchi hierarchy. This formulation yields a smooth tradeoff between runtime and clause density, and in particular enables refutation below the polynomial-time spectral threshold of $\tilde{\Omega}(n^{|S|/2})$. 

In this section, we focus on the even-order case, as the odd-order setting is well known to be technically cumbersome with the Cauchy Schwarz trick. We defer the treatment of the odd case to a later section, where we handle the non-Boolean domain in full generality. 

\begin{definition}[Kikuchi Matrix for Even $|S|$] \label{def:even-S-matrix}
For any even $|S|$, and $\ell \in \N$, and a given tensor $C_S$ of order-$|S|$, we define  $M_{S,\ell}$ to be the matrix of dimension $\binom{n}{\ell} \times \binom{n}{\ell}$ indexed by all subsets of size $\ell$ with entries $I, J\in \binom{n}{\ell}$ such that \[ 
M_{S,\ell}[I,J] = C_S[I\Delta J]
\]
for any $I\Delta J = |S|$, and $0$ otherwise.
\end{definition}

%

With that matrix defined, we can relate $C(x)$ to the corresponding quadratic form of the matrix.
Recall that for even $|S|$,  we have the polynomial \[C_S(x) = \sum_{\substack{ R \in [n]^k    } }\left( \1[ R \in E(\calH)] - p\right) x_{R_S}  \,,\]
hence  \jnote{is this still correct?}\snote{It's correct.}
\[ 
 \widetilde{x}^{\otimes \ell} M_{S,\ell} \cdot ( \widetilde{x}^{\otimes \ell})^T = \binom{n}{\ell - |S|/2  } \cdot \binom{|S|}{|S|/2} \cdot   C_S(x) 
 \,.\]
%

Finally, combining the above calculation and our norm bound in the next section completes the proof to our claim. We defer the complete calculation to ~\cref{sec:appendix-verification}  that verifies  $C_S(x) = o(m)$ assuming the conditions from \cref{thm:smooth-spectral-refutation}.
\jnote{im leaning towards sacking the section 5.5 - maybe keep thm 5.6 for even only}
\snote{I have updated Section 5.5.1 (Certifiable Concentration for Indicator Polynomials) to be consistent with the notations in Section 5.6 (Kikuchi for Asymmetric Tensors). The Generalized LP and Full Algorithm for general domain from Section 5.5 must be kept or moved to another place.}
\subsection{Summary of Algorithm for Boolean CSP} \jnote{i changed the name here} \label{sec:boolean-summary}
We formally present our final spectral refutation algorithm, specialized to a single predicate on Boolean domain, in this section. This formulation highlights how the algorithm ultimately reduces to certifying the injective norm of a tensor. We defer the proof of the required concentration bounds and the full proof, as they follow as a special case of our main theorem for the general (non-Boolean) domain. That said, we include a brief summary specialized to the Boolean domain to showcase the core ideas underlying our approach.

\jnote{this above para is new}

%
%
\begin{algorithm}[h]
\caption{Certification Algorithm for Fixed Boolean Predicates}
\label{alg:single-predicate-certification}
\begin{center}
\begin{minipage}{0.92\linewidth}

\textbf{Input:} A $k$-CSP instance $\calH$ on $n$ Boolean variables with single type of predicate $P$, and a parameter $\eps$.\\[4pt]
\textbf{Output:} Certified upper bound on the normalized objective value.\ \\
\textbf{Step 1 (Spectral Certification).} \\
\begin{itemize}
    \item For any subset restriction $S\subseteq [k]$, construct the matrix that flattens the polynomial $\Psi(x_S)$ defined from ~\cref{def:even-S-matrix} , and compute (or approximate) its spectral norm.
    \item If any spectral norm exceeds the desired threshold, output a trivial bound of $1$.
\end{itemize}

\textbf{Step 2 (Bias Enumeration).} \\
For each possible bias $\nu$ (there are $2n$ possibilities in the Boolean domain):
\begin{itemize}
    \item Solve the corresponding linear program to obtain a dominating polynomial $Q_\nu(\cdot)$ depending on $\nu$. Let $val_t(Q_\nu )$ be defined as \[ 
    \val_t(Q_\nu) = \sum_{|S|\leq t} c_S \cdot \nu^{|S|} 
    \]
    where we write $Q_\nu(x) = \sum_{S: |S|\leq t} c_S \cdot  x_S  $ for $x\in \{\pm 1\}^k$.
\end{itemize}

\textbf{Step 3 (Final Certification).} \\
Output
$
\max_{\nu} \val_t(Q_\nu) + \varepsilon
$
as the certified upper bound on the instance’s objective value.


\end{minipage}
\end{center}
\end{algorithm}


\SubexpRefutation*

Before turning to the general domain, we provide a roadmap of the proof to orient the reader.

Firstly, apply spectral certification to bound the deviations of any monomial-restriction $\Psi_S$ for $|S|\leq t$. This is simply captured by eigenvalue computations for $\min(O(2^k), O(k^t)) $-many   $O(n^{\ell } \times {n^\ell })$-size matrices (with each entry being a degree-$1$ polynomial of the hypergraph input). As a result, this shows that for any $\nu \in [-1,1]$,  we have a certifiable upper bound of \[ 
\max_{\substack{ x\in \{\pm 1\}^n \\ \langle \1_n ,x\rangle = \nu \cdot n } } \Psi_P(x) \leq \max_{\substack{ x\in \{\pm 1\}^n \\ \langle \1_n ,x\rangle = \nu \cdot n } } \Psi_{Q_\nu}(x)  =  \val_t(Q_\nu) +  \frac{\eps }{
\|Q_\nu \|_1  }  
\]
by~\cref{lem:certification-monomial} and our coefficient bound of $\|Q_\nu\|_1 \leq O_k(1)$, provided \[ 
m\geq \Omega_k( \frac{1}{\epsilon^2} \cdot n^{t/2} \cdot \frac{\ell }{\ell^{k/2}} \log n)
\]
(as well as the corresponding strengthened condition for odd $k$ when $\ell =O(\frac{n}{\log n} )$. Note that for any fixed bias $\nu$, \[ 
\sum_{S:|S|\leq t} c_S \cdot \nu^{|S|} = \val_t(Q_\nu)= \val_{P,t}(\nu) = \max_{\substack{\mu: \text{$t$-wise $\nu$-independent} }}  \E_{x\sim \mu}[P(x)] \,,
\]
where the first and last equality follow by definition, and the second equality by the strong duality of our LP.

Finally, by enumerating over all possible biases $\nu$,
we have completed the proof of our main theorem.

\subsection{Extension to Non-Boolean Domain} \label{sec:non-bool}
With the main ideas developed in the simpler setting of Boolean CSPs, we are now ready to state our main theorem in full generality.
\LargeAlphabetRefutation*

To prove the theorem, we highlight two key extensions:
\begin{enumerate}
	\item We generalize the LP from~\cref{eq:dual-lp-boolean}, which was formulated for a fixed marginal (bias), to the large-alphabet setting;
	\item We introduce a new notion of Kikuchi matrices tailored to the asymmetric polynomials that naturally arise in this setting, which we refer to as \emph{indicator polynomials};
	\item We replace the direct enumeration for the optimal bias from the Boolean argument by an $\epsilon$-net argument to avoid an exponential blow-up in $|D|$. \jnote{check me here}
\end{enumerate}

\paragraph{Generalized LP for Fixed Marginal (Bias)}

Given a $k$-ary predicate $P : D^k \to \{0,1\}$ on a finite domain $D$, and a distribution $\nu$ over $D$,
the following LP looks for a $t$-wise $\nu$-independent distribution $\mu$ (over $D^k$) maximizing the acceptance
probability for $P$:

\begin{mdframed}[frametitle =Primal LP for General Domain with Fixed Marginal $\nu$]	
\begin{align*}
\max \quad & \sum_{x\in D^k} P(x) \cdot  \mu(x) \\
\text{s.t.}\quad
& \sum_{x_W=b}\mu(x) = \nu_W(b)\qquad \forall W\in \binom{[k]}{t},\ b\in D^W \nonumber\\
& \mu(x)\ge 0 \qquad \forall x\in D^k. \nonumber
\end{align*}
\end{mdframed}

We do not explicitly enforce the total probability mass to be $1$, but that is implied by the constraints
$\pi_W(\mu)=\nu_W$. The following LP is dual to the above LP. The dual LP has variables $c(b)$ for subsets $W\in \binom{[k]}{t}$, $b\in D^W$.

\begin{mdframed}[frametitle =Dual LP for General Domain with Fixed Marginal $\nu$]	 
\begin{align*}
\numberthis
\label{eq:dual-lp-alphabet}
\min \quad & \sum_{W\in \binom{[k]}{t},\ b\in D^W} \nu_W(b)\,c_W(b) \\
\text{s.t.}\quad
& \sum_{W\in \binom{[k]}{t}} c_W(x_W) \ge P(x)\qquad \forall x\in D^k.
\end{align*}
\end{mdframed}

In other words, the dual LP looks for a degree-$t$ polynomial  for any $x\in D^k$ (with coefficients $c(b)$ for any $b\in D^t$) such that
\[
Q_\nu  (x)=\sum_{W\in \binom{[k]}{t},\ b\in D^W} c_W(b)\,\Iv{x_W=b} \geq P(x)
\]
point-wise dominates $P$ while minimizing the dual objective
\[
\val_t(Q_\nu) \coloneqq  \sum_{W\in \binom{[k]}{t},\ b\in D^W} \nu_W(b)\,c(b)={ \E_{x\sim \nu^t} Q_\nu(x) }\,.
\]
Here we use $\Iv{\cdot}$ to denotes the Iverson bracket notation, representing an indicator function.

\begin{remark}
	As we will be working with the polynomial of indicator functions in this section, we use two different symbols for two indicators as following,
	\begin{enumerate}
		\item $\Iv{x_W=b}$ is reserved for the indicator of assignment marginals;
		\item $\1[e\in E(\calH)]$ is reserved for the indicator of whether an edge $e$ is sampled in the instantiated random instance $\calH$.
	\end{enumerate}
\end{remark}

The above primal LP can be summarized as:
\[
\max_{\mu}\ \E_{\mu} P \quad \text{s.t. $t$-wise $\nu$-independent $\mu$.}
\]
The above dual LP can be summarized as:
\[
\min_{Q}\ \E_{\mu'} Q \quad \text{s.t. } Q\in \mathrm{span}\{\Iv{x_W=b}: W\in \binom{[k]}{t},\ b\in D^W\},\ Q\ge P,
\]
for any $t$-wise $\nu$-independent $\mu'$.

For every $\nu$, the primal LP is feasible ($\mu=\nu^k$ is valid) and has value at most $1$, so strong duality implies
primal and dual values always equal.
%


Given a homogeneous degree-$t$ polynomial
\[
Q(x)=\sum_{W\in \binom{[k]}{t},\ b\in D^W} c_W(b)\, x_{W,b},
\]
define its $\ell_1$-norm on coefficients
\[
\norm{Q}_1 \stackrel{\mathrm{def}}{=} \sum_{W\in \binom{[k]}{t},\ b\in D^W} \abs{c_W(b)}.
\]



\paragraph{Finding the Optimal Marginal for Large Alphabet}
Recall that for the boolean case, we have a nice characterization of solution bias as the Hamming weight of the assignment, and it is immediate that there are $2n$-bias parameter that needs to be check in total. This is less clear for the general domain. However, one can consider the following bias vector in replace of the total bias in the boolean case,
\begin{definition}[Marginal Vector]
	For a general domain $D$, and an assignment $x\in D^n$, we define the bias vector of an assignment as $b_x \in [0,1]^{|D|} $  \[
	b_x[a] \coloneqq (\# x_i = a)/n
	\,.\]
	In other words, this is simply the normalized frequency of $a\in D$ in assignment $x\in D^n$. \end{definition}
Observe that this definition extends immediately to a distribution of solutions, and to search over all bias in the general domain, it suffices for us to take an $\epsilon$-net over the bias grid of $\{0,1\}^{D}$.

\paragraph{Final Algorithm for General Domain}
We are now ready to describe the final algorithm for extension to general domain by combining with an $\epsilon$-net.  Again, it suffices for us to consider the single predicate case as the reduction from boolean applies verbatim to the multiple predicate case.

\begin{enumerate}
	\item \textbf{Spectral Certification:} Apply Spectral Certification of the resulting degree-$t$ polynomial via the corresponding matrix for any basis polynomial now given by any subset-restriction \jnote{ok to call it subset-restriction?}  $w$ for $w\subseteq [k]^{\leq t}$. This runs in time $O(n^{O(\ell)} \cdot k^t) =O_{k, |D|} (n^{O(\ell)})$. Output "Failure" if any matrix spectral norm bound fails, i.e., exceeds anticipated bound from the analysis. 
	\item   \textbf{Marginal Enumeration}: Take $\calS$ to be an $\eps$-net  over the space of marginal vectors $[0,1]^{D}$, for each bias-point $\nu \in \calS$ in the net, apply the dual LP to find the best dominating polynomial for each fixed marginal, and hence the certifiable bound for $\val_t(Q_\nu)$. Finally, output the \[  \max_{\nu\in \calS}  \val_t(Q_\nu) + \eps  \] as an upper bound for the instance's value.
	 \end{enumerate}
We defer the presentation of the full algorithm to appendix (see~\cref{alg:multiple-predicate-general-domain}) for completeness.

\paragraph{Proof to Main Theorem for General Domain}

\begin{proof}[Proof of \cref{thm: alphabet-ref}] \label{proof:main-spectral}
This argument largely mirrors the Boolean case showcase earlier, with three modifications: 
(i) we incorporate the matrix norm bounds for the odd case; 
(ii) we invoke the specialized partial certification lemma appropriate for the modified matrix construction over general domains; and 
(iii) we include an $\epsilon$-net argument, with the error budget split into three parts.

\paragraph{Reduction to heavy predicates.}
For each predicate $P_\sigma$, let $\Psi_\sigma$ denote the subinstance consisting of all constraints with predicate $P_\sigma$. Then
\[
\Val_\Psi(x)
=
\E_{\sigma \sim E(\Psi)}\!\big[\Val_{\Psi_\sigma}(x)\big].
\]

Set
\[
\delta \coloneqq \frac{\epsilon}{6 \cdot 2^{|D|^k}}.
\]
We call a predicate $P_\sigma$ \emph{$\delta$-heavy} if $\rho(\sigma)\ge \delta$, and \emph{light} otherwise.

By concentration, with probability $1-o_n(1)$ the empirical frequency $\tilde\rho(\sigma)$ satisfies
\[
|\tilde\rho(\sigma)-\rho(\sigma)| \le o(1)
\]
for every $\sigma$. Hence the total empirical mass of light predicates is at most
\[
\sum_{\sigma:\,\rho(\sigma)<\delta}\tilde\rho(\sigma)
\le
2^{|D|^k}\delta + o(1)
=
\epsilon/6 + o(1)
\le
\epsilon/3
\]
for all sufficiently large $n$.

\paragraph{Certification for each heavy predicate.}
We now apply the single-predicate refutation to each heavy predicate. For each $\delta$-heavy predicate $P_\sigma$, the subinstance $\Psi_\sigma$ contains
\[
m_\sigma = \tilde\rho(\sigma)\, m \ge (\delta-o(1))m
\]
constraints. Therefore, provided $m$ satisfies the density requirement in \cref{lemma:spec-norm-bounds}  with a sufficiently large constant slack depending on $\delta$, each heavy sub-instance also satisfies the required density, which is 
 $m$ satisfies the following density constraint: for some large enough constant $C = C(\rho, \delta)$
		 \[
    m \geq \begin{cases}
      \displaystyle C n^{t/2} \frac{\ell}{\ell^{t/2}} \log n & \text{if $t$ is even} \\
      \displaystyle C n^{t/2} \frac{\ell}{\ell^{t/2}} \sqrt{\log n} & \text{if $t$ is odd}
    \end{cases},
  \]
  for any $\ell\in \N$ s.t. $t/2 \leq \ell \leq \tilde{O}(n)$.

Fix a heavy predicate and a bias $\nu \in \calS$. Applying spectral certification over all subsets $S\subseteq [k]$ with $|S|\le t$ and all assignments $b\in D^S$, we obtain that for every $\nu$-biased assignment $x\in D^n$,
\[
\Psi_{Q_\nu}(x) \le \val_t(Q_\nu) + \frac{\epsilon}{3}.
\]
Indeed,
\[
\Psi_{Q_\nu}(x)-\val_t(Q_\nu)
=
\sum_{\substack{S\subseteq [k]\\ |S|\le t}} \sum_{b\in D^S} c_S(b)
\left(
\Pr_{\sigma\in\Psi}[x_{\sigma(S)}=b] - \Pr_{y\sim \nu^S}[y=b]
\right),
\]
whose magnitude is at most
\[
\sum_{S,b} |c_S(b)|
\left|
\Pr_{\sigma\in\Psi}[x_{\sigma(S)}=b] - \Pr_{y\sim \nu^S}[y=b]
\right|.
\]
The desired bound now follows from \cref{lem:alphabet-partial-certification}  \jnote{re-route to the new assymetric lemma} and \cref{lem:lp-bounded-dual-norm}, after setting the target slack there to be $\epsilon/3$.

\paragraph{The $\epsilon$-net step.}
Choose the net $\calS$ finely enough so that
\[
\max_{\nu}\val_t(Q_\nu)
\le
\max_{\nu\in\calS}\val_t(Q_\nu)+\frac{\epsilon}{3}.
\]
By LP duality, the left-hand side is exactly the target benchmark
\[
\max_{\mu:\,\text{$t$-wise independent}} \E_{x\sim\mu}[P(x)].
\]

\paragraph{Conclusion.}
Combining the above bounds, the contribution of light predicates is at most $\epsilon/3$, the certification error on heavy predicates is at most $\epsilon/3$, and the discretization over the bias contributes another $\epsilon/3$. Therefore, for every assignment $x\in D^n$,
\[
\Val_\Psi(x)
\le
\max_{\mu:\,\text{$t$-wise independent}} \E_{x\sim\mu}[P(x)]
+
\epsilon
+
o(1),
\]
which completes the proof.
\end{proof}

\subsubsection{Certifiable Concentration for Indicator Polynomials}
\snote{I have updated the notations in this subsection to be consistent with Section 5.6 (Kikuchi Matrices for Indicators)}

Throughout this section, we will write $S\subseteq [k]$ as a subset corresponding to the indicator-basis that arises from the dual of $Q_\nu$, and write $\beta:S \rightarrow D $ as a \emph{local} assignment of variables in $S$. 
Also fixes a bijection  $\phi$ between $[|S|]$ and $S$, giving an ordering of variables in $S$.

Given an ordered-tuple $T \in V^{|S|}$, we will write ``$x_T = \beta$'' as a shorthand for ``$x_{T(i)} = \beta(\phi(i))$ for every $i \in [|S|]$''.

\snote{I changed the naming below: $C_{S,\beta}$ is the deviation polynomial, not a deviation tensor.
Later we will define the deviation tensor $C_S$.}
\begin{definition}[Deviation polynomial for $(S,\beta)$] \label{def:deviation-poly}
Let $\calH$ be a directed $k$-uniform hypergraph.
  For any $S\subseteq [k]$ and $\beta \in D^S$, define the \emph{deviation polynomial} $C_{S,\beta}$ (in indeterminates
  $\set{x_{v,\alpha}}_{v \in V, \alpha \in D}$ that represent the indicators $\Iv{x_v = \alpha}$)
\[
C_{S,\beta} (x) \coloneqq \Pr_{\sigma\in E(\calH)}\big[x_{\sigma(S) } = \beta\big]
\;-\;
\Pr_{y\sim \nu^S}\big[y=\beta\big].
\]

\end{definition}

Equivalently, writing $E(\calH)$ for the directed hyperedge set of $\calH$, for any $x\in D^n$ that is $\nu$-biased,  we may express $C_{S,\beta}(x)$ as

\[
C_{S,\beta}(x)
=
\frac{1}{|E(\calH)|}\sum_{e\in E(\calH)} 
\Iv{x_{e(S)} = \beta }
-
\frac{1}{n^{|S|}}\sum_{\gamma  \in [n]^{|S|}}
\Iv{x_{\gamma}=\beta }.
\]

We can first sum over ordered-tuples $\gamma$, and then sum over clauses $R$ whose $S$-restriction yield $\gamma$ (writing $R_S = \gamma$), obtaining
\[
C_{S,\beta}(x)
=
\frac{1}{|E(\calH)|}
\sum_{\gamma\in [n]^{|S|}}
\left(
\sum_{\substack{R\in [n]^k \\ R_S=\gamma}}
\big(\mathbf{1}[R\in E(\calH)] - p\big)
\right)
\Iv{x_{S }=\beta(S)} +o_n(1),
\]
where $p = m/n^k$ is the expected constraint density, and we apply the concentration of $\frac{|E(\calH)|}{m} = 1\pm o_n(1) $ from~\cref{claim:edge-concentration}. 

Recall that we define our random variables for constraint on order-tuple $R \in [n]^k $ as 
\[
G_R \coloneqq \mathbf{1}[R\in E(\calH)] - p,
\]
we may equivalently write
\[
C_{S,\beta}(x)
=
\frac{1}{|E(\calH)|}
\sum_{\gamma\in [n]^{|S|} }
\sum_{\substack{R\in  [n]^k \\ R_S=\gamma}}
G_R \cdot \Iv{x_{\gamma}=\beta(S) } + o_n(1).
\]
\begin{remark}
	It is important to notice that $\gamma$ in the above sum is an ordered-tuple as opposed to a subset. This is the reason our subsequent matrix is indexed by ordered-tuples as opposed to by subsets.
\end{remark}

For convenience, we also define the tensor $C_S \in \R^{V^{|S|}}$ with entries \[ 
C_S[\gamma] \coloneqq \frac{1}{|E(\calH)|}  \sum_{R\in [n]^k, R_S=\gamma } G_R .
\]
This  allows us to write the deviation polynomial as a polynomial of the indicators. \[ 
C_{S,\beta}(x) = \sum_{\gamma\in [n]^{|S|}} C_S[\gamma]  \cdot \Iv{x_{\gamma} = \beta }\,.
\]

%



\begin{remark}
  We will use $C_{S,\beta}(x)$ to refer to a polynomial of $x$, and equivalently, a polynomial of indicators of $x$.
  We will use $C_S[\cdot]$ to refer to a tensor entry defining the coefficient of a specific indicator monomial in $x$.
\end{remark}
We now introduce our lemma for certification of  polynomials of indicators.
\begin{restatable}{lemma}{alphabetpartialcertification}\label{lem:alphabet-partial-certification}
For any subset $S \subseteq [k]$, any $\ell = O(n)$, and any $\eps>0$, there is an
algorithm running in time $n^{O(\ell)}$ that certifies
\[ \max_{x\in D^n} |C_{S,\beta}(x)| \le \eps + o_n(1), \]
provided $m$ satisfies the density constraint in \cref{thm: alphabet-ref}. 
\end{restatable}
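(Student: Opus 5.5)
We will reduce the certification of $C_{S,\beta}(x)=\sum_{\gamma\in[n]^{|S|}} C_S[\gamma]\,\Iv{x_\gamma=\beta}$ to a spectral norm bound on a Kikuchi-type matrix built from the asymmetric tensor $C_S$. This is the non-Boolean analogue of \cref{claim:smooth-certification}. Write $s=|S|$. For $v\in V$ set $y_v\coloneqq x_{v,\beta(\phi(i))}$ in the $i$-th slot, so that $\Iv{x_\gamma=\beta}=\prod_{i} x_{\gamma(i),\beta(\phi(i))}$. Then $C_{S,\beta}$ is a multilinear form in $s$ (generally distinct) $0/1$ vectors $y^{(1)},\dots,y^{(s)}\in\{0,1\}^n$, and its coefficient tensor $C_S$ is indexed by ordered tuples. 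Since these vectors are Boolean indicators, $\|y^{(i)}\|_2^2\le n$. It therefore suffices to certify the injective-type bound
\[
\Big|\sum_\gamma C_S[\gamma]\prod_i y^{(i)}_{\gamma(i)}\Big|\le \eps
\]
over all such tuples. Here $C_S$ already carries the $1/|E(\calH)|$ normalization, and \cref{claim:edge-concentration} absorbs the replacement of $|E(\calH)|$ by $m$ into the $o_n(1)$ term.

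\textbf{Even $s$.} Split the ordered coordinates $[s]$ into a left half $L$ and a right half $R$. Define the asymmetric Kikuchi matrix $M_{S,\ell}$ with rows and columns indexed by ordered $\ell$-tuples, that is, sets of $\ell$ (vertex, slot) pairs. An entry $(I,J)$ equals $C_S[\gamma]$ when $I$ and $J$ agree on $\ell-s/2$ common pairs and their symmetric difference assembles exactly into the tuple $\gamma$, with $I$ contributing the $L$-slots and $J$ the $R$-slots. Take the rank-one test vectors $u[I]=\prod_{(v,i)\in I} y^{(i)}_v$. Then $u^T M_{S,\ell} u'$ equals $N_\ell\cdot C_{S,\beta}(x)$ for an explicit combinatorial count $N_\ell\approx\binom{n}{\ell-s/2}$. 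This gives
\[
|C_{S,\beta}(x)|\le \|M_{S,\ell}\|\,\|u\|\,\|u'\|/N_\ell .
\]
We bound $\|u\|^2\le (sn)^{\ell}$ trivially and plug in the norm bound $\|M_{S,\ell}\|\lesssim \frac1m\sqrt{p\,n^{k-s}\,\ell^{s/2}\,n^{s/2}\log n}$ from \cref{lemma:spec-norm-bounds}, in its asymmetric form. The ratio is then at most $\eps$ exactly under the density condition $m\ge C\eps^{-2}n^{s/2}\ell^{1-s/2}\log n$ of \cref{thm: alphabet-ref}. The algorithm computes the top singular value of $M_{S,\ell}$ in time $n^{O(\ell)}$ and declares failure if it exceeds the threshold.

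\textbf{Odd $s$.} Here we first apply Cauchy--Schwarz over the first slot. This gives
\[
C_{S,\beta}(x)^2\le n\sum_{v}\Big(\sum_{\gamma:\gamma(1)=v} C_S[\gamma]\prod_{i\ge2}y^{(i)}_{\gamma(i)}\Big)^2 ,
\]
which is a form of even order $2(s-1)$ in the remaining vectors, with coefficients $\sum_v C_S[v\gamma']C_S[v\gamma'']$. We separate the diagonal part $\gamma'=\gamma''$ and bound it by a row-sum/degree concentration estimate (Chernoff). The off-diagonal part we flatten into the odd-order Kikuchi matrix of level $\ell\ge s-1$, as in \cref{lemma:spec-norm-bounds}, and repeat the even-case quadratic form comparison. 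This yields the $\sqrt{\log n}$ threshold.

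The main obstacle is the spectral norm of these asymmetric, ordered-tuple Kikuchi matrices. For the odd case especially, the entries are products of centered Bernoullis $G_R$ that are dependent through shared hyperedges. I would not reprove this here and would instead invoke \cref{lemma:spec-norm-bounds} (proved in \cref{sec:norm-bound} via the trace method). The part I expect to need care is the combinatorial bookkeeping: the counting factor $N_\ell$ in the asymmetric setting must match $\|u\|\|u'\|$ up to $O_{k}(1)^\ell$-free constants, or the $\ell$-dependence of the threshold is lost. I would handle this by restricting rows to tuples with at most $\ell/s$ entries per slot class, which loses only a constant factor in $N_\ell$.
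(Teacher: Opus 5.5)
Your overall architecture matches the paper's. You flatten $C_S$ into a Kikuchi matrix, compare a quadratic form with $C_{S,\beta}(x)$, invoke \cref{lemma:spec-norm-bounds}, and for odd $|S|$ apply Cauchy--Schwarz and treat diagonal and off-diagonal terms separately. However, the step you defer as ``bookkeeping'' is where the argument actually fails.

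With rows indexed by $\ell$-sets of (vertex, slot) pairs and $u[I]=\prod_{(v,i)\in I}\Iv{x_v=\beta(\phi(i))}$, every pair in the common part $I\cap J$ is also forced to satisfy $x_v=\beta(\phi(i))$. Write $A(x)\coloneqq\sum_{i}|\{v : x_v=\beta(\phi(i))\}|$. Then $\|u\|^2=\binom{A(x)}{\ell}$ and $u^\top M u=\binom{A(x)-s}{\ell-s/2}\,C_{S,\beta}(x)$, so there is no $x$-independent $N_\ell\approx\binom{n}{\ell-s/2}$. The relevant ratio is $\binom{A}{\ell}/\binom{A-s}{\ell-s/2}=A^{\underline s}/(\ell^{\underline{s/2}}(A-\ell)^{\underline{s/2}})$. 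It is $O_s((n/\ell)^{s/2})$ when $A(x)\ge 2\ell$, but it blows up as $A(x)$ approaches $\ell$. Once $A(x)<\ell+s/2$ the multiplier is $0$, so the identity says nothing at all about assignments that rarely use the values of $\beta$. Capping the number of entries per slot class does not remove this $x$-dependence. Two smaller problems: pairing the trivial bound $\|u\|^2\le(sn)^\ell$ with $N_\ell\approx\binom{n}{\ell-s/2}$ would by itself lose a factor of order $s^\ell\ell!$. Also, the norm bound you quote omits the $\sqrt{\ell}$ in $\sqrt{\ell\log n}$, which is what produces the $\ell^{1-s/2}$ in the threshold.

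The missing idea is to put the domain value into the index. The paper indexes rows and columns by $\ell$-subsets of $V\times D\times S$ and uses $x^{\circledcirc\ell}[I]=\prod_{(v,\alpha,j)\in I}\Iv{x_v=\alpha}$. It moves the condition $D(I\Delta J)=\beta$ into the matrix entry. Because $\sum_{\alpha\in D}\Iv{x_v=\alpha}=1$ for every $v$, this plays the role of $x_v^2=1$ in the Boolean Kikuchi argument: the common part may carry whatever labels $x$ dictates. Consequently $\|x^{\circledcirc\ell}\|^2=\binom{n|S|}{\ell}$ and the multiplier $\binom{(n-1)|S|}{\ell-|S|/2}\binom{|S|}{|S|/2}$ are independent of $x$ (\cref{lem:large-alphabet-kikuchi-even}). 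The same holds over $V\times D\times S'\times[2]$ for the odd cross-term matrix (\cref{lem:nonbool-odd}). These are also the matrices that \cref{lemma:spec-norm-bounds} actually controls.

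Your own matrix could alternatively be rescued by a case split. Use it when $A(x)\ge 2\ell$. When $A(x)<2\ell$, bound $|C_{S,\beta}(x)|\lesssim k\ell/n+(2\ell/n)^{s}$ directly via the maximum vertex degree, which is at most $\eps$ for $\ell\le\delta n$. The proposal as written contains neither fix.
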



We defer the complete proof of this lemma to the end of this section , while in the subsequent section we establish the certification question continues to reduce to studying spectral norm bound of random matrices. From this point onwards, we consider an arbitrarily fixed marginal $\beta(S)\in D^S$. Our analysis will only depend on the underlying subset-restriction $S$ while independent of the particular marginal restriction $\beta(S)$. Hence, we will suppress the dependence of $b$ throughout the section unless necessary.

We now give a self-contained analysis for the basic-level and even $S$ to showcase the use of polynomial of indicators. We will then generalize it to higher-level via a new Kikuchi matrix specialized to our setting, and then extend that to odd $S$.
 
\paragraph{Certification for (basic) even-$S$ }

Given an order-$|S| = r$ tensor $C_S$ over index set $[n]$ where $r$ is even, consider its flatten matrix $M_w\coloneqq M_{w, \ell = r/2}$ that has rows and columns indexed by ordered-tuples $[n]^{r/2}$,
so that $M_{S}[\alpha, \beta]\coloneqq C_{S}[\alpha_1, \al_2, \dots, \al_{r/2}, \beta_1,\dots, \beta_{r/2}]$ for row $\alpha\in [n]^{r/2}$ and column $\beta\in [n]^{r/2}$.

 For any assignment $x \in D^n$, consider vectors $y$ and $z$ (indexed by $\gamma\in [n]^{r/2}$) to be the indicators of whether $a_\gamma$ agrees with the first and second halves of $b$:
\[
y[\gamma]  \coloneqq \Iv{x_\gamma=b_{1,...,r/2}} = 
\qquad\text{and}\qquad
z[\gamma] \coloneqq \Iv{x_\gamma=b_{r/2+1, r}}.
\]
writing $b_{1,...,r/2}$ as the restriction to the first-$r/2$ variables, and similarly $b_{r/2+1, r}$ for the second half.
Then for every $\alpha,\beta\in [n]^{r/2}$,  $y[\alpha] z[\beta]=\Iv{x_{\alpha\beta}=b}  $. Thus
\[
C_S(x) 
= \sum_{\gamma  \in [n]^{r}  } M_S[\al, \beta] \,\Iv{x_\gamma=b}
= \sum_{\alpha,\beta\in [n]^{r/2}} M_w[\alpha, \beta] \cdot  y_\alpha \cdot  z_\beta
= y^\top M z \leq n^{r/2} \cdot \|M\|_{sp}
\]
which has magnitude at most $\norm{y}\,\norm{M}\,\norm{z}$ by Cauchy--Schwarz and definition of spectral/operator norm. Also,
$\norm{y}\le n^{t/4}$ and $\norm{z}\le n^{t/4}$.

Plugging the norm bound, we recover a non-trivial refutation bound when \[
n^{r/2} \cdot \|M_{w} \|_{sp} \leq m \cdot \eps \,,
 \]
 i.e., when \[ 
 m\geq \Omega_{k,|D|} (\frac{1}{\eps^2}\cdot  n^{r/2} \log n  )\,.
 \]

\paragraph{Certification for Odd $S$}

We now demonstrate the Cauchy--Schwarz trick for odd tensor:

\begin{align*}
  C_\beta(x)^2 &= \Paren{\sum_{T \in [n]^{|S|}} C[T] \Iv{x_T = \beta}}^2 \\
  &= \Paren{\sum_{t \in [n]} \Iv{x_t = \beta(\phi(\abs{S}))} \sum_{\alpha \in [n]^{|S|-1}} C[\alpha,t] \Iv{x_\alpha =
  \beta|_{S'} }}^2 \\
  &\leq \Paren{\sum_{t \in [n]} \Iv{x_t = \beta(\phi(\abs{S}))}^2 }
  \Paren{\sum_{t \in [n]} \Paren{\sum_{\alpha \in [n]^{|S|-1}} C[\alpha,t] \Iv{x_\alpha = \beta|_{S'}} }^2 } \,.
  \numberthis \label{eq:cauchy-schwarz-trick}
\end{align*}
The first factor is at most $n$.
The second factor equals
\begin{equation} \label{eq:chauchy-schwarz-trick2}
  C^*_\beta(x) \coloneqq \sum_{t \in [n]} \sum_{\alpha,\gamma \in [n]^{|S|-1}} C[\alpha,t]C[\gamma,t]
  \Iv{x_\alpha = \beta|_{S'}}
  \Iv{x_\gamma = \beta|_{S'}} \,.
\end{equation}
\cref{eq:chauchy-schwarz-trick2} is the sum of the squared terms
\[ C^{sq}_\beta(x) \coloneqq \sum_{t \in [n]} \sum_{\alpha \in [n]^{|S|-1}}
  C[\alpha,t]^2 \Iv{x_\alpha = \beta|_{S'}}^2 \]
and the cross terms
\[ \tilde C_\beta(x) \coloneqq \sum_{t \in [n]} \sum_{\substack{\alpha,\gamma \in [n]^{|S|-1} \\ \alpha \neq \gamma}}
C[\alpha,t] C[\gamma,t] \Iv{x_\alpha = \beta|_{S'}} \Iv{x_\gamma = \beta|_{S'}} \,.
\]

\subsection{Kikuchi Matrices for Indicators}

In prior works, Kikuchi matrices are usually applied to problems with $\{\pm 1\}$ domain and to symmetric tensors. To generalize it to a larger domain, we instead show that it can be applied to the $\{0,1\}$ domain (representation of an arbitrary domain) and to asymmetric tensors.
To distinguish from the usual notion of Kikuchi matrices that we study in the previous section, we will also call the Kikuchi matrices in this section the \emph{Kikuchi matrices for indicators}. We start by some definitions.

\subsubsection{Even Tensor}

Fix a subset $S \subseteq [k]$ of even size.
Write $\beta:S \rightarrow D $ as a \emph{local} assignment of variables in $S$.

\begin{examples}[Example of $S$-partite indices] For $v_a,v_b,v_c,v_d\in [n]$ and alphabets $\al_1, \al_2, \al_3$, and $k=4$, 
	$I=\{(v_a, \al_1, 1 ),(v_b,\al_2, 2), (v_c,\al_1, 2), (v_d, \al_2, 3)\}$ is an index for $\ell=4$. 
\end{examples}

\begin{definition}[Even Kikuchi index set] \label{def:even-kikuchi-index}
  The rows and columns of the $(S, \ell)$-Kikuchi matrix are indexed by ``$S$-partite level-$\ell$ indicators'' $I,J \in
  \binom{V \times D \times S}{\ell}$.
  In other words, an $S$-partite level-$\ell$ indicator is an $\ell$-size subset $I$ of tuples
  $(v,\alpha,j) \in V \times D \times S$.
\end{definition}

For convenience, we extend the operation of symmetric difference to these indices that will come in handy when we would like to specify a corresponding constraint. For two indices $I,J$ of level-$\ell$ defined above, their symmetric difference $I\Delta J$ is now a subset of tuples of the form $(v,\alpha,j) \in [n]\times D \times S $.

\begin{examples}[Example of symmetric difference of $S$-partite indices]  Consider $k=4, S=[4]$.
	For $v_a,v_b,v_c,v_d, v_e,v_f\in [n]$, consider indices
  \[ I=\{(v_a, D_1, 1),(v_b, D_2, 2), (v_c,D_1, 1), (v_d,D_2, 2)\}\]  and \[ J =\{(v_e,D_1, 3),(v_f,D_2, 4), (v_c,D_1, 1), (v_d,D_2, 2) \}\,.\] Then $I\Delta J = \{(a,1), (b,2), (e,3), (f,4) \}$ is well-behaved in the sense that it encodes the vertex tuple $(a,b,e,f)$ (defined below), and $I\Delta J$ corresponds to the tensor entry $C_S[a,b,e,f]$.
\end{examples}

The above example motivates the following definition that characterizes the pairs of indices that allow us to deduce an entry in the deviation tensor $C_S$.

\begin{definition}[Well-behaved index pair]
  A pair $I,J \in \binom{V \times D \times S}{\ell}$ is well-behaved for $S$ if:
  \begin{enumerate}
    \item $|I \Delta J| = |S|$; and
    \item Restricted to the $S$-labels of $I \Delta J$, each element of $S$ appears exactly once.
  \end{enumerate}
\end{definition}

\begin{definition} \label{def:V-D-pair}
  Fix an arbitrary bijection $\phi$ between $[|S|]$ and $S$.
  Consider any pair $I,J$ of indices well-behaved for $S$.
  Define $V(I\Delta J) \in V^{|S|}$ to be the $|S|$-tuple of vertices, so that its $i$-th entry equals the unique
  vertex $v$ such that $(v,\alpha,\phi(i)) \in I\Delta J$ for some $\alpha \in D$.
  Define $D(I\Delta J) \in D^S$ to be the function from $S$ to domain values, so that its $j$-th entry (for $j \in S$)
  equals the unique $\alpha$ such that $(v,\alpha,j) \in I\Delta J$ for some $v \in V$.
\end{definition}

Given a tensor $C \in \R^{V^{|S|}}$, local assignment $\beta \in D^S$ and $\ell \in \N$, define the polynomial
(in indeterminates $\set{x_{v,\alpha}}_{v \in V, \alpha \in D}$ that are intended to represent the indicator functions
$\Iv{x_v = \alpha}$)
\begin{equation}
  C_\beta (x) \coloneqq \sum_{\substack{T \in V^{|S|}}} C[T] \Iv{x_T = \beta} \,,
\end{equation}
  where ``$x_T = \beta$'' is a shorthand for
\[ x_{T(i)} = \beta(\phi(i)) \quad \text{and} \quad \text{ for every } i \in [|S|]. \]

Given a tensor $C \in \R^{V^{|S|}}$ and local assignment $\beta \in D^S$, the $(S,\ell)$-Kikuchi matrix $M_\beta$ has
entries
\[ M_\beta[I,J] \coloneqq C[V(I\Delta J)] \Iv{D(I\Delta J) = \beta} \]
if $I,J$ are well-behaved, and $M[I,J] \coloneqq 0$ otherwise.

Given $I \subseteq V \times D \times S$, write ``$x_I = D(I)$'' to mean ``$x_v = \alpha$ for every
$(v,\alpha,j) \in I$''.

\begin{definition}[$x^{\circledcirc \ell}$ Tensor Lifts for Indicator Vectors]\label{def:tensor-lift-vector}
  For any assignment $x\in D^V$, let $x^{\circledcirc \ell}$ denote a vector indexed by $S$-partite indices with
  entries
  \[
    x^{\circledcirc \ell }[I] \coloneqq  \Iv{x_I = D(I)} = \prod_{(v,\alpha,j) \in I} \Iv{x_v = \alpha}  \,. \]
  for any for any $S$-partite indices $I$.
\end{definition}

\paragraph{Why Norm-Bound Again?}
With the above definition, we now show the certification question of the deviation polynomial again boils down to the understanding of the associated random matrices.

\begin{lemma}[Even Kikuchi quadratic form] \label{lem:large-alphabet-kikuchi-even}
For $x \in D^V, \beta \in D^L$,
\[
  (x^{\odot \ell})^\top M_\beta\,x^{\odot \ell} = \binom{(|V|-1)|S|}{\ell-|S|/2} \cdot \binom{|S|}{|S|/2}
  \cdot C_\beta(x) \,.
\]
\end{lemma}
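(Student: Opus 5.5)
The plan is to expand the quadratic form entrywise and regroup the sum by the tensor entry that each nonzero matrix entry picks out, i.e.\ a double-counting argument. By the definition of $M_\beta$ and of $x^{\circledcirc \ell}$ (\cref{def:tensor-lift-vector}),
\[
(x^{\circledcirc \ell})^\top M_\beta\, x^{\circledcirc \ell} = \sum_{\substack{I,J \text{ well-behaved}}} C[V(I\Delta J)]\,\Iv{D(I\Delta J)=\beta}\,\Iv{x_I = D(I)}\,\Iv{x_J = D(J)}.
\]
Since $x \in D^V$ is an actual assignment, the indicators are $0/1$-valued and idempotent. Hence the product of the last two indicators equals $\Iv{x_{I\cap J} = D(I\cap J)}\cdot \Iv{x_{I\Delta J} = D(I\Delta J)}$.

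First I will set up a bijection. For $T\in V^{|S|}$, define the set of triples $\Delta_T \coloneqq \set{(T(i),\beta(\phi(i)),\phi(i))}_{i\in[|S|]}$. Its elements are distinct because their $S$-labels are distinct, even if $T$ repeats a vertex. A pair $(I,J)$ contributes a nonzero term only if it is well-behaved and $D(I\Delta J)=\beta$. By \cref{def:V-D-pair}, this happens exactly when $I\Delta J = \Delta_T$ for the unique $T = V(I\Delta J)$. On such pairs, the factor $\Iv{x_{I\Delta J}=D(I\Delta J)}$ equals $\Iv{x_T=\beta}$. So the sum becomes $\sum_T C[T]\,\Iv{x_T=\beta}\cdot N_T(x)$, where
\[
N_T(x) \coloneqq \sum_{(I,J):\, I\Delta J=\Delta_T,\ |I|=|J|=\ell} \Iv{x_{I\cap J}=D(I\cap J)}.
\]
It remains to show that $N_T(x)=\binom{(|V|-1)|S|}{\ell-|S|/2}\binom{|S|}{|S|/2}$ whenever $x_T=\beta$; when $x_T\neq\beta$ the term vanishes anyway.

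Next I will count $N_T(x)$. Pairs with $I\Delta J=\Delta_T$ and $|I|=|J|=\ell$ are parametrized by two choices. The first is a split of $\Delta_T$ into $I\setminus J$ and $J\setminus I$, which must each have size $|S|/2$; this gives $\binom{|S|}{|S|/2}$ options. The second is a common part $K=I\cap J$ of size $\ell-|S|/2$, disjoint from $\Delta_T$. The weight $\Iv{x_K = D(K)}$ kills every $K$ containing an inconsistent triple $(v,\alpha,j)$ with $\alpha\ne x_v$. The consistent triples are exactly $\set{(v,x_v,j)}_{v\in V, j\in S}$, of which there are $|V||S|$. Since $x_T=\beta$, all $|S|$ triples of $\Delta_T$ are among them, leaving $(|V|-1)|S|$ admissible triples for $K$. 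Thus $N_T(x)=\binom{|S|}{|S|/2}\binom{(|V|-1)|S|}{\ell-|S|/2}$, independent of $T$ and $x$, and the identity follows.

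The step needing the most care is this counting step. I must check three things there. First, that disjointness of $K$ from $\Delta_T$ is forced: it holds because $I\cap J$ and $I\Delta J$ are disjoint by definition. Second, that the well-behaved condition imposes no further restriction on $K$, since it only constrains $I\Delta J$. Third, that inconsistent triples contribute zero, which is exactly what makes the count independent of $x$. I would also remark that $M_\beta$ need not be symmetric, but the identity is a bilinear evaluation with the same vector on both sides, so this plays no role.
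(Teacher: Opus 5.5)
Your proof is correct and is essentially the paper's argument. You expand the bilinear form over well-behaved pairs, regroup by $T=V(I\Delta J)$, and count $\binom{|S|}{|S|/2}$ splits of $I\Delta J$ times $\binom{(|V|-1)|S|}{\ell-|S|/2}$ choices of an $x$-consistent common part $I\cap J$; your explicit set $\Delta_T$ states this regrouping more precisely than the paper does, including tuples $T$ with repeated vertices.

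One small aside: $M_\beta$ is in fact symmetric, since its entries and the well-behaved condition depend only on $I\Delta J=J\Delta I$. As you note, this plays no role in the proof.
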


\begin{proof}
  Expand the LHS as a sum over well-behaved pairs $I,J \in \binom{V \times D \times L}{\ell}$ of
  \begin{align*}
    & C[V(I\Delta J)] \Iv{D(I\Delta J) = \beta} \Iv{x_{I \cup J} = D(I \cup J)} \\
    = \;& C[V(I\Delta J)] \Iv{x_{V(I\Delta J)} = \beta} \Iv{x_{I \cup J} = D(I \cup J)} \,,
  \end{align*}
  Denoting $T = V(I\Delta J)$, the previous sum is equivalent to the sum over $T \in V^{|S|}$ of
  \[ C[T] \Iv{x_{V(I\Delta J)} = \beta} \cdot N_x(T) \]
  where $N_x(T)$ is the number of well-behaved $I,J$ such that $V(I\Delta J) = T$ and $x_{I \cup J} = D(I \cup J)$.

  For each $T \in V^{|S|}$, we count the number of such $I,J$.
  \begin{enumerate}
		\item Choices for $D$: Each $D$ label in $I\cup J$ is fixed by the assignment $x$, and there is a unique choice.
		\item Choices for $V \times S$ in $I \cap J$:
      Out of the possible labels in $V \times S$, exactly those labels already chosen in $I \Delta J$ are forbidden.
      That's because if $I\Delta J$ has some tuple $(v,\alpha,j)$, then $I \cap J$ cannot have any tuple
      $(v,\alpha',j)$ sharing the same $V \times S$ label.
      We pick $\ell - |S|/2$ such $V \times S$ labels for $I \cap J$, hence $\binom{|V||S| - |S|}{\ell-|S|/2} =
      \binom{(|V|-1)|S|}{\ell-|S|/2}$ choices.
    \item Choices for $V \times S$ in $I \Delta J$: To split the labels in $I \Delta J$ evenly among $I$ and $J$,
      there are $\binom{|S|}{|S|/2}$ choices.
	\end{enumerate}
\end{proof}

\subsubsection{Odd Tensor}

Fix a subset $S \subseteq [k]$ of odd size, and an arbitrary bijection $\phi$ from $[|S|]$ to $S$.

\begin{definition}[Odd Kikuchi index set]\label{def:odd-kikuchi-indices}
  Let $S' \coloneqq S \setminus \set{\phi(\abs{S})}$ be $S$ with the last element removed (under the $\phi$-ordering),
  and $L \coloneqq S' \times [2]$ represent two copies of $S'$.
  The rows and columns of the $(S, \ell)$-Kikuchi matrix are indexed by ``$L$-partite level-$\ell$ indicators'' $I,J \in
  \binom{V \times D \times S' \times [2]}{\ell}$.
  In other words, an $L$-partite level-$\ell$ indicator is an $\ell$-size subset $I$ of tuples
  $(v,\alpha,j,z) \in V \times D \times S' \times [2]$.
\end{definition}

\begin{definition}[Well-behaved index pair]
  A pair $I,J \in \binom{V \times D \times S' \times [2]}{\ell}$ is well-behaved for $S$ if:
  \begin{enumerate}
    \item $|I \Delta J| = |L| = 2(|S|-1)$; and
    \item Restricted to the $L$-labels of $I \Delta J$, each element of $L$ appears exactly once.
    \item Among the $|S|-1$ tuples in $I \setminus J$, exactly $(|S|-1)/2$ tuples have their $L$ labels in
      $S' \times \set{1}$ while the other $(|S|-1)/2$ tuples have their lables in $S' \times \set{2}$.
      (This also implies the same statement when $I \setminus J$ is replaced with $J \setminus I$.)
  \end{enumerate}
\end{definition}
Note that for odd $S$, the definition of well-behaved pairs has an additional requirement:
Each copy of $S'$ in $I \Delta J$ is split evenly among $I$ and $J$.
This addition requirement is useful for the moment calculation for the Cauchy--Schwarz trick.

Given a well-behaved pair $I,J$, define $V(I\Delta J) \in V^{[|S|-1] \times [2]}$ to be the tuple of vertices indexed by
$(i,z) \in [|S|-1] \times [2]$, so that the $(i,z)$-coordinate of $V(I\Delta J)$ equals the unique vertex $v \in V$ such
that $(v,\alpha,\phi(i),z) \in I \Delta J$ for some $\alpha \in D$.
Further define $D(I\Delta J) \in D^L$ to be the function mapping $(j,z) \in (S \setminus \set{\phi(\abs{S})}) \times [2]
= L$ to the unique domain value $\alpha \in D$ such that $(v,\alpha,j,z) \in I\Delta J$ for some $v \in V$.

Given a tensor $C \in \R^{V^{|S|}}$, define the tensor $\tilde C \in \R^{V^{2(|S|-1)}}$ such that for $\alpha,\gamma
\in V^{|S|-1}$,
\[ \tilde C[\alpha,\gamma] \coloneqq \sum_{t \in V} C[\alpha, t] C[\gamma, t] \]
if $\alpha \neq \gamma$, and $\tilde C[\alpha,\gamma] \coloneqq 0$ otherwise.
Given $T \in V^{[|S|-1] \times [2]}$, we also write $\tilde C[T]$ to mean $\tilde C[T_1,T_2]$, where $T_1(i) = T(i,1)$
and $T_2(i) = T(i,2)$ for $i \in [|S|-1]$.

Additionally, given $\beta \in D^S$, define the polynomial (in indeterminates
$\set{x_{v,\alpha}}_{v \in V, \alpha \in D}$ that are intended to represent the indicator functions $\Iv{x_v = \alpha}$)
\begin{equation} \label{eq:cross-poly}
  \tilde C_\beta (x) \coloneqq \sum_{\substack{\alpha,\gamma \in V^{|S|-1}}}
  \tilde C[\alpha,\gamma] \Iv{x_{\alpha\gamma} = \beta} \,,
\end{equation}
  where ``$x_{\alpha\gamma} = \beta$'' is a shorthand for
\[ x_{\alpha(i)} = \beta(\phi(i)) \quad \text{and} \quad
  x_{\gamma(i)} = \beta(\phi(i)) \quad \text{ for every } i \in [|S|-1]. \]
Similarly, given $R \in D^L$, write ``$x_R = \beta$'' as a shorthand for
\[ x_{R(i,z)} = \beta(j) \quad \text { for every } j \in S', z \in [2]. \]

\begin{definition}[$(S,\ell)$-Kikuchi matrix $M_\beta$] \label{def: kikuchi-beta}
	Given a tensor $C \in \R^{V^{|S|}}$ and $\beta \in D^S$, we define 
 the $(S,\ell)$-Kikuchi matrix $ M_{S,\beta, \ell}$ has entries
  \[ M_{S,\beta,\ell}[I,J] \coloneqq \tilde C[V(I\Delta J)] \Iv{D(I\Delta J) = \beta} \]
  if $I,J$ are well-behaved, and $M_{S,\beta,\ell}[I,J] \coloneqq 0$ otherwise.

When the dependence on $S,\ell$ is clear, we also write it as $M_\beta$ for convenience.
\end{definition}

\begin{lemma}[Odd Kikuchi quadratic form] \label{lem:nonbool-odd}
For $x \in D^V, \beta \in D^S$,
\[
  (x^{\odot \ell})^\top M_\beta\,x^{\odot \ell} = \binom{2(|V|-1)(|S|-1)}{\ell-|S|+1} \cdot \binom{|S|-1}{(|S|-1)/2}^2
  \cdot \tilde C_\beta(x) \,.
\]
\end{lemma}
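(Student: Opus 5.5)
We mirror the proof of \cref{lem:large-alphabet-kikuchi-even}, with the extra bookkeeping for the two copies of $S'$. We expand the left-hand side as a sum over well-behaved pairs $I,J \in \binom{V \times D \times S' \times [2]}{\ell}$ of
\[ \tilde C[V(I\Delta J)]\, \Iv{D(I\Delta J) = \beta}\, \Iv{x_{I\cup J} = D(I\cup J)} . \]
For a well-behaved pair, $I \Delta J$ contains exactly one tuple for each label $(j,z) \in L$. Hence the two indicators combine into $\Iv{x_{V(I\Delta J)} = \beta}\cdot\Iv{x_{I\cap J} = D(I\cap J)}$. Here the first factor is read with the shorthand ``$x_R=\beta$'' for $R \in V^L$, so that both copies $z=1,2$ of coordinate $j$ are forced to $\beta(j)$.

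Writing $T = V(I\Delta J) = (T_1,T_2)$ with $T_1,T_2 \in V^{|S|-1}$, this factor is exactly $\Iv{x_{T_1T_2} = \beta}$ in the notation of \eqref{eq:cross-poly}. The sum therefore becomes $\sum_{T} \tilde C[T]\,\Iv{x_{T_1T_2}=\beta}\, N_x(T)$, where $N_x(T)$ counts the well-behaved pairs $I,J$ with $V(I\Delta J)=T$ and $x_{I\cup J} = D(I\cup J)$. Since $\tilde C[T]=0$ whenever $T_1 = T_2$, nothing special is needed on the diagonal. It remains to show that $N_x(T)$ is independent of $T$ and $x$, and equals the stated product of binomials.

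We count $N_x(T)$ in three stages.
\begin{enumerate}
\item \emph{Domain labels.} Every $D$-label of a tuple in $I \cup J$ is forced to equal $x_v$. For the tuples of $I\Delta J$ this is consistent with $D(I\Delta J)=\beta$, because of the factor $\Iv{x_{T_1T_2}=\beta}$. Hence there is a unique choice of $D$-labels.
\item \emph{Shared part $I\cap J$.} This part has size $\ell - (|S|-1)$ and consists of tuples with distinct $V\times L$ labels. None of these labels may coincide with one of the $2(|S|-1)$ $V\times L$ labels already used in $I \Delta J$. The reason is the same as in the even case: two tuples sharing a $V\times L$ label would both have $D$-label $x_v$, so they would be the same tuple, contradicting $I\cap J$ being disjoint from $I\Delta J$. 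There are $|V|\cdot 2(|S|-1)$ labels in total, so this gives $\binom{2(|V|-1)(|S|-1)}{\ell-|S|+1}$ choices.
\item \emph{Splitting $I\Delta J$ between $I$ and $J$.} Well-behavedness requires each copy $S'\times\{z\}$ to be split evenly. This gives $\binom{|S|-1}{(|S|-1)/2}$ choices for each $z\in[2]$, that is, the squared binomial in the statement.
\end{enumerate}

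The main point to verify carefully is the shared-part count. We must check that forbidding exactly the $V\times L$ labels of $I\Delta J$ is both necessary and sufficient for $(I,J)$ to be a valid well-behaved pair with the prescribed $I\Delta J$. Necessity is the argument given in the shared-part stage. For sufficiency, any such choice makes $I$ and $J$ sets of size $\ell$ whose symmetric difference is precisely the intended $2(|S|-1)$ tuples.

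Multiplying the three counts gives $N_x(T)$ as claimed, uniformly in $T$ and $x$. This completes the identity.
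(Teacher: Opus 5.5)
Your proof is correct and follows the paper's argument essentially verbatim: you expand over well-behaved pairs, regroup by $T=V(I\Delta J)$ into $\sum_T \tilde C[T]\,\Iv{x_T=\beta}\,N_x(T)$, and count $N_x(T)$ as one choice of $D$-labels, times $\binom{2(|V|-1)(|S|-1)}{\ell-|S|+1}$ choices of $V\times L$ labels for $I\cap J$ avoiding the $2(|S|-1)$ labels of $I\Delta J$, times $\binom{|S|-1}{(|S|-1)/2}^2$ even splits of the two copies of $S'$. One cosmetic slip: the product of the two indicators is $\Iv{x_{V(I\Delta J)}=\beta}\,\Iv{x_{I\cup J}=D(I\cup J)}$ (as the paper writes), not $\Iv{x_{V(I\Delta J)}=\beta}\,\Iv{x_{I\cap J}=D(I\cap J)}$, because the latter would let pairs with $D(I\Delta J)\neq\beta$ survive; since your definition of $N_x(T)$ imposes the full condition $x_{I\cup J}=D(I\cup J)$, the regrouped sum and the final identity are nevertheless correct.
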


\begin{proof}
  Expand the LHS as a sum over well-behaved pairs $I,J \in \binom{V \times D \times L}{\ell}$ of
  \begin{align*}
    & \tilde C[V(I\Delta J)] \Iv{D(I\Delta J) = \beta} \Iv{x_{I \cup J} = D(I \cup J)} \\
    = \;& \tilde C[V(I\Delta J)] \Iv{x_{V(I\Delta J)} = \beta} \Iv{x_{I \cup J} = D(I \cup J)} \,,
  \end{align*}
  where ``$x_T = \beta$'' is a shorthand for ``$x_{T(i,z)} = \beta(\phi(i))$ for every $i \in [|S|-1], z \in [2]$''.
  Denoting $\alpha(i) = V(I\Delta J)(i,1), \gamma(i) = V(I\Delta J)(i,2)$, the previous sum is equivalent to the sum
  over $\alpha, \gamma \in V^{|S|-1}$ of
  \[ \tilde C[\alpha,\gamma] \Iv{x_{V(I\Delta J)} = \beta} \cdot N_x(\alpha,\gamma) \]
  where $N_x(\alpha,\gamma)$ is the number of well-behaved $I,J$ such that $V_1(I\Delta J) = \alpha$,
  $V_2(I\Delta J) = \gamma$ and $x_{I \cup J} = D(I \cup J)$.

  For each $\alpha,\gamma \in V^{|S|-1}$, we count the number of such $I,J$.
  \begin{enumerate}
		\item Choices for $D$: Each $D$ label in $I\cup J$ is fixed by the assignment $x$, and there is a unique choice.
		\item Choices for $V \times L$ in $I \cap J$: Out of the possible labels in $V \times L$, exactly those labels
      already chosen in $I \Delta J$ are forbidden.
      That's because if $I\Delta J$ has some tuple $(v,\alpha,j,z)$, then $I \cap J$ cannot have any tuple
      $(v,\alpha',j,z)$ sharing the same $V \times L$ label.
      We pick $\ell - (|S|-1)$ such $V \times L$ labels for $I \cap J$, hence $\binom{|V||L| - 2(|S|-1)}{\ell-|S|+1} =
      \binom{2(|V|-1)(|S|-1)}{\ell-|S|+1}$ choices.
    \item Choices for $V \times L$ in $I \Delta J$: To split each copy of $S'$ in $I \Delta J$ evenly among $I$ and $J$,
      there are $\binom{|S|-1}{(|S|-1)/2}$ choices for each copy of $S'$.
	\end{enumerate}
\end{proof}

\subsubsection{Deviation Bound} \label{sec:deviation-bound}

\begin{lemma} \label{lem:deviation-bound}
  For any subset $S \subseteq [k]$ of size at least $2$, any $\eps > 0$, there is $\delta > 0$ such that provided
  \[
    m \gtrsim_{k,|S|,|D|,\eps} \begin{cases}
      \displaystyle n^{|S|/2}\cdot \frac{\ell}{\ell^{|S|/2}}\cdot \log n
      & \text{if $|S|$ even and } \displaystyle \frac{|S|}2 \leq \ell \leq \delta n  \\
      \displaystyle n^{|S|/2}\cdot \frac{\ell}{\ell^{|S|/2}}\cdot \sqrt{\log n}
      & \text{if $|S|$ odd and } \displaystyle |S|-1 \leq \ell \leq \frac{\delta n}{\log n}
    \end{cases}
  \,, \]
  then for any $\beta \in D^S$, whp
  \[ \max_{x\in D^V} |C_{\beta}(x)| \le \eps. \]
\end{lemma}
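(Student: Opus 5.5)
The plan is to reduce the deviation bound to spectral norm bounds on the Kikuchi matrices for indicators, separately for even and odd $|S|$. Throughout, $C=C_S$ is the centered tensor with entries $C_S[\gamma]=\frac{1}{|E(\calH)|}\sum_{R_S=\gamma}G_R$. By \cref{claim:edge-concentration} we may replace $|E(\calH)|$ by $m$ at the cost of a $1+o_n(1)$ factor.

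\emph{Even case.} By \cref{lem:large-alphabet-kikuchi-even},
\[
|C_\beta(x)| \le \frac{\|M_\beta\|_{sp}\,\|x^{\odot\ell}\|_2^2}{\binom{(n-1)|S|}{\ell-|S|/2}\binom{|S|}{|S|/2}}.
\]
Since the entries of $x^{\odot\ell}$ are $0/1$, and nonzero exactly when every $D$-label agrees with $x$, we have $\|x^{\odot\ell}\|_2^2\le\binom{n|S|}{\ell}$. The ratio $\binom{n|S|}{\ell}/\binom{(n-1)|S|}{\ell-|S|/2}$ is $O_{|S|}((n/\ell)^{|S|/2})$ when $\ell\le\delta n$. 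It therefore suffices to show that, whp,
\[
\|M_\beta\|_{sp}\lesssim \frac1m\sqrt{\frac{m}{n^{|S|}}\,\ell^{|S|/2}\cdot \ell\log n}.
\]
This is the Kikuchi analogue of the basic norm bound in \cref{lemma:spec-norm-bounds}: each row $I$ has about $\ell^{|S|/2}n^{|S|/2}$ nonzero entries, and each entry is a sum of about $m/n^{|S|}$ centered Bernoullis of variance $p$. I would prove it with matrix Bernstein, or by a trace-moment computation of order $q\approx\ell\log n$ to absorb the $n^{O(\ell)}$ dimension. Plugging this bound in gives $|C_\beta(x)|\lesssim \sqrt{n^{|S|/2}\ell\log n/(m\,\ell^{|S|/2})}\le\eps$ once $m\gtrsim_\eps n^{|S|/2}\frac{\ell}{\ell^{|S|/2}}\log n$. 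A union bound over the finitely many $\beta\in D^S$ finishes this case.

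\emph{Odd case.} First apply the Cauchy--Schwarz trick \eqref{eq:cauchy-schwarz-trick}:
\[
C_\beta(x)^2\le n\bigl(C^{sq}_\beta(x)+\tilde C_\beta(x)\bigr).
\]
The diagonal term is bounded deterministically up to concentration: $C^{sq}_\beta(x)\le\sum_{\alpha,t}C[\alpha,t]^2$. Whp this is about $n^{|S|}\cdot\frac{m}{n^{|S|}}\cdot\frac1{m^2}=1/m$, so $n\,C^{sq}\lesssim n/m=o(1)$. For the cross term, use \cref{lem:nonbool-odd} to write $\tilde C_\beta(x)$ as a normalized quadratic form in $M_{S,\beta,\ell}$, and then bound $\|M_{S,\beta,\ell}\|_{sp}$. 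The key point is that the entries $\tilde C[\alpha,\gamma]=\sum_t C[\alpha,t]C[\gamma,t]$ with $\alpha\neq\gamma$ are degree-2 chaos in the $G_R$ with no squared terms. I would bound their norm with a trace-method encoding argument with $q\approx\ell\log n$, which yields
\[
\|M\|\lesssim \frac{1}{m^2}\Bigl(\sqrt{\frac{m^2}{n^{2|S|-1}}\,\ell^{|S|-1}\cdot\ell\log n}+\text{lower-order terms}\Bigr).
\]
Multiplying through by $n$ and normalizing as in the even case gives $C_\beta(x)^2\lesssim n^{|S|/2}\sqrt{\ell\log n}\,/\,(m\,\ell^{(|S|-1)/2})$. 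This is at most $\eps^2$ under the stated density, and the condition $\ell\le\delta n/\log n$ is needed to control the lower-order terms.

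I expect the main obstacle to be the odd-case trace moment bound. The entries of the matrix are products of pairs of random variables sharing the pivot vertex $t$, and hyperedges can repeat across the walk. Making the encoding tight enough to get $\sqrt{\log n}$, rather than $\log n$ or an extra $1/\eps^2$, requires charging repeated pivots carefully. My first approach would be to encode each step of a length-$2q$ closed walk by its new labels plus an $O(\log\ell)$-bit pointer for each revisited label, using the even-split well-behavedness condition to show that each new edge contributes a factor $\sqrt{m/n^{|S|}}$ per copy of $S'$.
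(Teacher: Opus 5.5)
Your proposal is correct and follows essentially the paper's own argument. In both parities you combine the Kikuchi quadratic-form identities (\cref{lem:large-alphabet-kikuchi-even}, \cref{lem:nonbool-odd}) with the spectral norm bounds of \cref{lemma:spec-norm-bounds}; in the odd case you apply the Cauchy--Schwarz trick, handle the diagonal term by a Markov bound on $\sum_{\alpha,t}C[\alpha,t]^2$ as in \cref{claim:sq-term-negligible-appendix}, and bound the cross term $\tilde C_\beta$ through the odd Kikuchi matrix. One slip to fix: your displayed target norm bounds drop the row-count factor, $n^{|S|/2}$ in the even case and $n^{|S|-1}$ in the odd case (each entry is a sum of about $n^{k-|S|}$ centered Bernoullis with total variance $m/n^{|S|}$ before normalization). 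For instance, the even target should read $\frac1m\sqrt{\frac{m}{n^{|S|/2}}\,\ell^{|S|/2}\,\ell\log n}$. The final expressions you derive for $|C_\beta(x)|$ and $C_\beta(x)^2$ are nonetheless the ones the correct bounds give.
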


\begin{proof}
  We split our proofs into even and odd $S$ again.
  \paragraph{Even $|S|$.}
  From \cref{lem:large-alphabet-kikuchi-even}, we have
  \[
    \binom{(n-1)|S|}{\ell - |S|/2} \cdot  \binom{|S|}{|S|/2} \cdot   C_\beta(x)
    = (x^{\circledcirc \ell })^T \cdot  M_\beta \cdot x^{\circledcirc \ell }\,.
  \]
  Rearranging and applying spectral theorem, we have
  \begin{equation} \label{eq:nonbool-even}
    |C_\beta(x)| \leq \|x^{\circledcirc \ell} \|_2^2 \cdot \|M_\beta\|_{sp} \cdot \frac{1}{\binom{(n-1)|S|}{\ell - |S|/2  }}
    \cdot \frac{1}{\binom{|S|}{|S|/2}} \,;
  \end{equation}
  Plugging the norm bounds and the following observations,
  \begin{enumerate}
    \item For any $x\in D^n$, let $x^{\circledcirc \ell}$ be the vector defined from \cref{def:tensor-lift-vector}.
      We have  \[ \| x^{\circledcirc \ell} \|_2^2 = \binom{n|S|}{\ell} \]
    \item For any $|S|/2 \leq \ell \leq n/2$,
      \[  \frac{ \binom{n|S|}{\ell} } {\binom{(n-1)|S|}{\ell - |S|/2  } }
      \leq \Paren{\frac{n|S|}{(n-1)|S|-\ell}}^\ell \frac{(n|S|)^{|S|/2}}{\ell^{\underline{|S|/2}}}
      \lesssim_{|S|} \Paren{\frac{n}{\ell}}^{|S|/2}   \]
    \item Our even-$S$ norm bound is from~\cref{lemma:spec-norm-bounds}
      \[ \|M_{\beta}\|_{sp} \lesssim_{k,|S|} \sqrt{ \frac{m}{n^{|S|/2}} } \cdot \ell^{|S|/4}  \cdot \sqrt{\ell  \log n} \]
  \end{enumerate}

  We have
  \begin{align*}
    \eqref{eq:nonbool-even}  & \lesssim_{k,|S|} \frac{\binom{n|S|}{\ell } \cdot
    \sqrt{  \frac{m}{n^{|S|/2}} \cdot \ell^{|S|/2+1} \cdot \log n  }}{\binom{(n-1)|S|}{\ell - |S|/2}} \\
    &\lesssim_{k,|S|} \Paren{\frac{n}{\ell}}^{|S|/2} \cdot \ell^{|S|/4 + 1}\cdot  \sqrt{m} \cdot n^{-|S|/4}  \sqrt{\log n}
  \end{align*}

  Picking the above to be less than $\eps m$ and solve for $m$ gives us
  \[ m \gtrsim_{k,|S|,\eps}  n^{|S|/2} \cdot \frac{1}{\ell^{|S|/2 - 1 } } \cdot \log n \,. \]

  \paragraph{Odd $|S|$.}
  Recall that for odd-case from \cref{eq:cauchy-schwarz-trick},
  \[ C_\beta(x)^2 \leq  n \cdot  C_\beta^*(x) \,. \]
  We have
  \[ C_\beta^*(x) = \tilde{C}_{\beta}(x) + C^{sq}_\beta(x) \,. \]
  From \cref{claim:sq-term-negligible-appendix}, we have
  \[ |C^{sq}_\beta(x)| \cdot n \leq \eps^2 m^2/2 \,. \]
  It suffices for us to verify
  \[ |\tilde{C}_{\beta}(x)| \cdot n \leq \eps^2 m^2/2 \,, \]
  \cref{lem:nonbool-odd} implies
  \begin{equation}
    |\tilde C_\beta(x)| \leq \|x^{\circledcirc \ell} \|_2^2 \cdot \|M_\beta\|_{sp} \cdot
    \frac{1}{\binom{2(n-1)(|S|-1)}{\ell - |S|+1  }} \cdot \frac{1}{\binom{|S|-1}{(|S|-1)/2}^2} \,;
  \end{equation}
  The calculation from this point is now identical to the previous even-$S$ case, except that the norm bound given by
  \cref{lemma:spec-norm-bounds} is
  \[
    \|M_{\beta}\|_{sp}
    \lesssim_{k,|S|,|D|} \sqrt{ \frac{m^2}{n^{|S| }} } \cdot \sqrt{\ell}^{|S|-1} \sqrt{\ell  \log n} \,.
  \]
  In other words, it is sufficient to verify \[ 
  |\tilde C_\beta(x)|\leq  \frac{\|x^{\circledcirc \ell} \|_2^2 \cdot \|M_{\beta}\|_{sp} }
    { \binom{2(n-1)(|S|-1)}{ \ell - |S| + 1 } \cdot \binom{|S|-1}{(|S|-1)/2}^2 } \leq \frac{\eps^2 m^2}{2n} \,.
  \]
  Solving out $m$, it suffices for us to take
  \[ m \gtrsim_{k,|S|,|D|,\eps} n^{|S|/2} \cdot \frac{1}{\ell^{|S|/2 - 1 } } \cdot \sqrt{ \log n} \,. \qedhere \]
\end{proof}

We now proof \cref{lem:concentrated}, which is mostly the SoS version of \cref{lem:deviation-bound}.

\concentrated* \jnote{is it supposed to be here?}\snote{The proof \cref{lem:concentrated} should come after
\cref{lem:deviation-bound}, because the former is mostly the SoS version of the latter.}

\begin{proof}
  When $t$ is even and $\ell = t/2$, the result follows from \cref{thm:concentration-random-instance}.

  When $t$ is even and $t/2 \leq \ell \leq \delta n$, our goal is the turn even case in the proof of
  \cref{lem:deviation-bound} into SoS.
  Recall that our polynomials have indeterminates $\set{x_{v,a}}_{v \in V, a \in D}$.
  Given any set $L$ and any $L$-partite index set $I \in \binom{V \times D \times L}{\ell}$, define the monomial
  $x_I \coloneqq \prod_{(v,a,j) \in I} x_{v,a}$.
  Consider the polynomial
  \[ \|x^{\circledast\ell}\|_2^2 \coloneqq \sum_{I \in \binom{V \times D \times L}{\ell}} x_I^2\,. \]
  For any $Q \in {[k] \choose t}, b \in D^Q$, using \cref{lem:large-alphabet-kikuchi-even},

  \cref{claim:spectral-norm-sos} implies \cref{eq:nonbool-even} has an SoS version
  \[ \pm C_b(x) \leq \|x^{\circledast \ell}\|_2^2 \cdot \norm{M_b}_{sp} \cdot
  \frac{1}{\binom{(n-1)t}{\ell - t/2  }} \cdot \frac{1}{\binom{t}{t/2}} \,. \]
  The SoS axioms yield the bound $\|x^{\circledast \ell}\|_2^2 \leq {nt \choose \ell}$ by
  \cref{claim:indicator-sos-bound}.
  And we conclude as in \cref{lem:deviation-bound} that
  \[ \Axioms \quad \vdash_{2\ell} \quad \pm C_b(x) \leq \eps \quad \vdash_{2\ell} \quad C_b(x)^2 \leq \eps^2 \]
  provided
  \[ m \gtrsim_{k,t,\eps}  n^{t/2} \cdot \frac{1}{\ell^{t/2 - 1 } } \cdot \log n \,. \]

  When $t$ is odd and $t-1 \leq \ell \leq \delta n/\log n$, our goal is to turn the odd case in the proof of
  \cref{lem:deviation-bound} into SoS.
  Consider any $Q \in {[k] \choose t}, b \in D^Q$, and any bijection $\phi$ between $[t]$ and $Q$.
  The Cauchy-Schwarz trick in \cref{eq:cauchy-schwarz-trick} has a degree-$2(\ell+1)$ SoS proof.
  Expand $C^*_b(x) = C^{sq}_b(x) + \tilde C_b(x)$.
  There is a degree-$2\ell$ SoS proof of
  \begin{align*}
    C^{sq}_b(x) &= \sum_{v \in [n]} \sum_{\alpha \in [n]^{t-1}} C[\alpha,v]^2
    \prod_{i \in [t]} x^2_{\alpha(i),b(\phi(i))} \\
    &\leq \sum_{v \in [n]} \sum_{\alpha \in [n]^{t-1}} C[\alpha,v]^2 \leq \eps^2 m^2/2n \,,
  \end{align*}
  where the first inequality is due to $x_{v,a}^2 \leq 1$ for $v \in V, a \in D$ by the axioms, and the last inequality
  holds whp by \cref{claim:sq-term-negligible-appendix}.

  \cref{claim:spectral-norm-sos} implies \cref{eq:nonbool-even} has an SoS version
  \[ \tilde C_\beta(x) \leq \|x^{\circledast \ell} \|_2^2 \cdot \|M_\beta\|_{sp} \cdot
    \frac{1}{\binom{2(n-1)(t-1)}{\ell - t+1  }} \cdot \frac{1}{\binom{t-1}{(t-1)/2}^2} \,; \]
  Using the spectral norm bound in \cref{lemma:spec-norm-bounds} and the norm bound in \cref{claim:indicator-sos-bound},
  conclude as in the proof of \cref{lem:deviation-bound} that
  \[ \Axioms \quad \vdash_{2(\ell+1)} \quad C_b(x)^2 \leq \eps^2 \]
  provided
  \[ m \gtrsim_{k,|D|,t,\eps} n^{t/2} \cdot \frac{1}{\ell^{t/2 - 1 } } \cdot \log n \,. \qedhere \]
\end{proof}

\begin{claim} \label{claim:spectral-norm-sos}
  Let $W$ be a finite set, and $M$ be a symmetric $W$-by-$W$ matrix.
  There is a degree-$2$ SoS proof of the inequality (in indeterminates $\set{x_i}_{i \in W}$)
  \[ \sum_{i,j \in W} M_{i j} x_i x_j \leq \norm{M}_{sp} \sum_{i \in W} x_i^2\,. \]
\end{claim}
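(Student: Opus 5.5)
The inequality is equivalent to positive semidefiniteness of $\norm{M}_{sp} I - M$, so the plan is to exhibit that difference explicitly as a sum of squares of linear forms. Since $M$ is real symmetric, the spectral theorem gives $M = \sum_{r} \lambda_r u_r u_r^\top$ for an orthonormal eigenbasis $\set{u_r}$ of $\R^W$ with real eigenvalues $\lambda_r$. Every eigenvalue satisfies $\abs{\lambda_r} \leq \norm{M}_{sp}$, because the spectral norm of a symmetric matrix is its largest eigenvalue in absolute value. Orthonormality also gives the identity $\sum_{r} u_r u_r^\top = I$.

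Writing $x$ for the vector of indeterminates $(x_i)_{i \in W}$, we then obtain the polynomial identity
\[
  \norm{M}_{sp} \sum_{i \in W} x_i^2 - \sum_{i,j \in W} M_{ij} x_i x_j
  = \sum_{r} \Paren{\norm{M}_{sp} - \lambda_r} \inner{u_r, x}^2 .
\]
Each coefficient $\norm{M}_{sp} - \lambda_r$ is nonnegative. Each $\inner{u_r, x}$ is a linear form in the indeterminates, so its square has degree $2$. The right-hand side is therefore a nonnegative combination of squares of degree-$1$ polynomials, which is exactly a degree-$2$ SoS certificate, and it uses no axioms.

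There is no real obstacle here. The only points that need checking are the two bookkeeping facts: that $\sum_r u_r u_r^\top = I$, so the coefficient of $\sum_i x_i^2$ matches, and that the bound $\lambda_r \leq \norm{M}_{sp}$ holds even for the negative eigenvalues. The second follows from $\abs{\lambda_r} \leq \norm{M}_{sp}$. Applying the same identity to $-M$ yields the symmetric lower bound $\sum_{i,j} M_{ij} x_i x_j \geq -\norm{M}_{sp} \sum_i x_i^2$, which is what the $\pm$ uses in the proof of \cref{lem:concentrated} require.
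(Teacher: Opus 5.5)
Your proof is correct and is essentially the paper's argument: the paper spectrally decomposes the PSD matrix $\norm{M}_{sp} I - M$ directly, while you decompose $M$ and subtract, which yields the same eigenvectors with coefficients $\norm{M}_{sp} - \lambda_r \geq 0$ and hence the same sum-of-squares identity. Your note that replacing $M$ by $-M$ gives the matching lower bound is a correct addition that the $\pm$ uses in \cref{lem:concentrated} do rely on.
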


\begin{proof}
  Let $\lambda = \norm{M}_{sp}$ and $I$ be the $W$-by-$W$ identity matrix.
  $\lambda I - M$ is positive semidefinite, and the spectral theorem implies
  $\lambda I - M = \sum_k \lambda_k u^{(k)} (u^{(k)})^\top$, where $\lambda_k \geq 0$ are the eigenvalues and
  $u^{(k)} \in \R^W$ are eigenvectors of $\lambda I - M$.
  Therefore
  \[ \sum_{i,j \in W} M_{i j} x_i x_j = \lambda \sum_{i \in W} x_i^2
  - \sum_k \lambda_k \Paren{\sum_{i\in W} u^{(k)}_i x_i}^2 \leq \lambda \sum_{i \in W} x_i^2 \,. \qedhere \]
\end{proof}

\begin{claim} \label{claim:indicator-sos-bound}
  Consider any set $L$ and the polynomial
  \[ \|x^{\circledast\ell}\|_2^2 \coloneqq \sum_{I \in \binom{V \times D \times L}{\ell}} x_I^2\,,
  \quad \text{where } x_I \coloneqq \prod_{(v,a,j) \in I} x_{v,a} \,. \]
  From the axioms in \cref{def:sos-axioms}, there is an SoS proof
  \[ \Axioms \quad \vdash_{2\ell} \quad \|x^{\circledast \ell}\|_2^2 = {|V||L| \choose \ell} \,. \]
\end{claim}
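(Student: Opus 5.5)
The plan is to show the identity $\sum_{I} x_I^2 = \binom{|V||L|}{\ell}$ modulo the ideal generated by the axioms, using only degree-$2\ell$ multiples of axioms. First, each $x_I^2 = \prod_{(v,a,j)\in I} x_{v,a}^2$. Second, the orthogonality axioms $x_{v,a}x_{v,a'}=0$ for $a\neq a'$ kill every $I$ containing two tuples $(v,a,j),(v,a',j')$ with the same vertex $v$ but different labels $a\neq a'$. (When $a=a'$ but $j\neq j'$, the monomial contains $x_{v,a}^4$; see below.)

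The cleanest route is to pass to the polynomials $y_{v,j} \coloneqq \sum_{a\in D} x_{v,a}^2$, indexed by $(v,j)\in V\times L$. The axiom $\sum_a x_{v,a}^2 = 1$ gives $y_{v,j}=1$ for every $j$. Expand the elementary symmetric polynomial
\[ e_\ell\big((y_{v,j})_{(v,j)\in V\times L}\big) = \sum_{U \in \binom{V\times L}{\ell}} \prod_{(v,j)\in U} \sum_{a\in D} x_{v,a}^2 . \]
Substituting $y_{v,j}=1$ shows this equals $\binom{|V||L|}{\ell}$. Writing $e_\ell(y)-e_\ell(\1)$ as a telescoping sum of products of lower-degree $e$'s times $(y_{v,j}-1)$ gives a derivation of degree at most $2\ell$.

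Distributing the product, each term corresponds to choosing a label $a_{(v,j)}$ for each $(v,j)\in U$. These choices are exactly the index sets $I\in\binom{V\times D\times L}{\ell}$ whose $V\times L$ projection is injective, and each such term equals $x_I^2$. So the expansion reproduces $\sum_I x_I^2$ restricted to those $I$.

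What remains is the set of $I$ that contain two tuples with the same $(v,j)$ but different $a$. For these, $x_I^2$ contains the factor $x_{v,a}^2x_{v,a'}^2$, which is a multiple of the axiom $x_{v,a}x_{v,a'}=0$ and therefore vanishes at degree $2\ell$.

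The main obstacle is a mismatch in which sets are being counted. The sum in the claim ranges over all $I\in\binom{V\times D\times L}{\ell}$, including those with repeated $(v,j)$, which vanish as just shown. But the $e_\ell$ expansion counts only sets with an injective $V\times L$ projection. Since every remaining $I$ has injective projection, the two sums match exactly.

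The power mismatch $x_{v,a}^4$ versus $x_{v,a}^2$ never arises here, because distinct $(v,j)$ give distinct factors even when they share the same $v$. So no Booleanity axiom $x^2=x$ is needed, and the identity holds as a polynomial identity modulo the axiom ideal at degree $2\ell$.
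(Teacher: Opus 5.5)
Your proof is correct and follows essentially the same route as the paper: expand $\sum_{J\in\binom{V\times L}{\ell}}\prod_{(v,j)\in J}\sum_{a\in D} x_{v,a}^2$, match its terms with the $x_I^2$ having injective $V\times L$ projection, discard the remaining index sets via the orthogonality axioms, and evaluate with the normalization axiom. Your bookkeeping is slightly tighter than the paper's. You note that only index sets with a repeated $(v,j)$ must be killed, since the same-$v$, different-$j$ terms appear identically on both sides, and you make the degree-$2\ell$ telescoping explicit. The paper, by contrast, writes its middle equality as exact when it holds only modulo the axioms.
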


\begin{proof}
  Whenever $I \in \binom{V \times D \times L}{\ell}$ contains two tuples $(v,a,j)$ and $(v,a',j')$ sharing the
  same vertex $v \in V$ but with distinct $a,a' \in D$,
  \[ \Set{ x_{v,a}x_{v,a'} = 0}_{v \in V, a,a' \in D, a \neq a'} \quad \vdash_{2\ell} \quad x_I = 0 \,. \]
  Therefore, the only index sets $I$ that have non-zero term in $\|x^{\circledast\ell}\|_2^2 $ are those without
  two triples $(v,a,j)$ and $(v,a',j')$ with distinct $a,a' \in D$.
  Denote by $\cI$ the collection of such $I$.
  Summing over such $I$ is equivalent to first summing over $J \in \binom{V \times L}{\ell}$ and then over $a \in D$ for
  each $(v,j) \in J$:
  \[ \|x^{\circledast \ell}\|_2^2 = \sum_{I \in \cI} x_I^2
  = \sum_{J \in \binom{V \times L}{\ell}} \prod_{(v,j) \in J} \sum_{a \in D} x_{v,a}^2 \,. \]
  Under the additional axioms $\Set{\textstyle \sum_{a \in D} x_{v,a}^2 = 1}_{v \in V}$, the inner sum
  $\sum_{a \in D} x_{v,a}^2$ equals $1$.
  Therefore so is each product over $(v,j) \in J$.
  There are $\binom{|V||L|}{\ell}$ choices for $J \in \binom{V \times L}{\ell}$.
\end{proof}

\section{Spectral Norm Bounds for Polynomials of Random Support} \label{sec:norm-bound} 
It  remains to bound the spectral norms of the matrices constructed above.
However, before we delve into the proof, one question that we need to answer, especially to the experts in the study of random CSP, is the following.
\paragraph{Why Can't We Use Existing Results?} 
 Admittedly,  at a high level,
the matrices in our hands are highly reminiscent of matrices that have appeared throughout the random CSP literature.
However, existing results crucially rely on uniformly random signings, and we are not aware
of any theorem that can be directly applied in our present setting of fixed literals and
normalized adjacency matrices. In particular, standard trace-method arguments that rely on
evenness properties of contributing walks no longer apply. On the other hand, it  is conceivable that the even-case can be obtained by standard black-box matrix concentration bounds. However, we resort to the trace method calculations in the end due to notorious odd case that we will now elaborate. 
 
 \paragraph{Hypergraph Decomposition No More}
 While this similarity of our matrix is closely related to the connection between a randomly signed and a
centered adjacency matrices, there in fact underlines a conceptual challenge. Prior analyses for odd $k$, despite being stated for arbitrary hypergraphs with random signings, rely heavily on decompositions and row pruninngs of the underlying
hypergraph even for random hypergraphs. Conceptually, such a
decomposition is not immediate, or even well-defined, in the setting of normalized adjacency matrices, prohibiting us from appealing to these results as a black box. Notably, our argument works directly
with the polynomial tensor arising from the Cauchy-Schwarz lift in its natural
flattening, and bypasses hypergraph decomposition with our encoding argument for walks that arise in the trace method calculations.

In the next section, we carry out this analysis using a factor-assignment scheme that
bounds the contribution of each step in a trace walk individually. This approach allows us
to control the spectral norm through local combinatorial and probabilistic bounds, without
keeping track of the full global structure of the walk. We now briefly recap this framework.

\paragraph{Spectral Norm Bounds from Local Bounds}
Traditional trace moment analyses typically rely on global structural properties of the walks
contributing to the trace, while we apply the machinery developed in~\cite{JPRTX, KPX24,kothari2025smoothtradeofftensorpca}. We refer the reader to \cite{kothari2025smoothtradeofftensorpca} for a self-contained introduction, while we briefly recap the strategy as the following.

The crux of these approaches is to break correlations along long walks via a
\emph{factor-assignment scheme} that distributes the global combinatorial contributions of a
walk across its individual steps. This allows the analysis to proceed in a local, step-by-step
manner, without explicitly tracking the intricate dependencies that span the entire walk.

Designing such a factor-assignment scheme is technically delicate. It must satisfy two
competing requirements:
\begin{enumerate}
  \item \textbf{Completeness}: the scheme must faithfully account for all global contributions to
  the trace moment;
  \item \textbf{Sharpness}: the local bound assigned to each step must match the best possible
  global upper bound.
\end{enumerate}

We now formalize the outcome of a factor-assignment scheme.

\begin{definition}[Factor-assignment scheme and step-bound function $B_q(\cdot)$] \label{def:block-value-trace}
For a matrix $M$ and an integer $q \ge 1$, consider a fixed factor-assignment scheme. We call
$B_q(M)$ a valid \emph{step-bound function} for the scheme if
\[
\mathbb{E}\!\left[\operatorname{Tr}\!\left( (M\cdot M^T)^{q}\right)\right]
\;\le\;
\mathsf{MatrixDimension} \cdot B_q(M)^{2q},
\]
and moreover, the scheme assigns to each individual step of any contributing walk a factor
of at most $B_q(M)$. When the dependence on $q$ and $M$ is clear from context, it is omitted.
\end{definition}

Once such a scheme is specified, deriving a spectral norm bound becomes relatively
straightforward. The task reduces to verifying that each step of the walk satisfies the
corresponding local bound $B_q(\cdot)$, which is exactly the candidate global spectral norm
bound one aims to prove. This local viewpoint extends naturally to more complex matrix
structures and has proven effective in a variety of graph-based settings. As an up-shot of the scheme, once taking $q$ large enough (usually $\Omega(\log \mathsf{MatrixDimension}  )$), the local bound of $B_q(\cdot)$ also becomes an high probability bound. 

\paragraph{Organization of this section} 	We introduce our norm bounds in the next subsection while defer the formal proof to our norm bound statements to the end of this section~\cref{sec:norm-bound-wrap} . We introduce some preliminaries needed for our machinery, and then establish the bound for the even-case as a warm-up for our machinery in ~\cref{sec:even-bound}, and subsequently the odd-case in ~\cref{sec:odd-bound}. 

\subsection{Norm Bound Statements.}

\begin{restatable}[Spectral Norm Bounds]{lemma}{SpectralNormBounds}
\label{lemma:spec-norm-bounds}
For a random instance $\calH$, for any $\ell \in \mathbb{N}$, and for any $\beta: L \rightarrow D$, let $M_{S,\ell}$ denote the $M_\beta $ matrix for $(S,\ell)$-Kikuchi matrix \jnote{check me}, and for $\ell < Cn$ for some constant
$C>0$, the following holds with probability at least $1-o_n(1)$:
\[
\|M_{S,\ell}\|_{\mathrm{sp}}
\;\le\;
O_{k,|S|}\!\left(
\sqrt{\frac{m}{n^{|S|/2}}}\;
\ell^{|S|/4}\;
\sqrt{\ell\log n}
\right),
\]
and
\[
\|M_{S,\ell}\|_{\mathrm{sp}}
\;\le\;
O_{k,|S|}\!\left(
\frac{m}{n^{|S|/2}}\;
\ell^{(|S|-1)/2}\;
\sqrt{\ell\log n}
\right),
\]
provided that
\[
m
\;\ge\;
\Omega_{k,|S|}\!\left(
n^{|S|/2}\cdot
\frac{\ell}{\ell^{|S|/2}}\cdot
\log n
\right).
\]

Moreover, when $|S|$ is odd, the same bound holds with an additional
$\sqrt{\log n}$ saving provided
\[
m
\;\ge\;
\Omega_{k,|S|}\!\left(
n^{|S|/2}\cdot
\frac{\ell}{\ell^{|S|/2}}\cdot
\sqrt{\log n}
\right),
\]
as long as
\[
\ell \le C \frac{n}{\log n}
\]
for some constant $C>0$.
\end{restatable}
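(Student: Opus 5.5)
We will prove \cref{lemma:spec-norm-bounds} by the trace method: bound $\E\operatorname{Tr}((M_\beta M_\beta^\top)^q)$ for $q=\Theta(\ell\log n)$ (so that the dimension factor $\binom{n|L|}{\ell}^{1/2q}$ is $O(1)$) and conclude by Markov. The bound on each trace will be read off from a factor-assignment scheme in the sense of \cref{def:block-value-trace}.

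\textbf{Setup of the trace.} Expanding the trace writes it as a sum over closed walks $I_0,J_1,I_1,\dots,J_q,I_0$. Each step is a well-behaved pair whose symmetric difference is a tensor entry $C[T]$ (even case) or $\tilde C[T]$ (odd case). Each $C[T]$ is itself a normalized sum of centered indicators $G_R$ over hyperedges $R$ extending $T$. We therefore expand further into walks labelled by hyperedges. Since the $G_R$ are independent and mean zero, every hyperedge label that appears must appear at least twice. Moreover $\E|G_R|^j\le p$ for all $j\ge 2$, so a label contributes a factor $p=m/n^k$ per distinct hyperedge, rather than $p^{j/2}$.

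\textbf{Even $|S|$: encoding and step bound.} We encode each walk step by step.
\begin{itemize}
\item A step that introduces a new hyperedge pays for choosing $|S|/2$ new vertices among the $\ell$ positions being swapped (about $(n\ell)^{|S|/2}$ choices) and $n^{k-|S|}$ for the non-$S$ coordinates. It receives $p$ from the expectation and $1/m^2$-type normalization from $1/|E(\calH)|^2$. Charged evenly between its two occurrences, this gives a per-step factor of order $\sqrt{m/n^{|S|/2}}\,\ell^{|S|/4}$.
\item A step that reuses an old hyperedge pays only for identifying which earlier edge it is, at most $O(q)$ choices.
\end{itemize}
With $q\sim\ell\log n$, this yields the per-step value $B=\sqrt{m/n^{|S|/2}}\,\ell^{|S|/4}\sqrt{\ell\log n}$, which is the first claimed bound.

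The second bound, with $m/n^{|S|/2}$ in place of its square root, covers walks that repeat a hyperedge many times, where the $p$-per-distinct-edge gain replaces the square-root saving. We will dominate these by the trivial row-sum estimate. The density hypothesis $m\gtrsim n^{|S|/2}\ell^{1-|S|/2}\log n$ is exactly the condition under which the square-root regime dominates: the expected row degree of the Kikuchi matrix is at least $\ell\log n$, so no vertex-degree outlier spoils the encoding.

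\textbf{Odd $|S|$: the main obstacle.} Each entry of $\tilde C[\alpha,\gamma]=\sum_t C[\alpha,t]C[\gamma,t]$ (with $\alpha\ne\gamma$) is a sum over pairs of hyperedges sharing the pivot vertex $t$. The step value must therefore account for the $n$ choices of $t$ and for the pair $(R,R')$. The condition $\alpha\neq\gamma$ removes the diagonal square terms, which are handled separately by \cref{claim:sq-term-negligible-appendix}. Without it, a pair could satisfy $R=R'$ and give no cancellation.

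The danger is that pivots $t$ of high degree are reused. Prior work handles this by hypergraph decomposition and row pruning. We will instead refine the encoding:
\begin{itemize}
\item when a pair of hyperedges is revisited, the pivot $t$ is recovered from the already-known hyperedge, so it costs nothing;
\item a new pivot is charged $n$ only on its first appearance, split across the two halves $S'\times\{1\}$ and $S'\times\{2\}$;
\item the evenness condition in the definition of well-behaved pairs guarantees that each step moves $(|S|-1)/2$ vertices from each copy of $S'$.
\end{itemize}
The target step value is $\frac{m}{n^{|S|/2}}\ell^{(|S|-1)/2}\sqrt{\ell\log n}$.

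The $\sqrt{\log n}$ improvement needs $\ell\le Cn/\log n$. This ensures that the number of times a pivot of typical degree $m/n$ can be revisited within $q$ steps is bounded by the walk length rather than by $\log n$, so the $\log n$ loss from high-multiplicity pivots is paid only once. We expect this pivot-reuse accounting, which has to achieve sharpness without pruning, to be the hardest step. Our first attempt is to show that the high-degree pivots have total contribution at most $O(1)^q$ times the main term, using Chernoff bounds on the degrees $m/n$.
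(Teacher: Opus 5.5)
\textbf{Even case.} Your even-$|S|$ argument is essentially the paper's. You use the trace method with $q=\Theta(\ell\log n)$, charge a first-occurrence step $\ell^{|S|/2}n^{k-|S|/2}p$ and a revisiting step $O(q)$, and pair first and last occurrences. The paper phrases that pairing as a geometric mean over the two traversal directions. The density hypothesis is used exactly to make $2q$ dominated by the target.

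Two corrections here. First, drop the $1/|E(\calH)|^2$ normalization: the matrices in this section have the unnormalized $G_R$ as entries, which is also what your final per-step value assumes. Second, you have misread the second displayed bound. It is the bound for the odd-$|S|$ matrix built from $\tilde C$, whose entries are quadratic in the $G_R$. It is not a regime of the even trace to be handled by a row-sum estimate, and bounding a subfamily of walks by a row-sum estimate is not a meaningful step inside a trace expansion anyway. For even $|S|$ it follows from the first bound, since under the density hypothesis the ratio of the two bounds is $\sqrt{(m/n^{|S|/2})\,\ell^{|S|/2-1}}\ge 1$.

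\textbf{Odd case: the gap.} The genuine gap is in the odd case, which you explicitly leave open. Your refinements reproduce the paper's cost for an $F\times F$ step: a free pivot on revisits, $\sqrt n$ per half for a new pivot, and a balanced split of each copy of $S'$. But the bound hinges on the cost of a step in which both hyperedges are old ($H\times H$ or $L\times L$). Identifying them one at a time costs $O(q^2)$. Requiring $q^2\le \frac{m}{n^{|S|/2}}\ell^{(|S|-1)/2}\sqrt q$ then forces $m\gtrsim n^{|S|/2}\ell^{2-|S|/2}(\log n)^{3/2}$. That is worse by a factor $\ell\sqrt{\log n}$ than even the $\log n$ version of the statement, not just the $\sqrt{\log n}$ refinement.

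The paper's key step is that the two old hyperedges are adjacent through the pivot in the graph of the $O(q)$ hyperedges seen so far. They can therefore be encoded jointly at total cost $O(q)$: specify the first, then the second among its neighbours, amortizing over the degree of the first. Pairing $F\times F$ in one direction with $L\times L$ in the other gives $B(F\times F)=\frac{m}{n^{|S|/2}}\ell^{(|S|-1)/2}\sqrt q$. The requirement $B(H\times H)=O(q)\le B(F\times F)$ is then exactly the $\sqrt{\log n}$ density condition.

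Your proposed substitute, Chernoff bounds on pivot degrees of order $m/n$, does not fit a trace computation. The number of walk hyperedges through a pivot is an encoding quantity, weighted by $p$ per distinct hyperedge; it is not controlled by degree concentration in $\calH$. Conditioning on a degree event destroys the independence and mean-zero structure that kills singleton hyperedges, unless you prune, which is exactly what this approach is meant to avoid.

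Finally, $\ell\le Cn/\log n$ plays a different role from the one you give it. It is the condition $q\lesssim n$ that makes $F\times L$ steps dominated. Such a step maps to itself under reversal, and its pivot is free but its old edge costs $O(q)$, so its value is $\sqrt{q/n}$ times $B(F\times F)$.
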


	\subsection{Preliminaries for Factor Assignment Scheme}


For starters, let's reintroduce the definition of trace-walks which would immediately apply to trace walks of Kikuchi matrices for odd-$r$ as well to incorporate vertices at each step of the walk getting summed over while outside the left/right boundaries. For readability purpose, we suggest the reader to focus on the even-$r$ case in the first pass.

For concreteness, we recall our definitions of indices for Kikuchi matrices 
\begin{definition}[Even Kikuchi index set][Restatement of \cref{def:even-kikuchi-index}]
  The rows and columns of the $(S, \ell)$-Kikuchi matrix are indexed by ``$S$-partite level-$\ell$ indicators'' $I,J \in
  \binom{V \times D \times S}{\ell}$.
  In other words, an $S$-partite level-$\ell$ indicator is an $\ell$-size subset $I$ of tuples
  $(v,\alpha,j) \in V \times D \times S$.
\end{definition}

And analogous for the indices for odd given as 
\begin{definition}[Odd Kikuchi index set][Restatement of~\cref{def:odd-kikuchi-indices}]
  Let $S' \coloneqq S \setminus \set{\phi(\abs{S})}$ be $S$ with the last element removed (under the $\phi$-ordering),
  and $L \coloneqq S' \times [2]$ represent two copies of $S'$.
  The rows and columns of the $(S, \ell)$-Kikuchi matrix are indexed by ``$L$-partite level-$\ell$ indicators'' $I,J \in
  \binom{V \times D \times S' \times [2]}{\ell}$.
  In other words, an $L$-partite level-$\ell$ indicator is an $\ell$-size subset $I$ of tuples
  $(v,\alpha,j,z) \in V \times D \times S' \times [2]$.
\end{definition}

And recall our definition of $(S,\ell)$-Kikuchi matrices and any assignment $\beta$ viewed as a conditioning of the variables,

\begin{definition}[$(S,\ell)$-Kikuchi matrix $M_\beta$][Restatement of \cref{def: kikuchi-beta}]
	Given a tensor $C \in \R^{V^{|S|}}$ and $\beta \in D^L$, we define 
 the $(S,\ell)$-Kikuchi matrix $ M_{S,\beta, \ell}$ has entries
\begin{align*}
    M_\beta[I,J] &\coloneqq \tilde C[V(I\Delta J)] \Iv{D(I\Delta J) = \beta} 
\end{align*} 
if $I,J$ are well-behaved, and $M[I,J] \coloneqq 0$ otherwise.
\jnote{add even v. odd}
\end{definition}
For the purpose of bounding the norm, it is most informative to track the dependence on $n$, as we do not attempt to optimize factors involving $S$, $|D|$, $k$, or other absolute constants. Throughout this section, we treat $\beta$ as fixed and write $M_{S,\ell} = M_{S,\beta,\ell}$, since the dependence on $\beta$ is not central to our analysis.
Throughout the trace method analysis, we view the entries of $M_\beta$ as degree-1 (even $S$) and degree-2 (odd $S$) polynomials of random variables in the underlying input via \cref{def:deviation-poly} that ultimately defines $\tilde{C}$.

\jnote{added}

\begin{definition}[Generalized Definitions of Trace-walk] For any $r\coloneqq |S|>2$,
	A trace-walk for Kikuchi matrix of level-$\ell$ at length-$2q$ is a sequence of sets $(S_1, W_1, S_2, W_2, \dots,S_{2q}, W_{2q},S_{2q+1})$ such that   \begin{enumerate}
		\item Walk-boundary sets: for any $i\in [2q]$, $S_i$ is an index set as defined in \cref{def:even-kikuchi-index} for even $r$, or an index set as defined in \cref{def:odd-kikuchi-indices} for odd $r$. 
		\item  For even-$r$,  $|V(S_i \Delta S_{i+1})| = r$ for any $i\in [2q]$  and $W_i = \emptyset$ for any $i\in [2q]$;
		\item For odd-$r$, $|V(S_i \Delta S_{i+1})| = 2r-1$, and $|W_i|=1$ for any $i\in [2q]$; 
		\item (Closed-walk) $S_1 = S_{2q+1}$;
		\item  (At least $2$-steps) Each hyper-edge (underlying random variable) appears in at least two steps;
	\end{enumerate}
	Equivalently, we view it as a walk of length-$(2q)$ such that
	\begin{enumerate}
		\item  at each step-$t \in [2q]$, we move from the left boundary $U_t = S_t$ to the right boundary $V_{t} = S_{t+1} = U_{t+1}$ with potential intermediate-set $W_t$;
		\item For even-$r$, each step traverses exactly $1$ hyper-edge (viewed as an ordered-tuple) that contains $V(S_t\Delta S_{t+1}) $;
		\item for the odd-$r$, each step traverses along $2$ hyper-edges incident to  $U_A \cup V_A \cup W_t$ and $U_B\cup V_B\cup W_t$ where we assume that $U_A, U_B$ are equal partitions for $S_{t}\setminus S_{t+1}$ and $V_A, V_B$ for $S_{t+1}\setminus S_t$ as prescribed by the intermediate set.  \jnote{adjust as needed}
        \item Each hyper-edge appears in at least two steps.
	\end{enumerate} 
\end{definition}
See below for a diagram illustration of a step in the trace-walk for $k=|S|=3$ (equivalently, an entry of the corresponding the Kikuchi Matrix). In addition to the labels in $[n]$, each circle vertex additionally comes with a label in $[D]$ and in $S'\times [2]$.
\begin{figure}[h]\label{fig:p2bar}
    \centering
    \includegraphics[height=3cm]{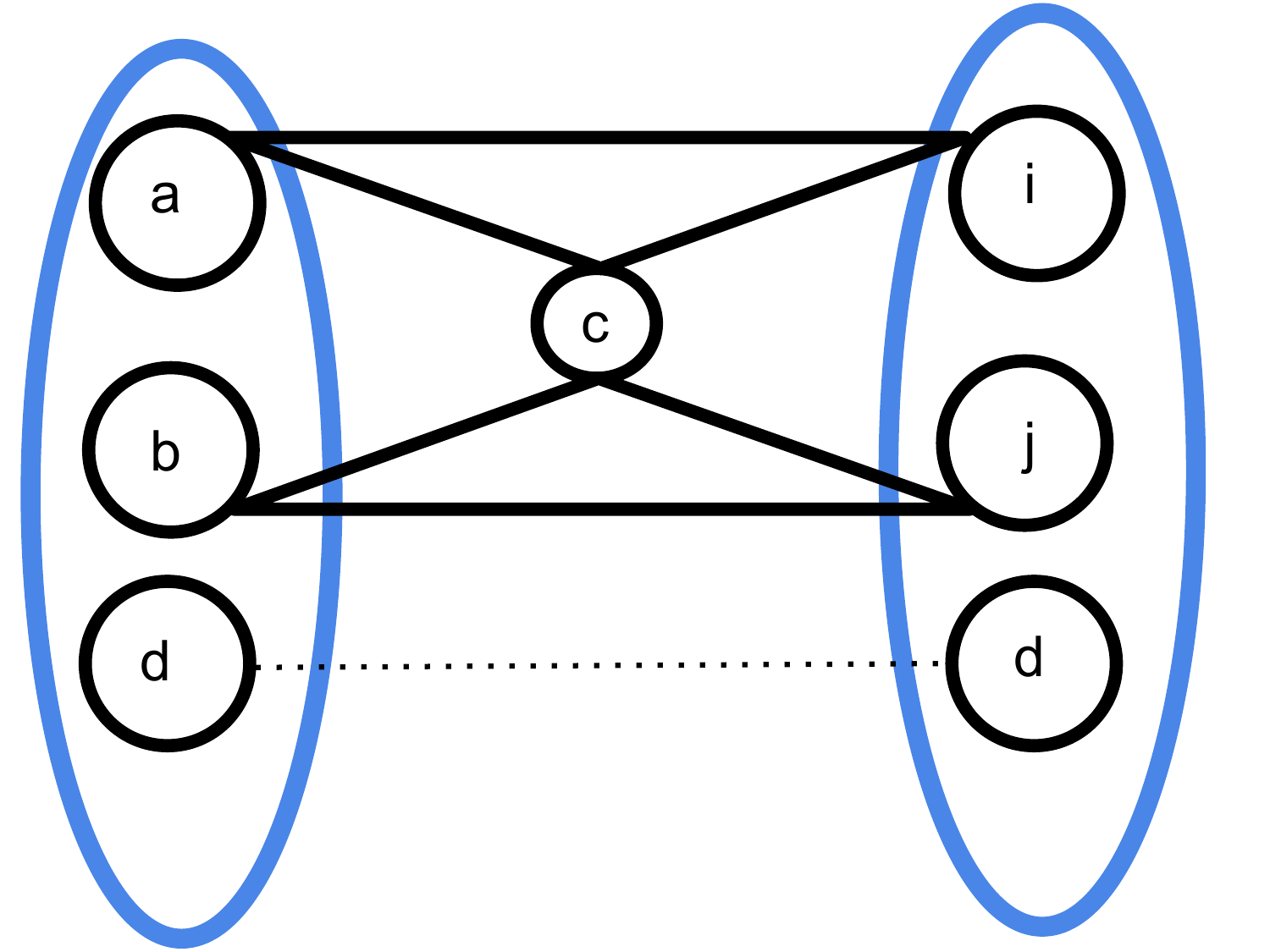}
    \caption{Example of Kikuchi-Enry for $|S|=k=3$}
    \caption*{$M_S[(a,b,d), (i,j,d)] = \sum_{c\not\in \{a,b,d,i,j\} }  G_{(a,c,i)} \cdot G_{b,c,j)}$}
        \label{fig:P2'}
\end{figure}

With the generalized definition of trace-walk, we may now relate the expected trace of the Kikuchi matrix to the weighted sum of trace-walks as defined above.
\begin{proposition} For any $r\geq 3, \ell \in \N $ and any $q\in \N$, 
	\[ 
	\E[\Tr(M_{S,\ell}\cdot   (M_{S,\ell})^T)^{q}]  \leq \sum_{P:\text{trace-walk of length }2q} \val(P) \,.
	\]
    where we recall that \[ 
    \val(P) \coloneqq \E_G \left[ \prod_{i\in [2q]} M_{S,\ell} [S_i, S_{i+1}] \right] = \prod_{e = (I,J)\in E(P)} \E_G[(G_e)^{\mul_P(e)} ]
    \]
    \jnote{add definition for $g_e$}
\end{proposition}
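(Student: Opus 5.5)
We expand the trace as a sum over closed index sequences and then discard every sequence whose expectation vanishes. By definition,
\[
\Tr\big((M M^T)^q\big)=\sum_{S_1,\dots,S_{2q}} \prod_{i\in[2q]} M[S_i,S_{i+1}]
\]
with $S_{2q+1}=S_1$. Here every even-indexed factor comes from $M^T$; since $M[I,J]$ depends only on $I\Delta J$ and well-behavedness is symmetric in $I,J$, it is still an entry of $M$ of the same form.

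\textbf{Step 1: expand each entry into random variables.} For even $|S|$, each nonzero entry is $M[I,J]=C[V(I\Delta J)]\,\Iv{D(I\Delta J)=\beta}$. The tensor entry $C[\cdot]$ is a sum of the centered variables $G_R$ over directed hyperedges $R$ whose $S$-restriction is the given tuple. For odd $|S|$, each entry is $\tilde C[\alpha,\gamma]=\sum_{t\ne\cdot}C[\alpha,t]C[\gamma,t]$, with the intermediate vertex $t$ playing the role of $W_i$. We distribute the product over all of these inner sums. Each term is then indexed by a sequence $(S_1,W_1,\dots,S_{2q+1})$ together with a choice of hyperedge $R$ (or two hyperedges, in the odd case) per step. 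After taking expectation, the term equals $\prod_e \E[G_e^{\mul(e)}]$, because the $G_e$ for distinct undirected hyperedges are independent.

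\textbf{Step 2: prune and bound.} Since $\E G_e=0$, any term in which some hyperedge appears exactly once has zero expectation. These terms are dropped, which yields the ``at least two steps'' condition in the definition of a trace-walk. For the odd case, the entry $\tilde C$ is zero on the diagonal $\alpha=\gamma$, so $t$ ranges only over the vertices permitted in the definition. The inequality in the statement, rather than an equality, absorbs two effects. First, the indicator factors $\Iv{D=\beta}$ lie in $\{0,1\}$. Second, several labelled walks may map to the same underlying sequence of hyperedges. We also need that every surviving term contributes nonnegatively, so that enlarging the index set only increases the total.

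\textbf{Main obstacle: sign control.} $\E[G_e^{j}]$ need not be nonnegative: for example, $\E[G^3]=p(1-p)(1-2p)$ is positive only when $p<1/2$, and the general odd central moment is a polynomial in $p$ whose sign must be checked. We plan to use $p=o(1)$, which gives $|\E G^j|\le p$ and $\E G^j>0$ for all $j\ge2$ and small $p$. Alternatively, we can replace each $\val(P)$ by $\prod_e|\E G_e^{\mul(e)}|$, which is harmless for the subsequent norm bound and still dominates the signed sum.

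A second subtlety is that the $1/|E(\calH)|$ normalization in $C$ is random. We will condition on the high-probability event of \cref{claim:edge-concentration} and replace $1/|E(\calH)|$ by $(1+o(1))/m$. This only changes constant factors, which are absorbed into the $O(\cdot)$ of \cref{lemma:spec-norm-bounds}.
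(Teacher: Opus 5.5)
The paper states this proposition without proof. Your expand--factor--prune argument is the standard one it implicitly relies on. Your sign check ($\E[G^j]=p(1-p)\big[(1-p)^{j-1}+(-1)^j p^{j-1}\big]>0$ for $j\ge 2$ once $p<1/2$) is a point the paper glosses over, and it is what justifies the inequality when the trace-walk family overcounts. Two steps, however, fail as you justify them.

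First, the identity $\E\big[\prod G\big]=\prod_e\E\big[G_e^{\mul(e)}\big]$ is indexed by \emph{ordered} tuples $e\in[n]^k$. It therefore needs all of those variables to be mutually independent and centered, and independence across distinct \emph{undirected} hyperedges does not give this. Under \cref{def:random-instance}, two orderings $R\neq R'$ of the same $k$-set are mutually exclusive, so $\E[G_RG_{R'}]\approx -p^2\neq 0$ even though each occurs once. A tuple with a repeated vertex is never present, so its $G_R$ is the constant $-p$, not mean zero. In both situations the pruning in Step~2 is invalid, and products of such nonzero factors can produce positive contributions that no trace-walk accounts for. What the paper's edge-value paragraph actually assumes is that every ordered tuple $e\in[n]^k$ is an independent $\mathrm{Bernoulli}(p)$ indicator with $p=m/n^k$. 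You should adopt that convention explicitly, or else group the $G_R$ by underlying $k$-set and replace $\E[G_e^{\mul(e)}]$ by the joint moment of each group.

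Second, conditioning on $|E(\calH)|=(1\pm o(1))m$ is the wrong fix for the normalization. Under the conditioned measure the $G_e$ are neither independent nor centered, so Steps~1--2 cannot be run there. The proposition is meant for the unnormalized matrix built from $\sum_{R:R_S=\gamma}G_R$. This is what $\val(P)\le\prod_e p$ and the scale $\sqrt{m/n^{|S|/2}}$ presuppose, and the paper's deviation bounds likewise compare against $\eps m$. The scalar $1/|E(\calH)|$ is divided out only after the high-probability norm bound, using $|E(\calH)|\ge m/2$ with high probability. If you want a truncation anyway, use $\operatorname{Tr}\big((MM^\top)^q\big)\ge 0$ to get $\E\big[\operatorname{Tr}\big((MM^\top)^q\big)\1_{\cE}\big]\le\E\big[\operatorname{Tr}\big((MM^\top)^q\big)\big]$, and expand the right-hand side unconditionally.

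A minor point: the zero diagonal of $\tilde C$ restricts $\alpha\neq\gamma$; it does not restrict $t$, which ranges over all of $V$. With these repairs the argument goes through.
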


We consider the following factor-assignment scheme. For starters, we focus on the "combinatorial" factor that arises from counting of the walk, and a crucial observation to get us started is to split the cost of specifying random variables (viewed as hyper-edges) among vertices, in particular, among the incoming active vertices. 

	\begin{definition}[Hyperedge Status at Each Step]
		For each step-$t$ in a trace walk, for each hyperedge $g_e$ that appears in the matrix entry corresponding to the step-$t$ (the single edge in the even case, and two edges in the odd case), we assign them the following labels, \begin{enumerate}
			\item $F$-Step: a hyperedge that is making its first appearance in the walk;
			\item $L$-Step: a hyperedge that is making its final appearance in the walk;
			\item $H$-Step: a hyperedge that is making its middle appearance (neither first nor final) in the walk.		\end{enumerate}
	\end{definition}

\paragraph{Bounding the Edge-Values}
	For each (possible) hyperedge $e \in [n]^k$, we associate with a random variable $
	G_e = \1[e\in E(\calH)] - p$ for $p = \frac{m}{n^k}$
	 ,  and note that \[ 
	\E[(G_e)^{t}] \leq p 
	\]
	for $t\geq 2$, and \[ 
	\E[G_e ] = 0\,.
	\]
	It suffices for us to consider the following edge-value assignment scheme.
	\begin{definition}[Edge-Assignment Scheme]
	For any step-$t$ in the trace-walk, and any edge $e$ that appears in step-$t$, we define 
	\begin{enumerate}
		\item $F$-step: assign a factor of $B_{ev}(e,t) = p$; 
		\item $H/L$-step: assign a factor of $B_{ev}(e, t)  = 1$.
	\end{enumerate}
	
	\end{definition}
	 	 \begin{proposition}[Verification of Global Value Bound from Steps]
    For any walk $P$, recall that $\val(P) \coloneqq \E_G \left[ \prod_{i\in [2q]} M_{S,\ell} [S_i, S_{i+1}] \right] = \prod_{e = (I,J)\in E(P)} \E_G[(G_e)^{\mul_P(e)} ]$, we have \[ 
    \val(P) \leq   \prod_{e\in E(P) } p      \]
    where $E(P)$ is collection of (hyper)-edges traversed by the walk $P$, and $mul_P(e)$ denotes the multiplicity/number of times $e$ is traversed in $P$.
\end{proposition}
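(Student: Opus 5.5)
The bound follows directly from the factorization of $\val(P)$ over distinct hyperedges, together with the moment bounds for the centered indicators $G_e$. The plan has three steps.

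\textbf{Factorization.} Since hyperedges appear independently in the binomial model, the variables $\{G_e\}_{e}$ are independent. Expanding $\prod_{i\in[2q]} M_{S,\ell}[S_i,S_{i+1}]$ for a fixed walk $P$ (where each step contributes one $G_e$ in the even case and two in the odd case), we group the factors by distinct hyperedge $e \in E(P)$. Independence then gives
\[
\val(P)=\prod_{e\in E(P)}\E[(G_e)^{\mul_P(e)}].
\]
The only care needed is in the odd case. The two hyperedges in one step, passing through the intermediate vertex in $W_t$, could coincide; when they do, this simply raises the multiplicity of that hyperedge. The walk is defined so that $W_t$ avoids the boundary vertices, and $\tilde C$ has zero diagonal, so no other degenerate terms arise.

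\textbf{Per-edge moment bound.} For each $e\in E(P)$, the trace-walk definition guarantees $\mul_P(e)\ge 2$. Hyperedges with multiplicity one would contribute $\E[G_e]=0$ and kill the term, which is why they are excluded. So it suffices to show $|\E[G_e^s]|\le p$ for every $s\ge2$. We have $G_e\in\{1-p,-p\}$ with $\Pr[G_e=1-p]=p$. Therefore
\[
|\E[G_e^s]|\le p(1-p)^s+(1-p)p^s \le p(1-p)\big((1-p)^{s-1}+p^{s-1}\big)\le p(1-p)\le p,
\]
using $(1-p)^{s-1}+p^{s-1}\le 1$ for $s\ge2$.

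\textbf{Conclusion.} Taking the product over $e\in E(P)$ yields $\val(P)\le\prod_{e\in E(P)}p$. This matches the edge-assignment scheme: each distinct edge receives a factor $p$ exactly once, at its $F$-step, and a factor $1$ at every other appearance.

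The only point I expect to need care is bookkeeping. In the odd case, the entries of $\tilde C$, and hence $C$, involve sums over many $R$ with $R_S=\gamma$, so the trace expansion must be written over individual hyperedges rather than over matrix entries. I would handle this by defining a walk to include the choice of hyperedge at each step, as the generalized trace-walk definition does. Each summand is then a monomial in independent $G_e$'s, and the bound above applies to it verbatim. Normalization constants such as $1/|E(\calH)|$ are absorbed separately, via \cref{claim:edge-concentration}.
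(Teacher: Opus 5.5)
Your proof is correct and is essentially the paper's argument: factor $\val(P)$ over the distinct hyperedges of the walk, and use that each hyperedge has multiplicity at least two (a multiplicity-one hyperedge gives $\E[G_e]=0$). Then bound each factor by $|\E[G_e^{s}]|\le p(1-p)^{s}+(1-p)p^{s}\le p(1-p)\le p$ for $s\ge 2$; you justify this more explicitly than the paper's one-line assertion, and the absolute values also cover odd multiplicities.

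One caveat you share with the paper: the factorization treats the ordered-tuple indicators $G_R$ as independent. That is what the stated identity presupposes, but it is not literally true under the random-ordering step of \cref{def:random-instance}, where distinct orderings of the same $k$-set are mutually exclusive.
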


\begin{proof}
Since the product factorizes across random variables, it suffices for us to verify \[ 
\E[\val(P)] =  \E[(G_e)^{\mul_P(e)} ] ]\leq \prod_{e\in E(P) } \E[G_e^{\mul_P(e)}]
\,.\]
By mean zeroness of the random variable, each edge appears at least twice in the walk, and in which case we have  \[ \E_G[(G_e)^{\mul_P(e)} ] = (1-p)^{\mul_P(e)} \cdot p + (1-p) \cdot (-p )^{\mul_P(e)} \leq p    \]
for any $\mul_P(e)\geq 2$. 
\end{proof}

We are now ready to apply the above machinery, and bound the block-value of each step.
	
	\subsection{Warm-up: Block-Value Bound for Even $S$} \label{sec:even-bound}
		To get started, we first showcase the application of this scheme for even $S$. Note that similar bound can also be obtained via standard matrix concentration inequalities while we apply our machinery to prepare for our exposition to the odd case.

We consider the following factor assignment scheme for combinatorial factor and edge-value combined.
	\paragraph{Factor Assignment for  Hyperedge}
\begin{enumerate}
	\item We focus on factors of $n$ first. For an $F$-step, i.e. an hyperedge appearing for the first time, it suffices for us to identify all its incidental vertices to specify the edge. This can be done via  identifying each out-going active vertex's label in $\ell$, that is a factor of $\ell^{|S|/2 } $ in total; moreover, for each vertex outside the current boundary,  a label in $[n]$ is sufficient , that is a total of $n^{k-|S|/2 }$;
	\item For an $H$ or $L$-step, a cost of $O(q)$ is sufficient to identify this edge as this edge has made a prior appearance in the walk.		
	\item For each hyperedge that appears at least twice, we assign a total factor at most $p$ through its edge value; 
	\item For factors of $S$, it suffices for us to assign the entire factor to each time the walk enters a vertex, and another half of the factor departs from a vertex - that is a factor of at most $|S|^{|S|/2}$. There is no factor of $D$ as we consider fixed $\beta$. \jnote{check for $D$ factor}	\item  The culminates in the following factor assignment scheme, assign a factor of $ 
	\ell^{|S|/2 } \cdot n^{k-|S|/2 } \cdot p  \cdot |S|^{|S|/2}
	$
	to the $F$-step, and a factor of $2q$ for each $H$ or $L$-step. \label{claim:factor-assignment-even}
	\end{enumerate}

In short, the above factor assignment scheme simply states that we assign a factor of $n^{k-|S|/2 } \cdot p $ (i.e. degree-bound) for each $F$-step, and a trivial $q$ bound for each non-$F$ step. To conclude norm bounds, we use the following average-idea from the trace-method calculation in Theorem A.7 of \cite{JPRTX}, which states the following, 
\begin{displayquote}
	For a given self-returning walk (i.e. a circle), we have freedom in encoding the walk in either direction (clockwise or counterclock-wise). And it suffices for us to pick a minimal-cost direction for the encoding, which is at most the average-cost of either direction.
\end{displayquote}

In particular, with this observation, we can bound the factor of a given step by considering traversal in either direction, and take the (geometric) average. This gives us a convenient way to avoid the imbalance in the above assignment scheme because each $F$-step in one direction becomes an $L$-step in the other direction (and vice versa), while an $H$-step remains as an $H$-step . Note that in the even-case, this can also be readily achieved by noting each hyperedge necessarily appear in an $F$ and an $L$-step that can be paired together. That said, we continue with this observation that will be more useful later for the odd case. 

We now apply the above idea to obtain a block-value bound for each step. For convenience, we will work in the clockwise direction , let $B(\cdot)$ denote the final block value bound obtained after averaging of either direction, and let $B_{\rightarrow}(\cdot)$ and $B_{\leftarrow}(\cdot)$ denote the block-value bound in either direction of $\rightarrow$ (clockwise) and $\leftarrow$ (counter-clockwise), from the factor assignment scheme without any averaging.

\begin{remark}
	We will apply the aforementioned scheme directly to obtain bounds of $B_{\leftarrow}(\cdot)$ and $B_{\rightarrow}(\cdot)$, and then use these two bounds to conclude a bound for $B(\cdot)$ as our final averaged-bound.
\end{remark}

\paragraph{Block-Value Bound}
Throughout this section, set $q = C \cdot \ell\log n$ for some constant $C>1$. We now establish the following lemma that bounds the block-value for even $|S|$.

\begin{lemma} \label{lem: block-value-bound-even}
	For even $|S|$, for any $\ell$, and $q = C\cdot \ell \log n$ for some constant $C>1$, it suffices for us to take \[ 
	B(M_{S,\ell}) \leq O\left(   \sqrt{\frac{m}{n^{|S|/2}} \cdot \ell ^{|S|/2+1 } \log n}  \right)
	\]
	to satisfy the requirement in \cref{def:block-value-trace}, provided $m=\Omega( n^{|S|/2} \cdot \ell^{1-|S|/2}\log n)$.
\end{lemma}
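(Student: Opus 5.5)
Taking the factor-assignment scheme just described as the starting point, I will bound the cost of each of the $2q$ steps of a trace walk for $M_{S,\ell}M_{S,\ell}^T$ and then balance $F$-steps against $L$-steps by averaging the two traversal directions.

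\textbf{Clockwise cost.} Fix the clockwise direction. An $F$-step introduces a fresh hyperedge $e\in[n]^k$ whose $S$-restriction is $V(S_t\Delta S_{t+1})$. To encode such a step:
\begin{enumerate}
\item name the $|S|/2$ outgoing coordinates of $S_t\setminus S_{t+1}$ among the $\ell$ elements of the current boundary, for $\ell^{|S|/2}$ choices;
\item name the $|S|/2$ incoming vertices, each from $[n]$, together with their $S$-labels;
\item name the $k-|S|$ vertices of $e$ outside $S$, each from $[n]$;
\item pay the edge value $p=m/n^k$.
\end{enumerate}
Multiplying these factors gives
\[
B_\rightarrow(F)\lesssim_{k,|S|}\ell^{|S|/2}\cdot n^{|S|/2}\cdot n^{k-|S|}\cdot p=\ell^{|S|/2}\,\frac{m}{n^{|S|/2}}.
\]
An $H$- or $L$-step reuses an edge that has already appeared. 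It costs $O(q)$ to point back to that earlier edge. One also has to say which boundary elements leave, but that choice is already forced by knowing the edge together with the current boundary. The leaving side's choice of which $|S|/2$ coordinates go out is at most $2^{|S|}$. So these steps cost $O_{|S|}(q)$ with edge value $1$.

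\textbf{Averaging.} Reversing the walk swaps $F$-steps and $L$-steps and keeps $H$-steps as $H$-steps. Because every edge has both a first and a last appearance, the geometric mean of the two direction costs for an $F$/$L$ pair is
\[
\sqrt{\ell^{|S|/2}\,\frac{m}{n^{|S|/2}}\cdot q}.
\]
Substituting $q=C\ell\log n$ gives $B=O\bigl(\sqrt{\tfrac{m}{n^{|S|/2}}\,\ell^{|S|/2+1}\log n}\bigr)$, which is the claimed bound.

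$H$-steps cost only $O(q)$, and we need this to be at most $B$. The condition $q\le B$ is equivalent to $\ell\log n\lesssim \tfrac{m}{n^{|S|/2}}\ell^{|S|/2}$, that is, $m\gtrsim n^{|S|/2}\ell^{1-|S|/2}\log n$, which is exactly the stated density assumption. Under it every step, whichever direction is chosen, is charged at most $B$.

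\textbf{Starting vertex and walk shapes.} The starting index $S_1$ contributes the factor $\mathsf{MatrixDimension}$ in Definition~\ref{def:block-value-trace}. Summing over walk shapes (the $F/H/L$ pattern and the direction choice) contributes at most $2^{O(q)}$, which is absorbed into $B^{2q}$ up to a constant.

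\textbf{Main obstacle.} The delicate point is completeness of the encoding at non-$F$ steps: we must check that the $O(q)$ pointer really determines the whole next boundary $S_{t+1}$. That boundary is $S_t\Delta\bigl(V(e)_S\bigr)$ with the $D$-labels fixed by $\beta$. Here well-behavedness is used: each $S$-label appears exactly once in $S_t\Delta S_{t+1}$, so the split between leaving and entering coordinates has at most $2^{|S|}$ options. A second point is that the direction-averaging must be done per walk. I handle this by taking the cheaper of the two encodings for each walk, which costs at most the geometric mean, as in \cite{JPRTX}.
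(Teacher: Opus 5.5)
Your proposal is correct and follows the paper's proof essentially step for step. You use the same clockwise charge $\ell^{|S|/2}n^{k-|S|/2}p=\ell^{|S|/2}m/n^{|S|/2}$ per $F$-step and $O(q)$ per $H$/$L$-step, the same geometric-mean pairing of each $F$-step with its $L$-step under reversal of the walk, and the same density condition, obtained from requiring $q\le B$ at $H$-steps; your remarks on the starting index, the $2^{O(q)}$ walk-shape count, and the (in fact forced) leaving/entering split at non-$F$ steps only fill in bookkeeping the paper leaves implicit.
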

\begin{proof}

Fixing a traversal direction (say clockwise $\rightarrow$), note the following,
\begin{enumerate}
	\item For any $F$-step, we bound it by \[ 
	B(F) \leq \sqrt{B_{\rightarrow}(F) \cdot B_{\leftarrow}(L)   }  \,,
	\]
	and since $B_{\rightarrow}(F) = 	n^{k-|S|/2 } \cdot p 
$ and   $B_{\leftarrow}(L)= q $ via \cref{claim:factor-assignment-even}, we have
\begin{align*}
	B(F) &\leq \sqrt{\ell^{(|S|/2 }\cdot  	n^{k-|S|/2 } \cdot p  \cdot q} =  \sqrt{	\ell^{|S|/2 }\cdot  n^{k-|S|/2 } \cdot \frac{m}{n^k} \cdot \ell \log n }\\ & \leq  \sqrt{\frac{m}{n^{|S|/2}} \cdot \ell ^{|S|/2+1 } \log n}\,.
\end{align*}

	\item For any $H$-step, since it is an $H$-step in either direction, we simply have \[ 
	B(H) = 2q  
	\]
	from the $H$-step assignment in \cref{claim:factor-assignment-even}. This is upper bounded by $O\left(   \sqrt{\frac{m}{n^{|S|/2}} \cdot \ell ^{|S|/2+1 } \log n}  \right)
$ provided \[ 
m\geq \Omega( n^{|S|/2} \cdot \ell^{1-|S|/2}\log n)
\]
	\item For any $L$-step, the bound is identical from the bound for $F$ as \[
	B(L)\leq \sqrt{B_{\rightarrow}(L) \cdot B_{\leftarrow}(L)}=  \sqrt{\frac{m}{n^{|S|/2}} \cdot \ell ^{|S|/2+1 } \log n}\,.
	 \]
\end{enumerate}

Combing the above, for any even $|S|$, we have a block-value bound of \[ 
B(M_{S,\ell} ) \leq B(F) + B(L) + B(H) = O\left(   \sqrt{\frac{m}{n^{|S|/2}} \cdot \ell ^{|S|/2+1 } \log n}  \right)
\,.\]
This completes the proof for the even case. 

\end{proof}

	\subsection{Block-Value Bound for Odd $S$} \label{sec:odd-bound}
	
	The distinction from the even-case is that each step, or equivalently, each matrix entry is now a sum over products over \emph{two} neighboring hyperedges, and therefore each block step receives instead a label in $\{F, H, L\} \times \{F, H, L\} $. It is not too hard to verify that the case of $F\times F$ is largely identical to the even-case, while special care is warranted in the steps involving an $H$ or $L$-step, particularly the $H\times H$ step when both hyperedges are making a middle appearance, for illustrative purposes.

	\paragraph{Encoding for $L\times L$ and $H\times H$}
	Let's now focus on the case of both incidental edges in the upcoming step have made the prior appearances. For  concreteness, we show case the issue with an $H\times H$-step as its block-value is more transparent without any averaging.
	
	  Naively, we may use a factor of $2q$ to specify each edge as in the even-case for $L$ and $H$-step, giving a total of $O(q^2)$. 
	However, interestingly, such a bound is not tight enough for us to obtain a final bound as setting \[
	B(H\times H) = O(q^2 )\leq  \text{Target-Norm-Bound for Odd}
	 \]
	 does not give us the desired bound on $m$ when setting $q= \Theta( \ell \log n)$ as needed. However, with some reverse-engineering, one can verify that a bound of $B(H\times H)\leq O(q) $ would instead suffice as solving for \[ 
	 q \leq  \text{Target-Norm-Bound for Odd}  
	 \]
	 gives us the desired condition of $ m \geq \Omega_{|S|}( n^{|S|/2} \cdot \frac{\ell}{\ell^{|S|/2}}   \log n) $.
	 
	 This prompts us to show that a cost of $O(q)$ is sufficient in this case, and this is achieved by the following claim.
	 \begin{claim}
	A cost of $O(q)$ is sufficient in total to identify both hyperedges in that have made their prior appearance, i.e. edges in $H\times H$ and $L\times L$-steps.
		 \end{claim}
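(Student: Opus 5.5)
The plan is to pay for a single pointer into the walk's history and then recover both hyperedges locally, using the shared intermediate vertex of the odd-case step. In an odd step $t$, the entry $M_\beta[S_t,S_{t+1}]$ is a sum over $w\in W_t$ of $G_{e_A}G_{e_B}$. Here $e_A$ is the hyperedge whose $S'$-coordinates are the copy-$1$ vertices of $S_t\Delta S_{t+1}$ with last coordinate $w$. Similarly, $e_B$ is the hyperedge whose $S'$-coordinates are the copy-$2$ vertices with the same last coordinate $w$. The key point is that the boundary symmetric difference is already determined once we know the current boundary $S_t$ and the departing and entering labels. Thus the only genuinely unknown data are the vertex $w$, the incoming vertices, and the identities of the two edges.

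\textbf{Step 1: pay for $e_A$ with a time stamp.} For $e_A$ ($H$ or $L$ status), we spend $2q$ to name the earlier step $s<t$ at which $e_A$ appeared, together with an $O(1)$ (depending only on $k,|S|$) choice of whether it was the $A$- or $B$-edge there. This recovers $e_A$ completely as an ordered $k$-tuple. In particular it reveals $w$, which sits at coordinate $\phi(|S|)$, and it reveals the copy-$1$ incoming vertices.

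\textbf{Step 2: recover $e_B$ without a second time stamp.} We argue that, given $w$ and the already-known outgoing copy-$2$ vertices of $S_t\setminus S_{t+1}$, $e_B$ is determined up to an $O_k(1)$ choice among previously seen edges. The only undetermined coordinates of $e_B$ are the $(|S|-1)/2$ incoming copy-$2$ vertices. For these we distinguish two cases.
\begin{itemize}
\item If $e_B$ has appeared before at a step paired with the same $w$, then we can point to it via a charging argument that uses the $O(1)$ label stored at $e_A$'s first appearance.
\item Otherwise, we use the well-behaved splitting condition: exactly half of each copy of $S'$ lies in $S_t\setminus S_{t+1}$. The outgoing half of $e_B$ is known, so $e_B$ is an edge already in the walk that contains a known set of $(|S|-1)/2+1\ge 2$ vertices, namely the outgoing half plus $w$, in prescribed coordinates.
\end{itemize}
In the second case we bound the number of such candidates. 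Condition on the high-probability event that every set of $\ge 2$ fixed vertices in fixed coordinates lies in at most $O(\log n)$ hyperedges of $\calH$; this is the co-degree bound, which holds in our density regime because the expected co-degree is $m/n^{2}\cdot n^{k-2}/n^{k-2}\ll n^{|S|/2-2}$. Even better, we restrict to edges already used by the walk and encode the choice relative to $e_A$'s history, which costs $O(1)$. If this route fails, the fallback is to bound the choice by the co-degree, $\mathrm{polylog}$. That loss is absorbed only in the $\sqrt{\log n}$ regime, where $\ell\le n/\log n$ controls repeated edges.

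\textbf{Step 3: assemble and average.} Combining the two steps gives a cost of $O(q)\cdot O_k(1)$ for $H\times H$. For $L\times L$ we use the same encoding, and after the clockwise/counterclockwise averaging of \cite{JPRTX} it pairs with $F\times F$ as in the even case. The main obstacle is Step 2, in making rigorous that the second edge is pinned down by $w$ and its outgoing half without a fresh $O(q)$ pointer. I expect to need a careful injectivity argument for the encoding, together with the co-degree event, to rule out many earlier edges sharing those vertices.
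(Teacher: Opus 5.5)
Your Step~1 is fine, but it already uses the whole $O(q)$ budget, so everything rests on identifying $e_B$ at $O(1)$ extra cost. Step~2 does not achieve this.

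\textbf{The co-degree route uses the wrong object.}
\begin{itemize}
\item $\E[\operatorname{Tr}((MM^\top)^q)]$ is a sum over walks through arbitrary potential hyperedges $R\in[n]^k$. The edge-value bound (each edge used at least twice, $\E[G_R^r]\le p$ for $r\ge 2$) relies on the $G_R$ being independent and mean zero. So you cannot simply condition on a global property of $\calH$ inside the expectation.
\item The candidates for $e_B$ are hyperedges of the walk. How many of those pass through a given vertex set has nothing to do with co-degrees in $\calH$; it can be $\Theta(q)$, which is exactly the difficulty the claim must overcome.
\item The event is also false as stated. A fixed pair of vertices in fixed positions lies in about $m/n^2$ hyperedges, which is polynomially large for $|S|\ge 5$ at $\ell=O(1)$.
\item Even granting the event, your fallback gives $O(q\,\mathrm{polylog}\,n)$, which does not prove an $O(q)$ bound.
\end{itemize}

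\textbf{Case~1 has no content.} $e_A$ can be paired with many different partners over its appearances. Nothing recorded at its first appearance singles out $e_B$.

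\textbf{The bookkeeping is incomplete.}
\begin{itemize}
\item The outgoing copy-$2$ half of $S_t\setminus S_{t+1}$ is not free. Picking it inside $S_t$ costs about $\ell^{(|S|-1)/2}$, unless it is read off from $e_B$, which is circular.
\item $e_B$ has $k-|S|$ coordinates outside $S$, since $C_S[\gamma]$ sums over all $R$ with $R_S=\gamma$. So $e_B$ is not determined by $w$ and its copy-$2$ vertices when $S\subsetneq[k]$.
\end{itemize}

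\textbf{What is missing} is an amortized count of admissible pairs in place of a pointwise identification of $e_B$. The paper never pins $e_B$ down locally. It works in the graph formed by the $O(q)$ hyperedges the walk has used, where the two edges of the step are adjacent.
\begin{itemize}
\item It first reveals the degree $d(A)$ of the first edge $A$. Once $A$ is known, this leaves only $d(A)$ choices for the second edge.
\item Since the total degree is $O(q)$, Markov leaves at most $O(q)/d(A)$ candidates for $A$.
\item The pair therefore costs $d(A)\cdot O(q)/d(A)=O(q)$.
\end{itemize}
The load-bearing input is this global $O(q)$ bound on total degree. That is precisely the kind of information a step-local argument like yours cannot supply.
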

	\begin{proof}
	As illustrated above, the claim is non-trivial because we are encoding two edges at a time, as opposed to a single one. Let's start by observing that both edges to be encoded must be adjacent to each other in the graph spanned by edges traversed in the walk. Moreover, we know the graph has at most $O(q)$ (hyper-)edges.
	
	The crux of the argument is that to specify both edges: it suffices for us to identify one of the edges first, and then identify its neighboring (hyper-)edge. Moreover, we can further consider a partition of hyperedges based on its degree, and encode them via identifying the degree, and then the particular vertex in the corresponding set of vertices with the same degree. Formally, we encode as the following,\begin{enumerate}
		\item Let $A$ be the first edge, and $B$ the second edge;
		\item We will first reveal $d(A)$: this is a number sufficient to decode edge $B$ if $A$ is revealed;
		\item To reveal $A$, we observe that there are at most $O(q)/d(A)$ choices for edge-choice of $A$, by a Markov argument as the total degree is $O(q)$ in a graph of $O(q)$ edges.
				\end{enumerate}		
		 Combining the above, to encode both edges, we incur a total a cost of \[d(A) \cdot O(q)/d(A)= O(q)\] as desired.
 
%
	\end{proof}
%

Next, we consider the following factor assignment scheme that is a refinement from that for the even-case, with the key difference in step-$2$ that we split the factor arising from the vertex in the intersection of both hyper-edges. 
\paragraph{Factor Assignment for Hyper-edegs}
\begin{enumerate}
	\item Let's start by focusing on factors of $n$. For each hyper-edge appearing for the first time, assign each out-going active vertex's label incident to the underlying edge, that is a factor of $\ell^{(|S|-1)/2 } $ to each hyper-edge appearing as an $F$-step;
	\item For each vertex outside the current boundary, assign the vertex's label in $n$ to the underlying edge unless it is in the intersection of two edges. Recall that our Kikuchi matrix for odd-$k$ has at least one vertex outside the boundary that is in the intersection of two hyper-edges, there are at most $k- (|S|-1)/2 - 1$ vertices since each hyper-edge have at least $(|S|-1)/2$ vertices on the current boundary.
	\item  For each vertex in the intersection, split the cost of $n$, and assign to each hyper-edge a cost of $\sqrt{n}$, and crucially notice this factor of $n$ can be replaced by a factor of $k =O_{{|S|}}(1)$ if the other incident edge is not an $F$-edge, and identified  ;
	\item For each hyper-edge that appears at least twice, we assign a total factor at most $p$ through its edge value; 
	\item Finally, the factor of $S|$ is identical to the even-case. It suffices for us to assign the entire factor to each time the walk enters a vertex, and another half of the factor departs from a vertex - that is a factor of at most $|S|^{2(|S|-1)}$. There is no factor of $D$ as we consider fixed $\beta$. \jnote{double check- go thru with Siu On}
	\item Combining the above, we have a combined factor at most \[
		n^{k- (|S|-1)/2 - 1} \cdot \sqrt{n} \cdot \ell^{(|S|-1)/2 } \cdot p \cdot |S|^{(|S|-1)/2}  \,,
	 \]
	 for each hyper-edge if it first appears in an $F\times F$-step, and a factor of \[ 
	 k \cdot  n^{k- (|S|-1)/2 - 1}   \cdot \ell^{(|S|-1)/2 } \cdot p \cdot |S|^{2(|S|-1)} \,.
	 \] 
	 for each hyper-edge if it first appears in an $F\times L$ or $F\times H$-step. 
	\item For each $H\times H$, and $L\times L$-step,  we assign a \emph{single} factor of $O(q)$ . \label{claim: single-q}
	\end{enumerate}
\jnote{double check the D,S}

\paragraph{Analysis of Block-Value Bound for Odd $S$}

Now we are ready to analyze the block-value of each step-combination for our walk.

\begin{lemma}[Block-Value Bound for Odd $S$]
	For any odd $S$, and any $\ell$, it suffices for us to take \[ 
	B(M_{S,\ell})  = O_{k,|S|} ( \frac{m}{n^{|S|/2}} \cdot \sqrt{\ell}^{|S|} \cdot \sqrt{\log n} ) \,.\]
	provided $\ell \leq O(\frac{n}{\log n})$ and $m \geq \Omega_{k}( n^{|S|/2} \cdot \frac{\ell}{\ell^{|S|/2}}   \sqrt{\log n})$, or $\ell =O(n)$ and $m \geq \Omega_{k}( n^{|S|/2} \cdot \frac{\ell}{\ell^{|S|/2}}   {\log n})$. \end{lemma}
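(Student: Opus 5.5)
We mirror the even case (\cref{lem: block-value-bound-even}): we apply the factor-assignment scheme above to every step of a length-$2q$ trace walk for $M_{S,\ell}M_{S,\ell}^T$, with $q=C\ell\log n$. We then average the clockwise and counter-clockwise encodings, in the style of \cite{JPRTX}, to obtain a per-step bound $B(\cdot)$. Each step now carries a label in $\{F,H,L\}\times\{F,H,L\}$, one coordinate for each of the two hyperedges $G_{(\alpha,t)}$ and $G_{(\gamma,t)}$ meeting at the summed vertex $t\in W_i$. The plan is to show that every one of the nine label types has averaged value at most the target
\[ B_\star \coloneqq \frac{m}{n^{|S|/2}}\,\ell^{|S|/2}\sqrt{\log n}. \]
Then $\E\operatorname{Tr}((MM^T)^q)\le \mathsf{MatrixDimension}\cdot (O(1)\,B_\star)^{2q}$, and Markov's inequality with $q=\Theta(\ell\log n)$ turns this into a high-probability bound.

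\textbf{$F\times F$ steps.} Here each edge gets the first combined factor of the scheme. Substituting $p=m/n^k$ into
\[ n^{k-(|S|-1)/2-1}\cdot\sqrt n\cdot \ell^{(|S|-1)/2}\cdot p \]
gives $\frac{m}{n^{|S|/2}}\ell^{(|S|-1)/2}$ per edge. Each such edge reappears as an $L$ step in the reverse direction, and there it costs at most $O(q)$ for the pair, via the $H\times H$/$L\times L$ claim. Taking the geometric mean of the forward $F\times F$ cost (the product of two edge factors) and the reverse $L\times L$ cost $O(q)$ gives
\[ \sqrt{\Big(\tfrac{m}{n^{|S|/2}}\Big)^2\ell^{|S|-1}\cdot \ell\log n} = B_\star. \]

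\textbf{Mixed steps $F\times H$, $F\times L$.} For these we use the second factor from the scheme. The shared vertex $t$ is now identified at cost $k$ through the already-revealed edge, rather than $\sqrt n$. This saves $\sqrt n$, but we lose the matching $\sqrt n$ that would have been charged to the partner edge, and we pay an extra $O(q)$ to identify the revealed edge. The geometric mean with the reversed step (of type $L\times H$ or $L\times L$) should give
\[ \sqrt{\tfrac{m}{n^{|S|/2}}\,\ell^{(|S|-1)/2}\cdot n^{-1/2}\cdot q}. \]
This is at most $B_\star$ exactly when $m\gtrsim n^{|S|/2}\ell^{1-|S|/2}\cdot q/(\ell\sqrt{n}\cdots)$. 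I expect this condition to reduce to the stated density requirement, and to do so only when $\ell\lesssim n/\log n$.

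\textbf{$H\times H$ and $L\times L$ steps.} By the claim these cost $O(q)=O(\ell\log n)$. Requiring $O(\ell\log n)\le B_\star$ is equivalent to
\[ m\gtrsim n^{|S|/2}\,\ell^{1-|S|/2}\sqrt{\log n}\cdot\sqrt{\log n}/\sqrt{\log n}. \]
Balancing this against the averaged $F$ costs is what produces the $\sqrt{\log n}$ regime, and it is where the condition $\ell=O(n/\log n)$ enters.

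\textbf{Weaker regime $\ell=O(n)$.} If the mixed-step saving is unavailable, we fall back to the cruder bound $O(q^2)$ per $H\times H$ step split across two edges, that is $O(q)$ per edge. This loses a $\sqrt{\log n}$ and gives the $\log n$ density condition.

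The main obstacle is the bookkeeping in the mixed steps and in $H\times H$ steps. We must make sure the $\sqrt n$ split at the intersection vertex is paid for exactly once over the whole walk, so the scheme remains complete. We must also check that a single averaged $O(q)$ cost (not $q^2$) survives when the two edges at a step have different statuses in the two directions. My first attempt would be to pair each step with its reversal and enumerate the nine label pairs explicitly, checking each product against $B_\star^2$.
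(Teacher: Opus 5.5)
Your framework is the paper's: per-step factor assignment, averaging the clockwise and counter-clockwise encodings, and the $O(q)$ encoding of two already-revealed edges. Your $F\times F$ and $H\times H$ computations match it, including the $\sqrt{\log n}$ density coming from $q\le B_\star$.

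The gap is in the mixed steps, which is exactly where the condition $\ell=O(n/\log n)$ enters. Reversing the walk swaps $F$ and $L$. So the reverse of an $F\times L$ step is an $L\times F$ step, which is the same type, not $L\times L$ as you assume. Averaging gains nothing there, and you must bound $B(F\times L)=B_{\rightarrow}(F\times L)$ directly. The step has one first-appearing edge whose shared vertex is recovered from the revealed edge at cost $k$ rather than $\sqrt n$. That edge carries $E'=\frac{m}{n^{|S|/2}}\ell^{(|S|-1)/2}n^{-1/2}$, and you pay $O(q)$ for the revealed edge, so
\[ B(F\times L)=O(q)\,E'=O\Big(B_\star\sqrt{\tfrac{\ell\log n}{n}}\Big). \]
This is at most $B_\star$ iff $\ell\lesssim n/\log n$, whatever $m$ is.

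Your proposal never establishes this: the geometric mean you write down is not valid for $F\times L$, and you only say you expect the condition to appear. You then attribute $\ell=O(n/\log n)$ to the $H\times H$ balance, but $q\le B_\star$ involves no relation between $\ell$ and $n$. Once $F\times L$ is handled, $F\times H$ (whose reverse really is $L\times H$) comes for free, since $B(F\times H)^2\lesssim B(F\times L)\,B(H\times H)$.

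Your explanation of the $\ell=O(n)$ regime also fails. Block values are per step, so falling back to $O(q^2)$ for an $H\times H$ step means requiring $q^2\le B_\star$, i.e.\ $m\gtrsim n^{|S|/2}\ell^{2-|S|/2}\log^{3/2}n$. That loses a factor $q=\ell\log n$ in the density, not $\sqrt{\log n}$. Splitting $q^2$ across the two edges does not change the per-step product; this is precisely why the paper insists on the $O(q)$ encoding.

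The second regime comes instead from the same $F\times L$ step. For $\ell$ up to $O(n)$ its value is $O(B_\star\sqrt{\log n})$, so the block value picks up a $\sqrt{\log n}$. Pushing that through the odd-case deviation bound is what turns the $\sqrt{\log n}$ density requirement into $\log n$.
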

	
	\begin{proof}
		
 Again fix an arbitrary traversal direction, say $\rightarrow$, and fix $\ell = C \ell \log n$ for some constant $C>1$. For starters, note that
our block-value is assigned irrespective of the order of the hyperedge-labels, i.e. $B(F\times H) = B(H\times F)$, hence it suffices for us to consider all combinations up to ordering. 
 Note the following,
\begin{enumerate}
	\item For an $F\times F$-step, note that they both become an $L$-step in the reverse traversal. 
	Firstly, notice that we have \[ 
	B_{\rightarrow}(F\times F) = 		(n^{k- (|S|-1)/2 - 1} \cdot \sqrt{n} \cdot \ell^{(|S|-1)/2 } \cdot p \cdot |S|^{2(|S|-1)} 
)^2\,,
	\]
	and by~\cref{claim: single-q},  \[ 
	B_{\leftarrow}(L\times L) = q
	\,.\]
	Combining the above yields the averaged-bound of ,
	 \begin{align*}
		B(F\times F) &\leq \sqrt{B_{\rightarrow}(F\times F) \cdot B_{\leftarrow} (L\times L) }\\
		&= \sqrt{  
		\left(		n^{k- (|S|-1)/2 - 1} \cdot \sqrt{n} \cdot \ell^{(|S|-1)/2 } \cdot p
 \right)^2  \cdot q \cdot |S|^{2(|S|-1)}  }  \\
		&\leq O_{k,|S|} ( \frac{m}{n^{|S|/2}} \cdot \sqrt{\ell}^{|S|} \cdot \sqrt{\log n}  \cdot ) \,.
	\end{align*} 
	By symmetry, the same bound applies for $B(L\times L)$.
	\item For an $H\times H$-step: a factor of $q$ is sufficient to identify both edges, and there are no reassigned factors, hence we have \[ 
	B(H\times H) = q \,,
	\]
	and verify that \[
	q \leq B(F\times F)
	 \]
	 provided \[ m \geq \Omega_{k}(n^{|S|/2} \cdot \frac{\ell }{\ell^{|S|/2}}\cdot \sqrt{\log n} )\,.\]
	\item For an $F\times L$-step, averaging over both direction is the same, we have \[ 
	 B(F\times L) = \sqrt{ B_{\rightarrow} (F\times L) \cdot B_{\leftarrow}(F\times L) }  =  B_{\rightarrow}(F\times L)	\]
	 To bound $B_{\rightarrow}(F\times L)$, observe that the $F$-edge is assigned a factor of $k \cdot n^{k- (|S|-1)/2 - 1}   \cdot \ell^{(|S|-1)/2 } \cdot p$, and the $L$-edge a factor of $2q$, giving a total of 
	\begin{align*}
		B(F\times L)& =  O_{k}(1) \cdot q \cdot n^{k- (|S|-1)/2 - 1}   \cdot \ell^{(|S|-1)/2 } \cdot p\cdot |S|^{2(|S|-1)} 
	\end{align*}
	
	Simplifying, this is at most our target norm bound obtained in the previous two cases of \[ 	 O_{k,|S|} ( \frac{m}{n^{|S|/2}} \cdot \sqrt{\ell}^{|S|} \cdot \sqrt{\log n} ) \,.\]
	provided $\ell \leq O_{|S|}(\frac{n}{\log n})$.
	
	\item For an $F\times H$-step in one direction, notice it becomes an $L\times H$-step in the reverse direction.  Observe that \[ 
	B_{\rightarrow}(F\times  H)  = k\cdot   n^{k- (|S|-1)/2 - 1} \cdot \cdot \ell^{(|S|-1)/2 }\cdot p  |S|^{2(|S|-1)} \cdot |D|^{2(|S|-1)}
	\,,\] 
	and \[ 
	B_{\rightarrow}(L\times  H)   \leq 2q\,.
	\]
	Therefore, we have \[	B(F\times H) \leq \sqrt{B _{\rightarrow} (F\times H)\cdot B_{\leftarrow}(L\times H)  }	\,. \]
	
	and verify \[
	B(F\times H)= o_n(1) \cdot B(F\times F)\,.
		 \]
		 and note that by symmetry, the same bound holds for $B(L\times H)$.
	\end{enumerate}
Combing the above, for any odd $|S|$, we have a block-value bound of \[ 
B(M_{S,\ell} ) \leq O_{k}(1) \cdot B(F\times F) =  O_{k,|S|} ( \frac{m}{n^{|S|/2}} \cdot \sqrt{\ell}^{|S|} \cdot \sqrt{\log n} ) \,.\]
	provided $\ell \leq O(\frac{n}{\log n})$ and $m \geq \Omega_{k}( n^{|S|/2} \cdot \frac{\ell}{\ell^{|S|/2}}   \sqrt{\log n})$, or $\ell =O(n)$ and $m \geq \Omega_{k}( n^{|S|/2} \cdot \frac{\ell}{\ell^{|S|/2}}   {\log n})$. This completes the proof for the odd case. 

	\end{proof}

\subsection{Wrapping Up}
\label{sec:norm-bound-wrap}
We now complete the proof to our main spectral norm bounds from the block-value bounds established above for even and odd $S$.

\SpectralNormBounds*

\begin{proof}
It suffices for us to show the corresponding block-value bounds for $M_{S,\ell}$ in either odd and even case. For even case, from \cref{lem: block-value-bound-even}, we have a bound of \[ 
B(M_{S,\ell}) \leq O_{k,|S|}\left(   \sqrt{\frac{m}{n^{|S|/2}} \cdot \ell ^{|S|/2+1 } \log n}  \right)
\]
provided $m$ satisfies the density constraint from above. 
Similarly, for odd case, we have a general bound of \[ 
B(M_{S,\ell}) \leq O_{k}(1) \cdot B(F\times F) = O_{k,|S|}( \frac{m}{n^{|S|/2}} \cdot \sqrt{\ell}^{|S|} \cdot \sqrt{\log n} ) 
\]
for any regime of $\ell$ provided $m$ satisfies the above density constraint, as well as an improved bound of 
\[ B(M_{S,\ell} ) \leq O_{k}(1) \cdot B(F\times F) =  O_{k,|S|} ( \frac{m}{n^{|S|/2}} \cdot \sqrt{\ell}^{|S|} \cdot \sqrt{\log n} ) \,.\]
	provided $\ell \leq O(\frac{n}{\log n})$ and $m \geq \Omega_{k}( n^{|S|/2} \cdot \frac{\ell}{\ell^{|S|/2}}   \sqrt{\log n})$,

By the set-up of our block-value bounds in \cref{def:block-value-trace}, we obtain a bound of the high-trace, \[ 
\mathbb{E}\!\left[\operatorname{Tr}\!\left( (M\cdot M^T)^{q}\right)\right]
\;\le\;
\mathsf{MatrixDimension} \cdot B_q(M)^{2q}\,.
\]

 Finally,  plugging $q = C\ell \log n$ for some constant $C>1$,  these block value bounds translate to high-probability norm bounds via~~\cref{claim:block-value-to-norm}. This completes our proof of norm bounds.	
\end{proof}

\section{Lower Bounds}\label{sec:lower-bounds}

In this section, we exploit the recent frameworks of \cite{chan2024how,chan2025how} to show LP, SoS, and related lower
bounds for $t$-wise independent CSPs.
Our work is the first to use these frameworks to prove new LP and SoS lower bounds (beyond $t$-wise uniformity).
Thanks to these modular frameworks, our proofs are relatively short.
Such short proofs would be difficult with the monolithic framework of \cite{KMOW17} that has been standard for
$t$-wise uniform CSPs.
Our results do not follow from the techniques from the pseudo-calibration framework \cite{BHKKMP19} used in
many other SoS lower bounds (and conjectured to give a universal recipe).

\subsection{Linear Program} \label{sec:lp-lower}

The first step towards SoS lower bound is LP lower bound.
In this subsection, we prove LP lower bound $t$-wise independent CSPs, generalizing \cite{benabbas12sdp}.

\begin{lemma}
  Let $t \geq 1$.
  Any $t$-wise $\nu$-independent distribution $\mu$ is associated with a probability density function $f$
  with respect to $\nu^k$, so that
  \begin{equation} \label{eq:density}
    \mu (b) = f (b) \nu^k (b) \quad \text{for } b \in D^k.
  \end{equation}
\end{lemma}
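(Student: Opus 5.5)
The natural choice is $f(b) \coloneqq \mu(b)/\nu^k(b)$ whenever $\nu^k(b) > 0$, and $f(b) \coloneqq 0$ otherwise. With this definition, \cref{eq:density} holds automatically at every $b$ with $\nu^k(b) > 0$. Also $f \geq 0$, and $f$ integrates to $\sum_b f(b)\nu^k(b) = \mu(\supp(\nu^k))$ against $\nu^k$. So everything reduces to one claim: $\mu$ puts no mass outside $\supp(\nu^k) = \supp(\nu)^k$. Given that, \cref{eq:density} holds at every $b \in D^k$ (both sides vanish when $\nu^k(b) = 0$), and $f$ integrates to $1$, so it is a probability density with respect to $\nu^k$.

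For the support claim I only need $t \geq 1$, that is, the $1$-wise marginals. By $t$-wise $\nu$-independence, $\pi_{\{i\}}(\mu) = \nu$ for every $i \in [k]$. Suppose $b \in D^k$ has $\nu^k(b) = \prod_i \nu(b_i) = 0$. Then some coordinate $i$ has $\nu(b_i) = 0$, and
\[ \mu(b) \leq \sum_{x \in D^k,\, x_i = b_i} \mu(x) = \pi_{\{i\}}(\mu)(b_i) = \nu(b_i) = 0 . \]
So $\mu(b) = 0$, as needed.

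Nothing here is a real obstacle; the only care needed is the zero-probability case, which is exactly why $f$ is defined as $0$ off $\supp(\nu)^k$. I would also remark that $f$ is unique up to $\nu^k$-null sets, so the canonical distribution $\mu_J$ built from the $f_C$ does not depend on this choice.
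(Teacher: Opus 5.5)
Your proposal is correct and is essentially the paper's argument. The paper also takes $f = d\mu/d\nu^k$ and justifies this via $\supp(\mu) \subseteq \prod_{i} \supp(\pi_{\{i\}}(\mu)) = \supp(\nu)^k$, which is exactly your use of the $1$-wise marginals; you simply write out the zero-probability case and the normalization explicitly.
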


\begin{proof}
  Define $f$ to be the Radon–Nikodym derivative $f \coloneqq d \mu / d \nu^k$.
  For $f$ to be well defined, we need $\supp(\mu) \subseteq \supp(\nu^k)$.
  This holds because $\supp(\pi_i (\mu)) = \supp(\nu)$ for every $i \in [k]$, so $\supp(\mu) \subseteq
  \prod_{i \in [k]} \supp(\pi_i (\mu)) = \supp(\nu)^k = \supp(\nu^k)$.
\end{proof}

Since the CSP is $t$-wise independent, \cref{eq:density} implies that for any $R \in \R, Q \in {[k] \choose t}$, the
distribution $\mu_R$ can be expressed as $\mu_R = f_R \nu^k$, and
\begin{equation} \label{eq:proj-density}
  \pi_Q (\mu_R) = \nu^Q \quad \Longleftrightarrow \quad \sum_{\substack{b \in D^k \\ b_Q = a}} f_R (b) \nu^k (b)
  = \nu^Q (a) \quad \text{for } a \in D^Q.
\end{equation}

When we say that a distribution $\mu$ over a finite set $\Sigma$ is proportional to a non-negative function
$g: \Sigma \to \R$, we mean $\mu(a) = g(a) / Z$ for $a \in \Sigma$, where $Z = \sum_{a \in \Sigma} g(a)$.

\begin{lemma} \label{lem:proj-proportional}
  Consider variable subsets $T \subseteq S$.
  If a distribution $\mu$ over $D^S$ is proportional to $g$, then $\pi_T (\mu)$ is proportional to $\pi_T (g)$.
\end{lemma}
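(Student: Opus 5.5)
The claim is elementary and follows by unfolding the definitions of proportionality and marginalization. Write $\mu(a) = g(a)/Z$ for $a \in D^S$, with $Z = \sum_{a \in D^S} g(a)$. Recall that the projection of a function is defined by summing out the coordinates in $S \setminus T$, that is, $\pi_T(g)(c) \coloneqq \sum_{a \in D^S,\, a_T = c} g(a)$ for $c \in D^T$. This is the same operation used to define the marginal $\pi_T(\mu)$ of a distribution.

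The plan has two steps.

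\textbf{Compute the marginal.} For every $c \in D^T$,
\[ \pi_T(\mu)(c) = \sum_{a \in D^S,\, a_T = c} \mu(a) = \frac{1}{Z} \sum_{a \in D^S,\, a_T = c} g(a) = \frac{\pi_T(g)(c)}{Z}. \]

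\textbf{Identify the normalizing constant.} The sets $\set{a \in D^S : a_T = c}$, for $c \in D^T$, partition $D^S$. Summing over all of them therefore gives
\[ Z' \coloneqq \sum_{c \in D^T} \pi_T(g)(c) = \sum_{c \in D^T} \sum_{a_T = c} g(a) = \sum_{a \in D^S} g(a) = Z. \]
Hence $\pi_T(\mu)(c) = \pi_T(g)(c)/Z'$ for every $c$. The function $\pi_T(g)$ is non-negative, since it is a sum of non-negative values, so $\pi_T(\mu)$ is proportional to $\pi_T(g)$ in exactly the sense defined above.

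There is no real obstacle. The only points that need care are, first, that $\pi_T$ on functions means summing out the coordinates in $S \setminus T$, and second, that $Z > 0$. The latter holds because $\mu$ is a distribution, which forces $g$ not to be identically zero. The degenerate case $T = \emptyset$ also works, since both sides then reduce to the trivial distribution on $D^\emptyset$.
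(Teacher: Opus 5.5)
Your proof is correct and matches the paper's argument: you write $\mu = g/Z$, pull the constant $1/Z$ through $\pi_T$, and identify $Z$ with $\sum_{b \in D^T} \pi_T(g)(b)$ by grouping the points $a \in D^S$ according to $a_T$. Your remarks that $Z>0$ and that $\pi_T(g)$ is non-negative are small checks the paper leaves implicit.
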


\begin{proof}
  Since $\mu$ is proportional to $g$, we have $\mu = g / Z$, where $Z = \sum_{a \in D^S} g(a)$.
  Therefore $\pi_T (\mu) = \pi_T (g / Z) = \pi_T (g) / Z$.
  The desired result follows since
  \[ Z = \sum_{a \in D^S} g(a) = \sum_{b \in D^T} \sum_{\substack{a \in D^S \\ a_T = b}} g(a)
  = \sum_{b \in D^T} \pi_T (g)(b). \qedhere \]
\end{proof}

Previous LP and SDP lower bounds for $t$-wise uniform CSPs are based on the following distribution $\mu_J$ of satisfying
assignments to an instance.
Recall that each constraint $C$ has a $t$-wise uniform distribution $\mu_C$ of satisfying assignments.

\begin{definition}[{\cite[Definition~5.10]{KMOW17}}] \label{def:canonical-uniform}
  Given an instance $J = (V, \cC)$, sample an assignment $b \in D^V$ as follows:
  \begin{itemize}
    \item For each isolated variable $v \in V \setminus V(\cC)$: $\quad b_v \in D^{\set v}$ is independently
      drawn from the uniform distribution over $D$
    \item For each constraint $C \in \cC$: $\quad b_C \in D^{V(C)}$ is independently drawn from $\mu_C$, conditioned
      on agreeing at the common variables shared by multiple constraints
  \end{itemize}
  Then $b \coloneqq \bigcup_{v \in V \setminus V(\cC)} b_v \cup \bigcup_{C \in \cC} b_C$.
  Let $\mu_J$ be the distribution of $b$.
\end{definition}

It is not obvious how to generalize the above definition from $t$-wise uniformity to $t$-wise independence.
Our solution is to first restate the above definition:

\begin{definition}{\cite[Equation~1]{benabbas12sdp}} \label{def:canonical-uniform-alt}
  Let $\mu_J$ be the distribution proportional to $b \in D^V \mapsto \prod_{C \in \cC} \mu_C (b_{V(C)})$.
\end{definition}

\cref{def:canonical-uniform-alt} is the definition that we now generalize to $t$-wise independent CSPs.
Recall that each constraint $C$ has a probability density function $f_C$ with respect to $\nu^{V(C)}$ so that
$\mu_C = f_C \nu^{V(C)}$ is $t$-wise $\nu$-independent.

\begin{definition}
  Given a constraint set $\cC$ over a variable set $V$, let $f_\cC$ be the product of the $f_C$ over its constraints:
  \[ f_\cC (b) \coloneqq \prod_{C \in \cC} f_C (b_{V(C)}) \qquad \text{for } b \in D^V. \]
\end{definition}

\begin{definition} \label{def:canonical}
  Given an instance $J = (V, \cC)$, define $\mu_J$ to be the distribution over satisfying assignments $b \in A_J$,
  proportional to $f_\cC \nu^V$:
  \[ \mu_J (b) \coloneqq \frac{f_\cC (b) \nu^V (b)}{\sum_{a \in D^V} f_\cC (a) \nu^V (a)} \qquad \text{for } b \in D^V. \]
\end{definition}

When $\nu$ is uniform, \cref{def:canonical} yields the same distribution as in \cref{def:canonical-uniform} and
\cref{def:canonical-uniform-alt}.

The next lemma is a key step towards generalizing \cite[Proposition~10.5]{chan2024how} and
\cite[Lemma~6.27]{chan2025how}.

Consider an instance $J = (V, \cC)$ with variable set $V$ and constraint set $\cC$.
Given a constraint $C \in \cC$, let $J\setminus C \coloneqq (V, \cC\setminus \set C)$ be the subinstance of $J$
with $C$ removed from $\cC$.

\begin{lemma} \label{lem:marginal}
  Suppose a $k$-CSP is $t$-wise $\nu$-independent.
  For any instance $J = (V, \cC)$, variable subset $S \subseteq V$, constraint $C \in \cC$ such that
  $\size{V(C) \cap (V(\cC\setminus \set C) \cup S)} \leq t$,
  \[ \pi_S (\mu_J) = \pi_S (\mu_{J\setminus C}). \]
\end{lemma}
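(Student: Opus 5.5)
We compute the marginal of $\mu_J$ on $S$ directly from \cref{def:canonical}, first summing out the variables that only $C$ touches. Let $U \coloneqq V(C) \cap (V(\cC\setminus\set C) \cup S)$, so $\abs{U} \leq t$ by hypothesis, and let $P \coloneqq V(C) \setminus U$ be the variables of $C$ appearing nowhere else and not in $S$. By \cref{lem:proj-proportional}, $\pi_{V \setminus P}(\mu_J)$ is proportional to $\pi_{V\setminus P}(f_\cC \nu^V)$. The goal is to show this function equals $f_{\cC\setminus\set C}\,\nu^{V\setminus P}$.

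For $b \in D^{V\setminus P}$, write
\[ \pi_{V\setminus P}(f_\cC\nu^V)(b) = f_{\cC\setminus\set C}(b)\,\nu^{V\setminus P}(b) \sum_{c \in D^P} f_C(b_U, c)\,\nu^P(c). \]
This factorization uses two facts. First, no constraint other than $C$ touches $P$, so $f_{\cC\setminus\set C}$ depends only on coordinates in $V\setminus P$. Second, $\nu^V = \nu^{V\setminus P}\nu^P$. The remaining inner sum equals $\pi_U(\mu_C)(b_U)/\nu^U(b_U)$ whenever $\nu^U(b_U)>0$, because $\mu_C = f_C\nu^{V(C)}$. Since $\abs{U}\leq t$, $t$-wise $\nu$-independence of $\mu_C$ gives $\pi_U(\mu_C) = \nu^U$; here we use \cref{eq:proj-density}, extended to subsets of size below $t$ by projecting further. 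Hence the inner sum is $1$.

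When $\nu^U(b_U)=0$, the outer factor $\nu^{V\setminus P}(b)$ already vanishes because $U \subseteq V\setminus P$, so both sides agree. Therefore $\pi_{V\setminus P}(\mu_J)$ is proportional to $f_{\cC\setminus\set C}\,\nu^{V\setminus P}$.

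Next, $\mu_{J\setminus C}$ is proportional to $f_{\cC\setminus\set C}\nu^V = f_{\cC\setminus\set C}\nu^{V\setminus P}\nu^P$. Its marginal on $V\setminus P$ is obtained by summing out $\nu^P$, which gives total mass $1$. Hence $\pi_{V\setminus P}(\mu_{J\setminus C})$ is proportional to the same function, again by \cref{lem:proj-proportional}. Two distributions proportional to the same nonnegative function are equal, provided the normalizer is nonzero; the normalizer is $\sum_a f_{\cC}(a)\nu^V(a)$, which the definition of $\mu_J$ implicitly assumes positive. So $\pi_{V\setminus P}(\mu_J)=\pi_{V\setminus P}(\mu_{J\setminus C})$.

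Finally, $S \subseteq V\setminus P$ because $P$ is disjoint from $S$. Projecting both sides onto $S$ yields the lemma.

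The main obstacle is bookkeeping rather than a conceptual step. One point is the edge case $\abs U < t$ or $U=\emptyset$, where we need lower-order marginals of $\mu_C$ to also be $\nu$-products; this follows by projection as noted. The other is the degenerate case where the normalizer vanishes.
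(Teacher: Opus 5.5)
Your proof is correct and is essentially the paper's argument. You sum out the variables private to $C$ (your $P$, the paper's $U$) and use $t$-wise $\nu$-independence of $\mu_C$ on the at most $t$ shared scope variables to collapse that sum, so the marginal function matches that of $J\setminus C$; then you apply \cref{lem:proj-proportional} and project to $S$. Your explicit treatment of the cases $\nu^U(b_U)=0$ and $\abs{U}<t$ fills in details the paper leaves implicit.
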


\begin{proof}
  $\mu_J$ is proportional to $g_J$, where $g_J (b) \coloneqq f_\cC (b) \nu^V (b)$ for $b \in D^V$.
  Let $U \coloneqq V(C) \setminus (V(\cC\setminus \set C) \cup S), T \coloneqq V \setminus U$.
  For every $a \in D^T$,
  \[ \pi_T (g_J) (a) = \sum_b f_\cC (b) \nu^V (b) \]
  where the sum is over all $b \in D^V$ such that $b_T = a$.
  Then the sum equals
  \[ f_{\cC\setminus \set{C}} (a) \nu^{V\setminus V(C)} (a_{V\setminus V(C)}) \cdot \sum_b f_C (b_{V(C)})
    \nu^{V(C)} (b_{V(C)})
  \overset{\eqref{eq:proj-density}}= f_{\cC\setminus \set C} (a) \nu^T (a). \]
  Likewise,
  \[ \pi_T (g_{J\setminus C}) (a) = \sum_b f_{\cC\setminus \set C} (b) \nu^V (b)
    = f_{\cC\setminus \set C} (a) \nu^T (a) \cdot \underbrace{\sum_{c \in D^U} \nu^U (c)}_{=1} = \pi_T
  (g_J) (a). \]
  \cref{lem:proj-proportional} implies $\pi_T (\mu_{J\setminus C})$ and $\pi_T (\mu_J)$ are proportional to $\pi_T
  (g_{J\setminus C})$ and $\pi_T (g_J)$
  respectively.
  Since $\pi_T (g_{J\setminus C}) = \pi_T (g_J)$, we have $\pi_T (\mu_{J\setminus C}) = \pi_T (\mu_J)$.
  Further projecting to $S$ yields the desired result.
\end{proof}

Our short proof of the previous lemma closely follows \cite[Proposition~10.5]{chan2024how}, which in turn is based on
\cite[Lemma~3.12]{benabbas12sdp}.
There is a different proof of an analogous result \cite[Theorem~5.12]{KMOW17}, but that proof is more cumbersome
to generalize from $t$-wise uniformity to $t$-wise independence.

By the proof of the LP case of \cite[Theorem~1.1]{chan2024how}, but replacing \cite[Proposition~10.5]{chan2024how} with
\cref{lem:marginal}, we get:

\begin{theorem} \label{independent-lp}
  Let $t \geq 2$.
  If a $k$-CSP is $t$-wise independent, then except with probability $o_{n;k}(1)$, a random instance of
  the CSP with $n$ variables and $\Delta n$ constraints has an LP hierarchy solution of
  level $\Omega_k(n/(\Delta^{2/(t-1)} \log \Delta))$.
\end{theorem}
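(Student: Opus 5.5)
The plan is to run the LP-hierarchy lower bound argument of \cite[Theorem~1.1]{chan2024how} essentially verbatim, with the canonical distribution of \cref{def:canonical} in place of the $t$-wise uniform planted distribution. The only property of that distribution their proof uses is the local marginal-stability statement \cite[Proposition~10.5]{chan2024how}. \cref{lem:marginal} supplies its analogue in our setting.

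Concretely, the level-$r$ LP hierarchy solution assigns to each variable set $S$ with $|S|\le r$ a local distribution over $D^S$. We set
\[ \mathcal{D}_S \coloneqq \pi_S(\mu_{J(S)}), \]
where $J(S)$ is a small ``closure'' subinstance of the random instance containing $S$. Following \cite{chan2024how,KMOW17}, the closure is obtained by repeatedly adding constraints that share more than $t$ variables with the current set, until no such constraint remains.

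Three properties of this solution need to be checked.

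\textbf{Support.} $\mathcal{D}_S$ is supported on assignments satisfying every constraint inside $S$. This holds because $\mu_J$ is supported on satisfying assignments of $J$: each factor $f_C$ vanishes off $\supp(\mu_C)\subseteq R_C$.

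\textbf{Consistency.} For $T\subseteq S$ we need $\pi_T(\mathcal{D}_S)=\mathcal{D}_T$. Both sides are marginals of canonical distributions on two closures $J(T)\subseteq J'$. We peel the constraints of $J'\setminus J(T)$ off one at a time; at each step the removed constraint meets the rest of the instance together with $T$ in at most $t$ variables. \cref{lem:marginal} then shows that each removal leaves $\pi_T$ unchanged.

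\textbf{Boundedness of closures.} Whp every closure of a set of size at most $r$ has size $O(r)$, and admits such a peeling order, whenever
\[ r=\Omega_k\bigl(n/(\Delta^{2/(t-1)}\log\Delta)\bigr). \]
This is a statement purely about the random hypergraph, namely its expansion: small sets of constraints span more than $(k-(t+1)/2)$ new variables per constraint. It is independent of the predicate, so it transfers unchanged from \cite{chan2024how,KMOW17}.

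The main obstacle is ensuring the peeling order exists in the form \cref{lem:marginal} demands. Each removed constraint $C$ must satisfy $|V(C)\cap (V(\cC\setminus\{C\})\cup T)|\le t$, and it must be the boundary constraint at that moment. In \cite{chan2024how} this is obtained from the expansion-based ordering, where constraints are removed in reverse order of addition. I would verify that their ordering lemma delivers exactly this hypothesis, rather than a version phrased in terms of $t$-wise uniform marginals. Once that check goes through, replacing \cite[Proposition~10.5]{chan2024how} by \cref{lem:marginal} completes the proof.
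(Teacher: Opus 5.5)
Your substitution of \cref{lem:marginal} for \cite[Proposition~10.5]{chan2024how} is exactly what the paper does. The ``main obstacle'' you flag is not one: \cref{lem:marginal} is stated with precisely the hypothesis $|V(C)\cap(V(\cC\setminus\{C\})\cup S)|\le t$ used there, so it drops into the peeling argument as is.

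\textbf{The gap.} It is in your boundedness step, where you say the expansion statement ``transfers unchanged'' from \cite{chan2024how}. That is not accurate. \cite[Lemma~8.16]{chan2024how} is proved for hyperedges sampled independently \emph{with replacement}, whereas the instances here come from the binomial model of \cref{def:random-instance}. The paper therefore has to replace that lemma by \cref{lem:random-expand}, which converts between the two models in two regimes.
\begin{enumerate}
\item When $\Delta n\lesssim_k n^{k/2}$: condition a with-replacement sample of the same size on the event of having no repeated hyperedge. Under that event it has the law of the binomial hypergraph given its edge count, and the event has probability bounded below by birthday-problem asymptotics. Failure of expansion therefore transfers with only a constant-factor loss.
\item When $\Delta n\gtrsim_k n^{k/2}$: the size parameter $n/(\hat\Delta^{2/(\lambda-\gamma)}2^{O(k)})$ of the expansion statement drops below $1$, so the statement is vacuous.
\end{enumerate}

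\textbf{The uncovered case $t=k$.} The second regime uses $\lambda=\tau-1\le k-2$, i.e.\ $t<k$. So the conversion genuinely breaks at $t=k$, and your proposal gives no argument for that case. The paper disposes of it separately. $k$-wise $\nu$-independence forces $\mu^R=\nu^k$, hence $\supp(\nu)^k\subseteq R$ for every $R\in\cR$. Any assignment taking values only in $\supp(\nu)$ then satisfies every constraint, so a level-$n$ LP solution exists trivially.

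With these two additions — the model conversion for $t<k$ and trivial satisfiability for $t=k$ — your outline coincides with the paper's proof.
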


Since this paper focuses on the binomial random model, while \cite{chan2024how} concerns independent hyperedges with
replacement, one also needs to replace \cite[Lemma~8.16]{chan2024how} with \cref{lem:random-expand}.
\cref{lem:random-expand} has the additional assumption that $t$ (or $\tau$) is strictly less than $k$, which is not
present in \cite[Lemma~8.16]{chan2024how}.
However, if $t = k$, the $k$-wise independence assumption implies every constraint of the $k$-CSP is satisfied by every
assignment, and in particular the $k$-CSP is trivially satisfiable, so a level-$n$ LP hierarchy solution exists for a
trivial reason.

\subsection{Sum-of-Squares} \label{sec:sdp-lower}

In this subsection, we prove \cref{thm:independent-sdp,thm:t-wise-value-sdp}.

Using \cref{lem:marginal}, it's possible to prove the SDP lower bound for $t$-wise independent CSPs, using
\cref{lem:marginal} in place of \cite[Lemma~10.4]{chan2024how} in \cite[Lemma~11.11]{chan2024how}.

We now show that \cite[Lemma~11.15]{chan2024how} can be generalized from $t$-wise uniform CSP to $t$-wise
independent CSP.
\cite[Lemma~11.15]{chan2024how} was also used implicitly in \cite[Lemma~6.14]{KMOW17}.

Following \cite[Section~11.1]{chan2024how}, a path in an undirected hypergraph $H$ is a sequence $v_1, \dots, v_\ell$ of
vertices of $H$, so that every two consecutive vertices $v_i$ and $v_{i+1}$ in the sequence both belong to some common
hyperedge in $H$.
Given vertex subsets $S$ and $T$ in $H$, a vertex subset $R$ is an $(S,T)$-separator if every path from some vertex in
$S$ to some vertex in $T$ contains a vertex in $R$.

\begin{lemma} \label{lem:cond-indep}
  Let $S,T$ be subsets of variables of an instance $J = (V, \cC)$, and $R$ be an $(S,T)$-separator in $J$.
  Consider picking a random assignment $b$ from $\mu_J$.
  Then $b_S$ and $b_T$ are conditionally independent given $b_R$.
\end{lemma}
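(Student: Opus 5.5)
The plan is to view $\mu_J$ as a Markov random field (Gibbs distribution) on the primal graph of $J$ and derive the conditional independence from an explicit factorization, in the spirit of the global Markov property. By \cref{def:canonical}, $\mu_J(b) \propto \prod_{C \in \cC} f_C(b_{V(C)}) \cdot \prod_{v \in V} \nu(b_v)$. Each factor depends only on the variables of a single hyperedge $V(C)$ or a single vertex $\set{v}$. Rather than invoke Hammersley--Clifford in general (which needs positivity, and our $f_C$ may vanish), I would prove the needed factorization directly, which requires no positivity.

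Let $A$ be the set of vertices reachable from $S \setminus R$ by paths in the hypergraph of $J$ that avoid $R$, and let $B \coloneqq V \setminus (A \cup R)$. Since $R$ is an $(S,T)$-separator, $T \subseteq R \cup B$ and $S \subseteq A \cup R$. The key structural claim is that every constraint $C$ satisfies $V(C) \subseteq A \cup R$ or $V(C) \subseteq B \cup R$. Indeed, if $V(C)$ contained $a \in A$ and $c \in B$, then extending a path to $a$ by the step $a, c$ (legal since both lie in the common hyperedge $V(C)$) would place $c$ in $A$, a contradiction. Grouping the factors accordingly gives
\[ \mu_J(b) = \frac{1}{Z}\, g(b_{A}, b_R)\, h(b_B, b_R) \]
for nonnegative functions $g,h$. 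I would assign each vertex factor $\nu(b_v)$ to the side containing $v$ (and put those for $v \in R$ into $g$), and assign constraints lying inside $R$ arbitrarily to one side.

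From this product form, conditional independence is the standard computation. For any $r$ with $\Pr[b_R = r] > 0$, the conditional law of $(b_A, b_B)$ given $b_R = r$ is proportional to $g(\cdot, r)\,h(\cdot, r)$, which is a product measure on $D^A \times D^B$. Hence $b_A$ and $b_B$ are conditionally independent given $b_R$. Projecting, via \cref{lem:proj-proportional}, gives conditional independence of $b_{S \setminus R}$ and $b_{T \setminus R}$ given $b_R$. The coordinates of $S \cap R$ and $T \cap R$ are determined by $b_R$, so adding them preserves conditional independence, which yields the statement for $b_S$ and $b_T$.

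The main obstacle is the bookkeeping at the boundary, namely vertices of $S$ or $T$ lying in $R$ and constraints whose scope lies entirely in $R$, together with making sure the definition of path (consecutive vertices sharing a hyperedge) matches the separator notion so that the structural claim holds. I expect the positivity issue to be harmless because we never need the converse direction of Hammersley--Clifford; only the factorization implies Markov direction is used, and that direction holds for arbitrary nonnegative factors.
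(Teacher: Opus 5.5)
Your proof is correct. It follows the paper's underlying idea but proves inline the step the paper cites. The paper passes to the primal graph $G$, replacing each constraint by a clique on $V(C)$, and notes that $R$ is still an $(S,T)$-separator there. It then invokes the global Markov property for distributions factorizing over cliques (\cref{lem:global-markov}), citing Lauritzen for disjoint $S,T,R$. The extension to overlapping sets uses the same observation as your final step: the coordinates in $S\cap R$ and $T\cap R$ are fixed by $b_R$. You instead work directly on the hypergraph, using the set $A$ reachable from $S\setminus R$ without touching $R$, the dichotomy $V(C)\subseteq A\cup R$ or $V(C)\subseteq B\cup R$, and the resulting product form $g(b_A,b_R)\,h(b_B,b_R)$. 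This is exactly the standard proof of the factorization-implies-global-Markov implication that the paper outsources. Your route buys self-containment: no translation to the primal graph, no dependence on the disjointness hypothesis of the textbook statement, and it is visibly clear that no positivity of the $f_C$ is needed. That last fact also holds for the cited direction, but the citation hides it. The paper's route buys brevity. One step you should state explicitly is $T\cap A=\emptyset$. It holds because a path from $s\in S\setminus R$ to $t\in T$ avoiding $R$ would contradict separation. This includes the one-vertex path when $s=t$, which is why $S\cap T\subseteq R$.
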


\begin{proof}
  Transform $J$ into a graph $G = (V, E)$ on the same vertex set $V$ by replacing every constraint $C \in \cC$ with a
  clique on $V(C)$.
  That is, $E \coloneqq \Set{ (u,v) \in {V \choose 2} | \set{u,v} \subseteq V(C) \text{ for some } C \in \cC }$.
  A path from in $J$ is equivalent to a path in $G$.
  In particular, $R$ is also an $(S,T)$-separator in $G$.
  The desired conditional independence now follows from the global Markov property (\cref{lem:global-markov}).
\end{proof}

The following lemma is a standard result in probabilistic graphical model:
Any distribution that factorizes according to a graph satisfies the global Markov property.
Global Markov property has not been explicitly used in SoS lower bounds for CSP before, because the canonical
distributions in previous works are simpler and more explicit.
Some sources (e.g. \cite[Proposition~2.40]{lauritzen2019lectures}) only state and prove the global Markov property under
the additional assumption that the vertex subsets are disjoint.
For completeness, we show that the disjointness assumption can be removed.

\begin{lemma} \label{lem:global-markov}
  Let $\cK$ be the collection of cliques in a graph $G = (V, E)$. Suppose a distribution $\mu$ over $D^V$ is
  proportional to $\prod_{K \in \cK} f_K$, where $f_K$ is a $K$-junta for $K \in \cK$.
  For any vertex subsets $S,T \subseteq V$ and $(S,T)$-separator $R \subseteq V$, $\mu_S$ and $\mu_T$ are conditionally
  independent given $\mu_R$.
\end{lemma}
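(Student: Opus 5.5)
The plan is to prove the global Markov property directly from the factorization, with no positivity assumption. Since $\mu$ may have zeros, Hammersley--Clifford is unavailable, but the factorization direction needs no positivity. First I would reduce to the case where $S$, $T$, $R$ are pairwise disjoint. Set $S' \coloneqq S \setminus R$ and $T' \coloneqq T \setminus R$. Because every path from $S$ to $T$ meets $R$, no vertex lies in $S' \cap T'$ (a one-vertex path would avoid $R$). Hence $R$ is an $(S',T')$-separator and $S',R,T'$ are disjoint. Conditionally on $b_R$, the pair $b_S$ is a deterministic function of $(b_{S'}, b_R)$, and similarly $b_T$ of $(b_{T'},b_R)$. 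So conditional independence of $b_{S'}$ and $b_{T'}$ given $b_R$ implies that of $b_S$ and $b_T$ given $b_R$. This removes the disjointness assumption present in \cite[Proposition~2.40]{lauritzen2019lectures}.

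For the disjoint case, let $A$ be the union of the connected components of $G[V\setminus R]$ that meet $S'$, and let $B \coloneqq V \setminus (A \cup R)$. Then $S' \subseteq A$. Also $T' \subseteq B$, since a vertex of $T'$ in such a component would give an $S'$--$T'$ path avoiding $R$. No edge of $G$ joins $A$ and $B$. Consequently every clique $K \in \cK$ satisfies $K \subseteq A \cup R$ or $K \subseteq B \cup R$. Grouping the factors accordingly gives
\[ \mu(b) \propto \phi(b_A, b_R)\,\psi(b_B, b_R) \]
for nonnegative functions $\phi,\psi$. Cliques lying entirely in $R$ can be assigned to either side arbitrarily.

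From this product form the conditional independence follows by summation. Fix $r \in D^R$ with $\mu_R(r)>0$. The conditional law of $(b_A,b_B)$ given $b_R=r$ is proportional to $\phi(\cdot,r)\psi(\cdot,r)$, which is a product measure on $D^A \times D^B$. Hence $b_A$ and $b_B$ are conditionally independent given $b_R = r$. Marginalizing, $b_{S'}$ (a function of $b_A$) and $b_{T'}$ (a function of $b_B$) are conditionally independent given $b_R$. Conditioning events of probability zero are vacuous.

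The main obstacle is the bookkeeping in the reduction and in the clique-splitting step. One must argue carefully that $S'\cap T'=\emptyset$ and that each clique avoids meeting both $A$ and $B$. A clique touching both would contain an edge from $A$ to $B$, contradicting that $A$ is a union of components of $G[V\setminus R]$. Once this is verified, the rest is routine.
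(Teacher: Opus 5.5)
Your proof is correct and has the same structure as the paper's. You reduce to the disjoint sets $S'=S\setminus R$ and $T'=T\setminus R$; your observation that $b_S$ is determined by $(b_{S'},b_R)$ is exactly what the paper's explicit indicator computation verifies. You then apply the global Markov property for disjoint sets, and the only difference is that the paper cites \cite[Proposition~2.40]{lauritzen2019lectures} there, whereas you reprove it via the component splitting $\mu \propto \phi(b_A,b_R)\,\psi(b_B,b_R)$, which makes your argument self-contained.
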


\begin{proof}
  Let $S' \coloneqq S \setminus R, T' \coloneqq T \setminus R$.
  Since $R$ is an $(S,T)$-separator, $S', T'$ and $R$ are disjoint.
  $\mu$ factorizes according to $G$ as in \cite[Definition~2.39]{lauritzen2019lectures}.
  \cite[Proposition~2.40]{lauritzen2019lectures} implies $\mu_{S'}$ and $\mu_{T'}$ are conditionally independent given
  $\mu_R$; see also \cite[Section~2.6.1]{lauritzen2019lectures} for the definition of global Markov property for
  disjoint subsets.

  For $r \in D^R, s \in D^S, t \in D^T$, $\mu_{S|R}(s,r) = \mu_{S'|R}(s_{S'},r) \bracbb{s_{S \cap R} = r_{S \cap R}}$
  and $\mu_{T|R}(t,r) = \mu_{T'|R}(t_{T'},r) \bracbb{t_{T \cap R} = r_{T \cap R}}$.
  Therefore
  \begin{align*}
    \mu_{S|R} (s,r) \mu_{T|R} (t,r) &= \mu_{S'|R} (s_{S'},r) \mu_{T'|R} (t_{T'},r)
    \bracbb{s_{S \cap R} = r_{S \cap R}} \bracbb{t_{T \cap R} = r_{T \cap R}} \\
    &\overset{(*)}= \mu_{S'T'|R} (s_{S'},t_{T'},r) \bracbb{s_{S \cap R} = r_{S \cap R}, t_{T \cap R} = r_{T \cap R}}
    = \mu_{ST|R} (s,t,r),
  \end{align*}
  where $(*)$ is the conditional independence of $\mu_{S'}$ and $\mu_{T'}$ given $\mu_R$.
\end{proof}

By the proof of the SDP case of \cite[Theorem~1.1]{chan2024how}, but replacing
\cite[Lemma~11.15]{chan2024how} with \cref{lem:cond-indep}, we get \cref{thm:independent-sdp}:

\begin{theorem}
  Let $t \geq 2$.
  If a $k$-CSP is $t$-wise independent, then except with probability $o_{n;k}(1)$, a random instance of
  the CSP with $n$ variables and $\Delta n$ constraints has an SoS solution of degree
  $\Omega_k(n/(\Delta^{2/(t-1)} \log \Delta))$.
\end{theorem}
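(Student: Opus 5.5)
The plan is to follow the SDP case of \cite[Theorem~1.1]{chan2024how}. That proof is modular, and only two of its inputs use $t$-wise uniformity. Both have already been generalized above, so the argument should carry over almost verbatim once a few plumbing issues are checked.

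\textbf{Step 1 (random hypergraph expansion).} With probability $1-o_n(1)$, the random binomial instance with $\Delta n$ expected constraints has the local expansion needed by the framework. Concretely, every subinstance on at most $r = \Omega_k(n/(\Delta^{2/(t-1)}\log\Delta))$ constraints has boundary large enough that a constraint can always be peeled off meeting the rest in at most $t$ variables. For the with-replacement model this is \cite[Lemma~8.16]{chan2024how}; here we use \cref{lem:random-expand} instead. As in the LP case, we may assume $t<k$: if $t=k$ the CSP is trivially satisfiable and a degree-$n$ solution exists outright.

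\textbf{Step 2 (local distributions and their consistency).} To each small subinstance $J$ we attach the canonical distribution $\mu_J$ of \cref{def:canonical}. It is supported on satisfying assignments, because each $f_C$ vanishes off $C$'s relation. Closure and peeling give the sequence of constraint removals. \cref{lem:marginal} then shows that marginals on small variable sets $S$ do not depend on which closure is used. This replaces \cite[Proposition~10.5]{chan2024how} and \cite[Lemma~10.4]{chan2024how}, and it already yields the LP solution of \cref{independent-lp}. We then check that every ingredient of the LP construction used downstream in \cite[Lemma~11.11]{chan2024how} goes through with this substitution.

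\textbf{Step 3 (positivity of the moment matrix).} This is the main obstacle. We must show that the pseudo-moment matrix built from the $\mu_J$ is PSD. In \cite{chan2024how} this is done by decomposing the SDP vectors into orthogonal pieces indexed by separators. That decomposition needs a conditional independence statement, \cite[Lemma~11.15]{chan2024how}: under $\mu_J$, the blocks $b_S$ and $b_T$ are conditionally independent given $b_R$ whenever $R$ separates $S$ from $T$. In the uniform case this came from the explicit product structure. Here we instead use \cref{lem:cond-indep}. Since $\mu_J\propto \prod_C f_C\cdot\nu^V$ factorizes over cliques of the primal graph, this lemma follows from the global Markov property (\cref{lem:global-markov}), including the case of non-disjoint sets.

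One more point needs care. The orthogonal decomposition must be insensitive to the choice of $J$, which again follows from Step 2. The framework must also never use uniformity of single-variable marginals. Wherever it implicitly uses the uniform measure on isolated variables, that measure should be replaced by $\nu$, which matches the $\nu^V$ factor in \cref{def:canonical}. With these substitutions, the proof of the SDP case of \cite[Theorem~1.1]{chan2024how} gives a degree-$\Omega_k(n/(\Delta^{2/(t-1)}\log\Delta))$ SoS solution of value $1$.
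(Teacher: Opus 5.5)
Your proposal is correct and follows the paper's route exactly. It runs the SDP case of \cite[Theorem~1.1]{chan2024how}, substituting \cref{lem:random-expand} for \cite[Lemma~8.16]{chan2024how}, \cref{lem:marginal} for \cite[Proposition~10.5 and Lemma~10.4]{chan2024how}, and \cref{lem:cond-indep} (the global Markov property for the clique-factorized $\mu_J$) for \cite[Lemma~11.15]{chan2024how}, and it disposes of $t=k$ as trivially satisfiable, just as the paper does.

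One point you (and the paper) leave implicit is that $\mu_J$ is well defined. This follows from your peeling step: the computation in the proof of \cref{lem:marginal} with $S=\emptyset$ shows the normalizing constant $\sum_b f_\cC(b)\nu^V(b)$ is unchanged when a peelable constraint is removed, so it equals $1$.
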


The value of an SoS solution (pseudo-distribution) $\mu$ on an instance $\cI$ is $\pE_\mu \Val_\cI (x)$.
It also straightforward to show that there exists an SoS solution with value nearly $\opt_t (\rho)$, proving
\cref{thm:t-wise-value-sdp}:

\begin{theorem}
  Let $t \geq 2$, and $\rho$ be a distribution over $\cR$.
  Given any $k$-CSP $(D, \cR)$, except with probability $o_n(1)$, a random $\rho$-instance of the CSP with $n$
  variables and $\Delta n$ constraints has an SoS solution of degree $\Omega_k(n/(\Delta^{2/(t-1)} \log \Delta))$ and of
  value $\opt_t (\rho) + o_n(1)$.
\end{theorem}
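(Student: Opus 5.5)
The plan is to reduce \cref{thm:t-wise-value-sdp} to the $t$-wise independent case (\cref{thm:independent-sdp}) by modifying the constraints, not the pseudo-distribution construction. Fix the optimal tuple $(\mu^R)_{R\in\cR}$ of $t$-wise $\nu$-independent distributions attaining $\opt_t(\rho)$. For each $R$, define the relaxed relation $R^+\coloneqq\supp(\mu^R)$. The new collection $\cR^+\coloneqq\{R^+\}_{R\in\cR}$ is $t$-wise $\nu$-independent by construction, with witnesses $\mu^R$ supported on $R^+$. We then couple the $\rho$-random instance $\cI$ with the instance $\cI^+$ obtained by replacing each constraint $(\sigma,R)$ by $(\sigma,R^+)$. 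Both instances share the same random hypergraph, so $\cI^+$ is a random instance of the $t$-wise independent CSP $(D,\cR^+)$ with the same density.

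Next we apply the lower-bound machinery to $\cI^+$. This means using the canonical distributions $\mu_J$ of \cref{def:canonical} built from the densities $f_{R^+}=d\mu^R/d\nu^k$, together with \cref{lem:marginal}, \cref{lem:cond-indep}, and the \cite{chan2024how} SDP construction. Except with probability $o_n(1)$, this gives an SoS solution $\tilde\mu$ of degree $\Omega_k(n/(\Delta^{2/(t-1)}\log\Delta))$ for $\cI^+$. The key point is that the construction yields more than satisfaction of $\cI^+$: the local marginal of $\tilde\mu$ on the scope $\sigma$ of each constraint is exactly $\mu^R\circ\sigma^{-1}$. This holds because, in the framework of \cite{chan2024how}, the local distributions are the $\mu_J$ restricted to small subinstances $J$ containing the constraint. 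Repeatedly applying \cref{lem:marginal}, valid under the expansion of random hypergraphs (\cref{lem:random-expand}), peels off all other constraints. This leaves $\pi_{V(C)}(\mu_{\{C\}})=\mu_C$.

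Given this, the value of $\tilde\mu$ on the original $\cI$ is
\[ \pE_{\tilde\mu}\Val_\cI(x)=\E_{(\sigma,R)\in\cC}\Pr_{y\sim\mu^R}[y\in R]=\E_{R\sim\tilde\rho}\Pr_{y\sim\mu^R}[y\in R]. \]
Here $\tilde\rho$ is the empirical distribution of relation types. It concentrates around $\rho$ in $\ell_1$ (Chernoff plus a union bound over the finitely many relations) once $\Delta n\to\infty$, so the value is $\opt_t(\rho)\pm o_n(1)$. The $\tilde\mu$ is a valid pseudo-distribution for the SoS relaxation of $\cI$ because that relaxation's only hard constraints are the domain axioms $\Axioms$; the $\cI$-constraints enter only through the objective. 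It is therefore feasible, as required.

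The main obstacle is the marginal-identification claim in the second step. \cref{thm:independent-sdp} is stated only in terms of satisfiability, so we must reopen the \cite{chan2024how} construction and check that the pseudo-expectation of each constraint's indicator equals its canonical local distribution, rather than merely some distribution supported in $R^+$. I would first try to verify this directly from the fact that degree-$d$ pseudo-moments there are defined as marginals $\pi_S(\mu_{J_S})$ over closures $J_S$ of $S$. For $S=V(C)$, the closure contains $C$, and every other constraint in $J_S$ meets $V(C)\cup V(\text{rest})$ in at most $t$ vertices by expansion. Induction with \cref{lem:marginal} then gives $\pi_{V(C)}(\mu_{J_S})=\mu^R$. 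A minor additional check is the case $t=k$ mentioned after \cref{independent-lp}, where every $\mu^R$ is $\nu^k$ and one can take the product pseudo-distribution $\nu^V$ directly.
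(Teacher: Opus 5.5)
Your proposal is correct and follows the paper's proof. The paper likewise reruns the \cite{chan2024how} construction with the maximizing tuple $(\mu^R)_{R\in\cR}$ as the planted per-constraint distributions, uses that the resulting SoS solution has local distribution exactly $\mu^R$ on each constraint $(\sigma,R)$, and finishes with $\Normtv{\rho-\tilde\rho}=o_n(1)$; your relabeling $R^+=\supp(\mu^R)$ is a tidy way to make the appeal to the $t$-wise independent lower bound formal (attach $f_C=d\mu^R/d\nu^k$ to each constraint $C=(\sigma,R)$ rather than to $R^+$, since distinct $R$ may share a support), and your peeling via \cref{lem:marginal} spells out the marginal identification that the paper only asserts.
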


\begin{proof}
  Let $\Paren{\mu^R}_{R \in \cR}$ be a tuple of $t$-wise independent distributions over $D^k$ that is a maximizer for
  $\opt_t (\rho)$, so that $\opt_t (\rho) = \E_{R \sim \cR} \Pr_{x \sim \mu^R} [x \in R]$.
  The proof of \cite[Theorem~1.1]{chan2024how} (together with \cref{lem:cond-indep}) constructs an SoS solution $\mu$
  to the instance $\cI$, where the distribution of assignments to each constraint $C = (\sigma,R)$ in $\cI$ is the
  $t$-wise $\nu$-independent distribution $\mu^R$.

  Denote by $\tilde\rho$ the empirical distribution of relations $R$ from $\cR$:
  \[ \tilde\rho(R') \coloneqq \Pr_{(\sigma,R) \in \cC} [R = R'] \quad \forall R' \in \cR. \]
  Chernoff and union bounds show that $\Normtv{\rho - \tilde\rho} = o_n(1)$.
  Therefore the SoS solution $\mu$ has value
  \begin{align*}
    \pE_\mu \Val_\cI (x) &= \pE_\mu \E_{(\sigma,R) \in \cC} \bracbb{x_\sigma \in R}
    = \E_{R \sim \tilde\rho} \Pr_{x \sim \mu^R} [x \in R] \\
    &= \E_{x \sim \rho} \Pr_{x \sim \mu^R} [x \in R] + \Normtv{\rho - \tilde\rho}
    = \opt_t (\rho) + o_n(1). \qedhere
  \end{align*}
\end{proof}

\subsection{Combined Hierarchies}

We also get lower bounds to combined hierarchies such as LP+AIP, SDP+AIP, and C(BLP+AIP).
See \cite{chan2024how} and \cite{chan2025how} for their definitions.

To generalize \cite[Lemma~6.27]{chan2025how}, we consider the following distribution that generalizes
\cite[Definition~6.26]{chan2025how}.

\begin{definition}
  Suppose $J = (V, \cC)$ is an instance, $W \subseteq V, q \in A_W$.
  Define $\mu_{J,q}$ to be the distribution of a random assignment $b$ from $\mu_J$, conditioned on $b_W = q$.
\end{definition}

The next lemma generalizes \cite[Lemma~6.27]{chan2025how}.
The proof of \cref{lem:extensible} is a straightforward modification of the proof of \cref{lem:marginal}, because
$\mu_{J,q}$ is proportional to $\bracbb{b_W = q} f_\cC \nu^V$.

\begin{lemma} \label{lem:extensible}
  Suppose a $k$-CSP is $t$-wise independent.
  For any instance $J = (V, \cC)$, variable subsets $W \subseteq S \subseteq V$, satisfying assignment
  $q \in A_W$ to $W$, constraint $C \in \cC$ such that $\size{V(C) \cap (V(\cC\setminus\set C) \cup S)} \leq t$,
  we have $\mu_{J,q}$ is well defined if and only if $\mu_{J\setminus C,q}$ is, and
  \[ \pi_S (\mu_{J,q}) = \pi_S (\mu_{J\setminus C,q}). \]
\end{lemma}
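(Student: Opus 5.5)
We prove \cref{lem:extensible} by following the proof of \cref{lem:marginal}, with the unnormalized weight $g_J(b) \coloneqq f_\cC(b)\nu^V(b)$ replaced by $h_J(b) \coloneqq \bracbb{b_W = q} f_\cC(b)\nu^V(b)$. By construction $\mu_{J,q}$ is proportional to $h_J$ whenever $\sum_b h_J(b) > 0$, and $\mu_{J,q}$ is well defined exactly when this normalizing sum is positive. Likewise $\mu_{J\setminus C,q}$ is proportional to $h_{J\setminus C}(b) \coloneqq \bracbb{b_W = q} f_{\cC\setminus\set C}(b)\nu^V(b)$.

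As in \cref{lem:marginal}, set $U \coloneqq V(C)\setminus(V(\cC\setminus\set C)\cup S)$ and $T \coloneqq V\setminus U$. Since $W \subseteq S$, we have $W \subseteq T$, so for $b$ with $b_T = a$ the indicator $\bracbb{b_W = q} = \bracbb{a_W = q}$ depends only on $a$ and factors out of any sum over $b_U$. The hypothesis $\size{V(C)\cap(V(\cC\setminus\set C)\cup S)} \leq t$ says that $V(C)\cap T$ has at most $t$ variables. Hence summing $f_C(b_{V(C)})\nu^{V(C)}(b_{V(C)})$ over $b_U$ gives the marginal of $\mu_C$ on at most $t$ coordinates, which equals $\nu^{V(C)\cap T}$ by \cref{eq:proj-density}, after projecting further from a $t$-set if $\size{V(C)\cap T} < t$. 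This gives, for every $a \in D^T$,
\[ \pi_T(h_J)(a) = \bracbb{a_W=q}\, f_{\cC\setminus\set C}(a)\,\nu^T(a) = \pi_T(h_{J\setminus C})(a), \]
where the second equality uses $\sum_{c\in D^U}\nu^U(c)=1$, exactly as in \cref{lem:marginal}.

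Both conclusions then follow. Summing this identity over $a\in D^T$ shows that the total masses of $h_J$ and $h_{J\setminus C}$ coincide, so one is positive iff the other is; that is, $\mu_{J,q}$ is well defined iff $\mu_{J\setminus C,q}$ is. When they are well defined, \cref{lem:proj-proportional} shows that $\pi_T(\mu_{J,q})$ and $\pi_T(\mu_{J\setminus C,q})$ are proportional to the same function, so they are equal. Projecting from $T$ down to $S\subseteq T$ yields $\pi_S(\mu_{J,q}) = \pi_S(\mu_{J\setminus C,q})$.

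The only point needing care is the step where the indicator factors out of the sum over $b_U$; this is where $W\subseteq S$ is used. A secondary point is that $\nu^{V\setminus V(C)}$ and $f_{\cC\setminus\set C}$ do not depend on $b_U$, which holds because $U\subseteq V(C)$ and $U$ is disjoint from $V(\cC\setminus\set C)$. Everything else is verbatim from \cref{lem:marginal}.
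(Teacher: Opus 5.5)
Your proof is correct and follows the paper's argument essentially step for step. It uses the same $U$ and $T$ as \cref{lem:marginal}, factors out $\bracbb{a_W = q}$ via $W \subseteq S \subseteq T$, computes the marginal with \cref{eq:proj-density}, deduces well-definedness from the equal total masses of the two projected weight functions, and then applies \cref{lem:proj-proportional} and projects to $S$. You also spell out that $V(\cC\setminus\set C) \subseteq T$ and $V \setminus V(C) \subseteq T$, which the paper leaves implicit.
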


\begin{proof}
  If $\mu_{J,q}$ is well defined, it is proportional to $g_{J,q}$, where $g_{J,q} (b) \coloneqq \bracbb{b_W = q}
  f_\cC (b) \nu^V (b)$ for $b \in D^V$.
  Let $U \coloneqq V(C) \setminus (V(\cC\setminus \set C) \cup S), T \coloneqq V \setminus U$.
  For every $a \in D^T$,
  \[ \pi_T (g_{J,q}) (a) = \bracbb{a_W = q} \sum_b f_\cC (b) \nu^V (b) \]
  where the sum is over all $b \in D^V$ such that $b_T = a$.
  Then the sum equals
  \begin{align*}
    &\bracbb{a_W = q} f_{\cC\setminus \set{C}} (a) \nu^{V\setminus V(C)} (a_{V\setminus V(C)}) \cdot
    \sum_b f_C (b_{V(C)}) \nu^{V(C)} (b_{V(C)}) \\
    \overset{\eqref{eq:proj-density}}= & \bracbb{a_W = q} f_{\cC\setminus \set C} (a) \nu^T (a).
  \end{align*}
  Likewise,
  \begin{align*}
    \pi_T (g_{J\setminus C,q}) (a)
    &= \bracbb{a_W = q} \sum_b f_{\cC\setminus \set C} (b) \nu^V (b) \\
    &= \bracbb{a_W = q} f_{\cC\setminus \set C} (a) \nu^T (a) \cdot \underbrace{\sum_{c \in D^U} \nu^U (c)}_{=1}
    = \pi_T (g_{J,q}) (a).
  \end{align*}
  $\mu_{J\setminus C,q}$ is well-defined iff $\sum_{b \in D^V} g_{J\setminus C,q} (b) > 0$,
  iff $\sum_{a \in D^T} \pi_T (g_{J \setminus C,q}) (a) > 0$,
  iff $\sum_{a \in D^T} \pi_T (g_{J,q}) (a) > 0$,
  iff $\sum_{b \in D^V} g_{J,q} (b) > 0$,
  iff $\mu_{J,q}$ is well-defined.

  When $\mu_{J \setminus C,q}$ or $\mu_{J,q}$ is well-defined,
  \cref{lem:proj-proportional} implies $\pi_T (\mu_{J\setminus C,q})$ and $\pi_T (\mu_{J,q})$ are proportional to $\pi_T
  (g_{J\setminus C,q})$ and $\pi_T (g_{J,q})$ respectively.
  Since $\pi_T (g_{J\setminus C,q}) = \pi_T (g_{J,q})$, we have $\pi_T (\mu_{J\setminus C,q}) = \pi_T (\mu_{J,q})$.
  Further projecting to $S$ yields the desired result.
\end{proof}

By the proof of \cite[Theorem~1.4]{chan2025how}, but replacing \cite[Lemma~6.27]{chan2025how} with
\cref{lem:extensible}, we
get:

\begin{theorem} \label{thm:cblpaip}
  Let $k \geq 3$.
  Suppose a $k$-CSP is pairwise independent and lax.
  Then, for any constraint density $\Delta > 0$, with uniformly positive probability, a random instance of
  the CSP with
  $n$ variables and $\Delta n$ constraints has a solution in the C(BLP+AIP) hierarchy of level $\Omega(n)$.
\end{theorem}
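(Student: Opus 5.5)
The plan is to follow the proof of \cite[Theorem~1.4]{chan2025how} essentially verbatim, substituting our canonical distributions for the $t$-wise uniform ones there. The C(BLP+AIP) lower bound framework of \cite{chan2025how} needs three things. First, a random instance with $\Delta n$ constraints has, with uniformly positive probability, the expansion/closure properties used to build the level-$\Omega(n)$ solution; we take these from \cite{chan2025how}, replacing their with-replacement random model by the binomial model via \cref{lem:random-expand}, exactly as in the LP case. Second, the laxness hypothesis, which is a property of the CSP alone and is used unchanged. Third, a family of local distributions on small subinstances, conditioned on partial assignments, that are consistent under removal of constraints meeting the rest of the subinstance in few variables. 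This is the only place where pairwise uniformity entered \cite{chan2025how}, namely through \cite[Lemma~6.27]{chan2025how}.

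For the third ingredient we use $\mu_{J,q}$, defined as $\mu_J$ from \cref{def:canonical} conditioned on $b_W=q$, with densities $f_C$ taken relative to $\nu^{V(C)}$ for the common $\nu$ witnessing pairwise independence. \cref{lem:extensible} with $t=2$ gives the needed peeling statement: if $|V(C)\cap(V(\cC\setminus\set C)\cup S)|\le 2$, then $\mu_{J,q}$ and $\mu_{J\setminus C,q}$ are simultaneously well defined and have the same $S$-marginal. The framework applies this repeatedly along the closure/peeling orders guaranteed by expansion. Since $k\ge 3$, a constraint meeting the rest in at most $2$ variables has at least one private variable, which is what the expansion lemma supplies. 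The resulting local distributions are therefore consistent, and they are supported on satisfying assignments because each $f_C$ vanishes off $C$'s relation.

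The main obstacle is the support/well-definedness bookkeeping that the AIP component requires. In \cite{chan2025how} the affine part is built from the supports of the conditioned canonical distributions, and laxness is used to show that every partial assignment $q$ that arises is extendible, so that the conditioned distribution is well defined with the expected support. We must check that their argument uses only two facts about the distribution: that well-definedness is invariant under peeling, which is the ``iff'' part of \cref{lem:extensible}, and that a constraint sharing at most $2$ variables with the rest can be satisfied by any value on those shared variables lying in $\supp(\nu)^2$. The latter holds because the pairwise marginals of $\mu^C$ equal $\nu^2$. A subtlety is that $\supp(\nu)$ may be smaller than $D$, so values outside $\supp(\nu)$ never occur in the canonical distributions. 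We expect this to be harmless, since AIP solutions only need to be witnessed on the supports produced, but we would verify this where \cite{chan2025how} uses full support of the uniform distribution. Otherwise the proof is a direct substitution.
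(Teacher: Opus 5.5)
Your proposal is correct and follows the paper's own argument. The paper proves this theorem in one line: it reruns the proof of \cite[Theorem~1.4]{chan2025how} with \cite[Lemma~6.27]{chan2025how} replaced by \cref{lem:extensible}, here with $t=2$. The extra checks you list, namely converting to the binomial model via \cref{lem:random-expand} and handling the case $\supp(\nu)\subsetneq D$, are not carried out in the paper, which treats that single lemma substitution as the only change needed.
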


By the proof of \cite[Theorem~1.2]{chan2024how}, but replacing \cite[Proposition~10.5]{chan2024how} with
\cref{lem:extensible}, we get:

\begin{theorem} \label{thm:sdpaip}
  Let $t \geq 2$.
  If a $k$-CSP is $t$-wise independent and $t$-wise neutral for AIP, then except with probability $o_{n;k}(1)$, a
  random instance of the CSP with $n$ variables and $\Delta n$ constraints with replacement has an SDP+AIP hierarchy
  solution of level $\Omega_k(n/(\Delta^{2/(t-1)} \log \Delta))$.
\end{theorem}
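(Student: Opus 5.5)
The plan is to follow the proof of the SDP+AIP lower bound in \cite[Theorem~1.2]{chan2024how} essentially verbatim, changing only the one place where that proof uses $t$-wise uniformity. In \cite{chan2024how}, the SDP+AIP solution on a random instance is built in three steps. First, a local family of distributions on small closed subinstances is defined via the canonical planted distribution. Second, an AIP (affine integer) solution is produced from $t$-wise neutrality. Third, both are glued together using the expansion of random hypergraphs. The only property of the planted distribution used in that proof is \cite[Proposition~10.5]{chan2024how}: removing a constraint that meets the rest of the subinstance, together with the set of interest $S$, in at most $t$ variables does not change the marginal on $S$. This must hold even after conditioning on an assignment $q$ to a subset $W \subseteq S$, as in the consistency requirement for combined hierarchies. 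We would replace the uniform planted distribution by $\mu_J$ from \cref{def:canonical}, and the conditioned version by $\mu_{J,q}$.

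The key steps, in order, are as follows.

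First, we check that $\mu_{J,q}$ is supported on satisfying assignments that extend $q$. This holds because $f_C$ vanishes outside $\supp(\mu^C) \subseteq C$, so $f_\cC \nu^V$ is zero on any non-satisfying assignment.

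Second, we invoke \cref{lem:extensible} in place of \cite[Proposition~10.5]{chan2024how}. This gives both the marginal invariance $\pi_S(\mu_{J,q}) = \pi_S(\mu_{J \setminus C, q})$ and the equivalence of well-definedness. Iterating over a peeling order of constraints then yields the local consistency of the family $\{\pi_S(\mu_{J,q})\}$ across closures. The peeling order comes from the $t$-expansion (``$(\tau,\cdot)$-boundary expansion'') of random hypergraphs of density $\Delta$ up to size $\Omega(n/(\Delta^{2/(t-1)}\log\Delta))$. Here we use \cref{lem:random-expand} for the binomial model, or \cite[Lemma~8.16]{chan2024how} for the with-replacement model stated in the theorem.

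Third, we run the AIP part of \cite{chan2024how}. It uses only the $t$-wise neutrality hypothesis and the same expansion, so it carries over untouched.

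Fourth, for the SDP part, we need the vector solution built from these local distributions to satisfy the orthogonality and decomposition of \cite[Section~11]{chan2024how}. That argument uses conditional independence across separators, and we supply it via \cref{lem:cond-indep}, replacing \cite[Lemma~11.15]{chan2024how} exactly as in the proof of \cref{thm:independent-sdp}. Conditioning on $b_W = q$ multiplies the density by an indicator junta on $W$. Hence $\mu_{J,q}$ still factorizes over cliques, and \cref{lem:global-markov} applies to the conditioned distribution as well.

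The main obstacle we expect is confirming that no other step of \cite[Theorem~1.2]{chan2024how} secretly uses uniformity of the single-variable marginals. One candidate is where isolated variables receive the uniform distribution; in our construction they instead receive $\nu$, which is automatic from the $\nu^V$ factor. Another candidate is a step that identifies the SDP vectors' one-wise moments with uniform ones. We would audit these places first, and replace each occurrence of the uniform distribution on $D$ by $\nu$, noting that the $t$-wise $\nu$-independence of every $\mu^C$ keeps all $(\le t)$-marginals consistent. We would also treat the degenerate case $t=k$ separately, where the CSP is trivially satisfiable, as in the remark after \cref{independent-lp}.
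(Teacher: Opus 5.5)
Your proposal is correct and matches the paper's argument: the paper reruns the proof of \cite[Theorem~1.2]{chan2024how} with \cite[Proposition~10.5]{chan2024how} replaced by \cref{lem:extensible}, the $q$-conditioned analogue of \cref{lem:marginal}. You also substitute \cref{lem:cond-indep} for \cite[Lemma~11.15]{chan2024how} in the SDP component, which matches how the paper handles the pure SDP case; this extends to $\mu_{J,q}$ because $\bracbb{b_W = q} = \prod_{w \in W} \bracbb{b_w = q_w}$ is a product of single-vertex factors, so the clique factorization required by \cref{lem:global-markov} is preserved.
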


\section*{Acknowledgments}
We thank Mitali Bafna, Amey Bhangale, Shuo Pang, Aaron Potechin, Madhur Tulsani, Dmitriy Zhuk, Standa \v{Z}ivn\'{y}
for helpful discussions. We also acknowledge the use of AI tools in assisting with the polishing of the writing.

\clearpage
\newpage
\bibliographystyle{alpha}
\bibliography{bib.bib}
\clearpage
\newpage
\appendix

\section{Deferred Full Algorithms for Spectral Refutation}
\label{sec:full-algorithms}
\subsection{Refutation of Instances with Multiple Predicates}
\label{sec:appendix-mixed-predicates}

\begin{algorithm}[h]
\caption{Spectral Certification Algorithm for Multiple (Boolean) Predicates}
\label{alg:multiple-predicate-certification}
\begin{center}
\begin{minipage}{0.92\linewidth}

\textbf{Input:} A $k$-CSP instance $\calH$ on $n$ Boolean variables with a family of boolean predicates $ \{P_\sigma\}$, and a parameter $\eps$.\\[4pt]
\textbf{Output:} Certified upper bound on the normalized objective value.\\ \\
\textbf{Step 0 (Decomposing into Subinstances of Fixed Predicates).} \\
For any predicate $P_\sigma\in \calP$, let $\calH_\sigma$ denote the instance obtained by restricting to predicate $P_\sigma$ in $\calH$. We write $\calH= \cup_\sigma \calH_{\sigma}$.\\\\
\textbf{Step 1 (Spectral Certification).} \\
Apply Spectral Certification analogous to Step-$1$ from \cref{alg:single-predicate-certification} to each sub-instance of single predicate;
\\\\
\textbf{Step 2 (Bias Enumeration).} \\
For each possible bias $\nu$ (there are $2n$ possibilities in the Boolean domain, or any bias in the discretized net for the general domain $D^k$):
\begin{itemize}
    \item For each predicate $P_\sigma$, if the spectral certification of the corresponding subinstance fails in the previous step, set $\val_t(Q_{\sigma, \nu}) =1  $. 
    \item  Otherwise, solve the corresponding linear program to obtain a dominating polynomial $Q_{\sigma, \nu}(\cdot)$ depending on $\nu$. And set $\val_t(Q_{\sigma, \nu}) \coloneqq \sum_{|S|\leq t } c_S \cdot \nu^{|S|}    $ with an added subscript $\sigma$ to emphasize its dependence on $P_\sigma$.
\end{itemize}

\textbf{Step 3 (Final Certification).} \\
Output
$
\max_{\nu}   \E_{\sigma \sim  E(\calH)} [ \val_t(Q_{\sigma, \nu})]  + \varepsilon
$
as the certified upper bound on the instance’s objective value.


\end{minipage}
\end{center}
\end{algorithm}

\subsection{Full Spectral Refutation Algorithm for General Domain}
We give the full algorithm of spectral refutation for multiple general domain predicates for completeness.
\begin{algorithm}[h]
\caption{Spectral Refutation for Multiple Predicates over a General Domain}
\label{alg:multiple-predicate-general-domain}
\begin{center}
\begin{minipage}{0.94\linewidth}

\textbf{Input:} A $k$-CSP instance $\calH$ over domain $D$ with predicates $\{P_\sigma\}$, parameters $t,\ell$, and $\eps>0$.\\
\textbf{Output:} Certified upper bound on $\max_{x\in D^n}\Psi(x)$, or \textsc{Failure}.\\[4pt]

\textbf{Step 0 (Decomposition).}  
Write $\calH=\bigcup_\sigma \calH_\sigma$, where $\calH_\sigma$ contains constraints with predicate $P_\sigma$.\\[6pt]

\textbf{Step 1 (Spectral Certification).}  
For each $\sigma$, certify all degree-$\le t$ basis polynomials (indexed by $W\in [k]^{\le t},\, b\in D^W$) via their matrices.  
If any bound fails, mark $\sigma$ as \textsc{Bad}.\\[4pt]

\textbf{Step 2 (Marginal Enumeration).}  
Let $\calS$ be an $\eps$-net over $\Delta_D$. For each $\nu\in \calS$ and predicate $\sigma$:
\[
\val_t(Q_{\sigma,\nu}) \coloneqq
\begin{cases}
1, & \text{if }\sigma\text{ is \textsc{Bad}},\\[4pt]
\displaystyle \max_{Q\ge P_\sigma}\ \E_{x\sim \nu}[Q(x)], & \text{otherwise},
\end{cases}
\]
where $Q$ ranges over degree-$t$ polynomials.\\[4pt]

\textbf{Step 3 (Output).}  
Return
\[
\max_{\nu\in \calS}\ \E_{\sigma\sim E(\calH)}\big[\val_t(Q_{\sigma,\nu})\big] + \eps.
\]

\end{minipage}
\end{center}
\end{algorithm}
\section{Deferred Calculations}

\begin{claim}[Squares are negligible]
\label{claim:sq-term-negligible-appendix}
With probability at least \(1-o_n(1)\) over \(\calH\), for any \(y\in [-1,1]^n\),
\[ \sum_{t\in [n]} \sum_{\alpha \in [n]^{|S|-1}} C_S[\alpha, t]^2\, y_\alpha \lesssim_k m\sqrt{\log n}. \]
\end{claim}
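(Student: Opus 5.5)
The plan is to discard $y$ entirely and then bound a single nonnegative random variable by its expectation. Here $C_S[\gamma]$ is read as the unnormalized deviation $\sum_{R\in[n]^k,\,R_S=\gamma} G_R$; this is the scaling under which the right-hand side $m\sqrt{\log n}$ is meaningful. Since $|y_\alpha|\le 1$ and every summand $C_S[\alpha,t]^2$ is nonnegative,
\[
\sum_{t\in[n]}\sum_{\alpha\in[n]^{|S|-1}} C_S[\alpha,t]^2\, y_\alpha \;\le\; Z \coloneqq \sum_{\gamma\in[n]^{|S|}} C_S[\gamma]^2 .
\]
This bound holds simultaneously for all $y\in[-1,1]^n$. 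It therefore suffices to show $Z\lesssim_k m\sqrt{\log n}$ with probability $1-o_n(1)$, and no union bound over $y$ is needed.

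Next I compute $\E Z$. For a fixed ordered tuple $\gamma$, we have $C_S[\gamma]=D_S(\gamma)-\E D_S(\gamma)$, where $D_S(\gamma)$ counts directed hyperedges whose $S$-restriction is $\gamma$. Hence $\E C_S[\gamma]^2=\mathrm{Var}(D_S(\gamma))$. In the model of \cref{def:random-instance}, the indicators $\1[R\in E(\calH)]$ of ordered tuples $R$ are independent across distinct underlying undirected sets. Tuples sharing an underlying set are correlated, but each $R$ is correlated with at most $k!$ others. Therefore
\[
\mathrm{Var}(D_S(\gamma))\le k!\sum_{R:\,R_S=\gamma}\E\,\1[R\in E(\calH)]\le k!\, n^{k-|S|}\,p .
\]
Summing over the $n^{|S|}$ choices of $\gamma$ gives $\E Z\le k!\,n^k p = O_k(m)$.

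Finally, Markov's inequality gives $\Pr[Z> m\sqrt{\log n}]\le O_k(1/\sqrt{\log n})=o_n(1)$, which proves the claim. The bound is also strong enough for its use in \cref{lem:deviation-bound}: there one needs $n\cdot|C^{sq}_\beta(x)|\le \eps^2 m^2/2$ with $C_S$ normalized by $|E(\calH)|$. Rescaling by $|E(\calH)|^2=(1+o(1))m^2$ via \cref{claim:edge-concentration}, this reduces to $n\sqrt{\log n}\lesssim_k \eps^2 m$, which holds throughout the density regime since $m\gtrsim n^{t/2}\ell^{1-t/2}\sqrt{\log n}$ with $t\ge 3$ odd and $\ell\lesssim n/\log n$.

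The only point that needs care is the dependence among ordered tuples with the same underlying hyperedge, since the directed hyperedge $\sigma$ is a random ordering of an undirected edge. I will handle it exactly as above, through the $k!$ bounded-overlap covariance bound, rather than assuming independence of the $G_R$. If a failure probability stronger than $O(1/\sqrt{\log n})$ were required, I would instead apply a second-moment (Chebyshev) bound to $Z$; but Markov already suffices for $1-o_n(1)$.
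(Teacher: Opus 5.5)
Your proposal is correct and follows essentially the same route as the paper. Both arguments drop $y$ using $|y_\alpha|\le 1$ together with nonnegativity of the squares, bound $\E\sum_{\gamma} C_S[\gamma]^2 = O_k(m)$ entry by entry through the variance of each centered degree count, and finish with Markov's inequality. The one difference is that you control the correlation among orderings of the same undirected hyperedge explicitly, via a $k!$ covariance bound, whereas the paper simply treats the summands $G_R$ as independent.
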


\begin{proof}
We prove the general-domain bound; the Boolean case is the special case \(y\in\{\pm1\}^n\).

Since \(\abs{y_\alpha} \le 1\), it suffices to show
\[ \sum_{t\in [n]} \sum_{\alpha \in [n]^{|S|-1}} C_S[\alpha, t]^2 \lesssim_k m\sqrt{\log n} \]
with probability \(1-o_n(1)\).

For any ordered \(|S|\)-tuple \(\Gamma \in [n]^{|S|}\),
\[ C_S[\Gamma] = \sum_{R:\,R_S=\Gamma}(\mathbf{1}[R\in E(\calH)]-p), \qquad p=\tfrac{m}{n^k}, \]
is a sum of \(N_\Gamma=\Theta_k(n^{k-|S|})\) independent mean-zero variables. Hence
\[
\E[C_S[\Gamma]^2]
\le N_\Gamma p
= O_k\!\left(\frac{m}{n^{|S|}}\right).
\]
Summing over all \(\Gamma\in[n]^{|S|}\),
\[
\E\!\left[\sum_{\Gamma} C_S[\Gamma]^2\right] = O_k(m).
\]

Each term \(C_S[\alpha,t]^2\) corresponds to some \(\Gamma\), and each
\(\Gamma\) arises from only \(O_{|S|}(1)\) such decompositions. Therefore
\[ \sum_t \sum_\alpha C_S[\alpha, t]^2 \lesssim_{|S|} \sum_{\Gamma} C_S[\Gamma]^2. \]

Applying Markov’s inequality,
\[
\Pr\!\left[\sum_{t}\sum_\alpha C_S[\alpha,t]^2 \ge Cm\sqrt{\log n}\right]
\lesssim \frac{m}{Cm\sqrt{\log n}}=o_n(1),
\]
so w.h.p.
\[ \sum_{t}\sum_\alpha C_S[\alpha,t]^2 \lesssim_k m\sqrt{\log n}. \]

Multiplying by \(y_\alpha \le 1\) yields the claim.
\end{proof}
\jnote{end new proof}\snote{Thanks for writing it. I've changed the notations.}
\KikuchiClaim*
\begin{proof} 
\label{sec:appendix-verification}

\textbf{Verification for Even $|S|$.}
We start by verifying for the even $|S|$. Recall that our spectral algorithm gives a certification  
\begin{align*}
	|C_S(x)|  \leq  \frac{\|\widetilde{x}^{\otimes \ell} \|_2^2 \cdot \|M_{S,\ell}\|_{sp} }{ \binom{n}{ \ell - |S|/2  } \cdot \binom{|S|}{|S|/2} } 
\end{align*}

Plugging the norm bounds and rearranging with the following observations,
\begin{enumerate}
	\item \[ \|\widetilde{x}^{\otimes \ell} \|_2^2 = \binom{n}{\ell}  \]
	\item \[  \frac{ \binom{n}{\ell} } {\binom{n}{\ell - |S|/2  } }  \leq (n/\ell)^{|S|/2}   \]
	\item Our even $|S|$ norm bound is \[ 
	 \|M_{S,\ell}\|_{sp}\leq  O_{|S|}\left( \sqrt{ \frac{m}{n^{|S|/2}} } \cdot \sqrt{ \ell }^{|S|/2 }  \cdot \sqrt{\ell  \log n}  \right) 
	\]
\end{enumerate}
LHS now becomes\begin{align*}
	 &\frac{\binom{n}{\ell } \cdot   \sqrt{  \frac{m}{n^{|S|/2}} \cdot \ell^{|S|/2+1} \cdot \log n  } }{\binom{n}{\ell - |S|/2  } \cdot \binom{|S|}{|S|/2}}  \leq  O_{|S|}(1) \cdot \frac{1}{\binom{|S|}{|S|/2}} \cdot (\frac{n}{\ell})^{|S|/2} \cdot \ell^{|S|/4 + 1}\cdot  \sqrt{m} \cdot n^{-|S|/4}  \sqrt{\log n}
\end{align*}

Picking the above to be less than $m$ and solve for $m$ gives us \[ 
m\geq \Omega_{|S|}( n^{|S|/2} \cdot \frac{1}{\ell^{|S|/2 - 1 } } \cdot \log n)\,.
\]
\jnote{maybe its ok to keep the even case only because the odd case (boolean) has been removed. keeping for now in case we need anything from here.}

\paragraph{Verification for Odd $|S|$}
Recall that for odd-case, we have\[ 
|C_S(x)| \leq \tilde{C}_S(x)^{1/2} \cdot \sqrt{n}
\]
\jnote{add back the square term}
where $\tilde{C}_S(x)^{1/2}$ is the polynomial obtained from the Cauchy-Schwarz trick. It suffices for us t verify \[ 
\tilde{C}_S(x)^{1/2} \cdot \sqrt{n} = o(m)\,,
\]
In other words, \[ 
\frac{\|\widetilde{x}^{\otimes \ell} \|_2^2 \cdot \|M_{S,\ell}\|_{sp} }{ \binom{n}{ \ell - |S| + 1 } \cdot \binom{|S|-1}{|S|/2}^2 }  = o(\frac{m^2}{n})
\]
Plugging our norm bound for odd $|S|$, we have \[ 
\|M_{S,\ell}\|_{sp} \leq O_{|S|}(1)\frac{m }{n^{|S|/2}} \cdot \ell^{|S|/2}  \sqrt{\log n}
\,.\]
Thus we have \[ O_{S}(1) \cdot 
\frac{\binom{n}{\ell} }{ \binom{n}{\ell-|S|+1} } \cdot \frac{m }{n^{|S|/2}} \cdot \ell^{|S|/2}  \sqrt{\log n} \leq o(\frac{m^2}{n} )
\]
Solving out $m$ , it suffices for us to take  \[m\geq \Omega_{|S|}( n^{|S|/2} \cdot \frac{1}{\ell^{|S|/2 - 1 } } \cdot \sqrt{ \log n} )\,.
 \]
\end{proof}

\section{Deferred Details for Trace Method Calculation}
\paragraph{From Block Value Bound to Norm Bound}

\begin{claim} \label{claim:block-value-to-norm} 
	For an $N\times N$ matrix $M$, let $B_q$ be an valid block-value bound from the assignment scheme, then for any $\eps>0$, we have \[ 
	\|M\|_{sp} \leq (1+\eps) \cdot  B_q
	\]
	with probability at least $1-o_n(1)$.
\end{claim}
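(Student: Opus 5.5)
We will prove the claim by the standard high-moment trace method: apply Markov's inequality to $\Tr((MM^\top)^q)$, using the definition of a valid block-value bound (\cref{def:block-value-trace}). Throughout we take $q = C\ell\log n$, as in the preceding sections. Since $N$ is the number of level-$\ell$ Kikuchi indices, $\log N = O_{k,|D|}(\ell\log n)$, so with $C$ large enough (depending on $\eps$) we get $q \geq \log N/\eps'$ for a suitable $\eps'=\Theta(\eps)$.

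\textbf{Step 1.} Note that $MM^\top$ is positive semidefinite, so
\[ \|M\|_{sp}^{2q} = \lambda_{\max}(MM^\top)^q \leq \Tr\big((MM^\top)^q\big). \]
Hence, for any threshold $\theta>0$,
\[ \Pr\big[\|M\|_{sp} \geq \theta\big] \leq \Pr\big[\Tr((MM^\top)^q) \geq \theta^{2q}\big] \leq \frac{\E\Tr((MM^\top)^q)}{\theta^{2q}} \leq \frac{N\, B_q^{2q}}{\theta^{2q}}, \]
where the last inequality is the validity requirement of the step-bound function $B_q$.

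\textbf{Step 2.} Set $\theta = (1+\eps)B_q$. The failure probability is then at most $N(1+\eps)^{-2q} = \exp\big(\log N - 2q\log(1+\eps)\big)$. Since $\log(1+\eps)\geq \eps/2$ for $\eps\le 1$, it suffices that $q\eps \geq \log N + \omega(1)$. This holds once $q \geq 2\log N/\eps$, which our choice $q=C\ell\log n$ guarantees for $C = C(\eps,k,|D|)$. The failure probability is then $N^{-1} = o_n(1)$.

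\textbf{Main obstacle.} The real subtlety is not this Markov step but consistency of parameters. The block-value bounds of \cref{sec:norm-bound} were derived with $q = C\ell\log n$ appearing inside $B_q$ (through the $H$- and $L$-step factors $O(q)$). Enlarging $C$ to absorb $1/\eps$ therefore changes $B_q$ by only a constant factor depending on $\eps$, which is absorbed into the $O_{k,|S|}(\cdot)$ in \cref{lemma:spec-norm-bounds}. I would check explicitly that $B_q$ is monotone (polynomial) in $q$, so that this substitution is legitimate. I would also check that the trace expansion is valid for non-symmetric $M$, which holds since we use $MM^\top$. The factor $(1+\eps)$ is then correct with $B_q$ evaluated at that enlarged $q$.
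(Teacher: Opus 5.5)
Your proposal is correct and follows the paper's own argument: Markov's inequality applied to $\operatorname{Tr}((MM^\top)^q)$, combined with the validity bound $\E\operatorname{Tr}((MM^\top)^q)\le N B_q^{2q}$, with threshold $(1+\eps)B_q$ and $q=C\ell\log n$. Your explicit bookkeeping makes precise what the paper leaves implicit in ``some large enough constant $C$'': you require $q\ge 2\log N/\eps$, so $C$ depends on $\eps$, and this changes $B_q$ only by a constant factor.
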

\begin{proof}
	Finally, this translates to a matrix norm bound immediately by Markov's inequality.
For any constant $\varepsilon > 0$, we have
\begin{align*}
\Pr\!\left( \|M\|_{\mathrm{sp}} \ge (1+\varepsilon)\, B_q \right)
&\leq 
\frac{\mathbb{E}\!\left[\operatorname{Tr}\!\left( (M M^\top)^q \right)\right]}
{(1+\varepsilon)^{2q} \, B_q^{2q}}
  \\& \;\le\;
\frac{N \cdot B_q^{2q}}{(1+\varepsilon)^{2q} \, B_q^{2q}}
 \\&\leq 
(1+\varepsilon)^{-2q}
c^{-\, q / \log n},	
\end{align*}
for some constant $c>0$. Plugging $q =C\log N=  C \ell  \log n $ for our applications for some large enough constant $C$ completes the proof. 
\end{proof}

\section{Coefficient Bounds for Dominating Polynomials}

\begin{lemma}\label{lem:lp-bounded-dual-norm}
The dual LP in \cref{eq:dual-lp-alphabet} (as well as its specialization to the Boolean domain \cref{eq:dual-lp-boolean})
 has an optimal solution $Q$ with
\[
\|Q\|_1 \;\le\; |D|^{k\bigl(|D|^k+1\bigr)} = O_{|D|, k}(1).
\]
\end{lemma}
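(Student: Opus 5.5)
The plan is to use the standard vertex-solution (basic feasible solution) argument for linear programs, combined with Cramer's rule and Hadamard's inequality. The number of dual variables is $N \coloneqq \binom{k}{t}\abs{D}^t \le 2^k\abs{D}^k$. There is one constraint per $x\in D^k$, so there are $\abs{D}^k$ of them. Every constraint coefficient lies in $\{0,1\}$: the constraint indexed by $x$ reads $\sum_W c_W(x_W)\ge P(x)$, which has coefficient $1$ on exactly $\binom{k}{t}$ variables and right-hand side $P(x)\in\{0,1\}$. The Boolean LP \cref{eq:dual-lp-boolean} is handled the same way after an affine change of basis from characters $x_S$ to indicators $\Iv{x_W=b}$; alternatively, we run the argument directly, since its coefficients are $\pm1$.

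The first step is to reduce to a pointed LP. The dual variables are free, so the feasible polyhedron may contain lines. To obtain a vertex, we quotient out the lineality space $\{c : \sum_W c_W(x_W)=0\ \forall x\}$. The cleanest way is to restrict to a maximal linearly independent subset of the constraint columns. That is, we choose a basis of the span of the functions $x\mapsto \Iv{x_W=b}$ and fix all other coordinates to $0$. This does not change the set of representable polynomials $Q$, and hence does not change the optimum. The restricted polyhedron is pointed. The objective is bounded below, because primal feasibility via $\mu=\nu^k$ and weak duality give this. Therefore an optimal vertex exists.

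Next, at a vertex, some $r\le N$ linearly independent constraints are tight and determine $c$ uniquely. Concretely, $A c = P|_{X}$ for an invertible $r\times r$ submatrix $A$ with entries in $\{0,1\}$ (or $\{-1,0,1\}$), and the right-hand side lies in $\{0,1\}^r$. By Cramer's rule, each coordinate $c_j = \det A_j/\det A$. Here $\abs{\det A}\ge 1$ since $A$ is integral and nonsingular. By Hadamard, $\abs{\det A_j}\le r^{r/2}$. Thus $\abs{c_j}\le r^{r/2}$ and $\norm{Q}_1\le r\cdot r^{r/2}$. With $r\le \abs{D}^k$ (the rank is at most the number of constraints), this gives $\norm{Q}_1\le \abs{D}^{k}\cdot\abs{D}^{k\abs{D}^k/2}\le \abs{D}^{k(\abs{D}^k+1)}$, as claimed.

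The only delicate point is the pointedness/lineality reduction. We must make sure that the coefficients that get fixed to zero still leave the representation of an optimal $Q$ integral-coefficient-free of blowup, and that $\norm{Q}_1$ is measured on the chosen representation. I would handle this by noting that the lemma only asserts the existence of some optimal solution. Zeroing out redundant columns produces a legitimate dual solution of the original LP with the same objective value, so the bound transfers directly.
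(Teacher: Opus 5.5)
Your proposal is correct and is essentially the paper's argument: both exhibit an optimal dual solution that is a basic solution of a square nonsingular system with entries in $\{0,1\}$ (or $\{-1,0,1\}$) and right-hand side in $\{0,1\}$, then bound its coordinates by Cramer's rule using $|\det A|\ge 1$ for the integral denominator. The differences are only technical: the paper handles the lineality space through complementary slackness and a minimal face $\{c : A_T c = y_T\}$ of the optimal set (then zeroes the non-basic columns), whereas you restrict to a column basis to get a pointed polyhedron and take an optimal vertex; your Hadamard bound $r^{r/2}$ is also slightly sharper than the paper's $m^m$.
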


\begin{proof}
Treat the linear inequalities of~\cref{eq:dual-lp-alphabet} as a system $Ac \ge y$ with unknown
$c \in \mathbb{R}^{Z}$, where
\[
Z \;\coloneqq\; \bigcup\Bigl\{ D^{W} \,\Big|\, W \in \binom{[k]}{t}\Bigr\}.
\]
Here the matrix $A$ and vector $y$ have rows indexed by $x \in D^k$. Explicitly,
\[
A_{x,b} \;=\; \mathbf{1}[x_W=b]
\qquad\text{and}\qquad
y_x \;=\; P(x)
\quad\text{for } x\in D^k,\; W\in \binom{[k]}{t},\; b\in D^{W}.
\]
Strong duality yields optimal primal and dual solutions $(\mu^\star,c^\star)$.
Let $S \coloneqq \mathrm{supp}(\mu^\star)$. Complementary slackness implies the set of
optimal dual solutions is the polyhedron
\[
\mathcal{R} \;\coloneqq\; \{\,c \mid Ac \ge y,\; A_S c = y_S\,\}.
\]
$\mathcal{R}$ has a minimal face
\[
\mathcal{F} \;=\; \{\,c \mid A_T c = y_T\,\} \subseteq \mathcal{R}
\quad\text{for some row subset } T \subseteq D^k
\]
(see, e.g., \cite[Theorem~8.4]{Sch86}).
Lemma~\ref{lem:01-linear-system} implies there exists a solution $c^\star \in \mathcal{F}$
with $\|c^\star\|_1 \le m^{m+1}$, where $m \coloneqq |T| \le |D|^k$.
This yields an optimal dual polynomial $Q$ with
\[
\|Q\|_1 \;\le\; |D|^k\bigl(|D|^k+1\bigr),
\]
as claimed.
\end{proof}

\begin{lemma} \label{lem:01-linear-system}
Consider a system of linear equations $Ax=b$ in unknown $x\in\mathbb{R}^d$,
where $A\in\{0,1\}^{m\times d}$ and $b\in\{0,1\}^d$ have zero-one entries.
If the system has a solution, then it has a rational solution $x^\star\in\mathbb{Q}^d$
such that $\|x^\star\|_1 \le m^{m+1}$.
\end{lemma}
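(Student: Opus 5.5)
Assume $Ax=b$ is consistent; we want a rational solution of small $\ell_1$ norm. (Note $b$ must have $m$ entries for $Ax=b$ to make sense.) The plan is the standard Cramer's-rule argument, with the entry bound coming from Hadamard's inequality applied to $0/1$ matrices.

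\textbf{Step 1: reduce to a square invertible system.} Let $r=\operatorname{rank}(A)\le \min(m,d)$. Pick $r$ linearly independent rows of $A$, giving a submatrix $A_R$. Since the system is consistent, every other equation is a linear combination of these rows, and the same combination holds on the right-hand side. Hence any solution of $A_R x=b_R$ solves the whole system. Next pick $r$ linearly independent columns of $A_R$, giving an $r\times r$ nonsingular submatrix $B$ indexed by rows $R$ and columns $C$. Set $x_j=0$ for $j\notin C$ and $x_C=B^{-1}b_R$. This $x$ solves $A_R x=b_R$, so it solves $Ax=b$. Its entries are rational because $B$ and $b_R$ are integral.

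\textbf{Step 2: bound each entry.} By Cramer's rule, $x_j=\det(B_j)/\det(B)$, where $B_j$ is $B$ with column $j$ replaced by $b_R$. Since $B$ is a nonsingular integer matrix, $|\det B|\ge 1$. The matrix $B_j$ is an $r\times r$ matrix with $0/1$ entries, so each column has Euclidean norm at most $\sqrt r$. Hadamard's inequality then gives $|\det B_j|\le r^{r/2}$. Alternatively, the Leibniz expansion gives the cruder bound $r!\le r^r$. Either way, $|x_j|\le r^r\le m^m$.

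\textbf{Step 3: sum up.} At most $r\le m$ coordinates are nonzero, so
\[
\|x\|_1\le r\cdot r^{r}\le m^{m+1}.
\]

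The only mildly delicate point is Step 1: one must check that dropping the dependent rows preserves solutions, which is exactly where consistency of the system is used. Steps 2 and 3 are routine once the square subsystem is in hand.
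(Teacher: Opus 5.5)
Your proposal is correct and follows essentially the same route as the paper: reduce to a nonsingular square subsystem, set the remaining coordinates to zero, apply Cramer's rule with $|\det B|\ge 1$ and $|\det B_j|\le m^m$, and sum over at most $m$ nonzero coordinates. You are slightly more careful than the paper, which invokes ``without loss of generality'' for the row reduction where you explicitly use consistency of the system. You also correctly note that $b$ should lie in $\{0,1\}^m$ rather than $\{0,1\}^d$.
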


\begin{proof}
Without loss of generality, assume $A$ has linearly independent rows, and further
$A=[B\;\;C]$ where the square submatrix $B$ of $A$ is nonsingular. Then
\[
x^\star \;\coloneqq\;
\begin{bmatrix}
B^{-1}b\\[2pt]
0
\end{bmatrix}
\]
is a solution to $Ax=b$.
Cramer's rule implies $x^\star_i = \det(B_i)/\det(B)$ for $i\in[m]$,
where $B_i$ is $B$ with its $i$-th column replaced with $b$.
For the numerator, $|\det(B_i)| \le m^m$ by the determinant formula and the fact
that $A$ and $b$ are entry-wise at most $1$ in magnitude.
For the denominator, $|\det(B)|\ge 1$ because $B$ is a nonsingular integer matrix.
Therefore $|x^\star_i| \le m^m$ for $i\in[m]$, hence
\[
\|x^\star\|_1 \;\le\; m\cdot m^m \;=\; m^{m+1}. \qedhere
\]
\end{proof}

\section{Efron--Stein Decomposition} \label{sec:efron-stein}

In this section, we recall the basics of Efron--Stein decomposition.

Given a subset $S \subseteq [k]$, let $Y_S \subseteq \mathbb{R}^{D^k}$ be the vector space of $S$-juntas,
consisting of functions $f: D^k \to \mathbb{R}$ depending only on $S$.
That is, there exists $g: D^S \to \mathbb{R}$ such that $f(a) = g(a_S)$ for $a \in D^k$.

Below when we consider the inner product, orthogonality, and orthogonal projection ``under a distribution
$\mu$'', we mean ``in the real Hilbert space $L^2(\mu)$ of (equivalence classes of square integrable) functions with
respect to $\mu$''.

\begin{definition}[Efron--Stein]
  Let $\nu^k$ be a product distribution over $D^k$.
  Given a subset $S \subseteq [k]$, define $Y_S$ to be the vector space of $S$-juntas $Z_S$ to be the subspace of $Y_S$
  orthogonal to all the subspaces of proper subsets of $S$:
  \[ Z_S \coloneqq Y_S \cap \Span\set{Y_T | T \subsetneq S}^\perp \quad \text{under } \nu^S. \]
  The subspaces $\set{Z_S | S \subseteq [k]}$ are mutually orthogonal under $\nu^k$ and together they span
  $L^2(\nu^k)$.
  The Efron--Stein decomposition is the orthogonal decomposition
  \begin{equation} \label{eq:efron-stein}
    L^2 (\nu^k) = \bigoplus_{S \subseteq [k]} Z_S \quad \text{under } \nu^k.
  \end{equation}
\end{definition}

Details about the Efron–Stein decomposition can be found in \cite[Section~8.3]{o2014analysis}.

Given $f \in L^2(\nu^k)$, denote by $f^S$ the orthogonal projection of $f$ to $Z_S$.
Then \cref{eq:efron-stein} implies for any $f, g \in L^2(\nu^k)$,
\[ f = \sum_{S \subseteq [k]} f^S \quad \text{and} \quad \E_{\nu^k} [f g] = \sum_{Q \subseteq [k]} \E_{\nu^k}
[f^Q g^Q]. \]
\label{eq:efron-stein-ortho}
Given $f \in L^2(\nu^k), S \subseteq [k]$, denote by
\[ f^{\subseteq S} \coloneqq \sum_{T \subseteq S} f^T. \]
Then \cite[Definition~8.17 and Equation~8.2]{o2014analysis} imply that for $a \in D^k$,
\begin{equation}  \label{eq:cond-exp-proj}
  f^{\subseteq S} (a) = \E_{b \sim \nu^{[k]\setminus S}} f(a_S,b).
\end{equation}

Since $f^{\subseteq Q}$ depends only on coordinates in $Q$,
\[ f^{\subseteq Q}(a) = f^{\subseteq Q}\vert_Q (a_Q) \qquad \forall a \in D^k \]
for some (unique) function $f^{\subseteq Q}\vert_Q$ on $D^Q$.

\begin{lemma} \label{lem:proj-density}
  Given a distribution $\mu$ over $D^k$ having density $f$ wrt $\nu^k$, $\pi_Q (\mu)$ has density
  $f^{\subseteq Q}\vert_Q$ wrt $\nu^Q$ for any $Q \subseteq [k]$.
\end{lemma}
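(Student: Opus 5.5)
Since $f^{\subseteq Q}$ depends only on the coordinates in $Q$, it suffices to compute $\pi_Q(\mu)(a)$ for $a \in D^Q$ directly and compare it with $f^{\subseteq Q}\vert_Q(a)\,\nu^Q(a)$. The plan is to expand the marginal as a sum over completions, factor the product measure, and recognise the conditional expectation formula \cref{eq:cond-exp-proj}.

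\textbf{Step 1: expand the marginal.} Write $\overline Q \coloneqq [k]\setminus Q$. For $a \in D^Q$, use $\mu = f\nu^k$ and the product structure $\nu^k(a,b) = \nu^Q(a)\,\nu^{\overline Q}(b)$ to get
\[
\pi_Q(\mu)(a) = \sum_{b \in D^{\overline Q}} f(a,b)\,\nu^k(a,b) = \nu^Q(a) \sum_{b\in D^{\overline Q}} \nu^{\overline Q}(b)\, f(a,b) = \nu^Q(a)\, \E_{b\sim \nu^{\overline Q}} f(a,b).
\]

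\textbf{Step 2: identify the expectation.} By \cref{eq:cond-exp-proj}, the last expectation equals $f^{\subseteq Q}(a')$ for any $a' \in D^k$ with $a'_Q = a$. By definition of the restriction this is $f^{\subseteq Q}\vert_Q(a)$. Hence
\[
\pi_Q(\mu)(a) = f^{\subseteq Q}\vert_Q(a)\,\nu^Q(a),
\]
which is exactly the statement that $\pi_Q(\mu)$ has density $f^{\subseteq Q}\vert_Q$ with respect to $\nu^Q$.

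\textbf{Step 3: the main subtlety, points outside the support.} The only delicate point is that $f \in L^2(\nu^k)$ is determined only on $\supp(\nu^k)$, and \cref{eq:cond-exp-proj} should be read with that in mind. There are two cases.
\begin{itemize}
\item If $\nu^Q(a) = 0$, then every completion $(a,b)$ has $\nu^k(a,b) = 0$, so $\mu(a,b) = f(a,b)\,\nu^k(a,b) = 0$. Both sides of the identity are therefore $0$, and any value of the density there is acceptable.
\item If $\nu^Q(a) > 0$, the expectation in Step 1 only involves points $(a,b)$ in the support of $\nu^k$, so it is well defined and the identity holds pointwise.
\end{itemize}
With these two cases checked, the argument is a short, direct computation.
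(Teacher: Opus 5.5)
Your proposal is correct and follows the paper's proof essentially verbatim: expand $\pi_Q(\mu)(a)$ as a sum over completions $b \in D^{\overline Q}$, factor $\nu^k(a,b) = \nu^Q(a)\,\nu^{\overline Q}(b)$, and recognise $\E_{b\sim\nu^{\overline Q}} f(a,b)$ as $f^{\subseteq Q}\vert_Q(a)$ via the conditional-expectation formula for $f^{\subseteq Q}$. Your Step~3 on points outside the support is harmless extra care that the paper omits, since zero-mass points contribute nothing to either side of the identity.
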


\begin{proof}
  Denote by $\overline Q \coloneqq [k] \setminus Q$.
  For $a \in D^Q$,
  \begin{align*}
    \pi_Q (\mu) (a) &= \sum_{b \in D^{\overline Q}} \mu(a,b) = \sum_{b \in D^{\overline Q}} f(a,b) \nu^Q (a)
    \nu^{\overline Q} (b) \\
    &= \E_{b \sim \nu^{\overline Q}} f(a,b) \nu^Q (a) \overset{\eqref{eq:cond-exp-proj}}= f^{\subseteq Q}\vert_Q (a)
    \nu^Q (a). \qedhere
  \end{align*}
\end{proof}

Throughout this paper, $\norm{f}_p \coloneqq (\E_{\nu^k} [\abs{f}^p])^{1/p}$ for any $p \geq 1$, and $\norm{f}
\coloneqq \norm{f}_2$.
Note that $\norm{f}_1$ is different from $\norm{f}_{\ell_1} \coloneqq \sum_{a \in D^k} \abs{f(a)}$.

\section{Simple Probability Results}

\begin{lemma} \label{lem:independent-error}
  For $t \in \N$, let $U_t$ be the uniform distribution over $[n]^{\underline t}$.
  Then $\norm{U_{s+t} - U_s U_t}_{\ell_1} = O(1/n)$ for any $s,t \in \N$.
  In particular, $\norm{U_{2s} - U_s U_s}_{\ell_1} = O(1/n)$ for any $s \in \N$.
\end{lemma}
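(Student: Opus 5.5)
Write $N = n^{\underline{s+t}}$. View $U_{s+t}$ as the law of a uniformly random injective tuple $(X,Y) \in [n]^{\underline{s+t}}$, where $X$ is the first $s$ coordinates and $Y$ the last $t$. The product $U_s U_t$ is the law of an independent pair $(X,Y)$ with $X \sim U_s$ and $Y \sim U_t$. We will exploit that $U_{s+t}$ is just $U_s U_t$ conditioned on the event $E$ that $X$ and $Y$ are disjoint.

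The steps, in order:

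\begin{enumerate}
\item Observe that both distributions are uniform on their supports.
\begin{itemize}
\item $U_s U_t$ gives mass $1/(n^{\underline s} n^{\underline t})$ to every pair $(a,b) \in [n]^{\underline s} \times [n]^{\underline t}$.
\item $U_{s+t}$ is exactly $U_s U_t$ conditioned on $E$. Indeed, the pairs satisfying $E$ are precisely the elements of $[n]^{\underline{s+t}}$, and conditioning a uniform distribution on a subset gives the uniform distribution on that subset.
\end{itemize}

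\item Apply the standard identity for conditioning on an event. If $\mu' = \mu(\cdot \mid E)$, then
\[ \norm{\mu - \mu'}_{\ell_1} = 2\,\mu(E^c). \]
To see this, on $E$ the difference is $\mu(a)\bigl(1/\mu(E) - 1\bigr)$, which sums to $1-\mu(E)$. Off $E$ the difference is $\mu(a)$, which sums to $\mu(E^c)$. Adding the two gives $2\mu(E^c)$.

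\item Bound $\Pr[E^c]$ when $X$ and $Y$ are independent, using a union bound over coordinate pairs. For each fixed pair $(i,j)$ with $i \le s$ and $j \le t$, the coordinates $X_i$ and $Y_j$ are independent and each is uniform on $[n]$. Hence $\Pr[X_i = Y_j] = 1/n$, and so
\[ \Pr[E^c] \le \frac{st}{n}. \]
\end{enumerate}

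Combining these, $\norm{U_{s+t} - U_s U_t}_{\ell_1} \le 2st/n = O(1/n)$ for fixed $s$ and $t$. The special case $t = s$ gives the second assertion.

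No step is a real obstacle. The only point needing care is step 1, namely checking that the support of $U_{s+t}$, identified with pairs via concatenation, is exactly the event $E$. This is immediate from the definition of $W^{\underline k}$. We assume $n \ge s+t$ so that $E$ has positive probability.
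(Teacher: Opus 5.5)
Your proof is correct and is essentially the paper's argument. The paper also splits pairs $(\alpha,\beta)$ into disjoint and overlapping ones and bounds each part separately; those two parts are exactly the two copies of $\Pr[E^c]$ in your conditioning identity. Your packaging via $U_{s+t} = U_sU_t(\cdot \mid E)$ plus a union bound is slightly cleaner, and it yields the explicit bound $2st/n$ rather than an unspecified $O(1/n)$.
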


\begin{proof}
  When $\alpha \in [n]^{\underline s}$ and $\beta \in [n]^{\underline t}$ are disjoint (and there are
    $O(n^{s+t})$ such
  $(\alpha,\beta)$),
  \begin{align*}
    \abs{U_{s+t} (\alpha,\beta) - U_s (\alpha) U_t (\beta)}
    = \frac1{n^{\underline{s+t}}} - \frac1{n^{\underline s} n^{\underline t}}
    = \frac1{n^{\underline s}} \Paren{\frac1{(n-s)^{\underline t}} - \frac1{n^{\underline t}}}
    = O\Paren{\frac1{n^{s+t+1}}}\,.
  \end{align*}
  When $\alpha \in [n]^{\underline s}$ and $\beta \in [n]^{\underline t}$ are not disjoint (and there are
    $O(n^{s+t-1})$
  such $(\alpha,\beta)$),
  \[ \abs{U_{s+t} (\alpha,\beta) - U_s (\alpha) U_t (\beta)}
  = \frac1{n^{\underline s} n^{\underline t}} = O\Paren{\frac1{n^{s+t}}}\,. \]
  Thereore $\norm{U_{s+t} - U_s U_t}_{\ell_1} = O(1/n)$.
\end{proof}

Given a distribution $\mu$ over a finite domain $D$ and a function $f: D \to D'$, denote by $f(\mu)$ the pushforward
distribution over $D'$ such that
\[ f(\mu)(a) \coloneqq \sum_{b \in D, f(b) = a} \mu(b)\,. \]

\begin{lemma} \label{lem:pushforward-dist}
  For any distributions $\mu$ and $\zeta$ over $D$, any function $f: D \to D'$,
  \[ \norm{f(\mu) - f(\zeta)}_{\ell_1} \leq \norm{\mu - \zeta}_{\ell_1}. \]
\end{lemma}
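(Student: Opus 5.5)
The plan is to prove the pushforward contraction directly from the definition of the pushforward and the triangle inequality. Write the $\ell_1$ norm as a sum over $a \in D'$. For each $a$, the difference $f(\mu)(a) - f(\zeta)(a)$ equals $\sum_{b \in D,\, f(b) = a} (\mu(b) - \zeta(b))$, so by the triangle inequality
\[ \Abs{f(\mu)(a) - f(\zeta)(a)} \leq \sum_{b \in D,\, f(b)=a} \Abs{\mu(b) - \zeta(b)}\,. \]

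Next, I would sum this inequality over all $a \in D'$. The fibres $f^{-1}(a)$ for $a \in D'$ partition $D$: every $b \in D$ lies in exactly one fibre, namely that of $a = f(b)$, while values $a$ outside the image contribute empty sums. Hence the double sum on the right collapses to $\sum_{b \in D} \abs{\mu(b) - \zeta(b)} = \norm{\mu - \zeta}_{\ell_1}$, which is the desired bound.

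There is essentially no obstacle here. The only point to check is the bookkeeping that the fibres partition $D$, which is immediate since $f$ is a function. If one prefers, the statement can equivalently be phrased as the data-processing inequality for total variation, since $\norm{\cdot}_{\ell_1} = 2\Normtv{\cdot}$ and $\Normtv{\cdot}$ is a supremum over events: events in $D'$ pull back to events in $D$. The elementary fibre argument above avoids even that, so that is the route I would write.
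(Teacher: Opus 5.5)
Your proposal is correct and follows the same route as the paper: expand each $f(\mu)(a) - f(\zeta)(a)$ as a sum over the fibre $f^{-1}(a)$, apply the triangle inequality, and use that the fibres partition $D$ to collapse the double sum to $\|\mu - \zeta\|_{\ell_1}$. Your write-up is in fact slightly cleaner than the paper's display, which writes $f(\mu)(b) - f(\zeta)(b)$ inside the fibre sums where $\mu(b) - \zeta(b)$ is meant.
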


\begin{proof}
  \begin{align*}
    \norm{f(\mu) - f(\zeta)}_{\ell_1} &= \sum_{a \in D'} \abs{f(\mu)(a) - f(\zeta)(a)}
    = \sum_{a \in D'} \Abs{\sum_{b \in D, f(b) = a} f(\mu)(b) - f(\zeta)(b)} \\
    &\leq \sum_{b \in D} \abs{f(\mu)(b) - f(\zeta)(b)} = \norm{\mu - \zeta}_{\ell_1} \,. \qedhere
  \end{align*}
\end{proof}

\begin{lemma} \label{lem:product-close}
  Suppose $\nu_i$ and $\eta_i$ are distributions over a common finite probability space $\Omega_i$, for $1
  \leq i \leq t$.
  Let $\nu \coloneqq \prod_{1 \leq i \leq t} \nu_i$ and $\eta \coloneqq \prod_{1 \leq i \leq t} \eta_i$ be their product
  distributions.
  Then
  \[ \Norm{\nu - \eta}_{\ell_1} \leq \sum_{1 \leq i \leq t} \Norm{\nu_i - \eta_i}_{\ell_1}. \]
\end{lemma}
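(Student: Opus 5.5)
We prove \cref{lem:product-close} by a hybrid (telescoping) argument. For $0 \le j \le t$ define the hybrid product distribution
\[ \xi_j \coloneqq \prod_{1 \le i \le j} \eta_i \cdot \prod_{j < i \le t} \nu_i \]
on $\prod_i \Omega_i$. Then $\xi_0 = \nu$ and $\xi_t = \eta$, so the triangle inequality for $\norm{\cdot}_{\ell_1}$ gives
\[ \Norm{\nu - \eta}_{\ell_1} \le \sum_{1 \le j \le t} \Norm{\xi_{j-1} - \xi_j}_{\ell_1}. \]
It therefore suffices to show $\norm{\xi_{j-1} - \xi_j}_{\ell_1} = \norm{\nu_j - \eta_j}_{\ell_1}$ for each $j$.

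The two hybrids $\xi_{j-1}$ and $\xi_j$ differ only in the $j$-th factor. Write $\pi \coloneqq \prod_{i<j}\eta_i \cdot \prod_{i>j}\nu_i$ for the common product of the remaining factors. For a point $(a_j, a_{-j})$ we have
\[ \xi_{j-1}(a) - \xi_j(a) = \pi(a_{-j})\bigl(\nu_j(a_j) - \eta_j(a_j)\bigr). \]
Take absolute values and sum over $a$. The sum factorizes into $\sum_{a_{-j}} \pi(a_{-j}) \cdot \sum_{a_j} \abs{\nu_j(a_j) - \eta_j(a_j)}$. The first factor equals $1$ because $\pi$ is a probability distribution, so the second factor gives exactly $\norm{\nu_j - \eta_j}_{\ell_1}$, as required.

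No step is a real obstacle; the argument is routine. The only point needing care is factoring the sum through the nonnegative product weight $\pi$, which is where the product structure of the $\xi_j$ is used. An alternative is induction on $t$ with the two-factor case $\norm{\nu_1\nu_2 - \eta_1\eta_2}_{\ell_1} \le \norm{\nu_1-\eta_1}_{\ell_1} + \norm{\nu_2-\eta_2}_{\ell_1}$, which is the same computation.
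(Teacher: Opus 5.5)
Your proof is correct and follows essentially the same route as the paper. The paper also telescopes over hybrid product distributions via the triangle inequality, and isolates the per-step identity $\norm{\nu \times \mu - \eta \times \mu}_{\ell_1} = \norm{\nu - \eta}_{\ell_1}$ as a separate lemma (\cref{lem:independent-close}). That identity is exactly the factorization through the nonnegative weight $\pi$ that you carry out inline; your swapping in the $\eta_i$ from the left rather than the $\nu_i$ is immaterial.
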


\begin{proof}
  For $0 \leq i \leq t$, introduce the $i$-th hybrid $\mu^{(i)} \coloneqq \prod_{1 \leq j \leq i} \nu_j \times
  \prod_{i < j \leq t} \eta_j$ between $\nu$ and $\eta$.
  Then $\nu = \mu^{(t)}$ and $\eta = \mu^{(0)}$.
  Then
  \[ \Norm{\nu - \eta}_{\ell_1} = \Norm{\mu^{(t)} - \mu^{(0)}}_{\ell_1}
  \leq \sum_{1 \leq i \leq t} \Norm{\mu^{(i)} - \mu^{(i-1)}}_{\ell_1}. \]
  \cref{lem:independent-close} implies $\Norm{\mu^{(i)} - \mu^{(i-1)}}_{\ell_1} = \Norm{\nu_i - \eta_i}_{\ell_1}$
  for every $1 \leq i \leq t$.
\end{proof}

\begin{lemma} \label{lem:independent-close}
  Given any distributions $\nu$ and $\eta$ over a finite probability space $\Omega$, and any distribution $\mu$
  over a finite probability space $\Gamma$,
  \[ \norm{\nu \times \mu - \eta \times \mu}_{\ell_1} = \norm{\nu - \eta}_{\ell_1}. \]
\end{lemma}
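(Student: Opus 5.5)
We compute the $\ell_1$ distance directly from the definition of the product measure. For every $(\omega,\gamma)\in\Omega\times\Gamma$ we have $(\nu\times\mu)(\omega,\gamma)-(\eta\times\mu)(\omega,\gamma)=\mu(\gamma)\,(\nu(\omega)-\eta(\omega))$. Since $\mu(\gamma)\geq 0$, the absolute value factorizes as $\mu(\gamma)\,\abs{\nu(\omega)-\eta(\omega)}$.

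Summing over the finite product space and using Fubini (a finite double sum), we get
\[ \norm{\nu\times\mu-\eta\times\mu}_{\ell_1}=\sum_{\gamma\in\Gamma}\mu(\gamma)\sum_{\omega\in\Omega}\abs{\nu(\omega)-\eta(\omega)} = \Paren{\sum_{\gamma\in\Gamma}\mu(\gamma)}\norm{\nu-\eta}_{\ell_1}. \]
Because $\mu$ is a probability distribution, $\sum_{\gamma}\mu(\gamma)=1$, and the claimed equality follows.

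No step is a genuine obstacle. The one point to note is that the nonnegativity of $\mu$ is what lets us pull $\mu(\gamma)$ out of the absolute value exactly. This is why we obtain an equality rather than merely an inequality, and the equality is what \cref{lem:product-close} uses for the hybrid steps $\mu^{(i)}$ versus $\mu^{(i-1)}$. In those hybrids the unchanged factors appear on both sides of coordinate $i$, so we apply the lemma with $\Gamma$ taken to be the product of the other coordinates, after reordering coordinates, which does not change the $\ell_1$ norm.
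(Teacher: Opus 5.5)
Your proof is correct and matches the paper's argument: pull $\mu(\gamma)$ out of the absolute value using its nonnegativity, then use $\sum_{\gamma}\mu(\gamma)=1$. Your remark on reordering coordinates so that the lemma applies to the hybrids in \cref{lem:product-close} is also correct, and it spells out a small step the paper leaves implicit.
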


\begin{proof}
  As $a$ runs over $\Omega$, and $b$ runs over $\Gamma$,
  \begin{align*}
    \norm{\nu \times \mu - \eta \times \mu}_{\ell_1} &= \sum_{a,b} \abs{\nu(a)\mu(b) - \eta(a)\mu(b)}
    = \sum_{a,b} \mu(b) \abs{\nu(a) - \eta(a)} \\
    &= \sum_{a} \abs{\nu(a) - \eta(a)} = \norm{\nu - \eta}_{\ell_1}. \qedhere
  \end{align*}
\end{proof}

The next lemma is a sharper version of \cite[Lemma~3]{AD22} and \cite[Lemma~B.3]{chan2024how} by pinning down the value
of the instance.
We do not use \cite[Lemma~B.3]{chan2024how} as is because that lemma concerns independent constraints, while this paper
focuses on the binomial random model.

\begin{lemma} \label{random-unsat}
  Consider a distribution $\rho$ over $D^k$.
  As long as $\Delta \geq \exp(O(|D|^k))$, except with probability $o_{n;|D|,k}(1)$ over a binomial $\rho$-random
  $k$-CSP instance $\cI = (V, \cC)$ with $n$ variables and $\Delta n$ expected constraints,
  \[ \opt(\cI) = \opt_k(\rho) + o_{n;|D|,k}(1)\,. \]
\end{lemma}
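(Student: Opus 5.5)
The plan is to prove the two inequalities separately. The lower bound $\opt(\cI)\ge \opt_k(\rho)-o(1)$ comes from a single well-chosen assignment. The upper bound $\opt(\cI)\le \opt_k(\rho)+o(1)$ comes from a first-moment computation for each fixed assignment, followed by a Chernoff bound and a union bound over all $|D|^n$ assignments.

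First I unpack $\opt_k(\rho)$. A $k$-wise $\nu$-independent distribution on $D^k$ is exactly $\nu^k$, so
\[ \opt_k(\rho)=\max_{\nu}\E_{R\sim\rho}\Pr_{y\sim\nu^k}[y\in R]. \]
For an assignment $x\in D^V$, let $b_x$ be its marginal vector, the empirical distribution of the values of $x$ (as in the spectral section). In the binomial model each $k$-set $e$ is present independently with probability $p$ and carries a uniformly random ordering $\sigma$ and a relation $R\sim\rho$. Hence for fixed $x$ we have
\[ \E\Big[\sum_{e}\bracbb{e\in E}\bracbb{x_\sigma\in R}\Big]=p\binom nk\,\E_{R\sim\rho}\Pr_{\sigma\in V^{\underline k}}[x_\sigma\in R]. \]
Sampling $k$ distinct coordinates differs from sampling $k$ coordinates with replacement by $O(k^2/n)$ in total variation (the same computation as \cref{lem:independent-error}). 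So this expectation equals $p\binom nk\big(\E_{R}\Pr_{y\sim b_x^k}[y\in R]+O_k(1/n)\big)\le p\binom nk\big(\opt_k(\rho)+O_k(1/n)\big)$. Separately, the number of constraints $|\cC|$ is $\mathrm{Bin}(\binom nk,p)$ with mean $\Delta n$, so $|\cC|=(1\pm\gamma)\Delta n$ except with probability $e^{-\Omega(\gamma^2\Delta n)}$.

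For the upper bound, fix $x$. The number of satisfied constraints is a sum of $\binom nk$ independent Bernoulli variables, one per potential hyperedge, with total mean at most $\Delta n\,(\opt_k(\rho)+o(1))$. A multiplicative Chernoff bound gives that it exceeds $\Delta n(\opt_k(\rho)+\gamma)$ with probability at most $\exp(-c\gamma^2\Delta n)$. A union bound over $|D|^n$ assignments succeeds once $c\gamma^2\Delta>\ln|D|+\Omega(1)$. Combined with the concentration of $|\cC|$, this gives $\opt(\cI)\le\opt_k(\rho)+O(\gamma)$ for $\gamma\approx\sqrt{\ln|D|/\Delta}$.

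For the lower bound, take a maximizing $\nu$ and round it to $\nu_n$ with $n\nu_n(a)\in\N$ and $\|\nu_n-\nu\|_{\ell_1}\le |D|/n$. Fix any $x$ of type $\nu_n$; by \cref{lem:product-close} the expectation above is $\Delta n(\opt_k(\rho)-O_{k,|D|}(1/n))$. Chernoff for this single $x$ (no union bound) gives value at least $\opt_k(\rho)-o(1)$ with probability $1-o(1)$, since $\Delta n\to\infty$.

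The main obstacle is reconciling the union bound with an error that is genuinely $o(1)$. The union bound only yields an additive error of order $\sqrt{\log|D|/\Delta}$. So I would either read the statement with $\Delta=\Delta(n)\to\infty$, so that $\gamma\to0$, or track the error as an explicit function of $\Delta$ that is small once $\Delta\ge\exp(O(|D|^k))$. The exponential threshold also covers the case where $\rho$ is spread over up to $2^{|D|^k}$ relations. If per-relation control is needed (for instance, controlling each $\cC(R)$ separately, as in the proof of \cref{thm:upper-bound}), the same Chernoff-plus-union-bound argument is applied to each relation type, and the $2^{|D|^k}$ count of relation types is absorbed by the assumption on $\Delta$.
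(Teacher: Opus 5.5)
Your proof follows the paper's route. For a fixed assignment you compare the expected number of satisfied constraints with the with-replacement quantity $\E_{R\sim\rho}\Pr_{y\sim b_x^k}[y\in R]$ up to an $O_k(1/n)$ error. You then concentrate and take a union bound over $D^V$ for the upper bound, and round a maximizing $\nu$ to an assignment of nearly the same type for the lower bound. The only cosmetic difference is in the comparison step. The paper goes through an auxiliary model that places a constraint on every $\tau\in V^k$ with probability $p/k!$, while you compare uniform sampling from $V^{\underline k}$ and from $V^k$ directly in total variation; these are the same computation. Your lower-bound half is fine as written, for any fixed $\Delta$.

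The caveat in your last paragraph is correct, and it marks exactly where the paper's written argument breaks. The paper asserts that Bernstein's inequality gives $|A-\E A|\lesssim\sqrt{pn^k}$ except with probability $\exp(-pn^k)$, and then union bounds over all $|D|^n$ assignments. But at deviation $C\sqrt{pn^k}$, Bernstein only gives failure probability $e^{-\Theta(C^2)}$, which is a constant. To survive the union bound you need deviation of order $\sqrt{pn^k\cdot n\ln|D|}$, i.e.\ relative error of order $\sqrt{\ln|D|/\Delta}$, which is your $\gamma$.

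A cleverer argument cannot rescue the literal statement either. Take Max-Cut: $D=\{0,1\}$, $k=2$, and $\rho$ a point mass on the disequality relation. Then $\opt_2(\rho)=1/2$. Yet it is well known that for large constant $\Delta$ a random instance with $\Delta n$ expected edges has $\opt(\cI)=1/2+\Theta(1/\sqrt{\Delta})$ with high probability. So for fixed $\Delta$ the error is not $o_n(1)$. What is true, and what your argument proves, is that with high probability
\[ \opt_k(\rho)-o_n(1)\;\le\;\opt(\cI)\;\le\;\opt_k(\rho)+O\bigl(\sqrt{\ln|D|/\Delta}\bigr). \]
This becomes $\opt_k(\rho)+o(1)$ only when $\Delta\to\infty$.

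One small correction to your commentary: the number of relation types never enters your argument, since the union bound is over assignments only. So there is nothing for the hypothesis $\Delta\ge\exp(O(|D|^k))$ to absorb there.
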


\begin{proof}
  Fix any assignment $b \in D^V$.
  In our binomial random model (\cref{def:random-instance}), a random constraint is imposed with probability $p$
  on each \emph{distinct} $k$-subset $e \in {V \choose k}$ (together with a random bijection $\sigma: [k] \to e$).
  Consider an alternative model where a random constraint is imposed with probability $p/k!$ on each (not necessarily
  distinct) $k$-tuple $\tau \in V^k$.
  We argue that $b$ satisfies nearly the same fraction of constraints in both models.

  The number $A$ of constraints satisfied by $b$ in the binomial model has expectation
  \[ p \sum_{e \in {V \choose k}} \E_{\sigma: [k] \to e} \E_{R \sim \rho} \Iv{b_{e \circ \sigma} \in R} \,.
  \]
  The number $B$ of constraints satisfied by $b$ in the alternative model has expectation
  \[ \frac{p}{k!} \sum_{\tau \in V^k} \E_{R \sim \rho} \Iv{b_\tau \in R}\,. \]
  Note that every distinct $k$-tuple $\tau \in V^{\underline k}$ contributes equally to both expectations.
  Therefore
  \[ |\E B - \E A| = \E B - \E A \leq \frac{p}{k!} |V^k \setminus V^{\underline k}| \lesssim_k p n^{k-1}. \]
  Bernstein inequality (\cref{lem:bernstein}) implies that except with probability $\exp(-pn^k)$,
  \[ |\E A - A| \lesssim \sqrt{pn^k} \qquad\text{and}\qquad |\E B - B| \lesssim \sqrt{pn^k}\,. \]
  By the same calculations, the number $X \coloneqq |\cC(\cI)|$ of total constraints in the binomial model and the
  number $Y \coloneqq |\cC(\cI')|$ of total constraints in an instance $\cI'$ from the alternative model also satisfy
  $\abs{\E Y - \E X} \lesssim_k p n^{k-1}$.
  Bernstein inequality also implies that except with probability $\exp(-pn^k)$, $|\E X - X| \lesssim \sqrt{pn^k}$ and
  $|\E Y - Y| \lesssim \sqrt{pn^k}$.
  Therefore except with probability $O(\exp(-pn^k))$,
  \[ \Val_\cI(b) = \frac A X \sim \frac A{p {n \choose k}} \sim \frac{\E A}{p {n \choose k}}\,, \]
  where $P \sim Q$ means $P/Q = 1+o_{n;k}(1)$.
  Likewise
  \[ \Val_{\cI'}(b) = \frac B Y \sim \frac B{p {n \choose k}} \sim \frac{\E B}{p {n \choose k}}\,. \]
  Thus
  \[ \Abs{\Val_\cI(b) - \Val_{\cI'}(b)} = \frac{|(1+o_{n;k}(1)) \E A - (1+o_{n;k}(1)) \E B|}{p {n \choose k}}
  = o_{n;k}(1) \,, \]
  as claimed.

  Now let $\nu$ be the distribution over $D$ of $b(v)$ by sampling a uniformly random $v \in V$.
  Then
  \[ \E B = \frac {p n^k}{k!} \E_{\tau \in V^k} \E_{R \sim \rho} \Iv{b_\tau \in R}
    = \frac{p n^k}{k!} \E_{a \sim \nu^k} \E_{R \sim \rho} \Iv{a \in R} \eqqcolon \frac{p n^k}{k!}
  \Val_\rho(\nu^k) \,. \]
  That is, $\Val_{\cI'}(b) \sim \Val_\rho (\nu^k) \leq \opt_k (\rho)$.
  By a union bound over all $b \in D^k$, whp over $\cI$,
  \[ \opt(\cI) \leq \opt_k (\rho) +o_{n;|D|,k}(1)\,. \]

  For the reverse inequality, consider a distribution $\eta$ over $D$ such that $\Val_\rho(\eta^k) = \opt_k (\rho)$.
  Enumerate the domain $D$ as $[q] = \set{1,\dots,q}$ and the vertex set $V$ as $\set{v_1,\dots,v_n}$.
  Define an assignment $b \in D^V$ from $\eta$ by ``rounding'':
  Let $f: [q] \to \R$ be the cdf of $\eta$, so that $f(i) \coloneqq \sum_{j \leq i} \eta(j)$.
  Then $b(v_i) \coloneqq \arg\min_{j \in [q]} f(j) \geq i/n$.
  Let $\nu$ be the distribution over $D$ of $b(v)$ by sampling a uniformly random $v \in V$.
  Then $|\nu(a) - \eta(a)| = O(1/n)$ for every $a \in D$, and hence $\norm{\nu - \eta}_{\ell_1} = o_{n;|D|}(1)$.
  The above arguments show that whp over $\cI$,
  \begin{align*}
    \opt(\cI) &\geq \Val_\cI(b) \geq \Val_{\cI'}(b) - o_{n;k}(1) \geq \Val_\rho(\nu^k) - o_{n;k}(1) \\
    &\geq \Val_\rho(\eta^k) - o_{n;|D|,k}(1) \,. \qedhere
  \end{align*}
\end{proof}

Note that if a $k$-CSP $(D, \supp(\rho))$ is not trivially satisfiable by a single value (i.e.~there is no $a \in D$
such that $a^k \in R$ for every $R \in \supp(\rho)$), then $\opt_k(\rho) \leq 1-\alpha/|D|^k$, where $\alpha$ is the
minimum nonzero probability of $\rho$.
This is because every $k$-wise independent distribution is of the form $\nu^k$ for some distribution $\nu$ over $D$.
Then $\nu(a) \geq 1/|D|$ for some $a \in D$, and there is $R \in \supp(\rho)$ such that $a^k \not\in R$.
Therefore with probability at least $\rho(R)\nu(a^k) \geq \alpha/|D|^k$, a random constraint from $\rho$ is violated by
a random assignment from $\nu^k$.

The next lemma translates \cite[Lemma~8.16 or B.5]{chan2024how} from the model with replacement to the binomial model.
We include the proof for completeness.
See \cite[Definitions~5.1, 8.1, 8.4]{chan2024how} for the definition of $(t,\gamma)$-expanding
(and also sparsity and $S$-closed, all of which depend implicitly on $\tau$).

\begin{lemma} \label{lem:random-expand}
  Let $2 \leq \tau < k$, $\lambda \coloneqq \tau - 1$, $0 < \gamma \leq \lambda/2$,
  $\zeta \coloneqq \Delta^{2/(\lambda-\gamma)}/n$.
  Except with probability $o_{\zeta;k}(1)$, a random $k$-uniform undirected hypergraph with $n$ vertices
  and $\Delta n$ expected undirected hyperedges $\cH$ from the binomial model is $(t,\gamma)$-expanding, where
  \[ t = \frac n{\Delta^{2/(\lambda-\gamma)}} \cdot \frac 1{2^{O(k)}}. \]
\end{lemma}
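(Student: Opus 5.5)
The plan is a direct first-moment argument in the binomial model, repeating the structure of the proof of \cite[Lemma~8.16]{chan2024how} but with the counting adapted to independent, distinct hyperedges. Unwinding \cite[Definitions~5.1, 8.1, 8.4]{chan2024how}, I expect that failure of $(t,\gamma)$-expansion is witnessed by a \emph{small dense subhypergraph}. Concretely, this should be a set $E'$ of $s \le t$ hyperedges whose union $V(E')$ has fewer than roughly $(k - \frac{\lambda+\gamma}{2})s$ vertices, in the sense made precise by the boundary/sparsity condition of those definitions. The first step is to isolate this characterization. Either $t$-expansion is defined by such a vertex count, or (for the $S$-closed version) a violation can be shrunk to a minimal violating edge set with this property. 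In either case the problem reduces to bounding the expected number of such dense configurations with $1 \le s \le t$.

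The second step is the union bound. Fix $s$ and $v \le (k - \frac{\lambda+\gamma}{2})s$. There are at most $\binom{n}{v}$ vertex sets and at most $\binom{\binom{v}{k}}{s}$ ways to choose $s$ distinct potential hyperedges inside such a set. In the binomial model distinct hyperedges are independent, so each such choice is realized with probability exactly $p^s$, where $p = \Delta n/\binom{n}{k}$. This is the one place where the binomial model is cleaner than the with-replacement model, because no multiplicities need to be tracked. Using $\binom{n}{v} \le (en/v)^v$, $\binom{\binom vk}{s} \le (ev^k/(k!\,s))^s$ and $v \le ks$, a routine computation should give an expected count of at most
\[
\Paren{2^{O(k)}\,\Delta \cdot (s/n)^{(\lambda-\gamma)/2}}^{s}.
\]
The exponent of $s/n$ comes from $ks - v - s \ge \frac{\lambda+\gamma}{2}s - s$ combined with the factor $n^{1-k}$ in $p$; I will check this bookkeeping carefully. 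By the choice of $t = n\Delta^{-2/(\lambda-\gamma)}2^{-O(k)}$, the base is at most $1/2$ for all $s \le t$. Summing over $s$ and $v$ then gives a geometric series.

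The main obstacle is the small-$s$ regime. There the geometric sum is dominated by its first terms, and I need those terms to be $o_{\zeta;k}(1)$ rather than merely bounded. For $s \ge 2$ the term is at most about $(2^{O(k)}\Delta\, s/n)^{s(\lambda-\gamma)/2}$, which tends to $0$ as $\zeta = \Delta^{2/(\lambda-\gamma)}/n \to 0$. For $s = 1$ a single hyperedge has exactly $k$ vertices. This is where the hypothesis $\tau < k$ enters: a single hyperedge must never count as a violation, and with $\lambda = \tau - 1 \le k-2$ we have $k > k - \frac{\lambda+\gamma}{2}$, so it is not dense and contributes nothing. In the with-replacement model, by contrast, a repeated hyperedge (two copies sharing all $k$ vertices) is a genuine violating configuration that occurs with constant probability when $\Delta$ is large. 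This explains why the translation is needed, and why $t = k$ has to be handled separately as in the discussion after \cref{independent-lp}. I will verify the $s = 1$ and $s = 2$ cases by hand, then sum the remaining tail, and conclude that the failure probability is $o_{\zeta;k}(1)$.
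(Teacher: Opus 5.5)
Your route differs from the paper's and is viable in principle. The paper does no counting. It takes \cite[Lemma~8.16]{chan2024how}, the version with hyperedges drawn with replacement, as a black box, and argues as follows. Chernoff gives $\Delta n/2\le m\le 2\Delta n$. Conditioned on $m$, $\cH$ is the with-replacement hypergraph $\cH'$ conditioned on the event $\cE$ that no hyperedge repeats. When $m\lesssim n^{k/2}$, the birthday problem gives $\Pr[\cE]\ge c_k-o_n(1)$, so the failure probability grows only by a constant factor. When $m\gtrsim n^{k/2}$, the assumption $\lambda\le k-2$ forces $t<1$, and the statement is vacuous. A direct first-moment count in the binomial model is self-contained and avoids both the conditioning and the case split. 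Its cost is that you must unpack \cite[Definitions~5.1, 8.1, 8.4]{chan2024how} yourself, including reducing the sparsity/$S$-closed formulation to small dense edge sets, which you only sketch.

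Two things in your plan need fixing.

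\textbf{The threshold.} Your guessed threshold does not give the exponent you claim. With $v\le(k-\frac{\lambda+\gamma}{2})s$, your own bookkeeping yields $(s/n)^{(\frac{\lambda+\gamma}{2}-1)s}$. For $\tau=2$ we have $\lambda=1$ and $\gamma\le 1/2$, so this exponent is negative and the union bound diverges. The exponent $(\lambda-\gamma)/2$, and hence $t\propto n\Delta^{-2/(\lambda-\gamma)}$, comes out exactly when a violation means $|V(E')|<(k-1-\frac{\lambda-\gamma}{2})|E'|$, equivalently $ks-v-s\ge\frac{\lambda-\gamma}{2}s$. As a sanity check, this is the density at which $|\partial E'|\ge 2|V(E')|-k|E'|>(k-\tau-1)|E'|$. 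That inequality forces some edge to meet the others in at most $\tau$ vertices, which is what the peeling in \cref{lem:marginal} uses. Relatedly, your small-$s$ term should read $\bigl(2^{O(k)}\zeta s\bigr)^{s(\lambda-\gamma)/2}$, with $\zeta s$ rather than $\Delta s/n$ inside.

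\textbf{The role of $\tau<k$.} Your account of where this hypothesis enters is wrong. A single hyperedge has $k$ vertices, which exceeds either threshold for all $\lambda,\gamma>0$, so the case $s=1$ never uses it. Nor are repeated hyperedges an obstruction in the with-replacement model. For $\tau<k$ they become likely only once $m\gtrsim n^{k/2}$, and there $t<1$. The restriction $\tau<k$ is an artifact of the paper's conditioning argument. Once your threshold is corrected, your binomial first-moment proof does not need it.
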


\begin{proof}
  \cite[Lemma~8.16]{chan2024how} implies a $k$-uniform undirected hypergraph with $n$ vertices and $\hat\Delta n$
  hyperedges with replacement is $(t,\gamma)$-expanding, where
  $t = \frac{n}{{\hat\Delta}^{2/(\lambda-\gamma)}} \frac 1{2^{O(k)}}$.
  Since $\lambda = \tau-1 \leq k-2$, there is $C_k > 0$ such that whenever $\hat\Delta n \geq C_k n^{k/2}$,
  such $t$ is less than $1$ and the bound is vacuous.

  We want to convert this result to a random hypergraph $\cH$ in the binomial model with $\Delta n$ expected hyperedges.
  Let $m$ be the actual number of hyperedges in $\cH$.
  Chernoff bound implies $\Delta n/2 \leq m \leq 2\Delta n$ whp.
  $\cH$ is a uniformly random undirected hypergraph with exactly $m$ hyperedges.
  Suppose, instead of \emph{without} replacement, sample the $m$ hyperdeges \emph{with} replacement to get $\cH'$.

  Case 1: $\Delta n \leq 2C_k n^{k/2}$.
  Since $\hat\Delta n = m \leq 2\Delta n$, \cite[Lemma~8.16]{chan2024how} implies $\cH'$ is $(t,\gamma)$-expanding whp
  for $t = \frac{n}{{\hat\Delta}^{2/(\lambda-\gamma)}} \frac 1{2^{O(k)}} \gtrsim
  \frac{n}{\Delta^{2/(\lambda-\gamma)}} \frac 1{2^{O(k)}}$.
  Conditioned on the event $\cE$ of no repeated hyperedges, $\cH'$ distributes as $\cH$.
  Since $m \lesssim_k n^{k/2}$, it follows from the asymptotic distribution of the birthday problem
  (e.g.~\cite[Example~3.2.5]{Durrett19}) that $\Pr[\cE] \geq c_k - o_n(1)$ for some $c_k > 0$.
  Then
  \begin{align*}
    \Pr[\cH \text{ not $(t,\gamma)$-expanding}]
    &= \Pr[\cH' \text{ not $(t,\gamma)$-expanding} | \cE] \\
    &\leq \frac{\Pr[\cH' \text{ not $(t,\gamma)$-expanding}]}{c_k - o_n(1)} = o_{n;k}(1) \,,
  \end{align*}
  where the inequality uses the fact that for any events $\cF$ and $\cE$, $\Pr[\cF|\cE] = \Pr[\cF \wedge \cE]/\Pr[\cE]
  \leq \Pr[\cF]/\Pr[\cE]$.
  Therefore $\cH$ is also $(t,\gamma)$-expanding whp.

  Case 2: $\Delta n \geq 2C_k n^{k/2}$.
  Then $\hat\Delta n = m \geq \Delta n/2 \geq C_k n^{k/2}$, so $t = \frac{n}{{\hat\Delta}^{2/(\lambda-\gamma)}} \frac
  1{2^{O(k)}} < 1$ by definition of $C_k$, and every hypergraph is $(t,\gamma)$-expanding (equivalently,
  $(0,\gamma)$-expanding).
\end{proof}

\section{$t$-wise independent CSP}

We collect some easy observations about $t$-wise independent CSPs.

We first show that there are pairwise independent CSPs that are not pairwise uniform (even when restricted to any
subdomain), so our lower bound strictly generalizes those in previous works.
Following \cite{AustrinHastad2011}, we consider $k$-CSPs with only one set $R$ of satisfying assignments, where $R$
is chosen to contain $\Theta(k^2)$ strings from $\set{0,1}^k$ from the $p$-biased distribution.

\begin{theorem}[{Part of \cite[Theorem~3.2]{AustrinHastad2011}}] \label{thm:poly-separator}
  Let $R \subseteq D^k$.
  There is a $t$-wise $\nu$-independent distribution $\eta$ such that $\supp(\eta) \subseteq R$ if and only if there is
  no degree-$t$ polynomial $f \in L^2(D^k, \nu^k)$ such that $f(a) > \E_{\nu^k} [f]$ for every $a \in R$.
\end{theorem}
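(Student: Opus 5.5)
This is an LP duality (Farkas alternative) argument in the finite-dimensional space $L^2(D^k,\nu^k)$. Let $V_t \subseteq L^2(\nu^k)$ be the span of the Efron--Stein components $Z_Q$ with $\abs{Q}\le t$, i.e.\ the degree-$\le t$ functions. These are exactly the sums of $Q$-juntas with $\abs{Q}\le t$.

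The first step is to reformulate $t$-wise $\nu$-independence. A distribution $\eta$ on $D^k$ is $t$-wise $\nu$-independent if and only if $\E_\eta[g]=\E_{\nu^k}[g]$ for every $Q$-junta $g$ with $\abs{Q}\le t$. Equivalently, $\E_\eta[f]=\E_{\nu^k}[f]$ for every $f\in V_t$. Indeed, matching the marginals $\pi_Q(\eta)=\nu^Q$ is the same as matching expectations of all indicators $\Iv{x_Q=b}$, and these indicators span the $Q$-juntas.

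The existence of $\eta$ with $\supp(\eta)\subseteq R$ then becomes feasibility of an LP. The unknowns are $\eta(a)\ge 0$ for $a\in R$. The constraints are $\sum_{a\in R}\eta(a)=1$ and
\[
\sum_{a\in R}\eta(a)\,(f_i(a)-\E_{\nu^k}[f_i])=0
\]
for a basis $\{f_i\}$ of $V_t$ that contains the constant $\1$. The constant imposes no extra condition, since $\1-\E_{\nu^k}[\1]=0$.

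The two directions are then as follows.
\begin{itemize}
\item ``Only if'': if such an $\eta$ exists and $f\in V_t$ satisfied $f(a)>\E_{\nu^k}[f]$ on all of $R$, then averaging over $\eta$ gives $\E_\eta[f]>\E_{\nu^k}[f]$. This contradicts $\E_\eta[f]=\E_{\nu^k}[f]$.
\item ``If'': suppose no $\eta$ exists. Then the vectors $v_a=(f_i(a)-\E_{\nu^k}[f_i])_i$ for $a\in R$ have convex hull $K$ not containing the origin. $K$ is compact and convex, so a separating hyperplane (or Farkas' lemma applied to the LP above) gives coefficients $c$ with $\langle c,v_a\rangle>0$ for all $a\in R$. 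Setting $f=\sum_i c_if_i$ gives a degree-$t$ polynomial with $f(a)>\E_{\nu^k}[f]$ on $R$.
\end{itemize}

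Two edge cases need checking. If $R=\emptyset$, both sides of the equivalence hold vacuously in the consistent direction. If $\supp(\nu)\ne D$, then values outside $\supp(\nu)$ only carry $\nu^k$-measure zero, so $f$ should be understood as a function on all of $D^k$ (a polynomial in the indicator basis) and not merely as an $L^2$ class. The main obstacle is making this convention precise: for points of $R$ outside $\supp(\nu)^k$, the condition $f(a)>\E f$ must be meaningful. I would handle it by taking $V_t$ to be the span of the indicators $\Iv{x_Q=b}$, $b\in D^Q$, which are defined pointwise on $D^k$. Alternatively, one can observe that any $t$-wise $\nu$-independent $\eta$ has support in $\supp(\nu)^k$ (as in the density lemma above), and then restrict $R$ to $\supp(\nu)^k$ accordingly.
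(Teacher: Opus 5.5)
Your argument is correct and is essentially the standard proof. The paper gives no proof of its own; it imports the statement from Austrin--H\r{a}stad, where it follows from the same Farkas/separating-hyperplane duality: $0$ lies in the convex hull of the centered low-degree vectors $v_a$, $a\in R$, exactly when no degree-$t$ polynomial strictly exceeds its $\nu^k$-mean on all of $R$. Your use of pointwise junta indicators correctly covers the case $\supp(\nu)\neq D$, which the cited source excludes by assuming $\nu$ has full support; one small correction is that for $R=\emptyset$ both sides are simply false (no distribution is supported on $\emptyset$, and $f\equiv 0$ works vacuously), rather than both holding.
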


\begin{theorem} \label{thm:indep-not-uniform}
  Fix any $0 < p < 1/16$.
  Consider the $p$-biased distribution $\nu_p$ over $\set{0,1}$ where $\nu_p (1) = p$.
  Given $s \in \N$, let $R$ be a collection of $s$ random strings over $\set{0,1}^k$ sampled independently
  with replacement from $\nu_p^k$.
  Except with probability $o_k(1)$, there is a $\nu_p$-independent distribution supported on $R$ but not
  any pairwise uniform distribution (even when restricted to subdomains), provided $s = \Theta(k^2)$.
\end{theorem}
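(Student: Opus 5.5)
We need to prove two things, each by a randomized argument over $R$: (a) whp there is a pairwise $\nu_p$-independent distribution supported on $R$, and (b) whp there is no pairwise uniform distribution supported on $R$, even after restricting to a subdomain.

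\textbf{Part (b): no pairwise uniform distribution.} This is the easy direction. On $\{0,1\}$ the only proper nonempty subdomains are singletons $\{0\}$ and $\{1\}$. The uniform distribution on a singleton subdomain forces $R$ to contain $0^k$ or $1^k$. This happens with probability at most $s(p^k+(1-p)^k)=o_k(1)$ when $s=\Theta(k^2)$.

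For the full domain, a pairwise uniform distribution has every coordinate marginal uniform, so its expected Hamming weight is $k/2$. Each string of $R$ has weight distributed as $\mathrm{Bin}(k,p)$ with $p<1/16$. By Chernoff and a union bound over the $s=\Theta(k^2)$ strings, every string has weight at most $k/4$ except with probability $s\,e^{-\Omega(k)}=o_k(1)$. Any distribution supported on $R$ then has expected weight at most $k/4<k/2$, a contradiction.

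\textbf{Part (a): existence of a $\nu_p$-independent distribution.} Here I would use \cref{thm:poly-separator} with $t=2$. It suffices to show that whp no degree-$2$ polynomial $f\in L^2(\nu_p^k)$ satisfies $f(a)>\E_{\nu_p^k}[f]$ for every $a\in R$.

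Normalize by writing $f=\E f+g$, where $g$ has mean zero and unit norm. Use the $p$-biased Fourier basis $\phi_i=(x_i-p)/\sqrt{p(1-p)}$. Then $g$ lies in the span of the $D=k+\binom k2$ functions $\phi_i$ and $\phi_i\phi_j$. The goal becomes: whp, for every unit $g$ in this $D$-dimensional space $W$, some sample $a\in R$ has $g(a)\le 0$.

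This is exactly the statement that the origin lies in the convex hull of the feature vectors $\Phi(a)=(\phi_i(a),\phi_i(a)\phi_j(a))_{i<j}$ for $a\in R$. Equivalently, it is the statement that the empirical measure on $R$ can be reweighted to have mean zero.

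This is the route taken by Austrin--H\aa stad. It is Wendel/Cover-type: $s$ i.i.d. points from a mean-zero distribution with nondegenerate covariance (identity here) contain the origin in their hull once $s\gtrsim D$, provided a small-ball or anti-concentration condition holds.

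\textbf{Main obstacle.} The hardest step is the uniform anti-concentration over $g\in W$. I would first prove $\Pr_{a\sim\nu_p^k}[g(a)\le 0]\ge c$ for every unit mean-zero degree-$2$ $g$. Mean zero and unit variance give this via Paley--Zygmund applied to $g^-$, using hypercontractivity (\cref{thm:2-1-norm}) with constant depending on $p$ only. Then I would use a VC/$\epsilon$-net argument over halfspaces in $\R^D$, which have VC dimension $D+1$. With $s=C D=\Theta(k^2)$ samples, whp every halfspace $\{g>0\}$ of mass at least $c$ under $\nu_p^k$ contains empirical mass at least $c/2$. In particular no $g$ is positive on all of $R$.

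Taking a union of the failure events from (a) and (b) gives the $o_k(1)$ bound.
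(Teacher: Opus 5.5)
Your proof is correct, but both halves take a different route from the paper's.

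\textbf{Existence of a pairwise $\nu_p$-independent distribution.} The paper gives no argument here and simply cites \cite[Theorem~6.1]{AustrinHastad2011}. You instead sketch a self-contained proof with three steps:
\begin{enumerate}
\item Gordan's alternative, reducing the claim to the origin lying in the convex hull of the feature vectors.
\item The hypercontractive bound $\Pr[g<0]\ge \|g\|_1^2/4\ge p^2/64$ for every unit mean-zero degree-$2$ $g$, which follows from \cref{thm:2-1-norm} and Cauchy--Schwarz on $g^-$.
\item An $\epsilon$-net argument for homogeneous halfspaces in $\R^{D}$, where $D=k+\binom k2$.
\end{enumerate}
This removes the dependence on the external theorem. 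The same argument also extends verbatim to $t$-wise independence with $O_{p,t}(k^t)$ samples.

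The one point needing care is quantitative. You must use the net bound in its sharp form, for example the double-sampling estimate $2(2es/D)^D2^{-cs/2}=e^{-\Omega(D)}$ for $s=CD$ with $C$ large in terms of $c$. The commonly quoted sample size $\frac{8D}{c}\log\frac{8D}{c}$ would cost a $\log k$ factor and violate $s=\Theta(k^2)$. Also apply the net property to the sets $\{g<0\}=\{-g>0\}$, so that the signs match your anti-concentration bound.

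\textbf{Failure of pairwise uniformity.} The paper builds the explicit degree-$2$ separator $\sum_{i<j}(1-2a_i)(1-2a_j)$ and applies \cref{thm:poly-separator} with the uniform measure. This is where the smallness of $p$ enters: it needs every string to have at most $2pk\le k/4$ ones. Your argument is effectively a degree-$1$ separator: a pairwise uniform distribution has expected weight $k/2$, while every string of $R$ has weight at most $k/4$. This is simpler, and it shows $R$ does not even support a $1$-wise uniform distribution. After adjusting the weight threshold, this half of your argument works for every $p<1/2$.

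Your subdomain step, a union bound on the events $0^k\in R$ and $1^k\in R$, is equivalent to the paper's Chernoff bound that every string has between $pk/2$ and $2pk$ ones.
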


\begin{proof}
  \cite[Theorem~6.1]{AustrinHastad2011} implies $R$ is pairwise $\nu_p$-independent except with probability $o_k (1)$.

  Let $\cU = \nu_{1/2}$ be the uniform distribution over $\set{0,1}$.
  To show that $R$ likely does not support any pairwise uniform distribution, thanks to
  \cref{thm:poly-separator} above (from \cite{AustrinHastad2011}), it suffices to construct a quadratic polynomial
  $f$ such that $\E_{\cU^k} [f] = 0$ and $f(a) > 0$ for every $a \in R$.

  Take $f(a) \coloneqq \sum_{1 \leq i < j \leq k} (1 - 2a_i)(1 - 2a_j)$ for $a \in \set{0,1}^k$.
  In other words, $f$ is the sum of all degree-$2$ Fourier characters with respect to $\cU$.
  It follows that $\E_{\cU^k} [f] = 0$.

  Further, Chernoff and union bounds imply that except with probability $o_k (1)$, every $a \in R$ has at
  most $2p k$ many ones.
  It is easy to verify that $f(a) > 0$ for any such $a$ due to the overwhelming fraction of zeros in $a$.
  Indeed, by counting the number of pairs $(i,j)$ such that $(a_i,a_j)$ are both zeros or has exactly a single zero, a
  string $a \in \set{0,1}^k$ containing exactly $r$ ones satisfies
  \[ f(a) \geq \underbrace{\frac{(k-r)(k-r-1)}2}_\text{both zeros}
    - \underbrace{r(k-r)}_\text{single zero}
  \geq (1-o_k (1)) \frac{k-3r}2 (k-r) > 0 \]
  when $r \leq k\/4$.

  Chernoff and union bounds also imply that except with probability $o_k (1)$, every $a \in R$ contains at
  least $p k / 2$ many ones.
  Since every $a \in R$ contains both zeros and ones, $R$ cannot support the pairwise uniform distribution
  over the subdomain $\set{0}$ or the subdomain $\set{1}$.
\end{proof}

We now show that a predicate CSP is $t$-wise uniform if and only if it is $t$-wise independent.
We also show that the ``pairwise uniform'' condition in previous upper bounds is ultimately due to the uniformly random
literals.
Let us first formally define a predicate CSP with literals.

\begin{definition} \label{def:predicate}
  A $k$-CSP $(D, \cR)$ is a predicate CSP with literals if $D$ is an abelian group and there is $Q \subseteq D^k$ such
  that $\cR = \Set{ Q + b | b \in D^k }$.
\end{definition}

The above definition captures as a special case similar definitions in previous works that considered $D = \Z_q$
(\cite[Section~3.1]{austrin2008conditional}, \cite[Section~B.1]{AOW15}).

The next lemma implies that a predicate CSP with literals is $t$-wise uniform if and only if it is $t$-wise independent.

\begin{theorem} \label{thm:predicate-independent-uniform}
  If a predicate CSP is $t$-wise independent, then it is also $t$-wise uniform.
\end{theorem}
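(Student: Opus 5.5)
The plan is to average the given family of distributions over all literal shifts, so that the common marginal $\nu$ is replaced by a uniform one. By \cref{def:predicate}, $D$ is a finite abelian group and $\cR = \Set{Q+b | b \in D^k}$. By \cref{def:t-wise-independent-csp} there is a distribution $\nu$ over $D$ and a tuple $\Paren{\mu^R}_{R \in \cR}$ of $t$-wise $\nu$-independent distributions with $\supp(\mu^R) \subseteq R$. We must produce, for every $R \in \cR$, a $t$-wise uniform distribution supported on $R$. In other words, we want a $t$-wise $\cU$-independent family, where $\cU$ is the uniform distribution on $D$.

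First, translate everything back to $Q$. For each $b \in D^k$, let $\eta_b$ be the pushforward of $\mu^{Q+b}$ under $x \mapsto x - b$. Then $\supp(\eta_b) \subseteq Q$, and for $W \in {[k] \choose t}$,
\[ \pi_W(\eta_b) = \prod_{i \in W} \nu(\cdot + b_i), \]
which is a shift of $\nu^W$ by $-b_W$. Next, define $\eta \coloneqq \E_{b \sim \cU^k} \eta_b$. It is a mixture of distributions supported on $Q$, so $\supp(\eta) \subseteq Q$.

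The key computation is that, by linearity of projection and independence of the coordinates of $b$,
\[ \pi_W(\eta) = \E_{b} \prod_{i \in W} \nu(\cdot + b_i) = \prod_{i \in W} \E_{b_i \sim \cU} \nu(\cdot + b_i). \]
Each factor equals $\frac{1}{|D|}\sum_{c \in D} \nu(c) = 1/|D|$. Hence $\pi_W(\eta) = \cU^W$, so $\eta$ is $t$-wise uniform and supported on $Q$. Finally, for any $R = Q+b$, take $\eta + b$, the pushforward of $\eta$ under $x \mapsto x+b$. Shifting a uniform distribution on $D^W$ leaves it uniform, so $\eta + b$ is $t$-wise uniform and supported on $R$. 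This gives the required family, so the CSP is $t$-wise uniform. (The trivial converse is that $t$-wise uniform implies $t$-wise independent with $\nu = \cU$.)

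The only step needing care is the marginal computation. The product over $i \in W$ factorizes under $\E_b$ only because the coordinates $b_i$ for distinct $i$ are independent and uniform over $D$. It also relies on translation being a bijection of $D$, so that averaging $\nu$ over all shifts yields $\cU$ even when $\nu$ has small support. No earlier result is needed beyond the definitions; \cref{lem:pushforward-dist}-style pushforward bookkeeping suffices.
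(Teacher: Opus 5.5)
Your proposal is correct and is essentially the paper's proof: pull each $\mu^{Q+b}$ back to $Q$ by the shift $x \mapsto x-b$, average over uniform $b \in D^k$ so that each $t$-wise marginal becomes $\E_b\, \nu^W(a+b_W) = |D|^{-|W|}$, and then shift the resulting $t$-wise uniform distribution on $Q$ forward to each $Q+b$. Your sign bookkeeping, $\pi_W(\eta_b)(a) = \nu^W(a+b_W)$, is consistent throughout. The paper's displayed computation instead writes $\pi_T(\mu_b + b)$ where $\pi_T(\mu_b - b)$ is meant; this is harmless, since $b$ is uniform.
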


\begin{proof}
  Let $Q$ be the set from \cref{def:predicate}.
  Suppose the CSP is $t$-wise $\nu$-independent.

  For $b \in D^k$ and distribution $\mu$ over $D^k$, let $\mu + b$ be the ``shifted'' distribution so that
  $(\mu + b) (a) \coloneqq \mu(a - b)$ for $a \in D^k$.

  For any $b \in D^k$, there is a $t$-wise $\nu$-independent distribution $\mu_b$ with $\supp(\mu_b) \subseteq Q + b$.
  \cref{lem:supp-shift} implies $\supp(\mu_b - b) = \supp(\mu_b) - b \subseteq Q + b - b = Q$.
  Then their uniform average $\overline\mu \coloneqq \E_{b \sim \cU_D} [\mu_b - b]$ is also supported on $Q$.

  For any $T \in {[k] \choose t}$, $a \in D^T$,
  \begin{equation} \label{eq:proj-shift-2}
    \pi_T (\mu_b + b) (a) \overset{\eqref{eq:proj-shift}}= \pi_T (\mu_b)(a - b_T) = \nu^T (a - b_T)
  \end{equation}
  where the last equality is due to $t$-wise $\nu$-independence of $\mu_b$.
  Therefore
  \begin{equation} \label{eq:proj-shift-3}
    \pi_T (\overline\mu) (a) = \E_{b \sim \cU_D} [\pi_T (\mu_b + b) (a)]
    \overset{\eqref{eq:proj-shift-2}}= \E_{b \sim \cU_D} \nu^T (a - b_T) =
    1/\Abs{D^T},
  \end{equation}
  implying $\overline\mu$ is a $t$-wise uniform distribution supported on $Q$.

  Given any $b \in D^k$, $\overline\mu + b$ is supported on $Q + b$ by \cref{lem:supp-shift}, and is also $t$-wise
  uniform because for $a \in D^T$
  \[ \pi_T (\overline\mu + b) (a) \overset{\eqref{eq:proj-shift}}= \pi_T (\overline\mu)(a - b_T)
  \overset{\eqref{eq:proj-shift-3}}= 1/\Abs{D^T}. \qedhere \]
\end{proof}

\begin{lemma} \label{lem:supp-shift}
  Given any distribution $\mu$ over $D^k$ and $b \in D^k$, $\supp(\mu + b) = \supp(\mu) + b$.
\end{lemma}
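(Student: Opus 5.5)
This is a direct computation from the definitions of $\mu+b$ and of support. Recall that $(\mu+b)(a) \coloneqq \mu(a-b)$ for $a \in D^k$, and that $D$ is an abelian group, so $a \mapsto a-b$ is a bijection of $D^k$ whose inverse is $c \mapsto c+b$.

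We show the two inclusions simultaneously by a chain of equivalences. For any $a \in D^k$,
\[ a \in \supp(\mu+b) \iff (\mu+b)(a) > 0 \iff \mu(a-b) > 0 \iff a-b \in \supp(\mu) \iff a \in \supp(\mu)+b. \]
The first and third equivalences are the definition of support. The second is the definition of the shifted distribution. The last holds because $a = (a-b)+b$, and conversely $a = c+b$ with $c \in \supp(\mu)$ gives $a-b = c$.

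No step is an obstacle. The only point that needs care is that the inverse map $c \mapsto c+b$ is well defined, and this uses the group structure assumed in \cref{def:predicate}.
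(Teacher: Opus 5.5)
Your proof is correct and follows the paper's own argument: the same chain of equivalences $a \in \supp(\mu+b) \iff \mu(a-b) \neq 0 \iff a-b \in \supp(\mu) \iff a \in \supp(\mu)+b$. Writing $>0$ in place of $\neq 0$ changes nothing, since $\mu$ is a distribution.
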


\begin{proof}
  $a \in \supp(\mu + b)$ iff $0 \neq (\mu + b)(a) = \mu(a - b)$, iff $a - b \in \supp(\mu)$, iff $a \in \supp(\mu) + b$.
\end{proof}

\begin{lemma}
  Given any distribution $\mu$ over $D^k$, $b \in D^k$, $T \in {[k] \choose t}$, $a \in D^T$,
  \begin{equation} \label{eq:proj-shift}
    \pi_T (\mu + b)(a) = \pi_T (\mu)(a - b_T).
  \end{equation}
\end{lemma}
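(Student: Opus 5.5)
This is a direct computation with the definitions of $\mu+b$ and of the projection $\pi_T$. By definition,
\[
\pi_T(\mu+b)(a)=\sum_{\substack{c\in D^k\\ c_T=a}}(\mu+b)(c)=\sum_{\substack{c\in D^k\\ c_T=a}}\mu(c-b).
\]
I would reindex the sum by substituting $c'\coloneqq c-b$. Since translation by $b$ is a bijection of $D^k$ (here $D$ is an abelian group), $c$ ranges over the $c$ with $c_T=a$ exactly when $c'$ ranges over the $c'$ with $c'_T=a-b_T$. This uses only that restriction to $T$ commutes with coordinatewise subtraction: $(c-b)_T=c_T-b_T$.

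After the substitution, the sum becomes
\[
\sum_{\substack{c'\in D^k\\ c'_T=a-b_T}}\mu(c'),
\]
which is $\pi_T(\mu)(a-b_T)$ by definition, and this is exactly \eqref{eq:proj-shift}.

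The only point needing care is the bijection in the reindexing step; it holds because subtraction of $b$ is invertible, with inverse given by adding $b$. There is no real obstacle beyond this.
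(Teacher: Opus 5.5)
Your proposal is correct and follows essentially the same route as the paper: both write the marginal as a sum over the fiber $\{c \in D^k : c_T = a\}$ and reindex it by the translation $c \mapsto c - b$. The paper parametrizes the fiber as $a \cup c$ with $c \in D^{[k]\setminus T}$ and shifts only that free part, whereas you shift the whole $k$-tuple and use $(c-b)_T = c_T - b_T$. This is the same argument.
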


\begin{proof}
  \[ \pi_T (\mu_b + b)(a) = \sum_c (\mu_b + b)(a \cup c) = \sum_c \mu_b ((a \cup c) - b)
  = \sum_c \mu_b ((a - b_T) \cup (c - b_U)) \]
  where the sum is over all $c \in D^U$ and $U \coloneqq [k]\setminus T$.
  As $c$ runs over $D^U$, so does $c' \coloneqq c + b_U$.
  And the sum becomes
  \[ \sum_{c'} \mu_b ((a - b_T) \cup c') = \pi_T (\mu_b)(a - b_T). \qedhere \]
\end{proof}

\section{Concentration} \label{sec:additional-tools}

The main result of this section is concentration of marginals in SoS (\cref{thm:concentration-random-instance}) that
avoids the logarithmic factors in constraint density for even $t$.

\subsection{Matrix $\infty$-to-$1$ norm}

Given an $m$-by-$n$ matrix $M$, consider the following optimization problem, known as the $\infty$-to-$1$
norm $\norm{M}_{\infty\to1}$ of $M$:
\begin{equation} \label{eq:inf-one-norm}
  \sup \, \Set{ x^\top M y | x \in \R^m, y \in \R^n, \norm{x}_\infty \leq 1, \norm{y}_\infty \leq 1 }.
\end{equation}

\begin{remark} \label{rem:inf-one-norm}
  As is well known, \cref{eq:inf-one-norm} has the same value as the \emph{discrete} optimization over $\pm1$ vectors:
  \begin{equation} \label{eq:inf-one-norm-discrete}
    \max \, \Set{ x^\top M y | x \in \set{\pm1}^m, y \in \set{\pm1}^n },
  \end{equation}
  because after fixing any $\overline y \in [-1,1]^n$, the objective function $x^\top M \overline y$ is linear in
  $x \in [-1,1]^m$, and must be maximized at some extreme point $x^* \in \set{\pm1}^m$.
  Now for this fixed $x^*$, the objective function ${x^*}^\top M y$ is linear in $y \in [-1,1]^n$, and
  must be maximized at some extreme point $y^* \in \set{\pm1}^n$.
\end{remark}

It is easy to see that
\begin{equation} \label{eq:inf-one-norm-abs}
  \max_{x,y} \, \Abs{x^\top M y} = \max_{x,y} \, x^\top M y \quad (= \norm{M}_{\infty\to1})
\end{equation}
over $x \in \set{\pm1}^m, y \in \set{\pm1}^n$, because
\[ \max_{x,y} x^\top M y \leq \max_{x,y} \, \Abs{x^\top M y} = \max_{x,y} max \, \set{x^\top M y, (-x^\top) M y}
\leq \max_{x,y} x^\top M y. \]

There is a natural vector program relaxation for \cref{eq:inf-one-norm-discrete}
\begin{align} \label{eq:grothendieck-vp}
  \sup \, & \sum_{i \in [m], j \in [n]} M_{i j} \inner{u_i, v_j} \quad \text{over unit vectors } u_i, v_j \in
  \R^{m+n} \text{ for } i \in [m], j \in [n].
\end{align}

It is well known that the quantities in \cref{eq:inf-one-norm} and \cref{eq:grothendieck-vp} are within a small
multiplicative factor of each other:
\begin{equation} \label{eq:grothendieck}
  \eqref{eq:inf-one-norm} \leq \eqref{eq:grothendieck-vp} \leq K_G \cdot \eqref{eq:inf-one-norm}
\end{equation}
where $1.67 \leq K_G \leq 1.79]$ is the Grothendieck constant.
The right inequality of \cref{eq:grothendieck} is due to Grothendieck inequality.

Consider the $\ell$-by-$\ell$ matrix
$A = \displaystyle{\frac 12 \begin{pmatrix}0 & M \\ M^\top & 0\end{pmatrix}}$, where $\ell = m+n$.
\cref{eq:grothendieck-vp} can be expressed as the SDP:
\begin{align*}
  \max \, &\inner{A, X} \\
  X_{i i} &= 1 \quad \text{for } i \in [\ell] \\
  X &\succeq 0,
\end{align*}
The above SDP has the following dual SDP:
\begin{align*}
  \min \, &\Tr(\Lambda) \\
  \Lambda &\succeq A \\
  \text{diagonal } &\Lambda,
\end{align*}

Slater’s condition implies strong duality holds, so primal and dual optimal values coincide.
The dual SDP corresponds to the SOS program
\begin{equation} \label{eq:grothendieck-sos}
\begin{aligned}
  \min \;& \sum_{i \in [m]} a_i + \sum_{j \in [n]} b_j \\
  \text{s.t.}\;& \text{the polynomial }
  \sum_{i \in [m]} a_i u_i^2 + \sum_{j \in [n]} b_j v_j^2 
  - \sum_{i \in [m], j \in [n]} M_{i j} u_i v_j
  \text{ is SOS.}
\end{aligned}
\end{equation}
\subsection{Concentration of injective norm in SDP}

Given a finite set $W$ (such as $[n]$), a tensor on $W^k$ is an array $T: W^k \to \R$ of real numbers.
Its flattened matrix $M(T)$ is an $W^{\floor{k/2}}$-by-$W^{\ceil{k/2}}$ matrix with entries
$M(T)_{\beta \gamma} \coloneqq T_{\beta \gamma}$ for $\beta \in W^{\floor{k/2}}, \gamma \in W^{\ceil{k/2}}$.

  \begin{lemma}[Bernstein Inequality] \label{lem:bernstein}
  Let $X_1, … X_\ell$ be independent $0$-mean random variables such that $\abs{X_e} \leq B$ for $e \in [\ell]$.
  Then for any $a > 0$,
    \[ \Pr\Brac{\sum_{e \in [\ell]} X_e > a} \leq 
    \exp\Paren{\frac{-\frac12 a^2}{\sum_{e \in [\ell]} \E[X_e^2] + \frac 13 Ba}}. \]
\end{lemma}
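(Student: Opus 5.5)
The plan is the standard one-sided Bernstein argument via the exponential moment (Chernoff) method. Set $S \coloneqq \sum_{e \in [\ell]} X_e$ and $\sigma^2 \coloneqq \sum_{e} \E[X_e^2]$. For any $\lambda > 0$, Markov's inequality applied to $e^{\lambda S}$ gives
\[ \Pr[S > a] \leq e^{-\lambda a} \E[e^{\lambda S}] = e^{-\lambda a} \prod_{e \in [\ell]} \E[e^{\lambda X_e}], \]
where the product uses independence. The proof then reduces to bounding each individual moment generating function.

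For a single term, expand $e^{\lambda X_e} = 1 + \lambda X_e + \sum_{j \geq 2} \lambda^j X_e^j / j!$. The linear term vanishes in expectation because $X_e$ is mean zero. For $j \geq 2$ use $|X_e|^j \leq B^{j-2} X_e^2$. This gives
\[ \E[e^{\lambda X_e}] \leq 1 + \E[X_e^2] \sum_{j\geq 2} \frac{\lambda^j B^{j-2}}{j!} \leq 1 + \frac{\lambda^2 \E[X_e^2]}{2}\sum_{i \geq 0} \Paren{\frac{\lambda B}{3}}^i, \]
using $j! \geq 2 \cdot 3^{j-2}$. For $\lambda B < 3$ this is at most $\exp\Paren{\frac{\lambda^2 \E[X_e^2]}{2(1-\lambda B/3)}}$, by $1+u \leq e^u$. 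Multiplying over $e$ yields
\[ \Pr[S > a] \leq \exp\Paren{-\lambda a + \frac{\lambda^2 \sigma^2}{2(1-\lambda B/3)}}. \]

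The only nonroutine step is the choice of $\lambda$. Rather than optimize exactly, I would take $\lambda \coloneqq a/(\sigma^2 + Ba/3)$. This satisfies $\lambda B/3 < 1$ and gives $1 - \lambda B/3 = \sigma^2/(\sigma^2 + Ba/3)$. Hence the quadratic term equals $\frac{\lambda^2 (\sigma^2 + Ba/3)}{2} = \frac{a^2}{2(\sigma^2 + Ba/3)}$, while $\lambda a = \frac{a^2}{\sigma^2+Ba/3}$. The exponent is therefore $-\frac{a^2/2}{\sigma^2 + Ba/3}$, which is exactly the claimed bound.

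The edge case $\sigma^2 = 0$ follows directly, since then $S = 0$ almost surely. I expect no real obstacle here, as this is a textbook inequality. The one point needing care is the factorial bound $j! \geq 2\cdot 3^{j-2}$, which follows by induction since each factor beyond $2$ is at least $3$.
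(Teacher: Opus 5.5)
Your proof is correct: the Chernoff bound, the moment estimate $|X_e|^j \le B^{j-2}X_e^2$, the factorial bound $j! \ge 2\cdot 3^{j-2}$, and the choice $\lambda = a/(\sigma^2 + Ba/3)$ together give exactly the stated exponent, and you correctly handle $\sigma^2 = 0$ separately. The paper states this lemma without proof as a standard tool, so there is no proof to compare against, and your argument is the usual textbook derivation of it.
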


The next theorem will be used as a tighter version of \cite[Lemma~6.2]{d2023ihara} without
$\polylog n$. When $\ell = 1$ and $t$ is even, it is also a tighter version of \cite[Theorem~6.2]{AOW15}
without $\polylog n$.
Unlike \cite[Lemma~6.2]{d2023ihara} or \cite[Theorem~6.2]{AOW15}, our proof works exclusively with
$\infty$-to-$1$ norm without considering spectral norm, and was sketched in the introduction of \cite{{d2023ihara}}.

\begin{theorem} \label{thm:tensor-norm-bound}
  Let $(T^1, \cdots, T^\ell)$ be independent random tensors on $[n]^t$ such that for every $e \in [\ell]$,
  \[ \E[T^e] = 0, \quad \E\Brac{\norm{M(T^e)}^2_{\infty\to1}} \leq p, \quad \norm{M(T^e)}_{\infty\to1} \leq B. \]
  For $t \geq 1, p \ell \geq n^{\ceil{t/2}}$, except with probability $o_n (1)$,
  \[ \Norm{M\Paren{\sum_{e \in [\ell]} T^e}}_{\infty\to1} \lesssim B \sqrt{p \ell n^{\ceil{t/2}}}. \]
\end{theorem}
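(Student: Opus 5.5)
Write $N \coloneqq n^{\ceil{t/2}}$ and $S \coloneqq \sum_{e\in[\ell]} T^e$, and note that $M$ is linear, so $M(S) = \sum_e M(T^e)$. The plan combines \cref{rem:inf-one-norm} with Bernstein (\cref{lem:bernstein}) and a union bound over sign vectors.

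By \cref{rem:inf-one-norm} and \cref{eq:inf-one-norm-abs},
\[ \Norm{M(S)}_{\infty\to1} = \max_{x,y} \sum_{e\in[\ell]} x^\top M(T^e) y, \]
where $x\in\set{\pm1}^{n^{\floor{t/2}}}$ and $y \in \set{\pm1}^{N}$. There are at most $2^{2N}$ such pairs, since $\floor{t/2}\leq\ceil{t/2}$.

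Fix a pair $(x,y)$ and set $X_e \coloneqq x^\top M(T^e) y$. These are independent across $e$ and mean zero, because $\E[T^e]=0$ and $X_e$ is linear in $T^e$. They are bounded by $\abs{X_e}\le \norm{M(T^e)}_{\infty\to1}\le B$. They also satisfy
\[ \E[X_e^2]\le \E\norm{M(T^e)}^2_{\infty\to1}\le p, \]
so $\sum_e \E X_e^2 \le p\ell$.

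Apply \cref{lem:bernstein} with $a = C\sqrt{p\ell N}$. This gives a failure probability of
\[ \exp\Paren{-\frac{C^2 p\ell N/2}{p\ell + \frac13 B C\sqrt{p\ell N}}}. \]
For the union bound we need this to be at most $2^{-3N}$.

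The first term in the denominator is harmless, since it contributes exponent $\Omega(C^2 N)$. The second term is the main obstacle: it contributes exponent $\Omega(C\sqrt{p\ell N}/B)$, which beats $N$ only if $p\ell \gtrsim B^2 N$. I expect to use the normalization implicit in the hypotheses, namely $p \le B^2$ (because $\E\norm{M}^2\le B^2$) together with $p\ell\ge N$.

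If $B\ge1$ is not assumed, I would instead choose
\[ a = C\Paren{\sqrt{p\ell N} + BN}, \]
which makes Bernstein give $e^{-\Omega(CN)}$ in both regimes. I would then absorb $BN$ using the claimed form. The hypothesis $p\ell\ge N$ gives $\sqrt{p\ell N}\ge N$, so
\[ BN \le B\sqrt{p\ell N}. \]
Similarly, if $B\ge 1$ then $\sqrt{p\ell N}\le B\sqrt{p\ell N}$. Hence whenever $B\gtrsim1$ we get $a \lesssim B\sqrt{p\ell N}$, which is exactly the stated bound with its extra factor of $B$.

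In the case $B<1$, the bound $\E X_e^2\le p$ forces $p\le B^2<1$. I would then rescale the tensors by $1/B$, so that the rescaled bound is $1$ and the rescaled second-moment parameter is $p/B^2$. I would rerun the argument and check that the resulting bound is at most $B\sqrt{p\ell N}$. Verifying this is the delicate bookkeeping step. Finally, take a union bound over the $2^{2N}$ pairs $(x,y)$. Choosing $C$ large gives total failure probability $2^{2N}e^{-\Omega(CN)} = o_n(1)$.
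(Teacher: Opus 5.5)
Your core argument is the paper's: for each fixed sign pair $(x,y)$ you apply Bernstein to $X_e=x^\top M(T^e)y$, using $\E X_e=0$, $\E X_e^2\le p$ and $|X_e|\le B$, and then take a union bound over the at most $2^{2N}$ pairs, where $N=n^{\ceil{t/2}}$. Your threshold $a=C(\sqrt{p\ell N}+BN)$ is cleaner than the paper's $a\asymp B\sqrt{p\ell N}$. It gives the scale-invariant bound $\Norm{M(\sum_e T^e)}_{\infty\to1}\lesssim\sqrt{p\ell N}+BN$ with failure probability $2^{2N}e^{-\Omega(CN)}$, and this needs no hypothesis on $p\ell$. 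Absorbing both terms into $B\sqrt{p\ell N}$ using $p\ell\ge N$ and $B\gtrsim 1$ is correct.

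The step that fails is the case $B<1$. Rescaling by $1/B$ only yields $\Norm{M(\sum_e T^e)}_{\infty\to1}\lesssim\sqrt{p\ell N}$, which is a factor $1/B$ above the target. No bookkeeping can close this, because under $T^e\mapsto\lambda T^e$ (so $p\mapsto\lambda^2p$ and $B\mapsto\lambda B$) the left side scales like $\lambda$ while $B\sqrt{p\ell N}$ scales like $\lambda^2$. In fact the stated inequality is false for small $B$. Take $t=1$, so $M(T)$ is a $1\times n$ row and $\norm{M(T)}_{\infty\to1}=\sum_j|T_j|$. Let $T^e=n^{-2}g_e$ with $g_e$ i.i.d.\ uniform in $\{\pm1\}^n$, and let $\ell=n^3$. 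Then $B=n^{-1}$, $p=n^{-2}$ and $p\ell=n=N$, so every hypothesis holds and the claimed bound is $B\sqrt{p\ell N}=1$. Yet with high probability $\sum_j|\sum_e T^e_j|=\Theta(n^{-2}\cdot n\cdot n^{3/2})=\Theta(\sqrt n)$.

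The paper's proof carries the same hidden normalization. With $a=CB\sqrt{p\ell N}$ and $\sqrt{p\ell N}\le p\ell$, the Bernstein exponent is at least $\frac{C^2B^2N/2}{1+CB^2/3}$. This exceeds $2N\ln 2$ only when $CB\gtrsim1$, so the implied constant must grow like $1/B$. That is harmless where the theorem is used, since there $B=2\norm{M(T)}_{\infty\to1}$ is a fixed constant absorbed into $C(T,\eps)$.

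So drop the rescaling step. Instead, record the assumption $B\gtrsim 1$, or let the implied constant depend on $B$, or simply state your bound $\sqrt{p\ell N}+BN$. Separately, the hypotheses do not force $p\le B^2$ as you claim: $p$ is only an upper bound on $\E\norm{M(T^e)}^2_{\infty\to1}$. This does not matter once you use your second choice of $a$.
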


\begin{proof}
  Fix any $y: [n]^{\floor{t/2}} \to \set{\pm1}$ and $z: [n]^{\ceil{t/2}} \to \set{\pm1}$.
  Consider
  \begin{equation} \label{eq:sum-tensor-entries}
    \sum_{\beta \in [n]^{\floor{t/2}}} \sum_{\gamma \in [n]^{\ceil{t/2}}} \sum_{e \in [\ell]}
    T^e_{\beta \gamma} y_\beta z_\gamma = \sum_{e \in [\ell]} y^\top M(T^e) z .
  \end{equation}
  For $e \in [\ell]$, let $X_e \coloneqq y^\top M(T^e) z$. Then
  \begin{align*}
    \E X_e &= y^\top \E \Brac{M(T^e)} z = 0,\\
    \E[X_e^2] &= \E\Brac{\Paren{y^\top M(T^e) z}^2} \overset{\eqref{eq:inf-one-norm-abs}}\leq
    \E\Brac{\Norm{M(T^e)}^2_{\infty\to1}} \leq p,\\
    \Abs{X_e} &\overset{\eqref{eq:inf-one-norm-abs}}\leq \Norm{M(T^e)}_{\infty\to1} \leq B.
  \end{align*}
  Provided $p \ell \geq n^{\ceil{t/2}}$, except with probability at most $\exp\Paren{-n^{\ceil{t/2}}}$ over
  $(T^1, \cdots, T^\ell)$,
  Bernstein inequality implies \cref{eq:sum-tensor-entries} is $O \Paren{B \sqrt{p \ell n^{\ceil{t/2}}}}$.

  By \cref{rem:inf-one-norm}, a union bound over the $2^{n^{\floor{t/2}}} \times 2^{n^{\ceil{t/2}}}$ choices of $y$
  and $z$ implies, except with probability $o_n (1)$,
  \[ \Norm{M\Paren{\sum_{e \in [\ell]} T^e}}_{\infty\to1} \lesssim B \sqrt{p \ell n^{\ceil{t/2}}}\,. \qedhere \]
\end{proof}

Most importantly, upper bounds on $\norm{M(T)}_{\infty\to1}$ imply SOS upper bounds on tensor norm:

\begin{theorem} \label{thm:tensor-sos-bound}
  Given a tensor $T$ on $W^t$, there is a sum-of-squares proof of
  \[ \set{x_i^2 \leq 1}_{i \in W} \quad \vdash_{2\ceil{t/2}} \quad
  \sum_{\alpha \in W^t} T_\alpha x^\alpha \leq K_G \Norm{M(T)}_{\infty\to1}. \]
\end{theorem}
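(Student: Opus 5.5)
The plan is to reduce the tensor form to a bilinear form in the flattened matrix and then apply the SoS dual of the Grothendieck SDP. Write $t_1 = \floor{t/2}$, $t_2 = \ceil{t/2}$, and for $\beta \in W^{t_1}$, $\gamma \in W^{t_2}$ introduce the monomials $u_\beta \coloneqq x^\beta = \prod_i x_{\beta_i}$ and $v_\gamma \coloneqq x^\gamma$. Then
\[ \sum_{\alpha \in W^t} T_\alpha x^\alpha = \sum_{\beta,\gamma} M(T)_{\beta\gamma} u_\beta v_\gamma , \]
which is a bilinear form in the polynomials $u, v$ of degree $t_1 + t_2 = t$.

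Next I would invoke the SDP duality described just above. Let $M = M(T)$. By Grothendieck's inequality \cref{eq:grothendieck} and strong duality, the dual SDP has an optimal diagonal $\Lambda = \mathrm{diag}(a,b)$ with $\Lambda \succeq A$ and $\Tr\Lambda = \eqref{eq:grothendieck-vp} \le K_G \norm{M}_{\infty\to1}$. Because $\Lambda - A \succeq 0$, the quadratic form in the formal variables $(u,v)$ is a sum of squares of linear forms, exactly as in \cref{eq:grothendieck-sos}. Concretely, $\Lambda - A = \sum_k w_k w_k^\top$, so
\[ \sum_\beta a_\beta u_\beta^2 + \sum_\gamma b_\gamma v_\gamma^2 - \sum_{\beta,\gamma} M_{\beta\gamma} u_\beta v_\gamma = \sum_k \inner{w_k,(u,v)}^2 . \]
One point to check is the factor $\tfrac12$ in the definition of $A$ against the doubled off-diagonal terms; these cancel, so the identity holds as written. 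Substituting the monomials $u_\beta = x^\beta$ and $v_\gamma = x^\gamma$ turns each square into the square of a polynomial of degree $\le t_2$. This gives a degree-$2t_2$ SoS identity
\[ \sum_\alpha T_\alpha x^\alpha = \sum_\beta a_\beta (x^\beta)^2 + \sum_\gamma b_\gamma (x^\gamma)^2 - \mathrm{SoS}. \]

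It remains to derive $(x^\beta)^2 \le 1$ from the axioms $x_i^2 \le 1$ in degree $2t_2$. I would do this by the telescoping identity
\[ 1 - \prod_{i=1}^{r} x_{\beta_i}^2 = \sum_{j=1}^{r} \Big(\prod_{i<j} x_{\beta_i}^2\Big)(1 - x_{\beta_j}^2), \]
where each summand is a square times an axiom, of degree $\le 2r \le 2t_2$. Repeated indices in $\beta$ are harmless for this step.

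The main obstacle is the sign of $a_\beta$ and $b_\gamma$: multiplying the inequalities $(x^\beta)^2 \le 1$ by these coefficients is only valid when they are nonnegative. I would argue that $\Lambda \succeq A$ together with the zero diagonal of $A$ forces $\Lambda_{ii} \ge 0$, so all the $a_\beta, b_\gamma$ are nonnegative. Summing then gives
\[ \sum_\alpha T_\alpha x^\alpha \le \Tr\Lambda \le K_G \norm{M(T)}_{\infty\to1}, \]
which is the claim in degree $2\ceil{t/2}$. If attainment of the dual optimum is a concern, I would instead take a near-optimal $\Lambda$ and add slack $\epsilon \cdot I$, which makes the dual strictly feasible, and conclude via a limit or compactness argument.
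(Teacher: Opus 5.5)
Your proposal is correct and follows the paper's proof: flatten $T$ into a bilinear form, take the diagonal dual certificate of the Grothendieck SDP, substitute the monomials $x^\beta, x^\gamma$ for the formal variables $(u,v)$, and derive $x^{2\beta}\le 1$ from the axioms by induction, which your telescoping identity makes explicit. Your checks that $\Lambda_{ii}\ge 0$ and of the factor $\tfrac12$ fill in details the paper leaves implicit. The fallback for dual attainment is unnecessary, since $X=I$ is strictly feasible for the primal and so the dual optimum is attained.
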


\begin{proof}
  Grothendieck inequality in \cref{eq:grothendieck} implies the value of \cref{eq:grothendieck-vp} is at most $K_G
  \norm{M(T)}_{\infty\to1}$.
  \cref{eq:grothendieck-sos} yields an identity of degree-$2$ polynomials (in indeterminates ${u_\beta, v_\gamma}$)
  \[ \sum_\beta a_\beta u_\beta^2 + \sum_\gamma b_\gamma v_\gamma^2
  - \sum_{\beta,\gamma} T_{\beta \gamma} u_\beta v_\gamma = \sum_j h_j^2(u_\beta, v_\gamma),\]
  where $\beta$ runs over $W^{\floor{t/2}}$ and $\gamma$ over $W^{\ceil{t/2}}$, and $h_j$ are degree-$1$ polynomials.
  We have $\sum_\beta a_\beta + \sum_\gamma b_\gamma \leq K_G \norm{M(T)}_{\infty\to1}$, thanks to strong duality of
  the SDP. 
  Replacing indeterminates ${u_\beta,v_\gamma}$ by monomials ${x^\beta,x^\gamma}$, we get an identity of the
  degree-$2\ceil{t/2}$ polynomials (in indeterminates ${x_i}$)
  \[
    \sum_\beta a_\beta x^{2\beta} + \sum_\gamma b_\gamma x^{2\gamma}
    - \sum_{\beta,\gamma} T_{\beta \gamma} x^{\beta \gamma} = \sum_j h_j^2(x^\beta, x^\gamma),
  \]
  implying
  \[
    \set{x^{2\beta} \leq 1, x^{2\gamma} \leq 1} \quad \vdash_{2\ceil{t/2}} \quad
    \sum_{\alpha \in W^t} T_\alpha x^\alpha \leq \sum_\beta a_\beta + \sum_\gamma b_\gamma
    \leq K_G \norm{M(T)}_{\infty\to1}.
  \]
  The axioms $\set{x_i^2 \leq 1}$ yield degree-$2\abs{\gamma}$ SOS proofs of ${x^{2 \gamma} \leq 1}$ for any
  $\gamma \in W^\ell$ by induction on $\abs{\gamma}$.
  The theorem now follows.
\end{proof}

\subsection{Functions on non-boolean domain} \label{sec:func-non-boolean}

We encode an assignment $x \in D^V$ in an SOS proof by indeterminates $x_{v,a}$ for $v \in V, a \in D$.
Each $x_{v,a}$ intents to represent the $\set{0,1}$-indicator of whether $x(v) = a$.
This representation is equivalent to the one in \cite[Section~B.2]{AOW15}.

As in \cite[Section~B.2]{AOW15}, given a finite set $W$,
there is a bijection $\phi_W$ from $D^W$ to
$\Omega_W \coloneqq \set{z \in \set{0,1}^{W \times D} | \sum_{a \in D} z(i,a) = 1 \, \forall i \in W}$:
\[ \phi_W (b)_{i,a} \coloneqq \bracbb{b_i = a} \quad \text{for } i \in W, a \in D, b \in D^W. \]

Every $f: D^W \to \R$ has a boolean version $f^\Omega: \Omega_W \to \R$, such that
$f(b) = f^\Omega (\phi_W (b))$ for $b \in D^W$, by defining
\begin{equation} \label{eq:boolean-version}
  f^\Omega (z) \coloneqq \sum_{b \in D^W} f(b) \prod_{i \in W} z(i,b_i) \quad\text{for } z \in \Omega.
\end{equation}
Conversely, given $f^\Omega: \Omega_W \to \R$, one can recover $f: D^W \to \R$ simply by defining
$f(b) \coloneqq f^\Omega (\phi_W (b))$ for $b \in D^W$.
For brevity, write $z^b \coloneqq \prod_{i \in Q} z(i,b_i)$ for $Q \subseteq W, b \in D^Q, z \in \R^{W \times D}$.

We now define functions of a given pure degree. The definitions for $f: D^W \to \mathbb{R}$ and
$f^\Omega: \Omega_W \to \R$ are different.

\begin{definition}
  $f: D^W \to \R$ has pure degree $t$ if $f$ is a linear combination of $\1_b$ over $b \in D^S, S \in {W \choose t}$.
\end{definition}

\begin{definition}
  $f^\Omega: \Omega_W \to \R$ has pure degree $t$ if $f^\Omega$ is a linear combination of $z^b$ over
  $S \in {W \choose t}, b \in D^S$.
\end{definition}

Further define a function of degree at most $t$ to be a linear combination of functions of pure degree at most $t$. This
definition is equivalent to the Fourier analytic one in \cite[Appendix~B]{AOW15}, but is much shorter and will
simplify our proofs.

\begin{lemma}[{Essentially \cite[Claim~B.3]{AOW15}}]
  $f: D^W \to \R$ has pure degree $t$ if and only if $f^\Omega: \Omega_W \to \R$ does.
\end{lemma}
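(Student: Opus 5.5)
We prove both directions by translating between the two representations using \cref{eq:boolean-version} and its inverse $f(b) = f^\Omega(\phi_W(b))$, working one monomial (resp.\ indicator) at a time and then using linearity.

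For the forward direction, suppose $f$ has pure degree $t$, so that
\[ f = \sum_{S \in \binom{W}{t}} \sum_{b \in D^S} c_{S,b} \1_b, \qquad \text{where } \1_b(a) = \bracbb{a_S = b}. \]
By linearity it suffices to show that the function $\1_b^\Omega$ coincides on $\Omega_W$ with the monomial $z^b = \prod_{i\in S} z(i,b_i)$. Take $z = \phi_W(a)$. Then $z^b = \prod_{i \in S} \bracbb{a_i = b_i} = \bracbb{a_S = b} = \1_b(a)$. Since $\phi_W$ is a bijection onto $\Omega_W$, the boolean version of a function is uniquely determined by its values on $\Omega_W$. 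Hence $\1_b^\Omega = z^b$ as functions on $\Omega_W$, and $f^\Omega = \sum c_{S,b} z^b$ has pure degree $t$.

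One point needs care here. Definition \eqref{eq:boolean-version} writes $f^\Omega$ as a degree-$|W|$ expression. We therefore stress that "pure degree" for $f^\Omega$ is a statement about $f^\Omega$ as a function on $\Omega_W$, not about a particular polynomial formula. With that reading, agreement with $\sum c_{S,b} z^b$ on $\Omega_W$ is all that is needed.

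For the converse, suppose $f^\Omega = \sum_{S,b} c_{S,b} z^b$ on $\Omega_W$ with $|S| = t$. Recover $f(a) = f^\Omega(\phi_W(a)) = \sum c_{S,b}\, \phi_W(a)^b = \sum c_{S,b} \bracbb{a_S = b}$. This is a linear combination of the indicators $\1_b$ with $b \in D^S$ and $S \in \binom{W}{t}$, so $f$ has pure degree $t$.

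The only step requiring attention is this well-definedness issue: the two notions of $f^\Omega$ (the formula \eqref{eq:boolean-version} versus the function on $\Omega_W$) must be identified, and it must be noted that monomials $z^b$ with a repeated coordinate $i$ never arise, because $b \in D^S$ assigns exactly one value to each $i \in S$. Everything else is direct substitution.
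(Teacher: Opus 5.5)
Your proof is correct and follows the same route as the paper. Both establish $\1_b^\Omega = z^b$ on $\Omega_W$, obtain the forward direction by linearity of $f \mapsto f^\Omega$, and obtain the converse by substituting $z = \phi_W(a)$. The one cosmetic difference is how $\1_b^\Omega = z^b$ is verified: you evaluate at $\phi_W(a)$ and use surjectivity of $\phi_W$, whereas the paper expands \eqref{eq:boolean-version} directly and collapses it using $\sum_{a \in D} z(j,a) = 1$ for $j \notin S$.
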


\begin{proof}
  Let $\overline S \coloneqq W \setminus S$.
  For every $S \in {W \choose t}, b \in D^S, z \in \Omega_W$,
  \[
    \1_b^\Omega (z) = \sum_{a \in D^W} \1_b (a) \prod_{i \in W} z(i, a_i) = \prod_{i \in S} z(i, b_i) \cdot
    \prod_{j \in \overline S} \Paren{\sum_{a \in D} z(j, a)} .
  \]
  Each factor $\sum_{a \in D} z(i,a)$ in the brackets equals $1$ because $z \in \Omega_W$.
  Therefore $\1_b^\Omega (z) = \prod_{i \in S} z(i, b_i) = z^b$.

  Suppose $f = \sum_b \eta_b \1_b$ where $\eta_b \in \mathbb{R}$ for $S \in {W \choose t}, b \in D^S$.
  Since $f^\Omega$ is linear in $f$ in \cref{eq:boolean-version},
  $f^\Omega = \sum_b \eta_b \1_b^\Omega = \sum_b \eta_b z^b$.

  Conversely, for $S \in {W \choose t}, b \in D^S$, if $g^\Omega (z) = z^b$ for $z \in \Omega_W$,
  then $g(a) = g^\Omega (\phi_W (a)) = \prod_{i \in S} \phi_W (a)_{i,b_i} = \1_b (a)$ for $a \in D^W$.

  Consequently, if $f^\Omega = \sum_b \eta_b z^b$, where $\eta_b \in \R$ for
  $S \in {W \choose t}, b \in D^S$, then $f = \sum_b \eta_b \1_b$.
\end{proof}

\subsection{Tensor representation of functions} \label{sec:tensor-func}

Given $n, k \in \N$, recall that $n^{\underline k} \coloneqq n(n-1)\cdots(n-k+1)$ denote the falling factorial.
Given any set $W$, recall that
$W^{\underline k} \coloneqq \set{ \alpha \in W^k | \alpha_i \neq \alpha_j \text{ for distinct } i,j \in [k] }$
denote the set of $k$-tuples of distinct elements from $W$.

Given $f^\Omega: \Omega_k \to \R$ of pure degree $t$, it is the sum $\sum_b \eta_b z^b$ over
$Q \in {[k] \choose t}, b \in D^Q$ for some $\eta_b \in \R$.

Such $f^\Omega$ can be represented as a tensor $T$ on $([k] \times D)^{\underline t}$ such that
$\sum_\alpha T_\alpha z^\alpha = f^\Omega (z)$ for $z \in \Omega_k$, where $\alpha$ runs over
$([k] \times D)^{\underline t}$.
Indeed, $T \coloneqq \sum_b \eta_b \1_{\tau(b)}$, where $\1_\tau: ([k] \times D)^{\underline t} \to \set{0,1}$ is the
indicator function of $\tau \in ([k] \times D)^{\underline t}$, and $\tau(b) \coloneqq ((i,b_i) | i \in Q)$.
The elements in the $t$-tuple $\tau(b)$ can be ordered arbitrarily.
Then
\[
  \sum_\alpha T_\alpha z^\alpha = \sum_\alpha \sum_b \eta_b \1_{\tau(b)} (\alpha) z^\alpha
  = \sum_b \eta_b z^\tau(b) = \sum_b \eta_b z^b = f^\Omega (z).
\]

Further consider the scope $S \in V^{\underline k}$ of some constraint in an instance.
We want to express as a tensor the function $f^\Omega (z_S)$ by applying $f^\Omega$ to $x_S$.

Every $S \in V^{\underline k}$ equals $(\sigma(1), \dots, \sigma(k))$ for some injective $\sigma: [k] \to V$.
$\sigma$ induces an injective map $[k] \times D \to V \times D$, namely $\sigma(i,a) \coloneqq (\sigma(i), a)$ for
$(i,a) \in [k] \times D$.
Likewise every $z \in \Omega_V$ has restriction $z_S \in \Omega_S$ given by
$z_S (v, a) \coloneqq z(v, a)$ for $(v, a) \in S \times D$.

We will also write $f^{\Omega,\sigma}$ to be the function
$f^{\Omega,\sigma} (z) \coloneqq \sum_b \eta_b z^{b \circ \sigma^{-1}\vert_S}$, so that
$f^{\Omega,\sigma} (z) = f^\Omega (z_S)$ for $z \in \Omega_V$.

Let $T^\sigma \coloneqq \sum_\beta T_\beta \1_{\sigma(\beta)}$ be the tensor on $(V \times D)^{\underline t}$ representing
the image of $T$ under $\sigma$.
We now argue that
\begin{equation} \label{eq:tensor-func}
  \sum_\alpha T^\sigma_\alpha z^\alpha = f^\Omega (z_S) \quad \text{for } z \in \Omega_V.
\end{equation}
To see this, as $\alpha$ runs over $(V \times D)^{\underline t}$ and $\beta$ runs over
$([k] \times D)^{\underline t}$,
\[
  \sum_\alpha T^\sigma_\alpha z^\alpha
  = \sum_\alpha \sum_\beta T_\beta \1_{\sigma(\beta)} (\alpha) z^\alpha = \sum_\beta T_\beta z^{\sigma(\beta)}.
\]
The right-hand-side further equals
\[
  \sum_\beta \sum_b \eta_b \1_{\tau(b)} (\beta) z^{\sigma(\beta)}
  = \sum_b \eta_b z^{\sigma(\tau(b))} \overset{(*)}= \sum_b \eta_b z_S^b = f^\Omega (z_S),
\]
where $b$ runs over $D^Q$ as $Q$ runs over ${[k] \choose t}$, and $(*)$ follows from
\[ z^{\sigma(\tau(b))} = \prod_{i \in Q} z(\sigma(i),b_i) = \prod_{i \in Q} z_S (\sigma(i),b_i) = z_S^b. \]

\subsection{Tensor concentration in SDP}

Let $T$ be a tensor on $([k] \times D)^{\underline t}$.
A non-empty instance $I = (V, \cC)$ induces the tensor
\[ T^\cC \coloneqq \E_{(\sigma,R) \in \cC} T^\sigma. \]
Note that $T$ does not depend on the relation $R$ of the constraint.
On the other hand, let
\[ \overline T \coloneqq \E_{\sigma \in V^{\underline k}} T^\sigma \]
be the expected tensor contributed by a single potential hyperedge.
Here and below $\sigma \in V^{\underline k}$ sometimes mean $\sigma$ is an injective map $[k] \to V$ by abuse of
notation, since such an injective map is equivalent to a $k$-tuple of distinct elements from $V$.

The next theorem shows that in a random instance, $T^\cC$ will be close to its expectation $\overline T$ when measured
by the $\infty$-to-$1$ norm of their flattened matrices.
We prove two versions of the theorem, one for sampling hyperedges from the binomial model
(\cref{thm:avg-tensor-norm-bound}), another for sampling hyperedges independently with replacement
(\cref{thm:avg-tensor-norm-bound-replacement}).

\begin{theorem} \label{thm:avg-tensor-norm-bound}
  Fix $t \leq k$.
  Let $T$ be a tensor on $([k] \times D)^{\underline t}$.
  For any $\eps > 0$, there is $C = C(T,\eps)$ such that for $p \geq C n^{\ceil{t/2} - k}$, except with probability
  $o_n (1)$, a random $k$-CSP instance $I = (V, \mathcal{C})$ on $n$ variables with constraint probability $p$ satisfies
  \[ \Norm{M(T^\cC - \overline T)}_{\infty\to1} \leq \eps. \]
\end{theorem}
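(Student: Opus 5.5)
The plan is to apply \cref{thm:tensor-norm-bound} to the centered contributions of the potential hyperedges. Index the potential undirected hyperedges by $e \in {V \choose k}$. Each is present independently with probability $p$, and when present it carries a uniformly random bijection $\sigma_e$. Let $\xi_e \in \{0,1\}$ indicate presence, and let $N = |\cC|$. Write
\[ T^\cC - \overline T = \frac{1}{N}\sum_{e} \Paren{\xi_e T^{\sigma_e} - p\,\E_{\sigma_e} T^{\sigma_e}} + \Paren{\frac{p{n \choose k}}{N} - 1}\overline T. \]
The sum over $\sigma_e$ is the uniform average over bijections onto $e$; averaging over $e$ as well recovers $\overline T$ exactly.

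\textbf{Second term.} The quantity $N$ is binomial with mean $p{n\choose k} \ge C n^{\ceil{t/2}}\to\infty$, so by Chernoff $N = (1+o_n(1))p{n\choose k}$ whp. For the norm, each $T^\sigma$ has $O_{T}(1)$ nonzero entries of size $O_T(1)$, so $\norm{M(T^\sigma)}_{\infty\to1} \le B_0 = O_T(1)$. Since $\overline T$ is an average of the $T^\sigma$, $\norm{M(\overline T)}_{\infty\to1}\le B_0$ as well, and the second term is $o_n(1)$ in norm.

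\textbf{First term.} Set $X^e \coloneqq \xi_e T^{\sigma_e} - p\,\E_{\sigma_e}T^{\sigma_e}$. These are independent, mean zero, and satisfy $\norm{M(X^e)}_{\infty\to1}\le 2B_0$ and $\E\norm{M(X^e)}^2_{\infty\to1} \le 4B_0^2 p$. The tensors live on $(V\times D)^{\underline t}$; identify them with tensors on $[n']^t$ for $n' = n|D|$, which changes the bound only by constants. Applying \cref{thm:tensor-norm-bound} with $\ell = {n\choose k}$, variance proxy $4B_0^2p$ and $B=2B_0$ gives, whp,
\[ \Norm{M\Paren{\textstyle\sum_e X^e}}_{\infty\to1} \lesssim B_0^2\sqrt{p\,n^k\, n^{\ceil{t/2}}}, \]
provided $p n^k \gtrsim n^{\ceil{t/2}}$, which holds by the hypothesis on $p$. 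Dividing by $N \approx p n^k/k!$ yields
\[ \lesssim_{T,k} \sqrt{\frac{n^{\ceil{t/2}}}{p n^k}} \le C^{-1/2}. \]
Choosing $C$ large enough makes this at most $\eps/2$, and combined with the second term this gives the claim.

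\textbf{Main obstacle.} The delicate point is the variance parameter. A bound on $\E\norm{M(X^e)}^2$ of order $p$ (rather than a constant) is essential: the standard Bernstein bound only needs the sum of variances $\ell p$, and $\ell p \approx$ the expected number of constraints is exactly the quantity that makes the $\sqrt{n^{\ceil{t/2}}/m}$ rate appear without logarithms. I would also check that the union bound inside \cref{thm:tensor-norm-bound} over $2^{O(n'^{\ceil{t/2}})}$ sign vectors is dominated by the Bernstein tail $\exp(-\Omega(a^2/(\ell p + Ba)))$ at $a = K B\sqrt{p\ell n'^{\ceil{t/2}}}$. Here $a \lesssim \ell p$ exactly when $p\ell \ge n'^{\ceil{t/2}}$, which gives exponent $\Omega(K^2 n'^{\ceil{t/2}})$ and beats the union bound for large $K$.
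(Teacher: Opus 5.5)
Your proposal is correct and is essentially the paper's proof: the same centered per-hyperedge tensors $\tilde T^e = T^e - \E T^e$, the same application of \cref{thm:tensor-norm-bound} with $\ell = \binom{n}{k}$ and variance proxy $O(p)$, Chernoff for the number of constraints, and absorbing the $o_n(1)\,\overline T$ term. The differences are cosmetic. First, you bound $\norm{M(T^\sigma)}_{\infty\to1}$ by the entrywise $\ell_1$ norm of $T$ instead of using the exact relabeling invariance of \cref{claim:norm-embed}. Second, in your side check of the union bound, $a \lesssim \ell p$ really requires $p\ell \gtrsim K^2B^2 (n')^{\ceil{t/2}}$; with only $p\ell \geq (n')^{\ceil{t/2}}$ the Bernstein exponent is $\Omega(K (n')^{\ceil{t/2}})$ rather than $\Omega(K^2 (n')^{\ceil{t/2}})$, which still suffices for large $K$.
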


\begin{proof}
  For each potential hyperedge $e \in {V \choose k}$, define a tensor $T^e$ on $(V \times D)^{\underline t}$ from $I$
  as follows:
  \begin{itemize}
    \item If $I$ has no constraint on $e$, then $T^e = 0$.
    \item If the constraint on $e$ has scope $\sigma: [k] \to e$, then $T^e = T^\sigma$.
  \end{itemize}

  Define the tensor $\tilde T^e \coloneqq T^e - \E T^e$, where the expectation is over the bijection $\sigma: [k] \to e$.
  We will apply \cref{thm:tensor-norm-bound} to the tensors $(\tilde T^e)$ over $e \in {V \choose k}$.

  Let $B \coloneqq 2 \norm{M(T)}$, where in this proof $\norm{\cdot} \coloneqq \norm{\cdot}_{\infty\to1}$.
  It follows from \cref{claim:norm-bounds} that, over the randomness of $I$,
  \[ \E \tilde T^e = 0, \quad \E\Brac{\norm{M(\tilde T^e)}^2} \leq p B^2, \quad \norm{M(\tilde T^e)} \leq B. \]

  Let $q \coloneqq \abs{D}$. \cref{thm:tensor-norm-bound} implies except with probability $o_n (1)$,
  \[ \Norm{M\Paren{\sum_e \tilde T^e}} \lesssim B^2 \sqrt{p {n \choose k} (n q)^{\ceil{t/2}}}. \]
  We have
  \begin{equation} \label{eq:sum-t-tilde}
    \sum_{e \in {V \choose k}} \tilde T^e = \sum_{(\sigma,R) \in \cC} T^\sigma -
    p \sum_{e \in {V \choose k}} \E_{\sigma: [k] \to e} T^\sigma.
  \end{equation}

  By Chernoff, with probability $1-o_n (1)$, the number $m$ of constraints in $I$ satisfies
  $m = (1+o_n (1)) p {n \choose k}$.
  Then
  \begin{align*}
    T^\cC &= \frac1m \sum_{(\sigma,R) \in \cC} T^\sigma
    \overset{\eqref{eq:sum-t-tilde}}= \frac1m \sum_{e \in {V \choose k}}
    \tilde T^e + \frac pm \sum_{e \in {V \choose k}} \E_{\sigma: [k] \to e} T^\sigma
    = (1+o_n (1)) \Paren{\sum_{e \in {V \choose k}} \frac{\tilde T^e}{p {n \choose k}} + \overline T}.
  \end{align*}
  We now bound each of the two extra terms in $\norm{M(T^\cC - \overline T)}$ by $\eps/2$.
  We have
  \[
    \Norm{o_n (1) \overline T} \leq o_n (1) \E_{\sigma \in V^{\underline k}} \norm{M(T^\sigma)}
    \overset{(*)}\leq o_n (1) B \leq \eps/2
  \]
  whenever $n$ is sufficiently large depending on $B$ and $\eps$, where $(*)$ is \cref{claim:norm-embed}.
  Also
  \[
    \Norm{M\Paren{\frac{1+o_n (1)}{p {n \choose k}} \sum_{e \in {V \choose k}} \tilde T^e}}
    \lesssim B^2 \sqrt{\frac{(n q)^{\ceil{t/2}}}{p {n \choose k}}} \leq \eps/2
  \]
  wherever $n$ and sufficiently large and $p \geq C'(k,q,\eps) n^{\ceil{t/2}-k}$ for some function $C'(k,q,\eps)$.
\end{proof}

\begin{claim} \label{claim:norm-bounds}
  In the context of \cref{thm:avg-tensor-norm-bound}, $\norm{M(T^e)} \leq B/2$ and $\norm{M(\tilde T^e)} \leq B$.
  Further, $\E\Brac{\norm{M(\tilde T^e)}^2} \leq p B^2$.
\end{claim}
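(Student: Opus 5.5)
The plan is to reduce all three bounds to a single fact: relabeling a tensor through an injective map does not change the $\infty$-to-$1$ norm of its flattening. Throughout, write $\norm{\cdot}\coloneqq\norm{\cdot}_{\infty\to1}$ and $B = 2\norm{M(T)}$. Once this invariance is in hand, the first two bounds follow from the triangle inequality, and the third follows by splitting on the two outcomes of the Bernoulli event "a constraint sits on $e$".

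\textbf{Step 1 (relabeling invariance).} Fix an injective $\sigma:[k]\to V$. It induces an injective map $[k]\times D\to V\times D$, and this in turn induces injections on the row index set $([k]\times D)^{\floor{t/2}}$ and the column index set $([k]\times D)^{\ceil{t/2}}$. By definition $T^\sigma=\sum_\beta T_\beta \1_{\sigma(\beta)}$, so $M(T^\sigma)$ is $M(T)$ with rows and columns renamed injectively, padded with zero rows and columns. I will check that both operations preserve the $\infty$-to-$1$ norm:
\begin{itemize}
\item In the $\pm1$ formulation of \cref{rem:inf-one-norm}, coordinates of $x$ or $y$ on zero rows or columns do not affect $x^\top M y$.
\item The remaining coordinates range over all of $\set{\pm1}$ exactly as they do for $M(T)$.
\end{itemize}
Hence $\norm{M(T^\sigma)}=\norm{M(T)}=B/2$. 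This is the content of \cref{claim:norm-embed}, which I would cite if it is stated in this form.

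\textbf{Step 2 (the first two bounds).} The tensor $T^e$ is either $0$ or some $T^\sigma$, so Step 1 gives $\norm{M(T^e)}\le B/2$ deterministically. Over the randomness of $I$, $\E T^e = p\,\E_{\sigma:[k]\to e} T^\sigma$. Since $M$ is linear and $\norm{\cdot}$ is a norm (hence convex), $\norm{M(\E T^e)}\le p\,B/2$. The triangle inequality then gives $\norm{M(\tilde T^e)}\le B/2+pB/2\le B$.

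\textbf{Step 3 (second moment).} Split on whether $I$ has a constraint on $e$.
\begin{itemize}
\item With probability $1-p$ there is no constraint, so $\tilde T^e=-\E T^e$ and its norm is at most $pB/2$.
\item With probability $p$ there is a constraint, and the norm is at most $(1+p)B/2$ by Step 2.
\end{itemize}
Therefore
\[ \E\Brac{\norm{M(\tilde T^e)}^2}\le (1-p)\frac{p^2B^2}{4}+p\frac{(1+p)^2B^2}{4} = \frac{pB^2}{4}\bigl(1+3p\bigr)\le pB^2 , \]
using $p\le1$ in the last step.

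The only step needing care is Step 1. The point to verify is that distinct index tuples stay distinct under the induced map, so that $M(T^\sigma)$ really is a zero-padded relabeling of $M(T)$ and not a matrix in which entries are summed together. This holds because $\sigma$ is injective and the index tuples consist of distinct elements.
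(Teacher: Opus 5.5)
Your proof is correct and follows the paper's argument. Both reduce the first two bounds to \cref{claim:norm-embed} plus the triangle inequality and convexity of the norm.

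The only difference is in the second-moment step. The paper uses $\norm{a-b}^2 \le 2\norm{a}^2+2\norm{b}^2$ together with Jensen to get $4\,\E\norm{M(T^e)}^2 \le pB^2$. You instead use $\E T^e = p\,\E_\sigma T^\sigma$ to get $\norm{M(\E T^e)} \le pB/2$, then split on whether $e$ is present. This gives the slightly sharper bound $\tfrac{pB^2}{4}(1+3p)$.
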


\begin{proof}
  For any $e \in {V \choose k}$ and any bijection $\sigma: [k] \to e$, if $e$ exists in $I$, then
  \cref{claim:norm-embed} implies
  $\norm{M(T^e)} \leq \norm{M(T)} \leq B/2$.
  If $e$ does not exist in $I$, then $\norm{M(T^e)} = 0 \leq \norm{M(T)}$.
  Therefore
  \[ \norm{M(\E T^e)} = \norm{\E M(T^e)} \leq \E \norm{M(T^e)} \leq \norm{M(T)} \]
  by linearity of $M(\cdot)$ and convexity of the norm.
  Thus triangle inequality of the norm implies
  \[ \norm{M(\tilde T^e)} = \norm{M(T^e - \E T^e)} \leq \norm{M(T^e)} + \norm{M(\E T^e)} \leq B. \]

  For the ``Further'' part, linearity of $M(\cdot)$ implies
  \[ \norm{M(\tilde T^e)}^2 = \norm{M(T^e) - \E M(T^e)}^2 \leq 2\norm{M(T^e)}^2 + 2\norm{\E M(T^e)}^2, \]
  so
  \[ \E\Brac{\norm{M(\tilde T^e)}^2} \leq 2\E\Brac{\norm{M(T^e)}^2} + 2\norm{\E M(T^e)}^2
  \leq 4\E\Brac{\norm{M(T^e)}^2}, \]
  where the last inequality is due to convexity of the norm and then Cauchy–Schwarz.
  Since $\Pr [T^e \neq 0] \leq p$,
  $ \E\Brac{\norm{M(T^e)}^2} \leq p \sup_{T^e} \, {\norm{M(T^e)}^2} \leq p B^2/4. $
\end{proof}

\begin{claim} \label{claim:norm-embed}
  Given any finite sets $W$ and $U$, tensor $T$ on $W^t$, injective $\sigma: W \to U$, we have $\norm{M(T^\sigma)} = 
  \norm{M(T)}$.
\end{claim}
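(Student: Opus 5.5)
The plan is to argue directly from the definition of the $\infty$-to-$1$ norm. The key observation is that the flattening of $T^\sigma$ is a copy of the flattening of $T$, placed inside a larger matrix that is zero everywhere else. Recall that $T^\sigma = \sum_\beta T_\beta \1_{\sigma(\beta)}$, where $\sigma$ acts coordinatewise on $t$-tuples. Since $\sigma$ is injective, the map $\beta \mapsto \sigma(\beta)$ is injective on $W^t$. Hence $T^\sigma_\alpha = T_\beta$ when $\alpha = \sigma(\beta)$, and $T^\sigma_\alpha = 0$ when $\alpha \notin \sigma(W^t)$.

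Next I will pass to the flattened matrices. Split a $t$-tuple as $\beta = \beta_1\beta_2$ with $\beta_1 \in W^{\floor{t/2}}$ and $\beta_2 \in W^{\ceil{t/2}}$. Then $\sigma(\beta) = \sigma(\beta_1)\sigma(\beta_2)$. Consequently $M(T^\sigma)$, a $U^{\floor{t/2}}$-by-$U^{\ceil{t/2}}$ matrix, satisfies
\[ M(T^\sigma)_{\sigma(\beta_1)\sigma(\beta_2)} = M(T)_{\beta_1\beta_2}, \]
and all of its other entries vanish. In other words, $M(T^\sigma)$ is $M(T)$ with rows relabeled by the injection $\sigma$ on $W^{\floor{t/2}}$, columns relabeled by $\sigma$ on $W^{\ceil{t/2}}$, and then padded with zero rows and columns.

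It remains to check that this padding preserves the norm, via the discrete formulation in \cref{rem:inf-one-norm}. For the lower bound $\ge$: given $x \in \{\pm1\}^{W^{\floor{t/2}}}$ and $y \in \{\pm1\}^{W^{\ceil{t/2}}}$, define $x'$ on $U^{\floor{t/2}}$ by $x'_{\sigma(\beta_1)} = x_{\beta_1}$ and set $x'$ arbitrarily (say $+1$) off the image; define $y'$ likewise. The zero rows and columns contribute nothing, so $x'^\top M(T^\sigma) y' = x^\top M(T) y$. For the upper bound $\le$: any $x', y'$ over $U$ restrict along $\sigma$ to sign vectors $x, y$ over $W$. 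Again the zero entries contribute nothing, so $x'^\top M(T^\sigma) y' = x^\top M(T) y$.

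I do not expect a real obstacle here. The only point requiring care is the claim that relabeling is a bijection onto its image. This uses injectivity of $\sigma$ coordinatewise, which also guarantees that the block split of $\sigma(\beta)$ matches the block split of $\beta$.
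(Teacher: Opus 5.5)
Your proposal is correct and takes essentially the same route as the paper. For $\le$ you pull sign vectors on $U$ back along $\sigma$, and for $\ge$ you push sign vectors on $W$ forward, using that $M(T^\sigma)$ vanishes outside $\sigma(W^{\floor{t/2}}) \times \sigma(W^{\ceil{t/2}})$. The only cosmetic difference is that the paper extends by $0$ off the image, relying on the $[-1,1]$ formulation of the $\infty$-to-$1$ norm, whereas you extend by $+1$ and so stay within $\pm1$ vectors; both work.
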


\begin{proof}
  For $\ell \in \N$ and $x: U^\ell \to \R$, define $\sigma^*(x): W^\ell \to \R$ by
  $\sigma^*(x)_\beta \coloneqq x_\sigma(\beta)$ for $\beta \in W^\ell$.

  Then for any $x: U^{\floor{t/2}} \to \set{\pm1}, y: U^{\ceil{t/2}} \to \set{\pm1}$, as we sum over
  $\beta \in W^{\floor{t/2}}, \gamma \in W^{\ceil{t/2}}$,
  \[
    x^\top T^\sigma y = \sum_{\beta,\gamma} T^\sigma_{\sigma(\beta) \sigma(\gamma)} x_{\sigma(\beta)} y_{\sigma(\gamma)}
    = \sum_{\beta,\gamma} T_{\beta \gamma} \sigma^*(x)_\beta \sigma^*(y)_\gamma = \sigma^*(x)^\top T \phi(y).
  \]
  Therefore $x^\top T^\sigma y = \sigma^*(x)^\top T \phi^*(y) \leq \norm{M(T)}$.
  Taking the maximum over such $x$ and $y$ implies $\norm{M(T^\sigma)} \leq \norm{M(T)}$.

  Conversely, given $\ell \in \N$ and $x: W^\ell \to \R$, define $\sigma(x): U^\ell \to \R$ by
  $\sigma(x)_\beta \coloneqq x_\alpha$ if $\beta = \sigma(\alpha)$ for some $\alpha \in W^\ell$ (such an $\alpha$
  is unique since $\sigma$ is injective), and $\sigma(x)_\beta \coloneqq 0$ otherwise.

  Then for any $x: W^{\floor{t/2}} \to \set{\pm1}, y: W^{\ceil{t/2}} \to \set{\pm1}$, as we sum over
  $\beta \in W^{\floor{t/2}}, \gamma \in W^{\ceil{t/2}}$,
  \[ x^\top T y = \sum_{\beta,\gamma} T_{\beta \gamma} x_\beta y_\gamma
  = \sum_{\beta,\gamma} T^\sigma_{\sigma(\beta) \sigma(\gamma)} \sigma(x)_{\sigma(\beta)} \sigma(y)_{\sigma(\gamma)}. \]
  Using the fact that $T^\sigma_{\beta' \gamma'} = 0$ unless $\beta' \gamma' = \sigma(\beta) \sigma(\gamma)$ for some
  $\beta \gamma$, the right-hand-side equals
  \[ \sum_{\beta' \in U^{\floor{t/2}}} \sum_{\gamma' \in U^{\ceil{t/2}}} T^\sigma_{\beta' \gamma'}
  \sigma(x)_{\beta'} \sigma(y)_{\gamma'} = \sigma(x)^\top T^\sigma \sigma(y). \]
  Therefore $x^\top T y = \sigma(x)^\top T^\sigma \sigma(y) \leq \norm{M(T^\sigma)}$.
  Taking the maximum over such $x$ and $y$ implies $\norm{M(T)} \leq \norm{M(T^\sigma)}$.
\end{proof}

The next theorem is the analog of \cref{thm:avg-tensor-norm-bound} for random constraints with replacement.

\begin{theorem} \label{thm:avg-tensor-norm-bound-replacement}
  Fix $t \leq k$.
  Let $T$ be a tensor on $([k] \times D)^{\underline t}$.
  For any $\eps > 0$, there is $C = C(T,\eps)$ such that for $m \geq C n^{\ceil{t/2}}$, except with probability
  $o_n (1)$, a random $k$-CSP instance $I = (V, \mathcal{C})$ on $n$ variables and $m$ constraints with replacement
  satisfies
  \[ \Norm{M(T^\cC - \overline T)}_{\infty\to1} \leq \eps. \]
\end{theorem}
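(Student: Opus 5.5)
We follow the proof of \cref{thm:avg-tensor-norm-bound} almost verbatim; the only change is how the centered summands are indexed. Write $\cC = \set{(\sigma_1,R_1),\dots,(\sigma_m,R_m)}$, where the scopes $\sigma_e$ for $e \in [m]$ are i.i.d. uniform injections $[k]\to V$; the relations are irrelevant because $T^\cC$ ignores them. Then
\[ T^\cC - \overline T = \frac1m \sum_{e\in[m]} \tilde T^e, \qquad \tilde T^e \coloneqq T^{\sigma_e} - \overline T, \]
and this identity is exact. In particular, unlike the binomial case, there is no random normalization and no Chernoff step for the number of constraints.

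The tensors $\tilde T^e$ are independent and have mean zero, since $\E T^{\sigma_e} = \E_{\sigma\in V^{\underline k}} T^\sigma = \overline T$. By \cref{claim:norm-embed}, $\norm{M(T^{\sigma_e})}_{\infty\to1} = \norm{M(T)}_{\infty\to1}$, and by convexity $\norm{M(\overline T)}_{\infty\to1} \le \norm{M(T)}_{\infty\to1}$. Setting $B \coloneqq 2\norm{M(T)}_{\infty\to1}$, each summand therefore satisfies $\norm{M(\tilde T^e)}_{\infty\to1}\le B$ and $\E\norm{M(\tilde T^e)}^2_{\infty\to1}\le B^2$. So we apply \cref{thm:tensor-norm-bound} with the trivial variance proxy $p=B^2$, with $\ell=m$ summands, and with the index set $(V\times D)$ of size $nq$ in place of $[n]$, where $q=\abs D$. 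Its hypothesis $p\ell \ge (nq)^{\ceil{t/2}}$ becomes $B^2 m \ge (nq)^{\ceil{t/2}}$, which holds once $C$ is large. The theorem then gives, except with probability $o_n(1)$,
\[ \Norm{M\Paren{\sum_e \tilde T^e}}_{\infty\to1} \lesssim B\sqrt{B^2 m (nq)^{\ceil{t/2}}}. \]
Dividing by $m$ yields
\[ \Norm{M(T^\cC-\overline T)}_{\infty\to1} \lesssim B^2\sqrt{(nq)^{\ceil{t/2}}/m}, \]
which is at most $\eps$ once $m \ge C(T,\eps)\, n^{\ceil{t/2}}$ with $C \approx B^4 q^{\ceil{t/2}}/\eps^2$ times an absolute constant.

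The variance proxy deserves one check. In the binomial proof the small factor $p$ in the variance is what produces the threshold. Here the summand count $m$ plays that role instead: the bound scales like $\sqrt{m\,n^{\ceil{t/2}}}/m$, exactly matching the binomial computation with $p\binom nk$ replaced by $m$. Beyond this, the main thing to verify is that \cref{thm:tensor-norm-bound}, stated for tensors on $[n]^t$, applies to tensors on $(V\times D)^{\underline t}$ extended by zero to $(V\times D)^t$. This is immediate because zero entries do not affect the $\infty\to1$ norm, and the union bound in that proof runs over $2^{O((nq)^{\ceil{t/2}})}$ sign vectors, which the Bernstein tail $\exp(-\Omega((nq)^{\ceil{t/2}}))$ covers after adjusting constants.

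If the Bernstein step needs a strictly smaller variance (it should not, since the $a^2/(\text{var}+Ba)$ form only needs $a \gtrsim \sqrt{m\cdot\text{var}\cdot N}$ with $N=(nq)^{\ceil{t/2}}$), I would instead use the finer variance bound $\E[X_e^2]\le B^2$ for each fixed pair of sign vectors $y,z$, which gives the same conclusion.
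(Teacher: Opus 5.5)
Your proof is correct and matches the paper's argument step for step. You center each of the $m$ i.i.d.\ scope tensors at $\overline T$, bound $\norm{M(\tilde T^j)}_{\infty\to1}\le B=2\norm{M(T)}_{\infty\to1}$ and its second moment by $B^2$ via \cref{claim:norm-embed} and convexity, apply \cref{thm:tensor-norm-bound} with variance proxy $B^2$ on the index set $V\times D$, and then divide by $m$ using the exact identity $T^\cC-\overline T=\frac1m\sum_j\tilde T^j$, which gives $B^2\sqrt{(nq)^{\ceil{t/2}}/m}\le\eps$ once $m\ge C(T,\eps)\,n^{\ceil{t/2}}$, just as in the paper.
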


\begin{proof}
  For $j \in [m]$, define $T^j$ to be the randomly shifted tensor $T^\sigma$, where $\sigma: [k] \to V$ is a
  uniformly random injective map.
  Also define the tensor $\tilde T^j \coloneqq T^j - \E T^j$, where the expectation is over $\sigma$.
  We will apply \cref{thm:tensor-norm-bound} to the sequence $(\tilde T^j)$ of tensors over $j \in [m]$.

  Let $B \coloneqq 2 \norm{M(T)}$, where in this proof $\norm{\cdot} \coloneqq \norm{\cdot}_{\infty\to1}$.
  It follows from \cref{claim:norm-bounds} that, over the randomness of $I$,
  \[ \E \tilde T^e = 0, \quad \E\Brac{\norm{M(\tilde T^e)}^2} \leq B^2, \quad \norm{M(\tilde T^e)} \leq B. \]

  Let $q \coloneqq \abs{D}$. \cref{thm:tensor-norm-bound} implies except with probability $o_n (1)$,
  \[ \Norm{M\Paren{\sum_j \tilde T^j}} \lesssim B^2 \sqrt{m (n q)^{\ceil{t/2}}}. \]
  Now
  \[
    T^\cC - \overline T = \frac1m \sum_{(\sigma,R) \in \cC} (T^\sigma - \overline T)
    = \frac1m \sum_{j \in [m]} \tilde T^j,
  \]
  Therefore
  \[ \norm{M(T^\cC - \overline T)} \lesssim B^2 \sqrt{\frac{(nq)^{\ceil{t/2}}}m}, \]
  which is at most $\eps$ whenever $m \geq C(T,\eps) n^{\ceil{t/2}}$.
\end{proof}

\begin{claim} \label{claim:norm-bounds-replacement}
  In the context of \cref{thm:avg-tensor-norm-bound-replacement}, $\norm{M(T^j)} \leq B/2$ and
  $\norm{M(\tilde T^j)} \leq B$.
  Further, $\E\Brac{\norm{M(\tilde T^j)}^2} \leq B^2$.
\end{claim}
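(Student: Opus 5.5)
The plan mirrors the proof of \cref{claim:norm-bounds}, with two changes: $p$ is replaced by $1$, and the expectation is now over a uniformly random injective map $\sigma: [k] \to V$ rather than over the presence of a hyperedge. Throughout, write $\norm{\cdot} \coloneqq \norm{\cdot}_{\infty\to1}$ and $B \coloneqq 2\norm{M(T)}$.

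\emph{Bounding $\norm{M(T^j)}$.} For each fixed realization of $\sigma$, $T^j = T^\sigma$ with $\sigma$ injective. \cref{claim:norm-embed} then gives $\norm{M(T^j)} = \norm{M(T)} = B/2$ deterministically.

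\emph{Bounding $\norm{M(\tilde T^j)}$.} The flattening $M(\cdot)$ is linear and commutes with expectation, so $M(\E T^j) = \E M(T^j)$. Convexity of the norm (Jensen) yields
\[ \norm{M(\E T^j)} \leq \E\norm{M(T^j)} \leq B/2. \]
The triangle inequality applied to $\tilde T^j = T^j - \E T^j$ then gives $\norm{M(\tilde T^j)} \leq B/2 + B/2 = B$.

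\emph{The ``Further'' part.} We could simply square the deterministic bound: $\norm{M(\tilde T^j)} \leq B$ holds pointwise, hence $\E\norm{M(\tilde T^j)}^2 \leq B^2$. Alternatively, following the earlier claim, use $\norm{a-b}^2 \leq 2\norm{a}^2 + 2\norm{b}^2$ together with $\norm{\E M(T^j)}^2 \leq \E\norm{M(T^j)}^2$ to get
\[ \E\norm{M(\tilde T^j)}^2 \leq 4\,\E\norm{M(T^j)}^2 \leq 4 \cdot B^2/4 = B^2. \]

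The only step that needs care is the embedding claim. Its proof requires $T^\sigma$ to be defined via the induced injective map $[k]\times D \to V\times D$, $(i,a) \mapsto (\sigma(i), a)$. This map is injective whenever $\sigma$ is, so \cref{claim:norm-embed} applies with $W = [k]\times D$ and $U = V\times D$. I expect no real obstacle here, since the rest is routine convexity.
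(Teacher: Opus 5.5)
Your proposal is correct and follows the paper's proof step for step: \cref{claim:norm-embed} gives $\norm{M(T^j)} = B/2$ for every injective $\sigma$, linearity of $M(\cdot)$ and convexity give $\norm{M(\E T^j)} \leq B/2$, the triangle inequality gives $\norm{M(\tilde T^j)} \leq B$, and the ``Further'' part follows, as in the paper, by squaring this pointwise bound. Your remark that \cref{claim:norm-embed} is applied through the induced injection $(i,a)\mapsto(\sigma(i),a)$ from $[k]\times D$ to $V\times D$ is correct, and it is a detail the paper leaves implicit.
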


\begin{proof}
  For any $j \in [m]$ and any injection $\sigma: [k] \to V$, \cref{claim:norm-embed} implies
  $\norm{M(T^j)} = \norm{M(T)} = B/2$.
  Also
  \[ \norm{M(\E T^j)} = \norm{\E M(T^j)} \leq \E \norm{M(T^j)} = B/2 \]
  by linearity of $M(\cdot)$ and convexity of the norm.
  Thus triangle inequality of the norm implies
  \[ \norm{M(\tilde T^j)} = \norm{M(T^j - \E T^j)} \leq \norm{M(T^j)} + \norm{M(\E T^j)} \leq B. \]
  Therefore $\E\Brac{\norm{M(\tilde T^j)}^2} \leq B^2$.
\end{proof}

\subsection{Concentration of marginals}
We prove here concentration of marginals of constraints that are crucial to our argument.

Key to our refutation is concentration of functions certified using degree-$d$ SOS proofs:

\begin{definition} \label{def:concentration-instance}
  Given $f^\Omega: \Omega_k \to \R, \eps > 0, d \in \N$, a constraint set $\cC$ over variable set $V$ is
  $(f^\Omega,\eps,d)$-concentrated if there is a degree-$d$ SOS proof of
  \[
    \Set{ \textstyle \sum_{a\in D} x_{v,a}^2 \leq 1}_{v \in V} \quad \vdash_d \quad \E_{\sigma \in \mathcal C}
    f^{\Omega,\sigma} (x) \leq \E_{\sigma \in V^{\underline k}} f^{\Omega,\sigma} (x) + \eps.
  \]
\end{definition}

We do not consider the satisfying assignments of the constraints in the above definition.
When we apply the above definition, we will consider each set $R \in \cR$ of satisfying assignments separately, and
$\cC$ in the above definition will be those constraints with satisfying assignments $R$.

\begin{theorem} \label{thm:concentration-random-instance}
  Fix $t \leq k$ and $\eps > 0$. Suppose $f^\Omega: \Omega_k \to \R$ has pure degree $t$.
  There is $C = C(f,\eps)$ such that for $p \geq C n^{\ceil{t/2}-k}$, except with probability $o_n (1)$, a random Erd\H
  os--R\'enyi random $k$-uniform constraint set $\cC$ on $n$ variables with constraint probability $p$ is
  $(f^\Omega,\eps,2\lceil t/2\rceil)$-concentrated.
\end{theorem}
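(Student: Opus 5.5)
The plan is to represent $f^\Omega$ as a tensor, bound the random deviation in $\infty$-to-$1$ norm, and then convert that numerical bound into a low-degree SoS certificate. Since $f^\Omega$ has pure degree $t$, \cref{sec:tensor-func} gives a tensor $T$ on $([k]\times D)^{\underline t}$ with $\sum_\alpha T_\alpha z^\alpha = f^\Omega(z)$. For every injective scope $\sigma$, \cref{eq:tensor-func} gives $f^{\Omega,\sigma}(x) = \sum_\alpha T^\sigma_\alpha x^\alpha$, and this is a polynomial identity in the indeterminates $x_{v,a}$, not just an identity on $\Omega_V$. This holds because $f^{\Omega,\sigma}$ is defined directly as $\sum_b \eta_b x^{b\circ\sigma^{-1}}$. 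Averaging over scopes, the target inequality becomes
\[ \E_{\sigma\in\cC} f^{\Omega,\sigma}(x) - \E_{\sigma\in V^{\underline k}} f^{\Omega,\sigma}(x) = \sum_{\alpha \in (V\times D)^{\underline t}} (T^\cC - \overline T)_\alpha\, x^\alpha , \]
a homogeneous degree-$t$ form in the $x_{v,a}$ whose coefficient tensor is $T^\cC - \overline T$ on $(V\times D)^{t}$.

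Next I would apply \cref{thm:avg-tensor-norm-bound} with accuracy $\eps/K_G$. It yields $C=C(f,\eps)$ such that, for $p\ge C n^{\lceil t/2\rceil - k}$, with probability $1-o_n(1)$ we have $\norm{M(T^\cC-\overline T)}_{\infty\to1}\le \eps/K_G$. The entries of $T^\cC - \overline T$ indexed by tuples with repeated elements vanish, so it may be regarded as a tensor on $W^t$ with $W = V\times D$. Then \cref{thm:tensor-sos-bound} gives a degree-$2\lceil t/2\rceil$ SoS proof of
\[ \set{x_{v,a}^2 \le 1}_{v\in V,a\in D} \vdash_{2\lceil t/2\rceil} \sum_\alpha (T^\cC - \overline T)_\alpha x^\alpha \le K_G\cdot \eps/K_G = \eps . \]

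The remaining step is to derive the axioms $x_{v,a}^2\le 1$ from the axioms $\sum_{a\in D} x_{v,a}^2\le 1$ of \cref{def:concentration-instance}. This is a degree-$2$ SoS step: $1 - x_{v,a}^2 = (1-\sum_{a'} x_{v,a'}^2) + \sum_{a'\ne a} x_{v,a'}^2$. Composing the two proofs keeps the degree at $2\lceil t/2\rceil$, because inside the proof of \cref{thm:tensor-sos-bound} the axioms are only used to derive $x^{2\gamma}\le 1$ inductively, and that induction stays within degree $2\lceil t/2\rceil$.

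I expect the main obstacle to be bookkeeping rather than anything deep. One point is checking that the tensor identity is genuinely polynomial, so that nothing relies on $x\in\Omega_V$, which the axioms here do not enforce. The other is that the flattening in \cref{thm:tensor-sos-bound} splits into halves of sizes $\lfloor t/2\rfloor$ and $\lceil t/2\rceil$, which fixes the degree at $2\lceil t/2\rceil$. The density threshold $p\ge Cn^{\lceil t/2\rceil-k}$ is inherited directly from \cref{thm:avg-tensor-norm-bound}. The constant $C$ depends on $T$, and $T$ is determined by $f$, which justifies writing $C(f,\eps)$.
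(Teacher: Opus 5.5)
Your proposal is correct and follows the paper's proof essentially step for step. Like the paper, you take the tensor representation from \cref{sec:tensor-func}, apply \cref{thm:avg-tensor-norm-bound} at accuracy $\eps/K_G$, then \cref{thm:tensor-sos-bound}, and use \cref{eq:tensor-func} to return to $f^{\Omega,\sigma}$. Your additional degree-$2$ derivation of $x_{v,a}^2\le 1$ from $\sum_{a}x_{v,a}^2\le 1$, and your remark that \cref{eq:tensor-func} is a formal polynomial identity, fill in bookkeeping the paper leaves implicit: its proof uses the axioms $x_{v,a}^2\le 1$, while \cref{def:concentration-instance} is stated with $\sum_a x_{v,a}^2\le 1$.
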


\begin{proof}
  Since $f^\Omega$ has pure degree $t$, \cref{sec:tensor-func} implies there is a tensor $T$ on
  $([k] \times D)^{\underline t}$ such that $\sum_\alpha T_\alpha z^\alpha = f^\Omega (z)$ for $z \in \Omega_k$.
  \cref{thm:avg-tensor-norm-bound} shows that
  \[ \Norm{M\Paren{\E_{\sigma \in \cC} T^\sigma - \E_{\sigma \in V^{\underline k}} T^\sigma}}_{\infty\to1}
  \leq \frac\eps{K_G}. \]
  \cref{thm:tensor-sos-bound} implies
  \[
    \Set{x_{v,a}^2 \leq 1}_{v \in V, a \in D} \quad \vdash_{2\ceil{t/2}} \quad
    \E_{\sigma \in \cC} \sum_\alpha T^\sigma_\alpha x^\alpha
    \leq \E_{\sigma \in V^{\underline k}} \sum_\alpha T^\sigma_\alpha x^\alpha + \eps.
  \]
  \cref{eq:tensor-func} yields the desired inequality in SOS.
\end{proof}

\cref{thm:concentration-random-instance} concerns the Erd\H os--R\'enyi random model, but analogous result also holds if
$m$ constraints are sampled independently with replacement.
Plugging \cref{thm:avg-tensor-norm-bound-replacement} in place of \cref{thm:avg-tensor-norm-bound} to the proof of
\cref{thm:concentration-random-instance}, we get:

\begin{theorem}\label{thm:concentration-random-instance-replacement}
  Fix $t \leq k$ and $\eps > 0$. Suppose $f^\Omega: \Omega_k \to \R$ has pure degree $t$.
  There is $C = C(f,\eps)$ such that for $m \geq C n^{\ceil{t/2}}$, except with probability $o_n (1)$, a random
  $k$-uniform constraint set $\cC$ on $n$ variables with $m$ constraints with replacement is
  $(f^\Omega,\eps,2\ceil{t/2})$-concentrated.
\end{theorem}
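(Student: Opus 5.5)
The statement to prove is \cref{thm:concentration-random-instance-replacement}: the with-replacement analogue of \cref{thm:concentration-random-instance}. I would rerun the proof of \cref{thm:concentration-random-instance} line by line, changing only the tensor concentration input. Since $f^\Omega$ has pure degree $t$, \cref{sec:tensor-func} gives a tensor $T$ on $([k]\times D)^{\underline t}$ with $\sum_\alpha T_\alpha z^\alpha = f^\Omega(z)$ on $\Omega_k$. By \cref{eq:tensor-func}, for every injective scope $\sigma$ we then have $\sum_\alpha T^\sigma_\alpha z^\alpha = f^{\Omega,\sigma}(z)$ on $\Omega_V$. Averaging over $\sigma\in\cC$ and over $\sigma\in V^{\underline k}$, both sides of the target inequality become the linear forms of the tensors $T^\cC$ and $\overline T$ in the monomials $x^\alpha$.

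The next step is to invoke \cref{thm:avg-tensor-norm-bound-replacement} with accuracy $\eps/K_G$ in place of \cref{thm:avg-tensor-norm-bound}. This yields $C=C(T,\eps)$, which depends only on $f$ and $\eps$. For $m\ge Cn^{\ceil{t/2}}$, except with probability $o_n(1)$, it gives $\norm{M(T^\cC-\overline T)}_{\infty\to1}\le \eps/K_G$. Applying \cref{thm:tensor-sos-bound} to the tensor $T^\cC-\overline T$ on $(V\times D)^t$, extended by zero off distinct tuples, then gives a degree-$2\ceil{t/2}$ SoS proof of
\[
\sum_\alpha (T^\cC-\overline T)_\alpha x^\alpha \le \eps
\]
from the axioms $\{x_{v,a}^2\le1\}_{v\in V,a\in D}$. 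Rewriting the left side via \cref{eq:tensor-func} gives exactly the inequality in \cref{def:concentration-instance}.

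The one point needing care is that \cref{def:concentration-instance} uses the hypotheses $\sum_{a\in D} x_{v,a}^2\le 1$, not $x_{v,a}^2\le1$ individually. So I must derive each $x_{v,a}^2\le1$ in degree $2$. This is immediate, because
\[
1-x_{v,a}^2=\Big(1-\sum_{a'} x_{v,a'}^2\Big)+\sum_{a'\ne a} x_{v,a'}^2 ,
\]
which is the hypothesis plus a sum of squares.

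I expect no real obstacle, since the substantive work sits inside \cref{thm:avg-tensor-norm-bound-replacement}. The only things to check are bookkeeping. The constant $C$ must be taken from that theorem at accuracy $\eps/K_G$. The failure probability $o_n(1)$ carries over unchanged. The multiplicative factor $1+o_n(1)$ that appeared in the binomial version does not arise here, because the number of constraints is exactly $m$.
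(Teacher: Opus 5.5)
Your proposal is correct and follows the paper's proof: the paper likewise obtains this theorem by rerunning the proof of \cref{thm:concentration-random-instance} with \cref{thm:avg-tensor-norm-bound-replacement} (at accuracy $\eps/K_G$) in place of \cref{thm:avg-tensor-norm-bound}, and then applying \cref{thm:tensor-sos-bound} and \cref{eq:tensor-func}. Your degree-$2$ derivation of each $x_{v,a}^2\le 1$ from the hypotheses $\sum_{a} x_{v,a}^2\le 1$ of \cref{def:concentration-instance} is correct, and it makes explicit a step that the paper leaves implicit.
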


\end{document}